\newcommand{\thm}[1]{\hyperref[thm:#1]{Theorem~\ref*{thm:#1}}}
\newcommand{\defn}[1]{\hyperref[defn:#1]{Definition~\ref*{defn:#1}}}
\newcommand{\lem}[1]{\hyperref[lem:#1]{Lemma~\ref*{lem:#1}}}
\newcommand{\prop}[1]{\hyperref[prop:#1]{Proposition~\ref*{prop:#1}}}
\newcommand{\fig}[1]{\hyperref[fig:#1]{Figure~\ref*{fig:#1}}}
\newcommand{\tab}[1]{\hyperref[tab:#1]{Table~\ref*{tab:#1}}}
\renewcommand{\sec}[1]{\hyperref[sec:#1]{Section~\ref*{sec:#1}}}
\newcommand{\app}[1]{\hyperref[app:#1]{Appendix~\ref*{app:#1}}}
\newcommand{\cor}[1]{\hyperref[cor:#1]{Corollary~\ref*{cor:#1}}}
\newcommand{\obs}[1]{\hyperref[obs:#1]{Observation~\ref*{obs:#1}}}
\newcommand{\nn}{\nonumber \\}
\newcommand{\append}[1]{\hyperref[append:#1]{Appendix~\ref*{append:#1}}}
\renewcommand{\ket}[1]{|#1\rangle}
\renewcommand{\bra}[1]{\langle #1|}
\newtheorem{theorem}{Theorem}
\newtheorem{definition}[theorem]{Definition}
\newtheorem{conjecture}[theorem]{Conjecture}
\newtheorem{lemma}[theorem]{Lemma}
\newtheorem{corollary}[theorem]{Corollary}
\newtheorem{problem}[theorem]{Problem}
\newcommand{\MQ}{\affiliation{
School of Mathematical and Physical Sciences,
Macquarie University, Sydney, NSW 2109, AU} }
\newcommand{\UMD}{\affiliation{
Joint Center for Quantum Information and Computer Science,\\ University of Maryland, College Park, MD 20742, USA}}          
\newcommand{\BBQP}{\affiliation{
BosonQ Psi (BQP) Corp., New York, USA}}
\newcommand{\PKU}{\affiliation{
Beijing International Center for Mathematical Research, Peking University, Beijing, China}}
\begin{document}
\title{Large time-step discretisation of adiabatic quantum dynamics}
\date{\today}
\author{Dong An} \PKU\UMD
\author{Pedro C.~S.~Costa} \BBQP\MQ
\author{Dominic W.~Berry} \MQ

\begin{abstract}
    Adiabatic quantum computing is a general framework for preparing eigenstates of Hamiltonians on quantum devices. However, its digital implementation requires an efficient Hamiltonian simulation subroutine, which may introduce extra computational overhead or complicated quantum control logic. In this work, we show that the time step sizes in time discretization can be much larger than expected, and the overall complexity is greatly reduced. Remarkably, regardless of the general convergence order of the numerical method, we can choose a uniform time step size independent of tolerated error and evolution time for sufficiently accurate simulation. Furthermore, with the boundary cancellation condition where the continuous diabatic errors are exponentially suppressed, we provide strong evidence on an exponential convergence of even first-order Trotter with uniform time step size. We apply our analysis to the example of adiabatic unstructured search and show several preferable features of the Trotterized adiabatic approach: it can match the Grover lower bound, it does not require \emph{a priori} knowledge on the number of marked states, and its performance can be asymptotically comparable with that of the quantum approximate optimization algorithm. 
\end{abstract}

\maketitle

\tableofcontents

\section{Introduction}

Preparing eigenstates of Hamiltonians is a fundamental task in quantum physics, quantum chemistry, and quantum information. Classical algorithms have a cost that scales polynomially with the dimension of the Hilbert space, which motivates the exploration of quantum algorithms for potential speedup. 
An approach to solving the eigenstate preparation problem is adiabatic quantum computing (AQC)~\cite{RevModPhys.90.015002}. 
In the framework of AQC, we consider the dynamics of a time-dependent Hamiltonian given by 
\begin{equation}\label{eqn:AQC_dynamics}
    i \partial_t \ket{\psi(t)} = H(t/T) \ket{\psi(t)}, \qquad 0 \leq t \leq T, 
\end{equation}
where $H(s) = (1-f(s)) H_0 + f(s) H_1$ is an interpolating Hamiltonian that smoothly evolves from an initial Hamiltonian $H_0$ to a target Hamiltonian $H_1$, and $f(s)$ is called the scheduling function such that $f(0) = 0$ and $f(1) = 1$. 
Typically, $H_0$ is designed to be simple with easily preparable eigenstates. 
Then, under the gap condition that the desired eigenpath of $H(s)$ is separated from the rest of the spectrum, the dynamics of~\cref{eqn:AQC_dynamics} will approximately drive the system from the eigenstate of $H_0$ to that of $H_1$, as $T \rightarrow \infty$.

While AQC is a promising technique for solving general eigenstate preparation problems, its implementation on digital quantum devices requires an efficient subroutine for simulating~\cref{eqn:AQC_dynamics}, which may introduce non-optimal asymptotic scalings. 
Commonly used methods include the first-order exponential integrator~\cite{HochbruckOstermann2010,SchifferTuraCirac2022} and the first-order product formula~\cite{farhi2000quantum,vanDamMoscaVazirani2001,ChildsSuTranEtAl2019,AnFangLin2021}. 
Both methods, despite their simplicity, have only first-order convergence, and thus make the number of discrete evolution steps super-linear in the evolution time $T$ (for first-order methods, it is specifically quadratic in $T$). 
This can be expensive since $T$ is usually large to minimize diabatic excitation in AQC dynamics. 
Alternatively, high-order time-dependent Hamiltonian simulation algorithms, such as the high-order Trotter-Suzuki formula~\cite{suzuki1993general,ChildsSuTranEtAl2019} and the truncated Dyson series method~\cite{KieferovSchererBerry2019,LowWiebe2019}, can reduce the number of evolution steps to almost linear in $T$. 
However, these methods may still introduce extra computational complication. 
The number of exponentially many local operators in the product formulae may grow rapidly as the convergence order increases, and the truncated Dyson series method requires complicated quantum control logic~\cite{SuBerryWiebeEtAl2021,RajputRoggeroWiebe2022}. 

In this work, we develop general methodology of analyzing numerical discretization methods for digitally simulating near adiabatic dynamics and obtain greatly improved error bounds. 
According to our analysis, the time step sizes in discretizing near-adiabatic dynamics can be much larger than expected, and the overall complexity is greatly reduced. 
Specifically, we consider the first-order exponential integrator and the product formula for time discretization. 
We show that, as long as the time step size is smaller than a uniform threshold independent of tolerated error $\epsilon$ and evolution time $T$, time discretization errors will be non-dominant and the digital simulation can achieve sufficient accuracy. 
Furthermore, under an additional assumption called the boundary cancellation condition, we show that in practice even a first-order product formula with uniform time step size can achieve exponential convergence, which achieves an exponential speedup in precision over previous results.

Our analysis is based on the discrete adiabatic theorem, which are discrete analog of the continuous adiabatic theorems and guarantee that sequentially applying a set of slowly varying gapped unitary walk operators can approximately drive the system from the eigenstate of the initial walk operator to the connected eigenstate of the final walk operator. 
In our analysis, we treat the numerical integrators as the set of discrete walk operators and show that the discretized dynamics indeed follows an adiabatic path with the same initial and final eigenstates as the continuous one but different in the middle. 
This not only allows us to obtain aforementioned improved error bounds, but also shows that the discretized evolution can potentially solve eigenstate preparation problems which are beyond the reach of the original continuous formulation. 
For example, if the Hamiltonian $H(s)$ becomes gapless, then the continuous AQC based on~\cref{eqn:AQC_dynamics} cannot follow the desired eigenpath. 
However, we find that, even for a gapless Hamiltonian, it is possible that the first-order Trotterized operator with large time step size satisfies the gap condition, so the corresponding discretized evolution may still succeed in preparing the desired eigenstates. 

As an application, we use our analysis to design a discrete adiabatic algorithm for the unstructured search problem with multiple marked items. 
Our algorithm has several preferable features: it can match the Grover lower bound, it does not require \emph{a priori} knowledge on the number of marked states, and it can achieve a logarithmic scaling in the inverse precision. 
Additionally, we contribute to the active discussions on the relation between AQC and quantum approximate optimization algorithm (QAOA)~\cite{FarhiGoldstoneGutmann2014}, by showing that our discrete-adiabatic-based algorithm gives a set of QAOA angles which can achieve the optimal query complexity $\mathcal{O}(\sqrt{N})$ for finding the single marked state among $N$ states. 

There have been a few recent efforts on improving the analysis of discretizing AQC. 
For instance, \cite{Yi2021} demonstrates the robustness of digital adiabatic simulation using first-order Trotter formula with large time step sizes by applying continuous adiabatic theorems to an effective Hamiltonian. 
Additionally, a more recent work~\cite{kocia2022digital} proves a similar result by relating the first-order Trotter error estimate to its second-order counterpart. 
Our results improve over prior work in the following ways: 
\begin{enumerate}
    \item Our analysis is applicable to the first-order exponential integrator and product formulae of any order. 
    As a comparison, previous works~\cite{Yi2021,kocia2022digital} only focus on the first-order Trotter formula, and their analysis heavily relies on its specific structure. 
    \item Under the boundary cancellation condition, our analysis can give an exponentially better complexity. 
    Specifically, our analysis explains why first-order methods with large time step sizes are very likely to have an exponential time convergence in this case, while previous works~\cite{Yi2021,kocia2022digital} still only show a linear convergence as it is unclear how their approaches can be naturally combined with the additional boundary cancellation condition. 
    \item Our analysis uncovers the intrinsic property of time discretization of the AQC as following another (discrete) adiabatic path. 
    This is not only conceptually different from the standard numerical analysis which views time discretization only as introducing numerical errors and driving the dynamics away from the adiabatic path, but also the key for revealing new findings. 
    For example, with the discrete adiabatic point of view, we are able to show that Trotterized adiabatic evolution may still succeed even for a gapless Hamiltonian, but all the previous works~\cite{Yi2021,kocia2022digital} only apply to gapped Hamiltonians. 
\end{enumerate}

The rest of this paper is organized as follows. 
We first summarize our main results in~\cref{sec:summary}. 
In~\cref{sec:analysis}, we show our improved complexity estimates based on the discrete adiabatic theorem in the general case, and~\cref{sec:analysis_boundary_cancellation} is devoted to the case with the boundary cancellation condition. 
\cref{sec:gap} discusses the possibility of using the first-order product formula with large step for gapless Hamiltonians. 
In~\cref{sec:Grover}, we present our results for robust adiabatic Grover search and compare it with QAOA. 
\cref{sec:conclusion} summarizes this work and discusses potential further work.

\section{Summary of the results}\label{sec:summary}

\subsection{Setup and numerical methods}

We first formally restate the state preparation problem. 
\begin{problem}[State preparation]\label{prob:state_prep}
    Let $H_1$ be a Hamiltonian and $\ket{\phi}$ be a target eigenstate of $H_1$. 
    For $\varepsilon > 0$, the goal of the state preparation problem is to prepare a quantum state $\ket{\widetilde{\phi}}$ such that $\|\ket{\widetilde{\phi}}\bra{\widetilde{\phi}} - \ket{\phi}\bra{\phi}\| \leq \varepsilon$. 
\end{problem}
In the framework of AQC, given another simple Hamiltonian $H_0$ with known eigenstate, we consider the dynamics of~\cref{eqn:AQC_dynamics}
with interpolating Hamiltonian
\begin{equation}\label{eq:schedule}
    H(s) = (1-f(s)) H_0 + f(s) H_1 \, ,
\end{equation}
and the scheduling function $f(s)$. 
Under the gap condition, starting from the eigenstate of $H_0$, the solution of~\cref{eqn:AQC_dynamics} at the final time will be a good approximation of the corresponding eigenstate of $H_1$.

The AQC dynamics of~\cref{eqn:AQC_dynamics} needs further time discretization to be implemented on a digital quantum device. 
The most commonly used strategy for numerical propagation is to divide the time interval into $T_{d}$ equidistant sub-intervals and approximate the exact local evolution operator by a local unitary numerical propagator. 
Specifically, a $p$-th order local unitary evolution operator $U_{\text{num}}$ is constructed such that for any $0 \leq j \leq T_{d}-1$, it approximates the time-ordering operator from $jh$ to $(j+1)h$ as follows
\begin{equation}
    \mathcal{T}e^{-i \int_{jh}^{(j+1)h} H(t/T) dt} = U_{\text{num}}((j+1)h,jh) + \mathcal{O}(h^{p+1}), 
\end{equation}
and the global exact evolution operator can be approximated as 
\begin{equation}\label{eqn:AQC_numerical}
    \mathcal{T}e^{-i \int_0^T H(t/T) dt} \approx \prod_{j=0}^{T_{d}-1} U_{\text{num}}((j+1)h,jh)
\end{equation}
where $T_{d}$ is the number of discrete time steps, and $h = T/T_{d}$ is the step size. 

In practice, there are two common classes of numerical methods, namely the first-order exponential integrator and the product formulae. 
They differ in the level of time discretisation. 
If we only approximate the time-ordering operator by a time-independent Hamiltonian simulation problem, we obtain the first-order exponential integrator, which is given by 
\begin{equation}\label{eqn:exponential_integrator_1st}
    U_{\exp}(t+h,t) =  \exp\left(-ihH(t/T)\right).
\end{equation}
In this method, each propagation step only involves one matrix exponential but the exponent is still an interpolation of the Hamiltonians $H_0$ and $H_1$. 
If we further split the evolution of $H(t/T)$ into exponentials of $H_0$ and $H_1$ separately, we obtain the product formulae. 
The first- and second-order product formulae are given as
\begin{equation}\label{eqn:1st_trotter}
    U_{\text{pf},1}(t+h,t) =  \exp\left(-i h f(t/T) H_1\right)\exp\left(-i h (1-f(t/T)) H_0\right), 
\end{equation}
and 
\begin{equation}\label{eqn:2nd_trotter}
     U_{\text{pf},2}(t+h,t) =  \exp\left(-i h/2 (1-f((t+h/2)/T)) H_0\right)\exp\left(-i h f((t+h/2)/T) H_1\right)\exp\left(-i h/2 (1-f((t+h/2))/T) H_0\right). 
\end{equation}
Arbitrary $p$-th order formulae can be constructed via several approaches, such as Trotter-Suzuki recursion~\cite{suzuki1993general,WiebeBerryHoyerEtAl2010}, Yoshida method~\cite{Yoshida1990} and their improved versions~\cite{MoralesCostaBurgarthEtAl2022,BlanesCasasMurua2024}. 
The resulting $p$-th order formula can be written in the general form as
\begin{equation}\label{eqn:high_order_trotter_general}
    U_{\text{pf},p}(t+h,t) = \prod_{k=0}^{K_p} \exp\left(  -i\beta_{p,k} h f((t+\delta_{p,k}h)/T) H_1\right)  \exp\left( -i\alpha_{p,k}h (1-f((t+\gamma_{p,k}h)/T)) H_0 \right)
\end{equation}
for some real parameters $\alpha_{p,k},\beta_{p,k},\gamma_{p,k},\delta_{p,k}$. 
We will also consider a simplified higher-order product formula 
\begin{equation}\label{eqn:high_order_trotter_simplified}
    U_{\text{spf},p}(t+h,t) = \prod_{k=0}^{K_p} \exp\left(  -i\beta_{p,k} h f(t/T) H_1\right)  \exp\left( -i\alpha_{p,k}h (1-f(t/T)) H_0 \right). 
\end{equation}
Equation \eqref{eqn:high_order_trotter_simplified} is obtained by replacing all the discrete time points~\cref{eqn:high_order_trotter_general} with the fixed initial time. 
If the function $f$ is a constant function (which is not the case for the adiabatic dynamics), then~\cref{eqn:high_order_trotter_simplified} represents a $p$-th order time-independent product formula.

\subsection{Main results}

\subsubsection{Improved complexity}

In this work, we propose a general framework for analyzing algorithms to simulate near adiabatic dynamics and obtain greatly improved error bounds in~\cref{sec:analysis}. 
Our results and a comparison with previous results are summarized in~\cref{tab:main_result_2}. 

\begin{table}[t]
    \centering
    \begin{tabular}{cc|cc}\hline
        Method & Error bound & Time step size & Scaling \\\hline 
        First-order exponential integrator & Standard & $\mathcal{O}(\epsilon \alpha^{-1})$ & $\mathcal{O}(\alpha^{3}\Delta_*^{-3}\epsilon^{-2})$ \\
        First-order product formula & Standard & $\mathcal{O}(\epsilon \alpha^{-2}T^{-1})$ & $\mathcal{O}(\alpha^{6}\Delta_*^{-6}\epsilon^{-3})$ \\
        $p$-th order product formula & Standard & $\mathcal{O}(\epsilon^{1/p}\alpha^{-1-1/p}T^{-1/p})$ & $\mathcal{O}(\alpha^{3+3/p}\Delta_*^{-3-3/p}\epsilon^{-1-2/p})$ \\ \hline
        First-order exponential integrator & \cref{cor:exponential_linear} & $\alpha^{-1}$ & $\mathcal{O}(\alpha^3\Delta_*^{-3}\epsilon^{-1})$ \\
        First-order product formula & \cref{cor:1st_2nd_product_formula} & $\mathcal{O}(\min\{\alpha^{-1}, \Delta_*^{1/2}\widetilde{\alpha}_2^{-1/2}\})$ & $\mathcal{O}(\alpha^3\Delta_*^{-3}\epsilon^{-1} \max\{1,\widetilde{\alpha}_2^{1/2}\Delta_*^{-1/2}\alpha^{-1}\})$ \\
        $p$-th order simplified product formula & \cref{cor:Trotter_linear} & $\mathcal{O}(\min\{\alpha^{-1},\Delta_*^{1/p}\widetilde{\alpha}_p^{-1/p}\})$ & $\mathcal{O}(\alpha^3\Delta_*^{-3}\epsilon^{-1} \max\{ 1,\widetilde{\alpha}_p^{1/p}\Delta_*^{-1/p} \alpha^{-1} \})$ \\\hline
    \end{tabular}
    \caption{Summary of the complexity estimates on discretizing~\cref{eqn:AQC_dynamics}. 
    Query complexity is measured by the number of time propagation steps required in the evolution. 
    We assume the gap condition only for the time-dependent Hamiltonian $H(t)$, and that both $|f'(s)|$ and $|f''(s)|$ are uniformly bounded. 
    Here $\alpha = \|H_0\|+\|H_1\|$, $\widetilde{\alpha}_p = \sum_{\gamma_0,\cdots,\gamma_p\in\left\{0,1\right\}} \|[H_{\gamma_p}, \cdots, [H_{\gamma_1},H_{\gamma_0}]]\|$, $T$ is the evolution time, $\epsilon$ is the tolerated error, and $\Delta_*$ represents the minimum gap of $H(s)$. 
   }
    \label{tab:main_result_2}
\end{table}

We first state our results for the first-order exponential integrator. 
In~\cref{sec:complexity_exponential_integrator}, we demonstrate that this first-order method can utilize a time step size as large as $\alpha^{-1}$ without introducing significant discretization errors, where $\alpha = \|H_0\|+ \|H_1\|$. 
In comparison, the standard error bound typically requires a time step size of $\mathcal{O}(\epsilon\alpha^{-1})$, where $\epsilon$ is the tolerated level of errors (see~\cref{app:standard_analysis}). 
Therefore, our analysis greatly increases the allowed time step size for sufficiently accurate simulation by fully excluding the error dependence in the time step size. 
Since the overall evolution time $T$ should be $\mathcal{O}(\alpha^2\Delta_*^{-3}\epsilon^{-1})$ to mitigate diabatic errors~\cite{jansen2007bounds}, where $\Delta_*$ is the minimum gap of $H(s)$, our result reduces the number of the total propagation steps from $\mathcal{O}(\alpha^3 \Delta_*^{-3} \epsilon^{-2})$ to $ \mathcal{O}(\alpha^3 \Delta_*^{-3} \epsilon^{-1})$.

The main theoretical tool in our analysis is the discrete adiabatic theorem, which can be viewed as a discrete analog of the continuous adiabatic theorem. 
The theorem was first established in~\cite{DKS98}, and an improved version with explicit gap dependence was introduced in the recent work~\cite{CostaAnYuvalEtAl2022}. 
In the discrete adiabatic theorem, we are provided with a set of slowly varying unitary operators, along with a promise of the gap condition of these unitary operators. 
Sequentially applying these operators allows us to steer the system from the eigenstate of the initial unitary to the eigenstate of the final unitary, with a diabatic error bounded by $\mathcal{O}(1/T_{d})$, where $T_d$ is the number of the unitary operators. 
In our analysis, we treat the numerical integrators as a set of discrete walk operators. 
This allows us to directly apply the discrete adiabatic theorems to bound the error between the actual state and the exact eigenstates, making our approach tighter and more straightforward compared to standard analyses where errors are typically bounded by the sum of continuous diabatic errors and time discretization errors. 
The gap condition of the numerical integrator $e^{-ihH(t/T)}$ can be guaranteed from the gap condition of $H(s)$, as long as the step size $h \leq \alpha^{-1}$.  
The changes between adjacent numerical integrators are determined by the scheduling function $f(t/T)$, which is already sufficiently slow with large evolution time $T$ and suppresses the discrete adiabatic errors. 
This is why further reduction of the time step size is unnecessary.

Our analysis based on the discrete adiabatic theorem also applies to the high-order product formula. 
However, unlike the first-order exponential integrator case, now the gap condition of the Trotter numerical integrators can be very different from the gap condition of the Hamiltonian $H(s)$. 
For example, even for the first-order Trotter method $e^{-ihf(s)H_1}e^{-ih(1-f(s))H_0}$, it is not clear how its gap is related to the gap condition of the Hamiltonian $H(s)$ when the step size $h$ is large. 
It becomes even more complicated when we consider higher-order Trotter in~\cref{eqn:high_order_trotter_general}, because the exponentials are the time evolution of Hamiltonian $H(s)$ evaluated at different times. 
Therefore, in~\cref{sec:complexity_Trotter}, we instead consider the simplified higher-order product formula in~\cref{eqn:high_order_trotter_simplified} with all the Hamiltonians evaluated at the same time. 
Our strategy is to reduce the step size $h$ such that the numerical operator $U_{\text{spf},p}$ is close to the walk operator $e^{-ihH(t/T)}$ by the distance at most $c\Delta_*$ for a small constant $c$, then the gap of $H(s)$ would imply the gap of $U_{\text{spf},p}$. 
Such a step size can be determined using the time-independent Trotter error bound in~\cite{ChildsSuTranEtAl2019}. 
An important observation is that the sole purpose of reducing the time step size is to ensure the gap condition, so the time step size depends only on the norms and gaps of the Hamiltonians, but is independent of the overall evolution time $T$ and the tolerated error $\epsilon$. 

Combining this with the discrete adiabatic theorem, we show in~\cref{sec:complexity_Trotter} that, in a $p$-th simplified product formula, a sufficiently accurate simulation may be obtained with time step size $h = \mathcal{O}(\min\{\alpha^{-1},\Delta_*^{1/p}\widetilde{\alpha}_p^{-1/p}\})$, where $\widetilde{\alpha}_p = \sum_{\gamma_0,\cdots,\gamma_p\in\left\{0,1\right\}} \|[H_{\gamma_p}, \cdots, [H_{\gamma_1},H_{\gamma_0}]]\|$ is the scaling of nested commutators of depth $p$. 
Again, our analysis shows that the time step size can be independent of $T$ and $\epsilon$, while the best previous analysis~\cite{WiebeBerryHoyerEtAl2010} gives the estimate of $h = \mathcal{O}(\epsilon^{1/p}\alpha^{-1-1/p}T^{-1/p})$ which still depends on $\epsilon$ and $T$. 
As a result, our results still improve the standard complexity estimate by suggesting a larger time step size and smaller number of propagation steps for accurate simulation. 
Another interesting remark is that our analysis implies the advantage of higher-order methods from a completely different perspective -- the advantage is because higher-order methods can better match the gap condition with larger time step sizes, instead of better $\epsilon$ and $T$ dependence as suggested by the standard numerical analysis.

\subsubsection{Improved complexity with boundary cancellation condition}

We have shown that the overall complexity of simulating near adiabatic dynamics using the first-order exponential integrator or product formula is $\mathcal{O}(1/\epsilon)$, omitting dependence on other parameters. 
This linear dependence is due to the scaling of the adiabatic error in general case, namely $\mathcal{O}(1/T)$ in the continuous adiabatic theorem and $\mathcal{O}(1/T_{d})$ in the discrete adiabatic theorem. 
In~\cref{sec:analysis_boundary_cancellation}, we show how the overall complexity can be further improved by taking advantage of high-order adiabatic theorems, under an additional assumption which is known as the boundary cancellation condition. 

The boundary cancellation condition assumes the scheduling function $f(s)$ has vanishing derivatives of any order at the boundary $s = 0, 1$. 
Then, in the continuous case, the diabatic error can be suppressed by the high-order adiabatic theorems to a super-polynomial scaling $\mathcal{O}(1/T^{k})$ for any positive integer $k$~\cite{jansen2007bounds}, or even an exponential convergence $\mathcal{O}(e^{-cT})$~\cite{Nenciu1993,GeMolnarCirac2016}. 
In the discrete case, the work~\cite{DKS98} also claims a high-order discrete adiabatic theorem, which has the error scaling $\mathcal{O}(1/T^k)$ for any positive integer $k$. 
Again, in the time discretization of~\cref{eqn:AQC_dynamics}, we view the numerical integrators as a set of discrete walk operators. 
Then the high-order discrete adiabatic theorem directly guarantees that, if the Hamiltonian has constant spectral norm and gap, then it suffices to use an $\mathcal{O}(1)$ time step size and the overall number of steps is $\mathcal{O}(1/\epsilon^{o(1)})$. 
Such a high-order convergence holds for any numerical methods, even including first-order ones like the first-order exponential integrator and the first-order product formula. 

However, we remark that the improved scaling $\mathcal{O}(1/\epsilon^{o(1)})$ we find here is not mathematically rigorously proved yet. 
This is because our analysis is based on the high-order discrete adiabatic theorem claimed in~\cite{DKS98}, but we find a missing step in its proof and this missing step seemingly cannot be fixed in a simple way (see~\cref{app:DKS_missing_step} for details). 
Nevertheless, in~\cref{app:DKS_missing_step} we show numerical evidence that supports the correctness of the high-order discrete adiabatic theorem, so our improved complexity would hold in practice.

\subsubsection{Advantage of Trotterized evolution over continuous AQC}

So far, we have given improved complexity estimates of discretized AQC for gapped Hamiltonian $H(s)$. 
Notice that in order to apply the discrete adiabatic theorem, what we need is the gap condition of the numerical integrators rather than the Hamiltonian $H(s)$. 
In the first-order product formula with large time step size, these two types of gap conditions can be very different. 
Our strategy so far is to moderately reduce the time step size $h$ such that the Trotter operator $e^{-ihf(s)H_1}e^{-ih(1-f(s))H_0}$ is sufficiently close to the walk operator $e^{-ihH(s)}$, whose spectral gap is closely related to that of the Hamiltonian $H(s)$.

It seems that the difference of the gap conditions between the interpolating Hamiltonian $H(s)$ and the unitary numerical operators only introduces extra computational overheads, as reduced time step size is required to match the gap conditions. 
However, in~\cref{sec:gap}, we discuss an interesting phenomenon that such difference of the gap conditions can actually be advantageous when the interpolating Hamiltonian has an exponentially small spectral gap or becomes gapless. 
In such cases, Trotterized evolution with large time step size may significantly outperform the continuous AQC and even becomes the only way for adiabatic state preparation. 
More specifically, we consider the case where the gap of $H(s)$ is exponentially small or just becomes $0$. 
Then, continuous AQC requires an exponentially large evolution time or may even fail~\cite{AvronElgart1999}. 
However, numerical discretization with a large time step size, such as the first-order Trotter method with time step size $1$ (that is, $e^{-if(s)H_1}e^{-i(1-f(s))H_0}$), may still have a large gap. 
Thus, discrete adiabatic theorems suggest that the large-step-size evolution may be able to find the desired eigenstates with much lower complexity. 
Notice that in this case, reducing the time step size further could make the situation worse, as the discrete evolution would be forced to approximate the continuous AQC that does not work. 
Therefore, for Hamiltonians with exponentially small gaps, their large-step-size discretization may surprisingly yield an exponential speedup compared to continuous AQC. 
We provide numerical evidence for this phenomenon using a toy model in~\cref{sec:gap}.

\subsubsection{Discrete adiabatic Grover search and implication on QAOA}

All of our aforementioned results have super-linear spectral gap dependence, as they hold generically and in the worst-case scenario. 
However, in continuous AQC, a linear gap dependence can be achieved by carefully designing an interpolation function that takes into account simultaneous gap information~\cite{RolandCerf2002,jansen2007bounds,RevModPhys.90.015002,an2019quantum}. 
We observe that the linear gap dependence can be preserved with the same interpolating function when discretizing the dynamics using large time step size. 

To illustrate this, in~\cref{sec:Grover}, we consider the example of adiabatic Grover search~\cite{RolandCerf2002}, where we aim to find a uniform superposition of $M$ target states in an $N$-dimensional Hilbert space. 
We consider two possible interpolation functions: one generalized from~\cite{RolandCerf2002,DalzellYoderChuang2017} that changes speed proportional to the gap size, and the second, a smooth connection of two identical glue functions, used in~\cite{an2019quantum} to construct adiabatic quantum linear system solvers.  
The latter simultaneously slows down when gap closes and satisfies the boundary cancellation condition. 
We show that the AQC dynamics with either interpolation function can be discretized using a time step size $1$ without incurring extra discretization errors, and the overall number of steps using the second interpolation function can be $\widetilde{\mathcal{O}}(\sqrt{N/M})$ in dimension, almost recovering the quadratic Grover speedup, and $\mathcal{O}(1/\epsilon^{o(1)})$ in precision assuming the correctness of high-order discrete adiabatic theorems. 
Our second scheduling function provides a super-polynomial improvement in precision compared to previous work on adiabatic quantum search~\cite{DalzellYoderChuang2017}. 
A remarkable feature of our algorithm is that it is robust against the estimations of $M$ in two aspects. 
First, constructing the interpolation functions does not require an estimate of $M$ (and for the second interpolation function, not even information about $N$). 
Additionally, increasing the number of propagation steps improves accuracy, similar to fixed-point iteration~\cite{YoderLowChuang2014,DalzellYoderChuang2017}, instead of ``overcooking'' the problem as in the original Grover algorithm~\cite{Grover1996}.

Another powerful approach for eigenstate preparation is the Quantum Approximate Optimization Algorithm (QAOA)~\cite{FarhiGoldstoneGutmann2014}. 
In QAOA, an ansatz is considered that alternatively evolves an input state with the target Hamiltonian $H_1$ and a mixing Hamiltonian $H_0$. 
The goal is to optimize the evolution times, also called QAOA angles, such that the final output state approximates the desired eigenstate. 
QAOA is closely related to AQC and especially discretized AQC, since for gapped Hamiltonian the Trotterization of AQC directly gives a candidate of the QAOA angles, yet QAOA is believed to have the potential to outperform AQC by resulting in a shorter overall evolution time and having better gap dependence. 
Even for the simple example of the Grover search problem, it was previously believed that the performance and analysis of QAOA should involve a very different process from estimating the gap dependence in the AQC approach~\cite{JiangRieffelWang2017}. 
However, our improved AQC results demonstrate the existence of a Trotterized AQC with time step size $1$, which can readily solve the Grover search problem with a cost of $\mathcal{O}(\sqrt{N})$ in the case of a single marked state. 
Therefore, the performance of discretized AQC can match that of QAOA for this task.

\section{Improved time step size and complexity estimate}\label{sec:analysis}

\subsection{Key idea}

We first explain the key components in our improved analysis. 
The first observation is that, unlike the standard analysis which uses the triangle inequality and bounds both the continuous adiabatic error and the time discretization error, we directly view the numerical propagation operator $U_{\text{num}}((j+1)h, jh)$ as the discrete adiabatic walk operator. 
Then, as long as $U_{\text{num}}$ satisfies the assumptions in the discrete adiabatic theorems and the eigenstate of the last numerical propagation operator coincides with our target eigenstate, the overall approximation error can be directly bounded using the discrete adiabatic theorems. 
This is a more straightforward and tighter way to bound the overall error. 
The second observation is that numerical operators with relatively large time step size suffice to fulfill the assumption of slow changes in the discrete adiabatic theorems. 
This is because the Hamiltonian $H(t/T)$ varies on the rescaled time $s = t/T$, which already controls the change of the walk operators at a low level and thus smaller time step sizes are not necessary. 

A further result from our analysis is a better time discretization error bound for the first-order exponential integrator and the product formula. 
We may use the triangle inequality in a different way to bound the numerical discretization error by the sum of the continuous and discrete adiabatic errors, which are both well bounded even with large time step sizes.

\subsection{A review of the discrete adiabatic theorem}\label{sec:DAT_linear}

Discrete adiabatic evolution is a discrete analog of continuous adiabatic evolution. 
In continuous adiabatic evolution (see~\cref{app:continuous_adiabatic} for a detailed discussion), we are given a time-dependent Schr\"odinger equation. 
If the time-dependent Hamiltonian varies slowly and there is a gap in its spectrum, then propagating the Schr\"odinger equation approximately drives the state from an eigenstate of the initial Hamiltonian to an eigenstate of the final Hamiltonian. 
In the discrete setting, we are given a set of unitary walk operators and sequentially apply those on an input quantum state. 
If two adjacent walk operators are close and satisfy the gap condition, then such discrete evolution will approximately drive the quantum state from the eigenstate of the first walk operator to the eigenstate of the final walk operator. 

The discrete adiabatic theorem bounds the distance between the output of the discrete evolution and the ideal eigenspace of the final walk operator. 
It was first proved in~\cite{DKS98}, and a recent work~\cite{CostaAnYuvalEtAl2022} improved the linearly convergent discrete adiabatic theorem to an explicit form with clear gap dependence. 
Here, we briefly review the result in~\cite{CostaAnYuvalEtAl2022}.

Let $W(s)$ be a set of parameterized unitary walk operators for $s \in [0,1]$. 
For an input quantum state $\ket{\psi}$ and a positive integer $T_{d}$, consider the discrete evolution $U(n/T_{d})$ defined as 
\begin{equation}\label{eqn:discrete_evolution}
    U(n/T_{d})\ket{\psi} = \prod_{j=0}^{n-1}W(j/T_{d}) \ket{\psi}, \quad 1 \leq n \leq T_{d}. 
\end{equation}
Suppose that the discrete evolution satisfies the following conditions. 
\begin{enumerate}
    \item For $k = 1,2$, there exist real-valued functions $c_k(s)$ such that 
    \begin{equation}\label{eqn:def_ck}
        \|D^{(k)}W(s)\| \leq \frac{c_k(s)}{T_{d}^k}
    \end{equation}
    for $0 \leq s \leq 1-k/T_{d}$, where $D^{(k)}W(s)$ represents the $k$-th order finite difference of $W(s)$ with step size $1/T_{d}$. \label{cond:1}
    \item The eigenvalues of $W(s)$ can be separated into two groups $\sigma_P(s)$ and $\sigma_Q(s)$ such that for any $k = 0,1,2$ and $0 \leq s \leq 1-k/T_{d}$, the angular distance between the arcs $\sigma_P^{(k)}(s)$ and $\sigma_Q^{(k)}(s)$ is bounded from below by $\Delta_k(s) > 0$, where 
    \begin{equation}\label{eq:seqgap}
        \sigma_P^{(k)}(s) \supseteq \bigcup_{l=0}^k \sigma_P(s+l/T_{d}), \qquad \sigma_Q^{(k)}(s) \supseteq \bigcup_{l=0}^k \sigma_Q(s+l/T_{d}).  
    \end{equation} \label{cond:2}
    \item The input state $\ket{\psi}$ is within the eigenspace corresponding to $\sigma_P(0)$. \label{cond:3}
\end{enumerate}

The first of these conditions means that the rate of variation of the walk operator is bounded, and is a similar condition to the bound on the first and second time derivatives in the continuous case.
The second condition is ensuring that there is a gap.
For generality it is allowed to be a gap between eigenspaces, rather than just a ground state and excited states.
There is a further subtlety in the discrete case that the gap is considered between successive steps as well; that is, the eigenspace $\sigma_P$ at one step cannot overlap with $\sigma_Q$ at the next step.
This is accounted for by considering the union of the eigenspaces at three successive steps as in Eq.~\eqref{eq:seqgap}. 
However, we remark that if $T_d$ is large enough, then this multistep gap condition can be guaranteed by a uniformly bounded spectral gap of each $W(s)$. 
Intuitively, this is because the eigenvalues of $W(s)$ do not change much from one step to another, and thus different arcs of the spectrum at different time steps will not overlap. 
We discuss this with more technical details in~\cref{app:multistep_gap}. 
The third condition just specifies $\sigma_P$ as the eigenspace that the state should remain within during the discrete adiabatic evolution.

For reference we also need to define an ideal adiabatic evolution operator $U_A(n/T_{d})$, which drives the input state to the ideal final eigenspace without any leakage. 
The operator $U_A(n/T_{d})$ is defined through a set of ideal adiabatic walk operators $W_A(s)$ such that 
\begin{equation}
    U_A(n/T_{d}) \ket{\psi} = \prod_{j=0}^{n-1} W_A(j/T_{d}) \ket{\psi}, \qquad 1 \leq n \leq T_{d}. 
\end{equation}
The ideal adiabatic walk operators $W_A(s)$ include a counter-diabatic correction of the original walk operators $W(s)$. 
For technical simplicity, we omit the explicit definition of $W_A(s)$ here and refer to~\cite{DKS98,CostaAnYuvalEtAl2022}. 
An important proposition involving $U_A(n/T_{d})$ is that for all $1 \leq n \leq T_{d}$, 
\begin{equation}\label{eqn:prop_UA}
    U_A(n/T_{d}) P(0) = P(n/T_{d}) U_A(n/T_{d})
\end{equation}
where $P(s)$ denotes the spectral projection onto the eigenspace $\sigma_P(s)$.
The left-hand side corresponds to projecting onto the desired initial eigenspace, then evolving for $n$ steps, which should give the eigenspace at step $n$ if the evolution is adiabatic.
That is indeed what the right-hand side shows, because it has the evolution for $n$ steps \emph{followed} by the projection onto the eigenspace, so acting with this operator on any state must yield a state in the correct eigenspace. 
Therefore $U_A(n/T_{d})$ exactly preserves the simultaneous eigenspace of $W(n/T_{d})$ corresponding to $\sigma_P(n/T_{d})$.  

Now we are ready to state the first discrete adiabatic theorem proved in~\cite{CostaAnYuvalEtAl2022}. 

\begin{lemma}\label{lem:DAE_linear}
    Suppose that $T_{d} \geq \sup_{s\in[0,1]} (4\hat{c}_1(s)/\check{\Delta}_2(s))$, where the hat (check) notations of a function at $s$ represent the maximum (minimum) of its function values at neighboring points $\{s-1/T_{d},s,s+1/T_{d}\}$ whenever well-defined. 
    Then for any $1 \leq n \leq T_{d}$, there exists an absolute constant $C$ such that 
    \begin{equation}
        \|U(n/T_{d})-U_A(n/T_{d})\| \leq \frac{C}{T_{d}}\left(\frac{\hat{c}_1(0)}{\check{\Delta}_2(0)^2} + \frac{\hat{c}_1(n/T_{d})}{\check{\Delta}_2(n/T_{d})^2} + \sum_{j=0}^{n-1} \frac{\hat{c}_1(j/T_{d})^2}{T_{d}\check{\Delta}_2(j/T_{d})^3} + \sum_{j=0}^{n-1} \frac{\hat{c}_2(j/T_{d})}{T_{d}\check{\Delta}_2(j/T_{d})^2}\right). 
    \end{equation}
\end{lemma}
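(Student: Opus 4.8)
The plan is to reproduce the explicit discrete adiabatic estimate of \cite{CostaAnYuvalEtAl2022}, whose engine is a discrete summation-by-parts argument that upgrades a naive $\mathcal{O}(1)$ telescoping bound to the claimed $\mathcal{O}(1/T_d)$ scaling. First I would install the spectral machinery: define the projector $P(s)$ onto $\sigma_P(s)$ as a Riesz contour integral of the resolvent $(z-W(s))^{-1}$, with the contour taken on the unit circle so that it separates the arcs $\sigma_P^{(k)}(s)$ from $\sigma_Q^{(k)}(s)$. The gap hypothesis (the second condition) then gives resolvent bounds of order $1/\Delta_k(s)$ on these contours, and differentiating under the integral expresses the finite differences $D^{(k)}P(s)$ in terms of $D^{(k)}W(s)$, yielding bounds of the shape $\|D^{(k)}P(s)\| \lesssim \hat{c}_k(s)/(T_d^k \check{\Delta}_k(s))$ — the neighbor-max/neighbor-min notation entering because $P$ at one point depends on $W$ at adjacent points, and everything ultimately being controlled by the smallest multistep gap $\check{\Delta}_2$. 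The threshold $T_d \geq \sup_s 4\hat{c}_1(s)/\check{\Delta}_2(s)$ is precisely what forces the spectrum to move little enough between consecutive steps that all of these contours and bounds hold simultaneously.

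Next I would recall the ideal walk operators $W_A(s)$ of \cite{DKS98,CostaAnYuvalEtAl2022}: unitaries implementing parallel transport of the eigenspace, i.e.\ $W_A(s)P(s) = P(s+1/T_d)W_A(s)$, from which the intertwining identity $U_A(n/T_d)P(0) = P(n/T_d)U_A(n/T_d)$ follows by induction. A direct computation shows that $W_A(s) - W(s)$ is a controlled \emph{off-diagonal} correction of size $\mathcal{O}(\hat{c}_1(s)/(T_d \check{\Delta}_2(s)))$. I would then introduce the reduced comparison operator $\Omega(n/T_d) := U_A(n/T_d)^\dagger U(n/T_d)$ and telescope, using $U((j+1)/T_d) = W(j/T_d)U(j/T_d)$ and $U_A((j+1)/T_d) = W_A(j/T_d)U_A(j/T_d)$, to obtain
\[
  \Omega(n/T_d) - I \;=\; \sum_{j=0}^{n-1} U_A\big((j+1)/T_d\big)^\dagger \big(W(j/T_d) - W_A(j/T_d)\big)\, U(j/T_d).
\]
Bounding this sum term by term only yields $\mathcal{O}(1)$, so the genuine content lies in the next step.

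The crux is a discrete integration by parts. I would construct an operator-valued function $B(s)$ solving a discrete Sylvester equation $W(s)B(s)W(s)^\dagger - B(s) = Y(s)$, where $Y(s)$ is the off-diagonal part of $W_A(s)-W(s)$; this is solvable and obeys $\|B(s)\| \lesssim \hat{c}_1(s)/(T_d \check{\Delta}_2(s)^2)$ precisely because the gap bounds the relevant eigenvalues of the map $X \mapsto WXW^\dagger - X$ away from zero. The point of $B$ is that, modulo \emph{diagonal} terms that preserve the eigenspaces and hence contribute no leakage, the summand $U_A^\dagger(W-W_A)U$ equals a finite difference $D\big(U_A^\dagger B U\big)$ plus a remainder built from $DB$ and $DW$ through the discrete Leibniz rule. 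Summing the finite-difference part collapses to the boundary contributions $\hat{c}_1(0)/\check{\Delta}_2(0)^2$ and $\hat{c}_1(n/T_d)/\check{\Delta}_2(n/T_d)^2$ divided by $T_d$, while the remainder, using $\|DB\| \lesssim \hat{c}_1^2/(T_d^2\check{\Delta}_2^3) + \hat{c}_2/(T_d^2\check{\Delta}_2^2)$ and $\|DW\| \lesssim \hat{c}_1/T_d$ (with the cross term $\|B\|\,\|DW\|$ absorbed into the first sum), produces exactly the two sums appearing in the statement. Since $U$ and $U_A$ are unitary, $\|U(n/T_d)-U_A(n/T_d)\| = \|\Omega(n/T_d)-I\|$, and collecting all pieces gives the stated bound with an absolute constant $C$.

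The main obstacle I expect is verifying all of the above \emph{uniformly} under only the stated threshold on $T_d$: one must check that for every $j$ the arc $\sigma_P$ at step $j$ stays separated from $\sigma_Q$ at steps $j+1$ and $j+2$, so that the resolvents, the differences $D^{(k)}P$, and the Sylvester solution $B$ are simultaneously well defined with the advertised $\check{\Delta}_2$-dependence; this is exactly why the multistep gaps and the neighbor-max/neighbor-min notation are unavoidable. The secondary difficulty is the bookkeeping in the summation by parts: cleanly isolating the diagonal pieces of $W_A - W$ (which do not contribute) from the off-diagonal piece absorbed into $B$, and confirming that each error term spawned by the discrete product rule genuinely carries the extra factor $1/T_d$ and the correct power of $\check{\Delta}_2$.
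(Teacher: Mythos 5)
The paper does not prove this lemma itself: it is imported verbatim from Ref.~\cite{CostaAnYuvalEtAl2022} (building on~\cite{DKS98}), and your proposal is a faithful reconstruction of exactly that proof's architecture — Riesz-projection/resolvent bounds controlled by the multistep gap, the ideal walk operators with the intertwining property, the wave operator $\Omega = U_A^\dagger U$ with its telescoping (Volterra) sum, and the summation-by-parts step via an auxiliary operator of size $\mathcal{O}(\hat{c}_1/(T_d\check{\Delta}_2^2))$ producing the two boundary terms and the two $1/T_d$-suppressed sums. So your route is essentially the same as the cited proof, and the sketch is consistent with it at the level of detail given.
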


Because $U_A(n/T_{d})$ is exact adiabatic evolution, it would map an initial state in the initial eigenspace to one in the desired final eigenspace.
The bound on the difference of operators therefore gives a bound on how much the state can leak outside the eigenspace under the true evolution $U(n/T_{d})$. 

\subsection{First-order exponential integrator}\label{sec:complexity_exponential_integrator}

\subsubsection{General case with gap assumption on the walk operator}

Now we consider walk operators that are given by Hamiltonian evolution for time $h$, $W(s) = e^{-i h H(s)}$, where $H(s)$ is varied in an adiabatic way as in Eq.~\eqref{eq:schedule}.
We can then use this form to determine $c_k(s)$ and provide a theorem for the error in terms of the properties of the Hamiltonians $H_0,H_1$.
For this theorem the gap is still given for the walk operator, though it will be implicitly dependent on $h$. 

In general cases, we can obtain a linear convergence in the evolution time. 
We first prove the following result for arbitrary time step size $h$ under the multistep gap assumption for the walk operator.  

\begin{theorem}\label{thm:exp_linear}
    Consider~\cref{prob:state_prep} using digital adiabatic simulation with first-order exponential integrator $W(s) = e^{-i h H(s)}$, where $H(s)$ is given as~\cref{eq:schedule}. 
    Suppose that both $|f'(s)|$ and $|f''(s)|$ are uniformly bounded over $[0,1]$. 
    Let $\widetilde{\Delta}(s)$ denote the multistep spectral gap of $U_{\exp}(sT+h,sT)$, and $\widetilde{\Delta}_* = \min_{s\in[0,1-h/T]} \widetilde{\Delta}(s)$. 
    Then, the overall error between the actual and the ideal evolution can be bounded by 
    \begin{equation}
        \mathcal{O}\left(\frac{h}{T}\left(\frac{h(\|H_0\|+\|H_1\|)}{\widetilde{\Delta}_*^2} + \frac{ h^2 (\|H_0\|+\|H_1\|)^2}{\widetilde{\Delta}_*^2} + \frac{h^2(\|H_0\|+\|H_1\|)^2}{\widetilde{\Delta}_*^3} \right) \right). 
    \end{equation}
\end{theorem}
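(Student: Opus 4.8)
The plan is to apply \lem{DAE_linear} directly to the walk operators $W(s) = U_{\exp}(sT+h,sT) = e^{-ihH(s)}$, with $T_d = T/h$ discrete steps, so the whole task reduces to computing explicit bounds for the variation functions $c_1(s)$ and $c_2(s)$ in terms of $h$ and the norms $\|H_0\|,\|H_1\|$, and then substituting into the stated error bound. First I would express the finite differences $D^{(k)}W(s)$ through an integral (or Taylor-with-remainder) representation in the rescaled parameter: since $W$ depends on $s$ only through $H(s) = (1-f(s))H_0 + f(s)H_1$, the derivatives of $W$ with respect to $s$ pick up factors of $f'(s)$ and $f''(s)$ multiplied by $h$ (from differentiating the exponent $-ihH(s)$). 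Concretely, $\partial_s W(s)$ has norm $O(h|f'(s)|(\|H_0\|+\|H_1\|))$ and $\partial_s^2 W(s)$ has norm $O\big(h|f''(s)|(\|H_0\|+\|H_1\|) + h^2 |f'(s)|^2(\|H_0\|+\|H_1\|)^2\big)$, using $\|H_1 - H_0\| \le \|H_0\|+\|H_1\|$ and the fact that $\|e^{-ihH(s)}\|=1$ so conjugations and Duhamel-type integrals cost nothing in norm. The finite difference $D^{(k)}W(s)$ with step $1/T_d = h/T$ is then bounded, via the mean value form of the remainder, by $(1/T_d^k)$ times the sup of the $k$-th $s$-derivative over the relevant subinterval, which gives $c_1(s) = O(h|f'(s)|\alpha)$ and $c_2(s) = O(h|f''(s)|\alpha + h^2|f'(s)|^2\alpha^2)$ with $\alpha = \|H_0\|+\|H_1\|$.

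Next I would verify the hypothesis $T_d \ge \sup_s 4\hat c_1(s)/\check\Delta_2(s)$ of \lem{DAE_linear}: since $T_d = T/h$ and $\hat c_1 = O(h\alpha)$ (using the uniform bound on $|f'|$), this requires $T/h \gtrsim h\alpha/\widetilde\Delta_*^2$, i.e. $T \gtrsim h^2\alpha/\widetilde\Delta_*^2$, which holds because $T$ is taken large to suppress diabatic error and $h$ is bounded; I would state this as an implicit assumption absorbed into the asymptotic notation, consistent with how the table treats $T = \Omega(\alpha^2\Delta_*^{-3}\epsilon^{-1})$. Then I substitute: the boundary terms $\hat c_1(0)/\check\Delta_2(0)^2$ and $\hat c_1(n/T_d)/\check\Delta_2(n/T_d)^2$ contribute $O(h\alpha/\widetilde\Delta_*^2)$, the $\sum_j \hat c_1(j/T_d)^2/(T_d\check\Delta_2^3)$ term contributes $O\big(T_d \cdot (h\alpha)^2/(T_d\widetilde\Delta_*^3)\big) = O(h^2\alpha^2/\widetilde\Delta_*^3)$, and the $\sum_j \hat c_2(j/T_d)/(T_d\check\Delta_2^2)$ term contributes $O\big((h\alpha + h^2\alpha^2)/\widetilde\Delta_*^2\big)$, where I use the uniform bounds on $|f'|,|f''|$ to replace the sums over $j$ by $T_d$ times a constant. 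Multiplying the whole bracket by the prefactor $C/T_d = Ch/T$ and collecting terms reproduces exactly the three terms $\frac{h\alpha}{\widetilde\Delta_*^2}$, $\frac{h^2\alpha^2}{\widetilde\Delta_*^2}$, $\frac{h^2\alpha^2}{\widetilde\Delta_*^3}$ inside the $O(h/T(\cdots))$ of the theorem statement.

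The main obstacle I anticipate is getting the finite-difference bounds clean: \lem{DAE_linear} is stated in terms of \emph{finite differences} $D^{(k)}W(s)$ with step $1/T_d$, not continuous derivatives, so I need a lemma that bounds the $k$-th finite difference by the supremum of the $k$-th continuous derivative over the enclosing interval (a standard consequence of the integral form of divided differences / repeated application of the fundamental theorem of calculus), and I must make sure the intervals on which I take these suprema stay within $[0,1]$ so that $H(s)$, $f'(s)$, $f''(s)$ are all defined — this is why the lemma restricts to $0 \le s \le 1-k/T_d$. A secondary technical point is bounding $\partial_s^2 e^{-ihH(s)}$: differentiating twice produces a nested Duhamel integral with two insertions of $\partial_s H = (H_1-H_0)f'$ plus one insertion of $\partial_s^2 H = (H_1-H_0)f''$, and one must track that the double-insertion term carries $h^2$ while the single-insertion term carries only $h$ — I would write this out via the identity $\partial_s e^{-ihH(s)} = -ih\int_0^1 e^{-i h u H(s)} (\partial_s H(s)) e^{-ih(1-u)H(s)}\,du$ and differentiate once more under the integral sign. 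None of this is deep, but it is where sign-of-$h$ bookkeeping matters, since the whole point of the theorem is that the error scales with $h^2$ (not $h$) in the dominant terms, giving the $h$-independent threshold once $h = O(\alpha^{-1})$.
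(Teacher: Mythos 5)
Your proposal matches the paper's proof essentially step for step: the paper also sets $T_d = T/h$, identifies $W(j/T_d)$ with $U_{\exp}((j+1)h,jh)$, bounds $W'$ and $W''$ via the Duhamel-type derivative formula to get $c_1(s) = \mathcal{O}(h(\|H_0\|+\|H_1\|))$ and $c_2(s) = \mathcal{O}(h(\|H_0\|+\|H_1\|) + h^2(\|H_0\|+\|H_1\|)^2)$ (its \lem{c1_c2_exp}, with the finite-difference-to-derivative bound supplied by \lem{finite_difference_bound_general}), and then substitutes into \lem{DAE_linear}. Your explicit remark about the hypothesis $T_d \geq \sup_s 4\hat{c}_1(s)/\check{\Delta}_2(s)$ is, if anything, slightly more careful than the paper's proof of this theorem, which defers such conditions to \cor{exponential_linear}.
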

\begin{proof}
    Let $T_{d} = T/h$. 
    For $s\in[0,1]$ and any $0 \leq j \leq T_{d}-1$, we have 
    \begin{equation}
        W(j/T_{d}) = e^{-i h H(j/T_{d})} = e^{-i h H(jh/T)} = U_{\exp}((j+1)h,jh). 
    \end{equation}
    We show in~\cref{lem:c1_c2_exp} that, for this walk operator $W(s)$, we can choose its corresponding $c_1(s)$ and $c_2(s)$ (defined in~\cref{eqn:def_ck}) as 
    \begin{equation}
        c_1(s) = \mathcal{O}(h(\|H_0\|+\|H_1\|)), \quad c_2(s) = \mathcal{O}(h(\|H_0\|+\|H_1\|) + h^2(\|H_0\|+\|H_1\|)^2). 
    \end{equation}
    Then we can apply~\cref{lem:DAE_linear} to obtain the desired error bound. 
\end{proof}

\subsubsection{General case without gap assumption on the walk operator}

\cref{thm:exp_linear} assumes the multistep gap condition for the first-order exponential integrator. 
While this may be established independently in some applications, a more common assumption is the (fixed-time) gap condition for the Hamiltonian $H(s)$. 
When the time step size is small enough, we may relate the multistep gap condition for $W(s)$ to the fixed-time gap condition for $H(s)$ and show the following result. 

\begin{corollary}\label{cor:exponential_linear}
    Consider~\cref{prob:state_prep} using digital adiabatic simulation with the first-order exponential integrator~\cref{eqn:exponential_integrator_1st}, $W(s) = e^{-ihH(s)}$.
    Suppose that both $|f'(s)|$ and $|f''(s)|$ are uniformly bounded over $[0,1]$. 
    Let $\Delta(s)$ denote the spectral gap of $H(s)$, and $\Delta_* = \min_{s\in[0,1]} \Delta(s)$. 
    Then, in order to bound the overall error by $\epsilon \in (0,1)$, it suffices to choose 
    \begin{equation}
        T = \mathcal{O}\left( \frac{(\|H_0\|+\|H_1\|)^2}{ \Delta_*^3 \epsilon}  \right), 
    \end{equation}
    and 
    \begin{equation}
        h = \frac{1}{\|H_0\|+\|H_1\|} , 
    \end{equation}
    and the overall number of steps becomes 
    \begin{equation}
        T_{d} = \mathcal{O}\left( \frac{(\|H_0\|+\|H_1\|)^3}{ \Delta_*^3 \epsilon}  \right). 
    \end{equation}
\end{corollary}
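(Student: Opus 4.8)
The plan is to treat \cref{thm:exp_linear} as a black box and reduce the claim to two ingredients: (i) a translation of the \emph{fixed-time} spectral gap $\Delta(s)$ of $H(s)$ into the \emph{multistep} gap $\widetilde{\Delta}(s)$ of the walk operator $W(s)=e^{-ihH(s)}$, valid for the specific choice $h=1/\alpha$ with $\alpha=\|H_0\|+\|H_1\|$; and (ii) substituting the result into the error bound of \cref{thm:exp_linear} and rearranging. So first I would fix $h=1/\alpha$, hence $T_d=T/h=\alpha T$, and recall from \cref{lem:c1_c2_exp} that here $c_1(s)=\mathcal{O}(h\alpha)=\mathcal{O}(1)$; the bound of \cref{thm:exp_linear} then reads
\begin{equation}
  \mathcal{O}\!\left(\frac{h}{T}\left(\frac{h\alpha}{\widetilde{\Delta}_*^2}+\frac{h^2\alpha^2}{\widetilde{\Delta}_*^2}+\frac{h^2\alpha^2}{\widetilde{\Delta}_*^3}\right)\right).
\end{equation}

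\textbf{Gap translation (the main step).} The crux is to show $\widetilde{\Delta}_*=\Omega(\Delta_*/\alpha)$ once $h=1/\alpha$ and $T_d$ exceeds a threshold of order $\alpha/\Delta_*$. Three points need checking. First, \emph{no wraparound}: since $\|H(s)\|\le\|H_0\|+\|H_1\|=\alpha$, the spectrum of $hH(s)$ lies in $[-1,1]$, an interval of length $2<2\pi$, so $\theta\mapsto e^{-i\theta}$ is injective there and preserves the circle distance, whence $\sigma_P(s),\sigma_Q(s)$ map to two disjoint arcs separated by angular distance exactly $h\Delta(s)$. Second, \emph{multistep robustness}: for $|s'-s|\le2/T_d$ one has $\|H(s')-H(s)\|\le\alpha|f(s')-f(s)|\le2\alpha\|f'\|_\infty/T_d$, so by Weyl's inequality each of $\sigma_P^{(2)}(s),\sigma_Q^{(2)}(s)$ sits in a $2\alpha\|f'\|_\infty/T_d$-neighborhood of $\sigma_P(s),\sigma_Q(s)$; the real-line gap between these neighborhoods is at least $\Delta_*-4\alpha\|f'\|_\infty/T_d\ge\Delta_*/2$ once $T_d\ge8\alpha\|f'\|_\infty/\Delta_*$, and the neighborhoods still fit in an arc shorter than $2\pi$, so exponentiating gives a multistep angular gap $\ge h\Delta_*/2=\Delta_*/(2\alpha)$. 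Third, this threshold $T_d=\Omega(\alpha/\Delta_*)$ is, up to constants, also what the hypothesis $T_d\ge\sup_s 4\hat{c}_1(s)/\check{\Delta}_2(s)$ of \cref{lem:DAE_linear} requires, since $\hat{c}_1=\mathcal{O}(1)$ and $\check{\Delta}_2\ge\widetilde{\Delta}_*=\Omega(\Delta_*/\alpha)$. I would organise the second point through the neighborhood estimate of \cref{app:multistep_gap}. I expect this second point --- passing from a fixed-time gap to a genuine multistep gap while staying inside the wraparound-free window --- to be the part that needs the most care, though it is conceptually routine once $h$ is pinned to $1/\alpha$.

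\textbf{Substitution.} With $\widetilde{\Delta}_*=\Omega(\Delta_*/\alpha)$ (and $\widetilde{\Delta}_*=\mathcal{O}(1)$, so the cubic term dominates), $h\alpha=1$ and $h^2\alpha^2=1$, the displayed bound collapses to $\mathcal{O}\!\left(\tfrac{h}{T}\widetilde{\Delta}_*^{-3}\right)=\mathcal{O}\!\left(\tfrac{1}{\alpha T}\cdot\tfrac{\alpha^3}{\Delta_*^3}\right)=\mathcal{O}\!\left(\tfrac{\alpha^2}{T\Delta_*^3}\right)$. Setting this $\le\epsilon$ yields $T=\mathcal{O}(\alpha^2\Delta_*^{-3}\epsilon^{-1})$; enlarging $T$ by an additive $\mathcal{O}(1/\Delta_*)$ term if needed (of lower order in the regime $\epsilon\lesssim(\alpha/\Delta_*)^2$ of interest) guarantees $T_d=\alpha T$ clears the threshold from the previous step. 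Finally $T_d=T/h=\alpha T=\mathcal{O}(\alpha^3\Delta_*^{-3}\epsilon^{-1})$, the claimed number of steps, which finishes the proof.
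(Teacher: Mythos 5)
Your proposal is correct and follows essentially the same route as the paper's proof: fix $h=1/(\|H_0\|+\|H_1\|)$ so the walk-operator gap equals $h\Delta(s)$ (no wraparound since $\|hH(s)\|\le 1$), upgrade the fixed-time gap to the multistep gap via the perturbation argument of \cref{app:multistep_gap} once $T_d\gtrsim \alpha/\Delta_*$ (which the choice of $T$ guarantees, using $\Delta_*\le 2\alpha$ and a large enough constant, with no additive enlargement of $T$ actually needed), and then substitute $\widetilde{\Delta}_*=\Omega(\Delta_*/\alpha)$ into \cref{thm:exp_linear} to get the $\mathcal{O}(\alpha^2/(T\Delta_*^3))$ bound and the stated $T$ and $T_d$. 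Your Weyl-inequality phrasing of the multistep-gap step at the Hamiltonian level is a harmless variant of the paper's walk-operator perturbation via $c_1$ in \cref{lem:multistep_gap}.
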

\begin{proof}
    Following the notations in the proof of~\cref{thm:exp_linear}, we use $T_d = T/h$ to denote the number of steps and define the walk operator $W(s) = e^{-ih H(s)}$. 
    Notice that when $h = 1/(\|H_0\|+\|H_1\|)$, we always have $\|hH(s)\| \leq 1$, so the gap of $W(s)$ is exactly the same as that of $hH(s)$, 
    which is bounded from below by $h \Delta_*$. 
    Suppose that our choice of $T$ is 
    \begin{equation}
        T = C \frac{(\|H_0\|+\|H_1\|)^2}{ \Delta_*^3 \epsilon}
    \end{equation}
    for a constant $C > 0$. 
    Then we have 
    \begin{equation}
        T_d = T/h = C \frac{(\|H_0\|+\|H_1\|)^3}{ \Delta_*^3 \epsilon} \geq C \frac{(\|H_0\|+\|H_1\|)^3}{ \Delta_*^3}. 
    \end{equation}
    Notice that $\Delta_* \leq 2\|H(s)\| \leq 2(\|H_0\|+\|H_1\|)$, so we have 
    \begin{equation}
        T_d \geq \frac{C}{4} \frac{\|H_0\|+\|H_1\|}{ \Delta_*} = \frac{C}{4 h\Delta_* }. 
    \end{equation}
    According to~\cref{lem:c1_c2_exp}, we have $c_1(s) = \mathcal{O}(h(\|H_0\|+\|H_1\|)) = \mathcal{O}(1)$, so we can choose a sufficiently large constant $C$ such that $T_d \geq \frac{2\pi}{h\Delta_*} \sup c_1(s)$. 
    Then~\cref{lem:multistep_gap} ensures that the multistep gap of $W(s)$ is bounded from below by $h\Delta_*/2$. 
    
    Plugging this back into~\cref{thm:exp_linear} yields the error bound 
    \begin{align}
        & \quad \mathcal{O}\left(\frac{h}{T}\left(\frac{h(\|H_0\|+\|H_1\|)}{ h^2 \Delta_*^2} + \frac{ h^2 (\|H_0\|+\|H_1\|)^2}{h^2 \Delta_*^2} + \frac{h^2(\|H_0\|+\|H_1\|)^2}{h^3 \Delta_*^3} \right) \right) \nonumber \\
        & \leq \mathcal{O}\left(\frac{1}{T}\left(\frac{\|H_0\|+\|H_1\|}{  \Delta_*^2} + \frac{ h (\|H_0\|+\|H_1\|)^2}{ \Delta_*^2} + \frac{(\|H_0\|+\|H_1\|)^2}{ \Delta_*^3} \right) \right) \nonumber\\
        & \leq \mathcal{O}\left( \frac{(\|H_0\|+\|H_1\|)^2}{ T \Delta_*^3}  \right). 
    \end{align}
    In order to further bound the error by $\epsilon$, it suffices to choose 
    \begin{equation}
        T = \mathcal{O}\left( \frac{(\|H_0\|+\|H_1\|)^2}{ \Delta_*^3 \epsilon}  \right). 
    \end{equation}
    Since the error bound is independent of the time step $h$, we directly choose the largest possible time step size 
    \begin{equation}
        h = \frac{1}{\|H_0\|+\|H_1\|}
    \end{equation}
    which ensures the fixed-time gap condition of $W(s)$. 
    Finally, the condition $T/h > 2\widetilde{c} (\|H_0\|+\|H_1\|)/\Delta_*$ is naturally satisfied, since $\Delta_* \leq 2\|H(s)\| \leq 2(\|H_0\|+\|H_1\|)$ and 
    \begin{equation}
        T_{d} \sim \frac{(\|H_0\|+\|H_1\|)^3}{\Delta_*^3 \epsilon} = \frac{1}{\epsilon}\left(\frac{(\|H_0\|+\|H_1\|)}{\Delta_*}\right)^2 \frac{\|H_0\|+\|H_1\|}{\Delta_*},
    \end{equation}
    from which we can also infer that $T_{d} \gtrsim \frac{\|H_0\|+\|H_1\|}{\Delta_*} $ and thus satisfies the condition for $T/h$. 
\end{proof}

\subsubsection{Implementation}

So far we have shown an improved estimate on the number of steps in the first-order exponential integrator. 
However, it is still not straightforward to implement the first-order exponential integrator $e^{-ihH(s)}$ from the input models of $H_0$ and $H_1$. 
Here we briefly describe how to implement $e^{-ihH(s)}$. 
Suppose that we are given unitaries $V_0$ and $V_1$ such that, for $j = 0,1$, $V_j$ is a $(\eta_j,n_a,0)$-block-encoding of $H_j$, \emph{i.e.}, 
\begin{equation}
    _a\bra{0} V_j \ket{0}_a = \frac{1}{\eta_j} H_j. 
\end{equation}
Here $\eta_j$ represents the block-encoding factor such that $\eta_j \geq \|H_j\|$. 
The number of the ancilla qubits for both block-encodings is the same as $n_a$ without loss of generality (if the number of the ancilla qubits are not the same, then we may define $n_a$ to be the larger one and supplement the other unitary with extra trivial ancilla qubits). 
We further assume that the block-encoding factors are asymptotically optimal in the sense that $\eta_j = \mathcal{O}(\|H_j\|)$ as well. 

For every fixed $s \in [0,1]$, we first block encode $H(s) = (1-f(s))H_0 + f(s) H_1$. 
This can be done using the linear combination of unitaries technique supplemented with an extra ancilla qubit. 
Let the rotation be 
\begin{equation}
    R(s) = \frac{1}{\sqrt{(1-f(s))^2\eta_0^2 + f(s)^2\eta_1^2}} 
    \left(  \begin{array}{cc}
        (1-f(s))\eta_0 & f(s)\eta_1 \\
        f(s)\eta_1 &  -(1-f(s))\eta_0
    \end{array}  \right), 
\end{equation}
and the select oracle be 
\begin{equation}
    \text{SEL} = \ket{0}\bra{0} \otimes V_0 + \ket{1}\bra{1} \otimes V_1. 
\end{equation}
Then, according to~\cite[Lemma 52]{GilyenSuLowEtAl2019}, the operator $(\mathrm{H}\otimes I_{n_a} \otimes I_n)\text{SEL}(R(s)\otimes I_{n_a} \otimes I_n)$ is a $(1,n_a+1,0)$-block-encoding of 
\begin{equation}
    \frac{1}{ \sqrt{2} \sqrt{(1-f(s))^2\eta_0^2 + f(s)^2\eta_1^2} } H(s). 
\end{equation}
To implement $e^{-ihH(s)}$, we may use the qubitization~\cite{Low2019hamiltonian} or the quantum singular value transformation technique~\cite{GilyenSuLowEtAl2019}. 
In particular, write 
\begin{equation}
    e^{-ihH(s)} = \exp \left( -i h \sqrt{2} \sqrt{(1-f(s))^2\eta_0^2 + f(s)^2\eta_1^2} \frac{1}{\sqrt{2} \sqrt{(1-f(s))^2\eta_0^2 + f(s)^2\eta_1^2}} H(s) \right), 
\end{equation}
so we need to evolve the rescaled Hamiltonian $\frac{1}{\sqrt{2} \sqrt{(1-f(s))^2\eta_0^2 + f(s)^2\eta_1^2}} H(s)$ to time $h \sqrt{2} \sqrt{(1-f(s))^2\eta_0^2 + f(s)^2\eta_1^2}$, which is still $\mathcal{O}(1)$. 
This is because $h \sqrt{2} \sqrt{(1-f(s))^2\eta_0^2 + f(s)^2\eta_1^2} \leq \sqrt{2} h (\eta_0+\eta_1)$, which is $\mathcal{O}(1)$ due to the choice of $h$ in~\cref{cor:exponential_linear}. 
Then, by~\cite[Corollary 60]{GilyenSuLowEtAl2019}, we may obtain a $(1,n_a+3,\epsilon')$-block-encoding of $e^{-ihH(s)}$, using the unitaries $V_0$ and $V_1$ a total number of times $\mathcal{O}(\log(1/\epsilon'))$. 
Finally, the overall numerical propagator can be obtained by multiplying all the block-encoded unitaries, which, by~\cite[Corollary 55]{GilyenSuLowEtAl2019arXiv}, yields a $(1,n_a+3,4(T/h)^2\epsilon')$-block-encoding of it. 
To bound the overall error by $\epsilon$, it suffices to choose $\epsilon' = \epsilon h^2/T^2$, and this contributes to an extra factor of $\log((\|H_0\|+\|H_1\|)/(\epsilon \Delta_*) )$ in the complexity of the gate-based implementation.

\subsection{Product formula}\label{sec:complexity_Trotter}

\subsubsection{General case with gap assumption on the walk operator}

We now consider the product formula 
\begin{equation}\label{eqn:high_order_trotter_general_2}
    U_{\text{pf},p}(t+h,t) = \prod_{k=0}^{K_p} \exp\left(  -i\beta_{p,k} h f((t+\delta_{p,k}h)/T) H_1\right)  \exp\left( -i\alpha_{p,k}h (1-f((t+\gamma_{p,k}h)/T)) H_0 \right)
\end{equation}
and its simplified version 
\begin{equation}\label{eqn:high_order_trotter_simplified_2}
    U_{\text{spf},p}(t+h,t) = \prod_{k=0}^{K_p} \exp\left(  -i\beta_{p,k} h f(t/T) H_1\right)  \exp\left( -i\alpha_{p,k}h (1-f(t/T)) H_0 \right). 
\end{equation} 
We start with the simplest case where we assume the multistep gap condition for the product walk operator. 

\begin{theorem}\label{thm:trotter_linear}
    Consider~\cref{prob:state_prep} using digital adiabatic simulation with the product formula $U_{\text{pf},p}$ or the simplified product formula $U_{\text{spf},p}$. 
    Suppose that both $|f'(s)|$ and $|f''(s)|$ are uniformly bounded over $[0,1]$. 
    Let $\widetilde{\Delta}(s)$ denote the multistep spectral gap of $U_{\text{pf},p}(sT+h,sT)$ or $U_{\text{spf},p}(sT+h,sT)$, and $\widetilde{\Delta}_* = \min_{s\in[0,1-h/T]} \widetilde{\Delta}(s)$. 
    Then, the overall error between the actual and the ideal evolution can be bounded by 
    \begin{equation}
        \mathcal{O}\left( \frac{h}{T}\left(\frac{ h (\|H_0\|+\|H_1\|)}{\widetilde{\Delta}_*^2} + \frac{ h^2  (\|H_0\|+\|H_1\|)^2}{\widetilde{\Delta}_*^2} + \frac{ h^2 (\|H_0\|+\|H_1\|)^2}{\widetilde{\Delta}_*^3}\right) \right). 
    \end{equation}
\end{theorem}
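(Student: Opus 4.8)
The plan is to mirror the structure of the proof of \cref{thm:exp_linear}: we want to identify the product-formula propagator with the discrete walk operator appearing in the discrete adiabatic theorem \cref{lem:DAE_linear}, verify the hypotheses of that theorem, and then read off the error bound. Concretely, set $T_d = T/h$ and define the walk operator $W(s) = U_{\text{pf},p}(sT+h,sT)$ (or $U_{\text{spf},p}(sT+h,sT)$), so that the discrete evolution in \cref{eqn:discrete_evolution} coincides with the numerical propagation in \cref{eqn:AQC_numerical}. The multistep gap condition (\cref{cond:2}) is \emph{assumed} in the statement via $\widetilde{\Delta}(s)$ and $\widetilde{\Delta}_*$, so that hypothesis is free; the initial-state condition (\cref{cond:3}) is arranged by starting in the eigenstate of $H_0$, which is the relevant eigenspace of $W(0)$; the only real work is \cref{cond:1}, i.e.\ bounding the first and second finite differences $\|D^{(k)}W(s)\|$.

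The main step is therefore to establish bounds of the form
\begin{equation}
    c_1(s) = \mathcal{O}\!\left(h(\|H_0\|+\|H_1\|)\right), \qquad c_2(s) = \mathcal{O}\!\left(h(\|H_0\|+\|H_1\|) + h^2(\|H_0\|+\|H_1\|)^2\right),
\end{equation}
exactly as in the exponential-integrator case. To do this I would differentiate the general product form \cref{eqn:high_order_trotter_general_2} with respect to $s$: each factor $\exp(-i\beta_{p,k}hf((sT+\delta_{p,k}h)/T)H_1)$ has $s$-derivative of norm $\mathcal{O}(h|f'|\,\|H_1\|)$ by the standard bound on the derivative of a matrix exponential (Duhamel's formula), since the phase argument depends on $s$ through $f$ and the chain-rule factor $T\cdot(1/T)=1$ is $\mathcal{O}(1)$; likewise for the $H_0$ factors. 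Summing over the $\mathcal{O}(K_p) = \mathcal{O}(1)$ factors via the product rule gives $\|W'(s)\| = \mathcal{O}(h(\|H_0\|+\|H_1\|))$ using $|f'|$ bounded, and hence $\|D^{(1)}W(s)\| \le \frac{1}{T_d}\sup\|W'\| \le c_1(s)/T_d$. For the second finite difference one similarly bounds $\|W''(s)\|$: there are two types of terms, those where two derivatives hit the same factor (producing $f''$ and $(f')^2$ contributions of order $h(\|H_0\|+\|H_1\|)+h^2(\|H_0\|+\|H_1\|)^2$) and cross terms where one derivative hits each of two distinct factors (order $h^2(\|H_0\|+\|H_1\|)^2$), and then $\|D^{(2)}W(s)\| \le \frac{1}{T_d^2}\sup\|W''\|$. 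The simplified formula \cref{eqn:high_order_trotter_simplified_2} is the special case $\delta_{p,k}=\gamma_{p,k}=0$ and is handled identically (in fact slightly more easily, since all factors share the same argument $f(s)$). I would package these computations into an auxiliary lemma analogous to \cref{lem:c1_c2_exp}.

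With \cref{cond:1} verified through these $c_1,c_2$ and \cref{cond:2,cond:3} in hand, I apply \cref{lem:DAE_linear} directly. The theorem's hypothesis $T_d \ge \sup_s 4\hat{c}_1(s)/\check\Delta_2(s)^2$ translates into a mild largeness assumption on $T/h$ relative to $\alpha$ and $\widetilde\Delta_*$, which can be folded into the $\mathcal{O}$-notation or stated as implicitly required. Substituting $T_d = T/h$, $\hat c_1 = \mathcal{O}(h\alpha)$, $\hat c_2 = \mathcal{O}(h\alpha + h^2\alpha^2)$, and $\check\Delta_2 \ge \widetilde\Delta_*$ into the four-term bound of \cref{lem:DAE_linear}, the boundary terms give $\frac{1}{T_d}\cdot\frac{h\alpha}{\widetilde\Delta_*^2} = \frac{h}{T}\cdot\frac{h\alpha}{\widetilde\Delta_*^2}$, the $c_1^2$ sum gives $\frac{1}{T_d}\cdot\frac{1}{T_d}\sum \frac{h^2\alpha^2}{\widetilde\Delta_*^3} \le \frac{h}{T}\cdot\frac{h^2\alpha^2}{\widetilde\Delta_*^3}$ (the sum over $n \le T_d$ terms cancels one factor of $T_d$), and the $c_2$ sum gives $\frac{h}{T}\cdot\frac{h\alpha+h^2\alpha^2}{\widetilde\Delta_*^2}$, whose dominant piece is $\frac{h}{T}\cdot\frac{h^2\alpha^2}{\widetilde\Delta_*^2}$; collecting these reproduces exactly the claimed bound $\mathcal{O}\!\big(\frac{h}{T}(\frac{h\alpha}{\widetilde\Delta_*^2}+\frac{h^2\alpha^2}{\widetilde\Delta_*^2}+\frac{h^2\alpha^2}{\widetilde\Delta_*^3})\big)$. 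The only genuine obstacle I anticipate is bookkeeping in the second-derivative estimate for the general (non-simplified) formula: because the factors are evaluated at shifted arguments $t+\delta_{p,k}h$, $t+\gamma_{p,k}h$, one must be careful that the $\delta,\gamma$ shifts and the $h$-scaling interact correctly and that higher-order terms in the Duhamel expansion of each exponential factor are controlled — but since $p$, $K_p$ and all the coefficients $\alpha_{p,k},\beta_{p,k},\gamma_{p,k},\delta_{p,k}$ are fixed constants independent of $h$, $T$, $\epsilon$, this is a finite and routine estimate, not a conceptual difficulty.
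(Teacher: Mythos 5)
Your proposal is correct and follows essentially the same route as the paper: identify $W(j/T_d)$ with the product-formula propagator, bound $\|W'(s)\|$ and $\|W''(s)\|$ via the product/chain rule to get $c_1 = \mathcal{O}(h\alpha)$ and $c_2 = \mathcal{O}(h\alpha + h^2\alpha^2)$ (the paper does this in \cref{lem:c1_c2_product_formula} and \cref{lem:c1_c2_product_formula_simplified}, using that each single-Hamiltonian exponential commutes with its derivative rather than a full Duhamel expansion), and then apply \cref{lem:DAE_linear}. The substitution arithmetic and the remark about the shifted arguments being routine match the paper's treatment.
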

\begin{proof}
    For $s\in[0,1-h/T]$ and any $0 \leq j \leq T_{d}-1$, the walk operator $W(j/T_{d})$ is $U_{\text{pf},p}((j+1)h,jh)$ or $U_{\text{pfs},p}((j+1)h,jh)$. 
    We show in~\cref{lem:c1_c2_product_formula} and~\cref{lem:c1_c2_product_formula_simplified} that, for the walk operator $W(s)$ in either case, we can choose $c_1(s)$ and $c_2(s)$ (defined in~\cref{eqn:def_ck}) as 
    \begin{equation}
        c_1(s) = \mathcal{O}(h(\|H_0\|+\|H_1\|)), \quad c_2(s) = \mathcal{O}(h(\|H_0\|+\|H_1\|) + h^2(\|H_0\|+\|H_1\|)^2). 
    \end{equation}
    Then we can apply~\cref{lem:DAE_linear} to obtain the desired error bound. 
\end{proof}

\subsubsection{General case without gap assumption on the walk operator}

\cref{thm:trotter_linear} assumes the gap condition for the product walk operator, which is not natural when the time step size $h$ is large. 
When the gap condition is not assumed, we can indeed decrease the time step size to make the walk operator closer to the evolution $e^{-i h H(s)}$, of which the gap is closely related to that of $H(s)$. 
Here we only consider the simplified product formula~\cref{eqn:high_order_trotter_simplified}. 

We first prove the result for first- and second-order product formulae. 
An interesting phenomenon is that the first- and second-order formulae share the same eigenvalues as in the following lemma. 

\begin{lemma}\label{lem:trotter_first_second_eigenvalues}
    Consider the simplified first-order product formula $U_{\text{spf},1}(t+h,t) = e^{-ihf(t/T)H_1}e^{-ih(1-f(t/T))H_0}$ and the simplified second-order product formula  $U_{\text{spf},2}(t+h,t) = e^{-ih(1-f(t/T))H_0/2}e^{-ihf(t/T)H_1}e^{-ih(1-f(t/T))H_0/2}$. 
    The eigenvalues of $U_{\text{spf},1}(t+h,t)$ and $U_{\text{spf},2}(t+h,t)$ are exactly the same. 
\end{lemma}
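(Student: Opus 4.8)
The plan is to exhibit $U_{\text{spf},2}(t+h,t)$ as a similarity transform of $U_{\text{spf},1}(t+h,t)$, from which equality of the spectra (with multiplicities) is immediate. Fix $t$ and abbreviate $A = e^{-ihf(t/T)H_1}$ and $B = e^{-ih(1-f(t/T))H_0/2}$. Since $H_0$ is Hermitian, $B$ is unitary, hence invertible with $B^{-1} = e^{+ih(1-f(t/T))H_0/2}$, and $B^2 = e^{-ih(1-f(t/T))H_0}$ by the exponential law (there is no ordering subtlety because $H_0$ commutes with itself). In this notation $U_{\text{spf},1}(t+h,t) = A B^2$ and $U_{\text{spf},2}(t+h,t) = B A B$.

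First I would record the purely algebraic identity
\begin{equation}
    B A B = B\,(A B^2)\,B^{-1},
\end{equation}
which gives $U_{\text{spf},2}(t+h,t) = B\, U_{\text{spf},1}(t+h,t)\, B^{-1}$. Thus the two operators are conjugate by the invertible operator $B$, and conjugate operators share the same characteristic polynomial; therefore $U_{\text{spf},1}(t+h,t)$ and $U_{\text{spf},2}(t+h,t)$ have exactly the same eigenvalues, counted with algebraic multiplicity. Both being unitary, these eigenvalues lie on the unit circle, and since the conjugating operator $B$ is itself unitary the eigenspace structure is preserved as well, though plain similarity already suffices for the stated claim.

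I expect essentially no obstacle here: the only points meriting a line of care are that $B$ is genuinely invertible (immediate) and that $B^2$ coincides with the single exponential $e^{-ih(1-f(t/T))H_0}$ appearing in $U_{\text{spf},1}$ (immediate, as the two exponents commute). The same conclusion can be reached by conjugating with $A$ instead, or by invoking the standard fact that $XY$ and $YX$ have identical spectra; all routes yield the lemma.
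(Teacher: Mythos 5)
Your proposal is correct and is essentially the paper's own argument: the paper likewise writes $U_{\text{spf},2} = V U_{\text{spf},1} V^{\dagger}$ with $V = e^{-ih(1-f(t/T))H_0/2}$ (your $B$) and notes that unitary conjugation preserves eigenvalues. Your extra checks on invertibility of $B$ and on $B^2$ equaling the single exponential are fine but routine; nothing further is needed.
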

\begin{proof}
    Notice that $U_{\text{spf},2}$ is just a rotation of $U_{\text{spf},1}$ according to a unitary, i.e., 
    \begin{equation}
        U_{\text{spf},2}(t+h,t) = V U_{\text{spf},1}(t+h,t) V^{\dagger}, \quad V = e^{-ih(1-f(t/T))H_0/2}, 
    \end{equation}
    and such a rotation does not change eigenvalues. 
\end{proof}

Now we can estimate the gap of the product walk operators by reducing the time step size and using perturbation theory. 

\begin{lemma}\label{lem:gap_Trotter_1st_2nd}
    For $s\in[0,1-h/T]$ and a scheduling function $f(s)$, let $H(s) = (1-f(s))H_0+f(s)H_1$ be the interpolating Hamiltonian, $U_{\text{spf},1}$ and $U_{\text{spf},2}$ be the simplified time-dependent product formulae as $U_{\text{spf},1}(sT+h,sT) = e^{-ihf(s)H_1}e^{-ih(1-f(s))H_0}$ and $U_{\text{spf},2}(sT+h,sT) = e^{-ih(1-f(s))H_0/2}e^{-ihf(s)H_1}e^{-ih(1-f(s))H_0/2}$. 
    Let $\Delta_H(s)$ and $\Delta_{U,p}(s)$ denote the spectral gap of $H(s)$ and $U_{\text{spf},p}$, respectively. 
    Then, for any $h \leq 1/(\|H_0\|+\|H_1\|)$ and $p = 1,2$, we have 
\begin{align}
&h\Delta_H(s) - \frac{ h^3}{95}\left( 2\|[H_1,[H_1,H_0]]\| +  \|[H_0,[H_0,H_1]]\|\right) \nonumber\\
&\leq \Delta_{U,p}(s)\nonumber\\ 
&\leq h\Delta_H(s) + \frac{h^3}{95}\left( 2\|[H_1,[H_1,H_0]]\| +  \|[H_0,[H_0,H_1]]\|\right). 
\end{align}
\end{lemma}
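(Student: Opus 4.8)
The plan is to bound the operator distance between $U_{\text{spf},p}(sT+h,sT)$ and the ``reference'' walk operator $e^{-ihH(s)}$, and then invoke a Lipschitz-type (Weyl) perturbation bound for the spectral gap. By \lem{trotter_first_second_eigenvalues}, $U_{\text{spf},1}$ and $U_{\text{spf},2}$ are unitarily conjugate, hence have identical spectra and in particular the same spectral gap; so it suffices to treat one of them, say the symmetric second-order formula $U_{\text{spf},2}$, for which the Trotter error is cleanest. First I would apply a tight second-order (symmetric) Trotter error estimate — of the type in~\cite{ChildsSuTranEtAl2019} — to the two-term splitting $e^{-ihA}e^{-ihB}$ with $A = (1-f(s))H_0$ and $B = f(s)H_1$ (or rather its symmetric version), giving
\begin{equation}
    \left\| U_{\text{spf},2}(sT+h,sT) - e^{-ihH(s)} \right\| \leq \frac{h^3}{C}\left( 2\,\| [B,[B,A]] \| + \| [A,[A,B]] \| \right)
\end{equation}
for an explicit constant; since $f(s),1-f(s)\in[0,1]$, the nested-commutator norms are dominated by $2\|[H_1,[H_1,H_0]]\| + \|[H_0,[H_0,H_1]]\|$, which produces the stated prefactor (the paper's constant $95$ presumably comes from tracking the sharp constant in the symmetric BCH remainder and the $f$-factors). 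The hypothesis $h \leq 1/(\|H_0\|+\|H_1\|)$ guarantees $\|hH(s)\| \leq 1$, which matters because on the arc $\{e^{-i\theta} : |\theta| \le 1\}$ the eigenphases of $e^{-ihH(s)}$ sit on an injective arc, so the spectral gap of $e^{-ihH(s)}$ is exactly $h\Delta_H(s)$ rather than some wrapped-around quantity.

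Next I would convert the operator-norm bound into a gap bound. The spectral gap is a $1$-Lipschitz function of the operator in the following sense: if $\|U - U'\| \leq \delta$, then the (appropriately defined angular) gap of $U'$ differs from that of $U$ by at most $2\delta$ — each eigenvalue moves by at most $\delta$ on the unit circle, so the minimal angular separation between the two designated spectral groups changes by at most $2\delta$. Applying this with $U = e^{-ihH(s)}$ (gap $= h\Delta_H(s)$) and $U' = U_{\text{spf},2}$ (gap $= \Delta_{U,2}(s) = \Delta_{U,1}(s)$) gives
\begin{equation}
    \left| \Delta_{U,p}(s) - h\Delta_H(s) \right| \leq 2\left\| U_{\text{spf},2}(sT+h,sT) - e^{-ihH(s)} \right\|,
\end{equation}
and combining with the Trotter estimate above yields the two-sided bound claimed (with the constant $95$ absorbing the factor $2$ together with the Trotter constant).

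The step I expect to be the main obstacle is getting the \emph{explicit} constant in the second-order Trotter remainder to come out as $1/95$ after accounting for (i) the symmetric-splitting structure (which gives a $\tfrac{1}{12}$-type leading coefficient on one commutator and $\tfrac{1}{24}$-type on the other in the BCH expansion, hence the asymmetric weights $2$ and $1$), (ii) the integral-form remainder bound that controls \emph{all} higher-order terms, not just the leading $h^3$ one, and (iii) the factor $2$ from the gap-perturbation step. This is bookkeeping rather than a conceptual difficulty: one writes $U_{\text{spf},2}e^{ihH(s)} = I + \int\!\cdots$ as a triple-integral remainder, bounds the integrand by the relevant nested commutators using $\|e^{-i\tau X}\|=1$, and uses $h\alpha \le 1$ to keep the series summable; the only care needed is not to be wasteful, since the claimed constant is fairly sharp. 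A minor secondary point is to confirm that the ``gap'' in the lemma statement is the ordinary spectral gap (not the three-step multistep gap of \cref{sec:DAT_linear}); the proof is a purely single-$s$ perturbation statement, so this is immediate once the arc-injectivity from $h\alpha \le 1$ is noted.
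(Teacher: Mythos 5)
Your proposal is essentially the paper's own proof: reduce to the symmetric formula via the unitary-conjugation \lem{trotter_first_second_eigenvalues}, bound $\|U_{\text{spf},2}-e^{-ihH(s)}\|$ by the second-order Trotter estimate of~\cite{ChildsSuTranEtAl2019} with the $f$-factors absorbed, note that $h\leq 1/(\|H_0\|+\|H_1\|)$ makes the gap of $e^{-ihH(s)}$ equal to $h\Delta_H(s)$, and then transfer the operator-norm bound to the gap via a unitary eigenvalue-matching (Bhatia--Davis/Elsner--He) result and the triangle inequality on angular distances. The only imprecision is your claim that the angular gap changes by at most $2\delta$: the matching bound controls the chordal distance, which is \emph{smaller} than the angular one, so the paper inserts an $\arcsin$-based factor of $96/95$ (valid because $h\alpha\leq 1$ makes the perturbation small) before combining with the $1/192$ Trotter constant to get $1/95$ --- exactly the bookkeeping you flag as the delicate step.
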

\begin{proof}
According to to~\cref{lem:trotter_first_second_eigenvalues}, it suffices to consider only the second-order formula. 

Notice that the spectral gaps of $e^{-i h H(s)}$ and $h H(s)$, though defined in slightly different ways, are of the same value for all $h \leq 1/(\|H_0\|+\|H_1\|)$. 
So it suffices to investigate the spectral gaps of $U_{\text{spf},2}(sT+h,sT)$ and $e^{-ihH(s)}$. 
For every fixed $s$, we may bound the difference between $U_{\text{spf},2}(sT+h,sT)$ and $e^{-ihH(s)}$ by using the error bound for the second-order time-independent Trotter method because the discrete time points are fixed in different exponentials. 
For example, by~\cite[Proposition 16]{ChildsSuTranEtAl2019}, 
\begin{align}
\left\| U_{\text{spf},2}(sT+h,sT)-e^{-i h H(s)} \right\| & \leq \frac{h^3}{12} \left\| [f(s)H_1,[f(s)H_1,(1-f(s))H_0]] \right\|\nonumber\\ 
&\quad+ \frac{h^3}{24} \left\| [(1-f(s))H_0,[(1-f(s))H_0,f(s)H_1]] \right\| \nonumber\\
 & \leq \frac{h^3}{192}\left( 2\|[H_1,[H_1,H_0]]\| +  \|[H_0,[H_0,H_1]]\|\right)\label{eqn:gap_time_independent_Trotter_bound}. 
\end{align}
Equation \eqref{eqn:gap_time_independent_Trotter_bound} implies that the difference between $U_{\text{spf},2}(sT+h,sT)$ and $e^{-iH(s)}$ is small if we reduce the time step size $h$, and further implies that the eigenvalues of $U_{\text{spf},2}(sT+h,sT)$ and $e^{-iH(s)}$ are close according to~\cite{BhatiaDavis1984,ElsnerHe1993}. 
Specifically, according to the eigenvalue perturbation theorem of Refs.~\cite{BhatiaDavis1984,ElsnerHe1993}, there exists an ordering $\lambda_j(s)$ and $\widetilde{\lambda}_j(s)$ of the eigenvalues of $U_{\text{spf},p}(sT+h,sT)$ and $e^{-iH(s)}$, respectively, such that
\begin{align}\label{eqn:proof_gap_trotter_error}
    \max_j |\lambda_j(s) - \widetilde{\lambda}_j(s)| &\leq \left\| U_{\text{spf},2}(sT+h,sT) -  e^{-iH(s)}\right\| \nn
    &\leq \frac{h^3}{192}\left( 2\|[H_1,[H_1,H_0]]\| +  \|[H_0,[H_0,H_1]]\|\right) \nn
    &\leq \frac{h^3}{192}\left( 8 \|H_0\|\|H_1\|^2 +  4 \|H_0\|^2\|H_1\|\right) \nn
    &\leq \frac{h^3}{192}\left( \frac 8{3\sqrt{3}} (\|H_0\|+\|H_1\|)^3\right) \nn
    &\leq \frac 1{72\sqrt{3}} \, .
    \end{align}
Now let $\mathbb{S}^1$ denote the unit circle, and $|z_1-z_2|_{\mathbb{S}^1}$ denote the angular distance between $z_1,z_2 \in \mathbb{S}^1$.
Then we have
\begin{equation}
    |\lambda_j(s) - \widetilde{\lambda}_j(s)|_{\mathbb{S}^1}= 2\arcsin\left( \frac 1{2}|\lambda_j(s) - \widetilde{\lambda}_j(s)| \right)\, .
\end{equation}
Because the arcsine has a monotonically increasing first derivative, the inequality in Eq.~\eqref{eqn:proof_gap_trotter_error} implies
\begin{align}
    |\lambda_j(s) - \widetilde{\lambda}_j(s)|_{\mathbb{S}^1}&\leq 144\sqrt{3} \arcsin\left( \frac 1{155\sqrt{3}} \right) |\lambda_j(s) - \widetilde{\lambda}_j(s)| \nn
    &\leq \frac {96}{95} |\lambda_j(s) - \widetilde{\lambda}_j(s)| \, .
\end{align}
    
    Let $\lambda_0(s)$ and $\lambda_1(s)$ denote the two eigenvalues that determine the gap. 
    Then the gap $U_{\text{spf},2}(sT+h,sT)$ can be bounded by
    \begin{align}
    |\lambda_1(s) - \lambda_0(s)|_{\mathbb{S}^1} & \geq |\widetilde{\lambda}_1(s) - \widetilde{\lambda}_0(s)|_{\mathbb{S}^1} - |\widetilde{\lambda}_1(s) - \lambda_1(s)|_{\mathbb{S}^1} - |\widetilde{\lambda}_0(s) - \lambda_0(s)|_{\mathbb{S}^1} \nonumber\\
    & \geq h\Delta_H(s) - \frac{96}{95} |\widetilde{\lambda}_1(s) - \lambda_1(s)| - \frac{96}{95}|\widetilde{\lambda}_0(s) - \lambda_0(s)|  \nonumber\\
    & \geq h\Delta_H(s) - \frac{h^3}{95}\left( 2\|[H_1,[H_1,H_0]]\| +  \|[H_0,[H_0,H_1]]\|\right), 
    \end{align}
and 
\begin{align}
    |\lambda_1(s) - \lambda_0(s)|_{\mathbb{S}^1} & \leq |\widetilde{\lambda}_1(s) - \widetilde{\lambda}_0(s)|_{\mathbb{S}^1} + |\widetilde{\lambda}_1(s) - \lambda_1(s)|_{\mathbb{S}^1} + |\widetilde{\lambda}_0(s) - \lambda_0(s)|_{\mathbb{S}^1}\nonumber \\
    & \leq h\Delta_H(s) + \frac{96}{95} |\widetilde{\lambda}_1(s) - \lambda_1(s)| + \frac{96}{95}|\widetilde{\lambda}_0(s) - \lambda_0(s)|  \nonumber\\
    & \leq h\Delta_H(s) + \frac{h^3}{95}\left( 2\|[H_1,[H_1,H_0]]\| +  \|[H_0,[H_0,H_1]]\|\right), 
\end{align}
where we use the fact again that the gaps of the Hamiltonian and the walk operator are the same. 
\end{proof}

According to~\cref{lem:gap_Trotter_1st_2nd}, we can reduce the time step size $h$ to make the product walk operator satisfy the gap condition as well. 
An important observation is that the time step size will not depend on the tolerated error $\epsilon$, because the only reason to reduce the time step size is for the gap condition, which is independent of the numerical error. 

Now we are ready to prove the following complexity estimate for the product formula without assuming the gap condition of the product walk operator. 

\begin{corollary}\label{cor:1st_2nd_product_formula}
    Consider~\cref{prob:state_prep} using digital adiabatic simulation with the simplified first- and second-order product formulae as in~\cref{eqn:high_order_trotter_simplified}. 
    Suppose that both $|f'(s)|$ and $|f''(s)|$ are uniformly bounded over $[0,1]$. 
    Let $\Delta(s)$ denote the spectral gap of $H(s)$, and $\Delta_* = \min_{s\in[0,1]} \Delta(s)$. 
    Then, in order to bound the overall error between the actual and the ideal evolution by $\epsilon \in (0,1)$, it suffices to choose 
        \begin{equation}
            T = \mathcal{O}\left( \frac{(\|H_0\|+\|H_1\|)^2}{ \Delta_*^3 \epsilon}  \right), 
        \end{equation}
        and 
        \begin{equation}
            h = \min\left\{ \frac{1}{\|H_0\|+\|H_1\|}, \sqrt{\frac{95}{2} } \sqrt{\frac{\Delta_*}{2\|[H_1,[H_1,H_0]]\| +  \|[H_0,[H_0,H_1]]\|}}\right\}, 
        \end{equation}
        and the overall number of steps becomes 
        \begin{equation}
            T_{d} = \mathcal{O}\left( \frac{(\|H_0\|+\|H_1\|)^3}{ \Delta_*^3 \epsilon} \max\left\{ 1, \sqrt{\frac{2\|[H_1,[H_1,H_0]]\| +  \|[H_0,[H_0,H_1]]\|}{\Delta_* (\|H_0\|+\|H_1\|)^2}}\right\} \right). 
        \end{equation}
\end{corollary}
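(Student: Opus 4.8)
The plan is to follow the same template as the proof of \cref{cor:exponential_linear}, with the only new ingredient being \cref{lem:gap_Trotter_1st_2nd} in place of the trivial gap comparison used there. Write $\alpha = \|H_0\|+\|H_1\|$ and $\beta = 2\|[H_1,[H_1,H_0]]\| + \|[H_0,[H_0,H_1]]\|$, set $T_d = T/h$ with the stated choice of $h$, and take the walk operator $W(j/T_d) = U_{\text{spf},p}((j+1)h,jh)$ for $p=1$ or $p=2$. First I would note that the second entry of the minimum defining $h$ is precisely the value at which $\tfrac{h^3}{95}\beta \le \tfrac{h}{2}\Delta_*$, and that $h \le 1/\alpha$ holds in both branches, so \cref{lem:gap_Trotter_1st_2nd} (which already covers $p=1,2$ uniformly, consistently with \cref{lem:trotter_first_second_eigenvalues}) gives the fixed-time gap bound $\Delta_{U,p}(s) \ge h\Delta_H(s) - \tfrac{h}{2}\Delta_* \ge \tfrac{h}{2}\Delta_*$ for all $s \in [0,1-h/T]$.

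Next I would upgrade this to a multistep gap. Since $h \le 1/\alpha$, \cref{lem:c1_c2_product_formula_simplified} gives $c_1(s) = \mathcal{O}(h\alpha) = \mathcal{O}(1)$ (and the analogous $\mathcal{O}(1)$ bound on $c_2$), so \cref{lem:multistep_gap} applies as soon as $T_d \gtrsim 1/(h\Delta_*)$ and yields a multistep gap $\widetilde{\Delta}_* = \Omega(h\Delta_*)$. I would then check that the claimed choice of $T$ makes $T_d = T/h$ large enough \emph{both} for \cref{lem:multistep_gap} and for the hypothesis $T_d \ge \sup_s 4\hat c_1(s)/\check\Delta_2(s)$ of \cref{lem:DAE_linear}: in either branch of the minimum, $T_d = T/h \sim \tfrac{\alpha^2}{\Delta_*^3\epsilon}\max\{\alpha,\sqrt{\beta/\Delta_*}\}$, and since $\Delta_* \le 2\alpha$ and $\epsilon < 1$ this exceeds $1/(h\Delta_*)$ by an ample margin once the implicit constant in $T$ is fixed, so $\hat c_1 = \mathcal{O}(1)$ and $\check\Delta_2 = \Omega(h\Delta_*)$ may be substituted throughout.

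With $\widetilde{\Delta}_* = \Theta(h\Delta_*)$ in hand I would plug into \cref{thm:trotter_linear}: the three terms become $\mathcal{O}(\alpha/(h\Delta_*^2))$, $\mathcal{O}(\alpha^2/\Delta_*^2)$, and $\mathcal{O}(\alpha^2/(h\Delta_*^3))$ after cancelling powers of $h$ against $\widetilde{\Delta}_*$; multiplying by the prefactor $h/T$ gives $\mathcal{O}(\alpha/(T\Delta_*^2)) + \mathcal{O}(h\alpha^2/(T\Delta_*^2)) + \mathcal{O}(\alpha^2/(T\Delta_*^3))$, and then $h \le 1/\alpha$ dominates the middle term by the first, while $\Delta_* \le 2\alpha$ dominates the first by the third, collapsing everything to $\mathcal{O}(\alpha^2/(T\Delta_*^3))$. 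Setting this $\le \epsilon$ forces $T = \mathcal{O}(\alpha^2/(\Delta_*^3\epsilon))$, and then $T_d = T/h = \mathcal{O}(\alpha^2/(\Delta_*^3\epsilon)) \cdot \max\{\alpha, \sqrt{2\beta/(95\Delta_*)}\} = \mathcal{O}\!\left(\tfrac{\alpha^3}{\Delta_*^3\epsilon}\max\{1,\sqrt{\beta/(\Delta_*\alpha^2)}\}\right)$, which is the claimed step count.

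The algebraic collapse of the error bound and the substitution of the $c_k$ bounds are routine; the one place needing care is the simultaneous bookkeeping of side conditions — verifying that a single choice of $T$ makes $T_d$ large enough for the multistep-gap conversion of \cref{lem:multistep_gap} and for the threshold in \cref{lem:DAE_linear}, and tracking the case split in the $\min$ defining $h$ (the binding branch when $\beta$ is large is $h \propto \sqrt{\Delta_*/\beta}$, and that is exactly where the extra $\max\{1,\cdot\}$ factor in $T_d$ comes from). This is the only genuine friction, and it is the same friction already handled in the proof of \cref{cor:exponential_linear}.
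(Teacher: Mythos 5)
Your proposal is correct and follows essentially the same route as the paper's proof: fix the step size via \cref{lem:gap_Trotter_1st_2nd} so that the Trotter walk operator inherits a fixed-time gap $\geq h\Delta_*/2$, upgrade to a multistep gap via \cref{lem:c1_c2_product_formula_simplified} and \cref{lem:multistep_gap} by checking $T_d \gtrsim 1/(h\Delta_*)$, and then collapse the bound of \cref{thm:trotter_linear} to $\mathcal{O}(\alpha^2/(T\Delta_*^3))$ to read off $T$ and $T_d$. The only cosmetic difference is that you verify the threshold hypothesis of \cref{lem:DAE_linear} explicitly, which the paper absorbs into the same largeness condition on $T_d$.
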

\begin{proof}
    Let $T_{d} = T/h$, and we assume $T_{d}$ is an integer. 
    Define the walk operator 
    \begin{equation}
        W(s) = e^{-ih(1-f(s))H_0/2}e^{-ihf(s)H_1}e^{-ih(1-f(s))H_0/2}. 
    \end{equation}
    We first bound the fixed-time gap of $W(s)$. 
    According to the choice of $h$, 
    we always have 
    \begin{equation}
        \frac{h^3}{95}\left( 2\left\|[H_1,[H_1,H_0]]\right\| +  \left\|[H_0,[H_0,H_1]]\right\|\right) \leq \frac{1}{2} h \Delta_*. 
    \end{equation}
    Using~\cref{lem:gap_Trotter_1st_2nd}, the gap of $W(s)$ is bounded from below by the gap of $\frac{1}{2} h \Delta(s)$, which has a further lower bound $\frac{1}{2}h \Delta_*$. 
    
    For the multistep gap, we can use~\cref{lem:multistep_gap}, and we need to verify that $T_d$ is sufficiently large. 
    Specifically, suppose that our choice of $T$ is 
    \begin{equation}
        T = C \frac{(\|H_0\|+\|H_1\|)^2}{ \Delta_*^3 \epsilon}
    \end{equation}
    for a constant $C > 0$. 
    Then we have 
    \begin{equation}
        T_d = T/h = C \frac{(\|H_0\|+\|H_1\|)^2}{ h \Delta_*^3 \epsilon} \geq C \frac{(\|H_0\|+\|H_1\|)^2}{ h \Delta_*^3}. 
    \end{equation}
    Notice that $\Delta_* \leq 2\|H(s)\| \leq 2(\|H_0\|+\|H_1\|)$, so we have 
    \begin{equation}
        T_d \geq \frac{C}{4} \frac{1}{ h\Delta_*}. 
    \end{equation}
    According to~\cref{lem:c1_c2_product_formula_simplified}, we have $c_1(s) = \mathcal{O}(h(\|H_0\|+\|H_1\|)) = \mathcal{O}(1)$, so we can choose a sufficiently large constant $C$ such that $T_d \geq \frac{2\pi}{h\Delta_*/2} \sup c_1(s)$. 
    Then~\cref{lem:multistep_gap} ensures that the multistep gap of $W(s)$ is bounded from below by $h\Delta_*/4$. 
    
    Therefore, the error bound in~\cref{thm:trotter_linear} becomes 
    \begin{align}
        & \quad \mathcal{O}\left( \frac{h}{T}\left(\frac{ h (\|H_0\|+\|H_1\|)}{h^2\Delta_*^2} + \frac{ h^2  (\|H_0\|+\|H_1\|)^2}{h^2\Delta_*^2} + \frac{ h^2 (\|H_0\|+\|H_1\|)^2}{h^3\Delta_*^3}\right) \right)\nonumber \\
        & \leq \mathcal{O}\left( \frac{1}{T}\left(\frac{  (\|H_0\|+\|H_1\|)}{\Delta_*^2} + \frac{ h (\|H_0\|+\|H_1\|)^2}{\Delta_*^2} + \frac{ (\|H_0\|+\|H_1\|)^2}{\Delta_*^3}\right) \right) \nonumber\\
        & \leq \mathcal{O}\left( \frac{ (\|H_0\|+\|H_1\|)^2}{T\Delta_*^3}\right), 
    \end{align}
    and the desired choices of $T$ and $h$ directly follow from this error bound and the conditions for the multistep gap. 
\end{proof}

For high-order product formulae, we can estimate the complexity by the same strategy as the first- and second-order formulae and replacing the second nested commutators by higher-order nested commutators due to the high-order Trotter error bound in~\cite[Theorem 11]{ChildsSuTranEtAl2019}. 
Now we state the result in the following corollary, and give its proof in~\cref{app:proof_high_order_Trotter}. 

\begin{corollary}\label{cor:Trotter_linear}
    Consider~\cref{prob:state_prep} using digital adiabatic simulation with simplified $p$-th order product formulae as in~\cref{eqn:high_order_trotter_simplified}. 
    Suppose that both $|f'(s)|$ and $|f''(s)|$ are uniformly bounded over $[0,1]$. 
    Let $\Delta(s)$ denote the spectral gap of $H(s)$, and $\Delta_* = \min_{s\in[0,1]} \Delta(s)$. 
    Then, for any $p \geq 1$, 
    \begin{enumerate}
        \item the overall error between the actual and the ideal evolution can be bounded by 
    \begin{equation}
        \mathcal{O}\left( \frac{ (\|H_0\|+\|H_1\|)^2}{T\Delta_*^3}\right). 
    \end{equation}
        \item in order to bound the overall error by $\epsilon$, it suffices to choose 
        \begin{equation}
            T = \mathcal{O}\left( \frac{(\|H_0\|+\|H_1\|)^2}{ \Delta_*^3 \epsilon}  \right), 
        \end{equation}
        and 
        \begin{equation}
            h = \Theta\left(\min\left\{ \frac{1}{\|H_0\|+\|H_1\|}, \frac{\Delta_*^{1/p}}{\left(\sum_{\gamma_0,\cdots,\gamma_{p} \in \left\{0,1\right\}}\|[H_{\gamma_p},\cdots,[H_{\gamma_1},H_{\gamma_0}]]\|\right)^{1/p}}\right\}\right), 
        \end{equation}
        and the overall number of steps becomes 
        \begin{equation}
            T_{d} = \mathcal{O}\left( \frac{(\|H_0\|+\|H_1\|)^3}{ \Delta_*^3 \epsilon} \max\left\{ 1, \frac{\left(\sum_{\gamma_0,\cdots,\gamma_{p} \in \left\{0,1\right\}}\|[H_{\gamma_p},\cdots,[H_{\gamma_1},H_{\gamma_0}]]\|\right)^{1/p}}{\Delta_*^{1/p}(\|H_0\|+\|H_1\|)}  \right\} \right). 
        \end{equation}
    \end{enumerate}
\end{corollary}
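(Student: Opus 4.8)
The plan is to mirror the proof of \cref{cor:1st_2nd_product_formula} essentially verbatim, with the only change being the use of a higher-order Trotter error bound in place of the second-order one. First I would set $T_d = T/h$ and take the walk operator $W(s) = U_{\text{spf},p}(sT+h,sT)$ in the general form of \cref{eqn:high_order_trotter_simplified}. The crux of the argument is establishing the fixed-time gap of $W(s)$ from the gap of $H(s)$. Whereas \cref{lem:gap_Trotter_1st_2nd} does this by comparing $U_{\text{spf},2}$ to $e^{-ihH(s)}$ via the second-order time-independent Trotter bound \cite[Proposition 16]{ChildsSuTranEtAl2019}, here I would instead invoke the general $p$-th order time-independent Trotter bound \cite[Theorem 11]{ChildsSuTranEtAl2019}, which says that $\|U_{\text{spf},p}(sT+h,sT) - e^{-ihH(s)}\|$ is $\mathcal{O}(h^{p+1} \widetilde{\alpha}_p)$ with $\widetilde{\alpha}_p = \sum_{\gamma_0,\dots,\gamma_p \in \{0,1\}} \|[H_{\gamma_p},\dots,[H_{\gamma_1},H_{\gamma_0}]]\|$ (noting that the $f(s)$ and $1-f(s)$ prefactors are bounded by $1$ and so only contribute constants). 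Then the same eigenvalue-perturbation argument of \cite{BhatiaDavis1984,ElsnerHe1993}, converting operator distance to angular distance on $\mathbb{S}^1$, shows that for $h = \mathcal{O}(\Delta_*^{1/p} \widetilde{\alpha}_p^{-1/p})$ small enough, the fixed-time gap of $W(s)$ is at least $\tfrac12 h \Delta(s) \geq \tfrac12 h \Delta_*$.

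Next I would upgrade the fixed-time gap to the multistep gap using \cref{lem:multistep_gap}, exactly as in \cref{cor:1st_2nd_product_formula}: with $c_1(s) = \mathcal{O}(h(\|H_0\|+\|H_1\|)) = \mathcal{O}(1)$ coming from \cref{lem:c1_c2_product_formula_simplified}, one checks that the stated choice $T = \mathcal{O}((\|H_0\|+\|H_1\|)^2 \Delta_*^{-3} \epsilon^{-1})$ forces $T_d$ large enough that the multistep gap is bounded below by $h\Delta_*/4$. I would then plug $\widetilde{\Delta}_* \geq h\Delta_*/4$ and $c_1, c_2 = \mathcal{O}(1)$ into \cref{thm:trotter_linear}; the $h$-dependence cancels just as in the first-/second-order case, leaving the error bound $\mathcal{O}((\|H_0\|+\|H_1\|)^2 / (T\Delta_*^3))$, which gives part~1. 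Part~2 then follows by setting this bound equal to $\epsilon$ to obtain $T$, taking $h$ as large as both constraints ($h \leq \alpha^{-1}$ for the $e^{-ihH(s)}$ gap comparison, and $h = \mathcal{O}(\Delta_*^{1/p}\widetilde{\alpha}_p^{-1/p})$ for the Trotter-distance constraint) permit, and reading off $T_d = T/h$, which produces the $\max\{1, \widetilde{\alpha}_p^{1/p}\Delta_*^{-1/p}(\|H_0\|+\|H_1\|)^{-1}\}$ factor.

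The one place that needs slightly more care than a pure copy-paste is tracking the numerical constants in the $p$-th order Trotter bound: \cite[Theorem 11]{ChildsSuTranEtAl2019} gives a bound of the form $h^{p+1}$ times a commutator sum times a stage-count-dependent constant $\Upsilon$ (which depends on $K_p$ and the $\alpha_{p,k},\beta_{p,k}$ but not on $H_0,H_1,h,T,\epsilon$), so the threshold for $h$ carries an implicit $p$-dependent constant. Since the corollary statement uses $\Theta(\cdot)$ and $\mathcal{O}(\cdot)$ with the $p$-dependence suppressed, this is harmless, but I would state explicitly that the hidden constants depend on $p$. I expect the main (though still modest) obstacle to be verifying that the cancellation of $h$ in the error bound of \cref{thm:trotter_linear} still goes through cleanly: one needs $h \leq \alpha^{-1}$ so that the gap of $e^{-ihH(s)}$ equals that of $hH(s)$ (i.e. no wraparound on the circle), and one needs the multistep-gap lower bound to scale as $h\Delta_*$ rather than something worse — both of which hold because the relevant constraints on $h$ are $T$- and $\epsilon$-independent, so $T_d = T/h$ grows linearly in $T$ and the condition $T_d \gtrsim \alpha/\Delta_*$ of \cref{lem:DAE_linear} is automatically met for large $T$. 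The remaining algebra is then identical to \cref{cor:1st_2nd_product_formula}, and I would simply refer back to that proof for the shared steps rather than reproduce them.
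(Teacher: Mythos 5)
Your proposal matches the paper's own proof essentially step for step: the paper proves exactly the higher-order analogue of \cref{lem:gap_Trotter_1st_2nd} (its \cref{lem:gap_Trotter_high_order}) by combining the $p$-th order time-independent Trotter bound of \cite[Theorem 11]{ChildsSuTranEtAl2019} with the eigenvalue perturbation theorem of \cite{BhatiaDavis1984,ElsnerHe1993} and the angular-distance conversion, and then repeats the multistep-gap verification via \cref{lem:multistep_gap} and the application of \cref{thm:trotter_linear} exactly as in \cref{cor:1st_2nd_product_formula}, with a $p$-dependent constant $c_p$ absorbed into the $\Theta(\cdot)$ for $h$. Your stated caveats (the constraint $h \leq (\|H_0\|+\|H_1\|)^{-1}$ for the gap identification and the $T$-, $\epsilon$-independence of the step-size threshold) are precisely the points the paper also handles, so the argument is correct and not meaningfully different.
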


\section{Improved time step size and complexity estimate with boundary cancellation}\label{sec:analysis_boundary_cancellation}

\subsection{A review of high-order discrete adiabatic theorem}

The high-order discrete adiabatic theorem aims to establish a super-polynomial convergence in $1/T$ under further assumptions. 
In particular, we suppose the discrete evolution satisfies the following conditions. 
\begin{enumerate}
    \item For $k \geq 1$, there exist real-valued functions $c_k(s)$ such that $\|D^{(k)}W(s)\| \leq c_k(s)/T^k$ for $0 \leq s \leq 1-k/T$, where $D^{(k)}W(s)$ represents the $k$-th order finite difference of $W(s)$ with step size $1/T$. \label{cond:1_exp}
    \item The eigenvalues of $W(s)$ can be separated into two groups $\sigma_P(s)$ and $\sigma_Q(s)$ such that each continuous eigenpath $\lambda(s)$ of $W(s)$ consistently belongs to one of the two groups and the distance between $\sigma_P(s)$ and $\sigma_Q(s)$ is bounded from below by $\Delta(s) > 0$. \label{cond:2_exp}
    \item The input state $\ket{\psi}$ is within the eigenspace corresponding to $\sigma_P(0)$. \label{cond:3_exp}
\end{enumerate}

Under these conditions, Ref.~\cite{DKS98} claims an upper bound of the discrete diabatic error, achieving $\mathcal{O}(1/T^{k})$ for any positive integer $k$. 
Unfortunately, after a careful examination, we find a missing step in the proof provided in Ref.~\cite{DKS98}, so the claimed high-order discrete adiabatic theorem is not rigorously supported. 
Nevertheless, we also perform a numerical test, verifying that the conclusion on the high-order convergence is numerically correct. 
Therefore, to be precise, here we present the result in~\cite{DKS98} as a conjecture of which we have strong confidence on the correctness, and refer to~\cref{app:DAT_exp} for more details on the missing step in the proof and our numerical validation. 

\begin{conjecture}\label{lem:DAE_exp}
    Suppose that $W(s)$ satisfies the boundary cancellation condition in the sense that $W^{(k)}(0) = W^{(k)}(1) = 0$ for all $k \geq 1$. 
    Then, for any integer $T$ and any positive integer $k$, we have 
    \begin{equation}
        \left\|U(1)\ket{\psi} - P(1)U(1)\ket{\psi}\right\| \leq \frac{C_k}{T^k},  
    \end{equation}
    where $U(s)$ is the discrete evolution defined in~\cref{eqn:discrete_evolution}, $P(s)$ is the spectral projection of $W(s)$ onto $\sigma_P(s)$, and $C_k$ is a constant independent of $T$. 
\end{conjecture}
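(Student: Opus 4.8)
The plan is to adapt to the discrete setting the iterative \emph{super-adiabatic} construction that underlies the high-order \emph{continuous} adiabatic theorems of Nenciu and others~\cite{Nenciu1993,GeMolnarCirac2016}, using the finite-difference bookkeeping of~\cite{CostaAnYuvalEtAl2022} as the starting point. First I would build, for each finite order $N$, a corrected spectral projection
\begin{equation}
  P_N(s) = P(s) + \sum_{m=1}^{N} \frac{1}{T^m}\, B_m(s),
\end{equation}
where the operators $B_m(s)$ are defined recursively: at each stage one solves a commutator equation $W(s)X - XW(s) = Y_m(s)$ for the part of $X$ anticommuting with $P(s)$, inverting the adjoint action of $W(s)$ by a Riesz-projector/resolvent formula and using the spectral gap $\Delta(s)$, and then fixes the part commuting with $P(s)$ so that $P_N(s)^2 = P_N(s) + \mathcal{O}(T^{-(N+1)})$. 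The inhomogeneity $Y_m(s)$ is assembled from the lower-order $B_\ell$ and from the normalized finite differences $T^k D^{(k)}W(s)$, $k\le m$, which are bounded by the $c_k(s)$ of condition~\ref{cond:1_exp}. Attached to $P_N$ there is a corrected reference walk operator $W_A^{(N)}(s)$ (with $W_A^{(0)}=W_A$, $U_A^{(0)}=U_A$) that intertwines $P_N$ to the relevant order, i.e.\ $W_A^{(N)}(s)P_N(s) = P_N(s+1/T)W_A^{(N)}(s) + \mathcal{O}(T^{-(N+1)})$, and that differs from $W(s)$ only by terms which, once conjugated by the reference evolution, can be written as a first finite difference $D^{(1)}$ of a bounded operator family plus an $\mathcal{O}(T^{-(N+1)})$ tail.

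Next I would iterate the Duhamel/telescoping identity together with summation by parts (Abel summation), as in the proof of \cref{lem:DAE_linear} but carried out $k$ times rather than once. Writing
\begin{equation}
  U(1) - U_A^{(k)}(1) = \sum_{j=0}^{T-1} U\!\left(1,\tfrac{j+1}{T}\right)\bigl[W(\tfrac{j}{T}) - W_A^{(k)}(\tfrac{j}{T})\bigr] U_A^{(k)}\!\left(\tfrac{j}{T}\right),
\end{equation}
the structure above lets me replace the bracket, sandwiched between the two evolutions, by $\tfrac{1}{T} D^{(1)}$ of a bounded operator family, up to an $\mathcal{O}(T^{-(k+1)})$ remainder; Abel summation then turns $\sum_j$ of a finite difference into boundary contributions at $j=0$ and $j=T$ plus a new sum of the same shape carrying one more power of $1/T$. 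The boundary contributions are built from the auxiliary families and the $B_m$ evaluated at $s=0,1$, which in turn are polynomials in the finite differences $D^{(k)}W(0)$ and $D^{(k)}W(1)$ with $k\ge1$; by the boundary cancellation hypothesis $W^{(k)}(0)=W^{(k)}(1)=0$ for all $k\ge1$, the walk operator is flat at the endpoints, so these finite differences are smaller than any power of $1/T$ and every boundary term is negligible. Iterating $k$ times gives $\|U(1)-U_A^{(k)}(1)\|\le C_k/T^k$. Finally, because the corrections collapse at the endpoints, $P_k(0)$ agrees with $P(0)$ and $P_k(1)$ with $P(1)$ up to a negligible error, so $U_A^{(k)}(1)$ maps $\ket\psi\in\operatorname{ran}P(0)$ into $\operatorname{ran}P(1)$, and therefore
\begin{equation}
  \left\|U(1)\ket\psi - P(1)U(1)\ket\psi\right\| = \left\|(I-P(1))\bigl(U(1)-U_A^{(k)}(1)\bigr)\ket\psi\right\| + \mathcal{O}(T^{-k}) \le \frac{C_k'}{T^k}.
\end{equation}

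The main obstacle — and, I suspect, the very step flagged as missing in~\cite{DKS98} and discussed in~\cref{app:DAT_exp} — is making the bookkeeping of constants survive the iteration. Two features of the discrete calculus are the trouble spots. First, the finite-difference Leibniz rule is not exact: pushing a $D^{(1)}$ through a product, or through the resolvent that defines $B_m$, generates shift terms, and one must show that each such term is genuinely $\mathcal{O}(T^{-(N+1)})$ rather than $\mathcal{O}(T^{-N})$ with a constant that silently grows like $T$; this forces one to propagate the whole hierarchy $\|D^{(m)}W(s)\|\le c_m(s)/T^m$ through all products, resolvents, and finite differences of the construction, and this is exactly where a naive transcription of the continuous proof fails. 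Second, the finite differences $D^{(k)}W(s)$ are only controlled on the shrinking windows $0\le s\le 1-k/T$, so the Abel-summation boundary terms live precisely where the hypotheses are weakest; one must argue, via a discrete Taylor expansion near the endpoints combined with the flatness of $W$ there, that these terms really are negligible to every finite order. Establishing the second point rigorously and uniformly in $T$ is, I expect, the crux, and it is why only numerical evidence — rather than a complete argument — is currently available.
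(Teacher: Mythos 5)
Your proposal is a programme rather than a proof, and the place where it stops is exactly the known obstruction. The statement is presented in the paper as a \emph{conjecture} precisely because no complete argument exists: the paper's own treatment (\cref{app:DAT_exp}) does not attempt a direct super-adiabatic construction at all, but instead reduces the statement to the claimed high-order theorem of~\cite{DKS98} (\cref{lem:DAE_exp_DKS}) by smoothly extending $W(s)$ to $[0,3]$ with constant pieces on $[0,1)$ and $(2,3]$, so that $V(s)-I$ is supported strictly inside the interval, and then explains in \cref{app:DKS_missing_step} why the proof of~\cite{DKS98} itself is incomplete, supplementing this with numerics. So the paper never claims what you would need to establish.

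The concrete gap in your argument is the iteration step. After one round of summation by parts you are left with boundary terms (which the flatness $W^{(k)}(0)=W^{(k)}(1)=0$ indeed kills to all orders) \emph{plus} a bulk sum ``of the same shape carrying one more power of $1/T$.'' For the induction to close you must show that this bulk object is small \emph{uniformly over all intermediate steps} $0\le n\le T$, not merely at the endpoints; boundary cancellation gives you nothing in the interior, and the discrete Leibniz/shift corrections you mention only compound the problem. This is precisely the flawed step in~\cite{DKS98}: their induction hypothesis assumes $Q_0\Omega_j(n/T_d)P_0=\mathcal{O}(T_d^{-k})$ for \emph{all} $n$, which is strictly stronger than the claim at $n=0,T_d$, and the paper's numerics show the interior quantity $\max_s\|Q_0\Omega_1(s)P_0\|$ decays only like $\mathcal{O}(1/T_d)$ while the endpoint value decays super-polynomially. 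Your own closing remarks acknowledge that this uniform-in-$T$ control of the interior terms is ``the crux'' and is not supplied, so the proposal does not prove the statement; it rederives the open problem. If you could actually carry out the discrete super-adiabatic expansion with controlled constants, that would be a genuinely new contribution going beyond the paper, but as written the argument has the same hole that forced the authors to label the result a conjecture.
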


Similar to the continuous case,~\cref{lem:DAE_exp} bounds the leakage rate of the actual state out of the desired eigenspace. 
Compared with~\cref{lem:DAE_linear}, \cref{lem:DAE_exp} shows an error bound with super-polynomial convergence in $T$ under an additional assumption analogous to the continuous boundary cancellation condition. 
However,~\cref{lem:DAE_exp} does not give an explicit dependence on the spectral gap. 
It does not show how the pre-constant depends on the order $k$ either, which prevents us from obtaining an exponentially small error because there might be an exponential dependence on $k$ hidden in the constant $C_k$.

\subsection{Our results}

We now focus on the scenario with the additional boundary cancellation condition and establish the super-polynomial convergence result. 
Due to the lack of a discrete adiabatic theorem with super-polynomial convergence and explicit gap dependence at the same time, it is hard to simultaneously establish both the gap dependence and the precision dependence. 
Notice that if we want to track only the gap dependence, we can assume the tolerated error to be at a constant level and directly use~\cref{cor:exponential_linear} or~\cref{cor:Trotter_linear} in the general case, which gives an $\mathcal{O}(\Delta_*^{-3})$ dependence. 
Throughout this section, we assume that the spectral norm of the Hamiltonians $H_0$ and $H_1$ are bounded by $1$ and the minimum spectral gap to be bounded from below by a constant, and we focus on the dependence on $\epsilon$. 

For the first-order exponential integrator and product formulae, we can prove the following two theorems, which show that for any method it suffices to choose an $\mathcal{O}(1)$ time step size and the number of steps can be $\mathcal{O}(1/\epsilon^{o(1)})$. 
Remarkably, such results also hold even for first-order numerical methods. 

\begin{theorem}\label{thm:exp_exp}
    Consider~\cref{prob:state_prep} using exponential propagators~\cref{eqn:exponential_integrator_1st}. 
    Suppose that~\cref{lem:DAE_exp} is true, both $\|H_0\|$ and $\|H_1\|$ are bounded by $1$, $H(s)$ satisfies the gap condition with constant-level minimum gap, and the smooth interpolation function has $f^{(k)}(0)=f^{(k)}(1) = 0$ for all $k \geq 1$. 
    Then, in order to prepare an $\epsilon$-approximation of the eigenstate of $H_1$, it suffices to choose $h = 1$, and the evolution time and overall number of steps as 
    \begin{equation}
        T = T_{d} = \mathcal{O}\left( \frac{1}{\epsilon^{1/k}} \right)
    \end{equation}
    for any $k \geq 1$. 
\end{theorem}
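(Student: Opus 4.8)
The plan is to rerun the argument of \cref{cor:exponential_linear}, but with the high-order discrete adiabatic theorem \cref{lem:DAE_exp} replacing the linearly convergent \cref{lem:DAE_linear}. Concretely, set $h=1$ so that $T_d = T/h = T$, and view the exponential propagators as discrete walk operators $W(s) = U_{\exp}(sT+h,sT) = e^{-iH(s)}$ with $H(s)$ as in \cref{eq:schedule}. Because $f(1)=1$, the final walk operator is $W(1) = e^{-iH_1}$, whose spectral projections coincide with those of $H_1$ (we check below that $\|H_1\|\le 1<\pi$), so the leakage estimate of \cref{lem:DAE_exp} is directly a bound on the distance of the output state to the target eigenspace of $H_1$; no separate continuous diabatic error enters. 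The proof then reduces to verifying that $W(s)$ satisfies the hypotheses of \cref{lem:DAE_exp} and to choosing $T$.

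For the gap condition (condition~\ref{cond:2_exp}), note that $H(s)=(1-f(s))H_0+f(s)H_1$ is a convex combination, so $\|H(s)\|\le\max\{\|H_0\|,\|H_1\|\}\le 1$; hence the spectrum of $H(s)$ lies in $[-1,1]$, the map $\lambda\mapsto e^{-i\lambda}$ sends it to an arc of angular length at most $2<2\pi$, and the angular distance between the images of the two eigenvalue groups equals the spectral gap $\Delta_H(s)$ of $H(s)$, which is bounded below by a positive constant by hypothesis; continuity of the eigenpaths of $H(s)$ then forces each eigenpath of $W(s)$ to stay in one group. For the smoothness condition (condition~\ref{cond:1_exp}) I would extend \cref{lem:c1_c2_exp} to all orders: since $H^{(j)}(s)=f^{(j)}(s)(H_1-H_0)$ for $j\ge 1$ and $f\in C^\infty([0,1])$ has each derivative bounded on the compact interval, a Duhamel (iterated chain-rule) expansion of $\frac{d^k}{ds^k}e^{-iH(s)}$ is a finite sum of products of unitary factors $e^{-i\alpha H(s)}$ and operators $H^{(j_i)}(s)$, so $\sup_s\|W^{(k)}(s)\|=:C_k'<\infty$, and the standard finite-difference bound $\|D^{(k)}W(s)\|\le T^{-k}\sup_{[s,s+k/T]}\|W^{(k)}\|$ gives $c_k(s)=C_k'$. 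Condition~\ref{cond:3_exp} holds since the input is the known eigenstate of $H_0$, which lies in the correct eigenspace of $W(0)=e^{-iH_0}$.

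Next I would verify the boundary cancellation hypothesis $W^{(k)}(0)=W^{(k)}(1)=0$ for all $k\ge 1$. From $f^{(j)}(0)=f^{(j)}(1)=0$ for $j\ge 1$ we get $H^{(j)}(0)=H^{(j)}(1)=0$ for all $j\ge 1$; in the expansion of $W^{(k)}(s)$ each term contains at least one factor $H^{(j_i)}(s)$ with $j_i\ge 1$ (the exponents partition $k\ge 1$), hence vanishes at $s=0$ and $s=1$. With all hypotheses in place, \cref{lem:DAE_exp} yields, for every positive integer $k$, $\|U(1)\ket{\psi}-P(1)U(1)\ket{\psi}\|\le C_k/T^k$, where $P(1)$ is the spectral projection of $H_1$ onto the target eigenspace. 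Translating this leakage bound into the projector distance of \cref{prob:state_prep} gives $\|\ket{\widetilde\phi}\bra{\widetilde\phi}-\ket{\phi}\bra{\phi}\|=\mathcal{O}(C_k/T^k)$ for the normalized output $\ket{\widetilde\phi}=U(1)\ket{\psi}$; choosing $T=T_d=\mathcal{O}((C_k/\epsilon)^{1/k})=\mathcal{O}(\epsilon^{-1/k})$ — which for small $\epsilon$ also satisfies the trivial requirement $T\ge k$ — bounds the error by $\epsilon$.

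The real obstruction lies outside this conditional statement: \cref{lem:DAE_exp} is only conjectural because of the gap in the argument of \cite{DKS98} described in \cref{app:DAT_exp}, and correspondingly the constant $C_k$ (and its dependence on $k$ and on the spectral gap) is not controlled. Granting the conjecture, the only genuine work is the two verifications above — transferring the boundary cancellation from $f$ to the exponential walk operator via the Duhamel expansion, and bounding all finite differences of $W$ uniformly in $T$ — together with the elementary but essential point that it is $\|H(s)\|\le 1$ (not merely $\|H(s)\|\le\|H_0\|+\|H_1\|$) that makes the spectral gap of $H(s)$ transfer verbatim to the angular gap of $e^{-iH(s)}$ at step size $h=1$.
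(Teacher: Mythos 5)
Your proposal is correct and follows essentially the same route as the paper: take $h=1$, view $W(s)=e^{-iH(s)}$ as the discrete walk operator, check smoothness, the gap transfer via $\|H(s)\|\le 1$, and the boundary cancellation of $W$ inherited from $f$, then invoke \cref{lem:DAE_exp} and choose $T=\mathcal{O}(\epsilon^{-1/k})$. The paper's proof is just a terser version of the same verification (it leaves the all-order finite-difference bounds, the chain-rule boundary-cancellation argument, and the leakage-to-projector-distance translation implicit), so your added detail is consistent with, not a departure from, its argument.
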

\begin{proof}
    Let $W(s) = e^{-iH(s)}$ where $H(s) = (1-f(s))H_0 + f(s) H_1$. 
    We only need to verify that the walk operator $W(s)$ satisfies all the conditions of~\cref{lem:DAE_exp}. 
    Since the function $f(s)$ is smooth over $[0,1]$, the operator $W(s)$ is also smooth. 
    The walk operator $W(s)$ also satisfies the gap condition since $H(s)$ is assumed to satisfy the gap condition and $\|H(s)\| \leq 1$. 
    Finally, the boundary cancellation condition of $f(s)$ implies the boundary cancellation of $W(s)$. 
    Therefore, our claim directly follows from~\cref{lem:DAE_exp}. 
\end{proof}

\begin{theorem}\label{thm:trotter_exp}
    Consider~\cref{prob:state_prep} using simplified product formula as in~\cref{eqn:high_order_trotter_simplified}. 
    Suppose that~\cref{lem:DAE_exp} is true, both $\|H_0\|$ and $\|H_1\|$ are bounded by $1$, $H(s)$ satisfies the gap condition with constant-level minimum gap, and the smooth interpolation function $f(s)$ has $f^{(k)}(0)=f^{(k)}(1) = 0$ for all $k \geq 1$. 
    Then, in order to prepare an $\epsilon$-approximation of the eigenstate of $H_1$, it suffices to choose $h = \mathcal{O}(1)$, and the evolution time and overall number of steps as 
    \begin{equation}
        T = \mathcal{O}\left( \frac{1}{\epsilon^{1/k}}\right),  \quad T_{d} = \mathcal{O}\left( \frac{1}{\epsilon^{1/k}}\right)
    \end{equation}
    for any $k \geq 1$. 
\end{theorem}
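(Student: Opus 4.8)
The plan is to mirror the proof of \cref{thm:exp_exp}, the only genuinely new ingredient being the verification of the gap hypothesis of \cref{lem:DAE_exp} for a \emph{product-formula} walk operator rather than for $e^{-ihH(s)}$. Set $T_d = T/h$ and take as walk operator $W(s) = U_{\text{spf},p}(sT+h,sT) = \prod_{k=0}^{K_p} e^{-i\beta_{p,k}hf(s)H_1}\,e^{-i\alpha_{p,k}h(1-f(s))H_0}$. Since any consistent $p$-th order formula satisfies $\sum_k \beta_{p,k} = \sum_k \alpha_{p,k} = 1$, the endpoints collapse to $W(0) = e^{-ihH_0}$ and $W(1) = e^{-ihH_1}$, so the eigenspace decomposition of $W$ at $s=0,1$ coincides with that of $H_0$ and $H_1$. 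We designate $\sigma_P(s)$ to be the branch containing the (continuously connected) target eigenpath, so that condition~\ref{cond:3_exp} holds for the prepared initial eigenstate of $H_0$ and the conclusion of \cref{lem:DAE_exp} bounds exactly the leakage out of the target eigenstate of $H_1$. It then remains to check the smoothness, gap, and boundary-cancellation hypotheses for $W(s)$, after which the stated scalings of $T_d$ and $T = hT_d$ follow by setting $C_k/T_d^k \le \epsilon$ and converting the leakage bound into the projector distance of \cref{prob:state_prep} at the cost of a constant factor.

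For the gap I would reuse the perturbative argument behind \cref{lem:gap_Trotter_1st_2nd} and its higher-order extension used in \cref{cor:Trotter_linear}. Because $\|H_0\|, \|H_1\| \le 1$, the nested-commutator sum $\widetilde{\alpha}_p = \sum_{\gamma_0,\dots,\gamma_p \in \{0,1\}} \|[H_{\gamma_p},\dots,[H_{\gamma_1},H_{\gamma_0}]]\|$ is a constant depending only on $p$, and the time-independent $p$-th order Trotter error bound of \cite[Theorem~11]{ChildsSuTranEtAl2019} gives $\|W(s) - e^{-ihH(s)}\| = \mathcal{O}(h^{p+1}\widetilde{\alpha}_p)$ uniformly in $s$. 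Choosing $h$ to be a sufficiently small but fixed (hence $\mathcal{O}(1)$) constant — in particular $h \le 1/(\|H_0\|+\|H_1\|)$, so the gap of $e^{-ihH(s)}$ equals $h$ times the gap of $H(s)$ — this distance is at most $c\Delta_*$ for an arbitrarily small constant $c$, since $\Delta_*$ is constant by assumption. The eigenvalue perturbation theorem of Refs.~\cite{BhatiaDavis1984,ElsnerHe1993}, applied exactly as in \cref{lem:gap_Trotter_1st_2nd}, then transfers the gap: $W(s)$ has spectral gap bounded below by a positive constant for all $s$, and by continuity of the eigenvalues each continuous eigenpath stays within a single group, which is condition~\ref{cond:2_exp}.

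Smoothness and boundary cancellation are then immediate, and this is where the restriction to the \emph{simplified} product formula is essential. Writing $W(s) = g(f(s))$ with $g(u) = \prod_{k=0}^{K_p} e^{-i\beta_{p,k}huH_1}e^{-i\alpha_{p,k}h(1-u)H_0}$ entire in $u$, smoothness of $f$ gives smoothness of $W$, which is condition~\ref{cond:1_exp}; and by Fa\`a di Bruno's formula every term of $W^{(k)}(s)$ carries a factor $f^{(j)}(s)$ with $j \ge 1$, so $f^{(k)}(0) = f^{(k)}(1) = 0$ for all $k \ge 1$ forces $W^{(k)}(0) = W^{(k)}(1) = 0$ for all $k \ge 1$, which is the boundary cancellation required by \cref{lem:DAE_exp}. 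By contrast, in the general formula \cref{eqn:high_order_trotter_general} the Hamiltonians are evaluated at the shifted arguments $s + \delta_{p,k}h/T$ and $s + \gamma_{p,k}h/T$, so $W^{(k)}(0)$ would involve $f^{(k)}$ evaluated away from $0$ and the endpoint cancellation of $f$ does not transfer — this is precisely why the theorem is stated for $U_{\text{spf},p}$.

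With all three hypotheses of \cref{lem:DAE_exp} verified, applying it yields leakage $\le C_k/T_d^k$ for every $k \ge 1$; taking $T_d = \mathcal{O}(\epsilon^{-1/k})$ makes this $\le \epsilon$, and $T = hT_d = \mathcal{O}(\epsilon^{-1/k})$ since $h = \Theta(1)$. The main obstacle is the gap step: unlike the exponential-integrator case of \cref{thm:exp_exp}, the product walk operator's gap is not simply inherited from $H(s)$, so one must keep $\|W(s)-e^{-ihH(s)}\|$ below a constant fraction of $\Delta_*$ while still permitting $h = \mathcal{O}(1)$ — which works only because $\|H_0\|,\|H_1\| \le 1$ forces the relevant commutator norms to be $\mathcal{O}(1)$.
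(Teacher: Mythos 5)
Your proposal is correct and follows essentially the same route as the paper's proof: define the simplified product-formula walk operator, get smoothness from $f$, transfer the gap via the perturbative Trotter-gap lemma (the higher-order analogue of \cref{lem:gap_Trotter_1st_2nd}, i.e.\ \cref{lem:gap_Trotter_high_order}) with an $\epsilon$-independent constant step size, obtain boundary cancellation by the chain rule, and then invoke \cref{lem:DAE_exp}. The extra details you supply (endpoint collapse to $e^{-ihH_0}$, $e^{-ihH_1}$ and the Fa\`a di Bruno argument) are consistent elaborations of the paper's argument rather than a different method.
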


\begin{proof}
    Define the walk operator 
    \begin{equation}
        W(s) = \prod_{k=0}^{K} \exp\left( -i\beta_{p,k} h f(s) H_1\right)  \exp\left( -i\alpha_{p,k}h (1-f(s)) H_0 \right). 
    \end{equation}
    Since the function $f(s)$ is smooth over $[0,1]$, the operator $W(s)$ is also smooth. 
    Using~\cref{lem:gap_Trotter_high_order}, the walk operator $W(s)$ also satisfies the gap condition as $H(s)$ does by choosing a sufficiently small (but independent of $\epsilon$) time step size $h$. 
    Finally, according to the chain rule, each term in the derivatives of $W(s)$ contains the derivatives of $f(s)$ as a multiplicative factor, which implies that $W(s)$ also satisfies the boundary cancellation condition. 
    Therefore, our claim directly follows from~\cref{lem:DAE_exp}. 
\end{proof}

As a side product, we can obtain an improved error bound on the numerical discretization errors as well. 
The idea is to use the triangle inequality 
to bound the numerical discretization error by the sum of the continuous and discrete adiabatic errors. 
Here we state and prove an error bound for the first-order product formula. 
Similar results for other numerical methods can be established in the same way. 

\begin{corollary}\label{cor:improved_trotter}
    Let $\ket{\psi(T)}$ be the solution of~\cref{eqn:AQC_dynamics} at the final time, and $\ket{\widetilde{\phi}}$ be the numerical solution obtained by the first-order product formula with step size $\mathcal{O}(1)$. 
    Suppose that~\cref{lem:DAE_exp} is true, both $\|H_0\|$ and $\|H_1\|$ are bounded by $1$, the Hamiltonian $(1-s)H_0+sH_1$ satisfies the gap condition, the spectrum of interest only consists of one simple eigenvalue, and the smooth interpolation function $f(s)$ has $f^{(k)}(0)=f^{(k)}(1) = 0$ for any $k \geq 1$. 
    Then, for any $k \geq 1$, we have 
    \begin{equation}
        \left\|\ket{\widetilde{\phi}}\bra{\widetilde{\phi}} - \ket{\psi(T)}\bra{\psi(T)}\right\| \leq \mathcal{O}\left(\frac{1}{T^k}\right). 
    \end{equation}
\end{corollary}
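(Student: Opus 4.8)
The plan is to bound the numerical error by a triangle-inequality decomposition in which \emph{both} pieces are controlled by adiabatic theorems rather than by a naive discretisation estimate. Write $\ket{\widetilde{\phi}} = U(1)\ket{\psi(0)}$ for the discrete evolution built from the first-order product-formula walk operators $W(s)$, and $\ket{\psi(T)}$ for the exact solution of~\cref{eqn:AQC_dynamics}. Introduce the rank-one projectors $P(1)$ onto the relevant eigenstate of $W(1)$ (equivalently of $H(1) = H_1$, since at $s=1$ the product formula exponentiates $H_1$ and $H_0$ separately but with $f(1)=1$, so its distinguished eigenvector is that of $H_1$) and $\Pi(1)$ onto the corresponding eigenstate of $H(1)$. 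First I would note that $P(1) = \Pi(1)$ at the endpoint: at $s=1$ the walk operator is $e^{-i\beta h H_1}\cdots$ composed with identity-like factors coming from $1-f(1)=0$, so its spectral projection onto the tracked eigenpath coincides with the spectral projection of $H_1$. Hence it suffices to show that each of $\ket{\widetilde\phi}$ and $\ket{\psi(T)}$ is $\mathcal{O}(1/T^k)$-close (in the projector norm) to this common eigenstate.

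For the discrete piece, I would invoke \cref{thm:trotter_exp}: its hypotheses (Conjecture~\ref{lem:DAE_exp} holds, $\|H_0\|,\|H_1\|\le 1$, constant-level gap of $H(s)$, and $f^{(k)}(0)=f^{(k)}(1)=0$ for all $k\ge 1$) are exactly those assumed here, so with $h=\mathcal{O}(1)$ and $T=T_d$ we get $\|\ket{\widetilde\phi}\bra{\widetilde\phi} - P(1)\|\le \mathcal{O}(1/T^k)$ for every $k\ge 1$, where I use that for a pure target state leakage out of the eigenspace translates into a projector-norm bound of the same order (via $\||\phi_1\rangle\langle\phi_1| - |\phi_2\rangle\langle\phi_2|\| \le 2\||\phi_1\rangle - c|\phi_2\rangle\|$ for a suitable phase $c$, the constant $2$ being absorbed). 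For the continuous piece, I would appeal to the high-order \emph{continuous} adiabatic theorem (Nenciu / Jansen–Ruskai–Seiler, cited in the paper as~\cite{jansen2007bounds,Nenciu1993}): since $f$ has all derivatives vanishing at the boundary and $H(s)$ is gapped with a simple tracked eigenvalue, the exact evolution obeys $\|\ket{\psi(T)}\bra{\psi(T)} - \Pi(1)\| \le \mathcal{O}(1/T^k)$ for every $k$. Combining the two bounds with the triangle inequality and $P(1)=\Pi(1)$ gives the claim.

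The main obstacle I anticipate is the endpoint identification $P(1) = \Pi(1)$ together with the matching of the \emph{tracked eigenpath} across the two pictures: one must check that the eigenpath $\sigma_P(s)$ selected for the discrete walk operator $W(s)$ (as required by condition~\ref{cond:2_exp} of the high-order discrete adiabatic setup) connects, at $s=0$, to the same eigenstate of $H_0$ that the continuous evolution starts from, and at $s=1$ to the eigenstate of $H_1$ we want—so that the two adiabatic theorems are tracking the \emph{same} physical path and the triangle inequality is legitimate. This requires knowing that along $[0,1]$ the product-formula walk operator stays close enough to $e^{-ihH(s)}$ (via the time-independent Trotter bound of~\cite{ChildsSuTranEtAl2019}, as in \cref{lem:gap_Trotter_high_order}) that no eigenvalue crossing reshuffles the labelling; given the constant gap and small constant $h$ this is routine but must be stated. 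A secondary, more bookkeeping-level point is confirming that the hypothesis ``the spectrum of interest only consists of one simple eigenvalue'' is what lets us pass from the operator-difference / leakage bounds of \cref{thm:trotter_exp} and the continuous adiabatic theorem to a bound on the difference of rank-one projectors with no loss in the power of $1/T$; this is the one place where the ``simple eigenvalue'' assumption is genuinely used.
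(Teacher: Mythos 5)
Your proposal is correct and follows essentially the same route as the paper's proof: a triangle-inequality decomposition in which the discrete output is compared to the tracked eigenstate of $e^{-ihH_1}$ via \cref{thm:trotter_exp} (hence \cref{lem:DAE_exp}), the exact solution is compared to the eigenstate of $H_1$ via the continuous boundary-cancellation adiabatic theorem (\cref{lem:AT_exp}), and the two endpoint eigenstates are identified up to phase using exactly the hypotheses you flag ($\|H_1\|\le 1$ so no phase wrapping, and simplicity of the tracked eigenvalue), with the same factor-of-two passage from vector distance to density-matrix norm. Your projector formulation $P(1)=\Pi(1)$ is just a rephrasing of the paper's observation that the eigenstates $\ket{u}$ of $H_1$ and $\ket{v}$ of $e^{-ihH_1}$ coincide up to a phase, so no substantive difference remains.
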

\begin{proof}
    First, according to the continuous high-order adiabatic theorem, which we detail in~\cref{lem:AT_exp} in~\cref{app:continuous_adiabatic}, there exists an eigenstate $\ket{u}$ of $H_1$ corresponding to the eigenvalue of interest such that for any $k$, 
    \begin{equation}
        \left\| \ket{\psi(T)} - \ket{u} \right\| \leq \mathcal{O}\left(\exp\left(-cT^{\frac{1}{1+\alpha}}\right)\right) \leq \mathcal{O}\left(\frac{1}{T^k}\right). 
    \end{equation}
    Similarly, for the numerical solution, according to~\cref{thm:trotter_exp}, there exists an eigenstate $\ket{v}$ of $e^{-i h H_1}$ corresponding to the eigenvalue of interest such that for any $k$, 
    \begin{equation}
        \left\| \ket{\widetilde{\phi}} - \ket{v} \right\| \leq  \mathcal{O}\left(\frac{1}{T^k}\right). 
    \end{equation}
    Notice that $\ket{u}$ and $\ket{v}$, if not exactly the same, only differ by a phase factor, since we assume that $\|H_1\| \leq 1$ and the spectrum of interest only consists of one simple eigenvalue. 
    Therefore, the triangle inequality gives 
    \begin{align}
        \left\|\ket{\widetilde{\phi}}\bra{\widetilde{\phi}} - \ket{\psi(T)}\bra{\psi(T)}\right\| &\leq \left\| \ket{\widetilde{\phi}}\bra{\widetilde{\phi}} - \ket{v}\bra{v}\right\| + \left\| \ket{\psi(T)}\bra{\psi(T)} - \ket{u}\bra{u}\right\| \nonumber\\
        & \leq 2 \left\| \ket{\widetilde{\phi}} - \ket{v} \right\| + 2 \left\| \ket{\psi(T)} - \ket{u} \right\|  \nonumber \\
        & \leq \mathcal{O}\left(\frac{1}{T^k}\right). 
    \end{align}
\end{proof}

We compare~\cref{cor:improved_trotter} with the standard Trotter error bound in two aspects. 
First, standard Trotter error bounds are typically in terms of operator norm, and thus simultaneously bound the fidelity error and the phase error. 
However, our result in~\cref{cor:improved_trotter} only bounds the difference in the density matrix, and there is nothing we can conclude from~\cref{cor:improved_trotter} about the error within the phase. 
Nevertheless, in the scenario of near adiabatic evolution, the phase factor is usually not of importance and our bound on the density matrices is usually sufficient. 
Second, in terms of the density matrices, our error bound significantly improves the standard ones. 
Our Trotter error bound for near adiabatic evolution is $\mathcal{O}(T^{-k})$ for an arbitrarily large $k$, while the standard error bound of first-order Trotter with time step size $\mathcal{O}(1)$ is only $\mathcal{O}(1)$. 
We remark that such an improvement requires additional conditions, including the regularity conditions and the Hamiltonian gap conditions specified in~\cref{cor:improved_trotter}.

\section{Gap conditions of product formula and discrete adiabatic evolution for gapless Hamiltonian}\label{sec:gap}

Our analysis requires the gap condition of the local numerical propagator due to the usage of the discrete adiabatic theorems. 
For product formula, if only the gap in the Hamiltonian $H(s)$ is assumed rather than the product walk operator, we can reduce the time step size $h$ to make sure that the product walk operator also has a spectral gap (as shown in~\cref{lem:gap_Trotter_high_order} and~\cref{cor:Trotter_linear}). 

However, when the time step size is large (that is, as large as $\mathcal{O}(1)$), the relation between the gap of the product walk operator and the gap of the Hamiltonian is not straightforward yet since the perturbation theory does not directly apply. 
We show by toy numerical examples that, in the worst case, even if $\|H_0\|$ and $\|H_1\|$ are bounded, the gap condition of the Hamiltonian does not necessarily imply that of the first-order Trotter operator, and vice versa. 

\begin{figure}
    \centering
    \includegraphics[width=0.45\textwidth]{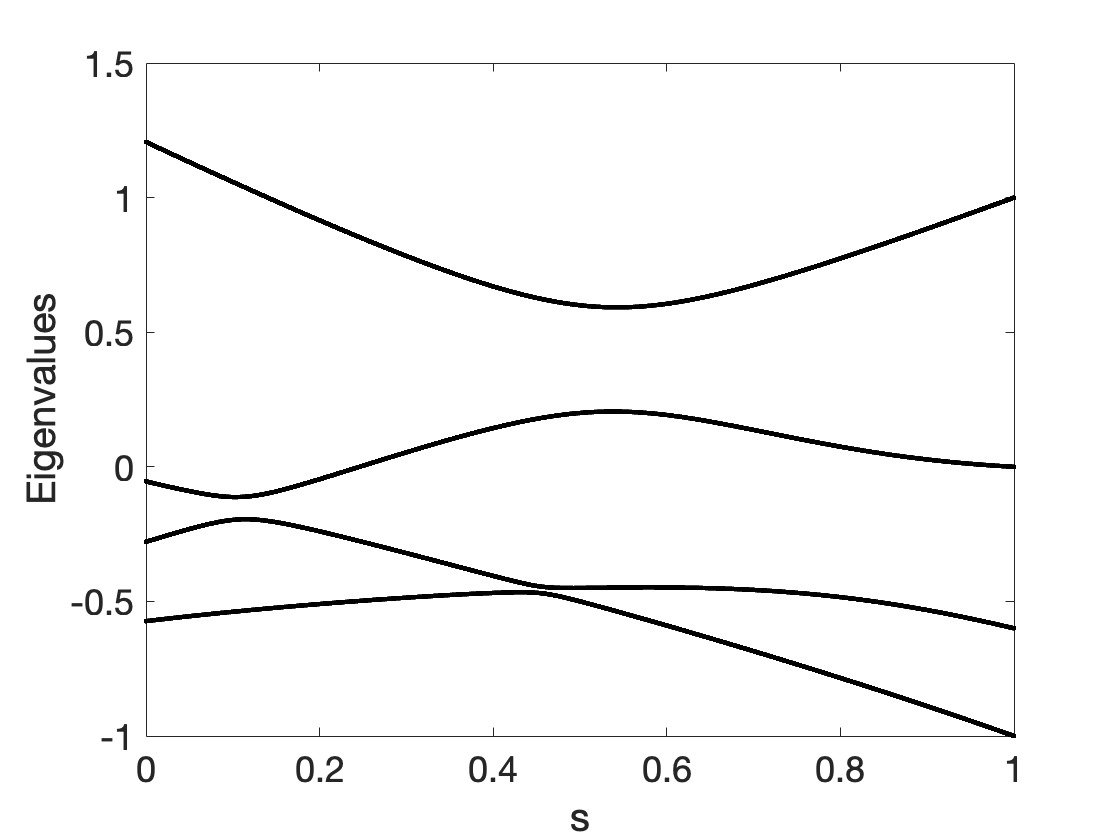}
    \includegraphics[width=0.45\textwidth]{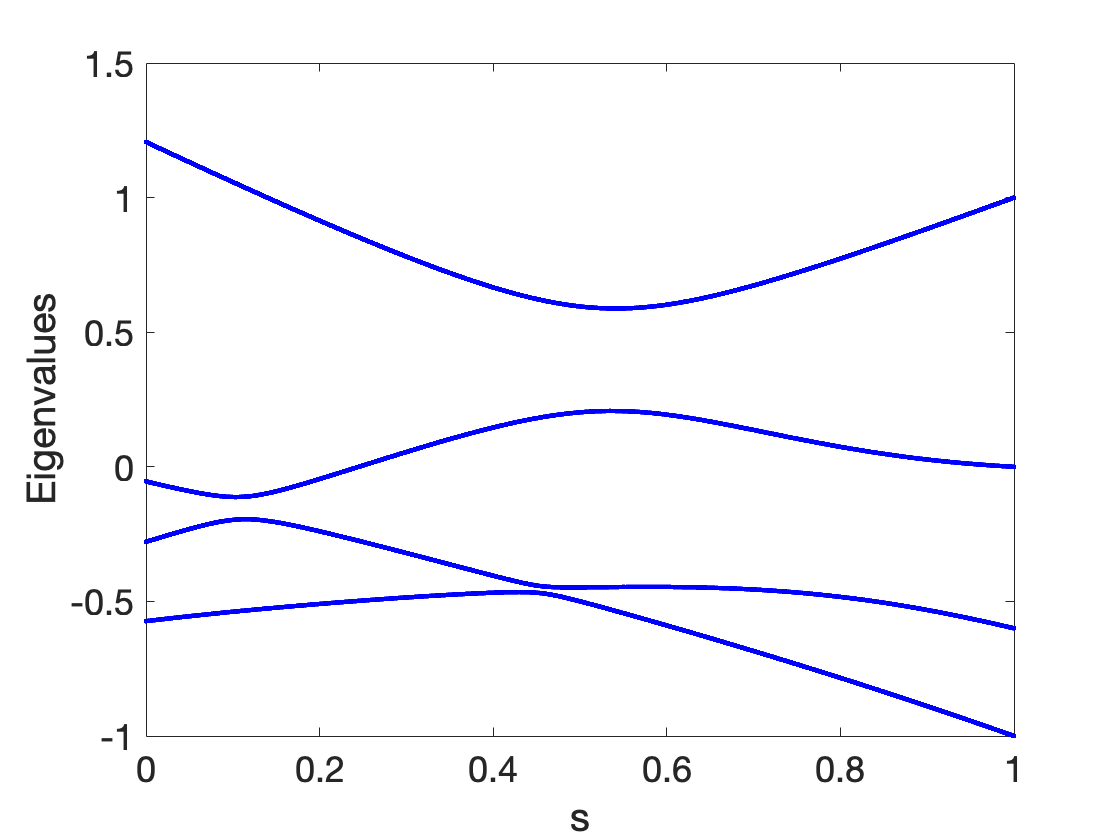}
    \caption{Left: spectrum of $H(s)$. Right: spectrum of $i\log(W(s))$. Here $\epsilon = 0.05$. }
    \label{fig:spectrum}
\end{figure}

The first example is that the gap of $H(s)$ remains constant while the gap of $W(s)$ closes. 
We consider a $4$-dimensional system, and $H_0$ and $H_1$ are constructed as follows. 
Let $Q$ be the eigenbasis of the matrix
\begin{equation}
    \left(\begin{array}{cccc}
        2 & -1 & 0 & 0\\
        -1 & 2 & -1 & 0 \\
        0 & -1 & 2 & -1 \\
        0 & 0 & -1 & 2
    \end{array}\right),
\end{equation}
$D = \text{diag}(-0.5,-0.5+\epsilon,0.2,0.6)$ for a small parameter $\epsilon\geq 0$, and define $\widetilde{W} = Qe^{-iD}Q^{\dagger}$. 
Then we choose $H_1 = \text{diag}(-1,-0.6,0,1)$, $H_0 = 2i\log(e^{iH_1/2} \widetilde{W} )$, and the scheduling function is linear (\emph{i.e.}, $f(s) = s$). 
By definition, $W(0.5)$ is exactly $\widetilde{W}$, so we can control the gap of $W(s)$ by tuning $\epsilon$. 
In particular, the gap vanishes when we choose $\epsilon = 0$. 
As an illustration, the spectrum of $H(s)$ and $W(s)$ are shown in~\cref{fig:spectrum}. 
Notice that the gap of $W(s)$ is measured using the distance in $\mathbb{S}^1$ (\emph{i.e.}, the spectral gap of $i\log(W(s))$). 
At a first glance of~\cref{fig:spectrum}, the spectrum of $H(s)$ and $i\log(W(s))$ are surprisingly very similar. 
However, a careful computation yields small differences. 
We numerically compute the spectral gap between ground and first excited states, and results with different $\epsilon$ are shown in~\cref{tab:num_gap_1}. 
As $\epsilon$ decreases, the gap of $W(s)$ decreases, but the gap of $H(s)$ remains at a constant level (although as small as $\sim 10^{-3}$) when $\epsilon$ is small enough. 
In the extreme case where $\epsilon = 0$, the gap of $W(s)$ becomes $0$ as expected (numerical gap is $\sim 10^{-16}$, reaching machine epsilon), while the gap of $H(s)$ is still $\sim 10^{-3}$. 

\begin{table}[tbh]
    \centering
    \scalebox{0.88}{
    \begin{tabular}{c|c|c|c|c|c|c|c|c|c|c|c}\hline\hline 
        $\epsilon$ & $10^{-1}$ & $5\times 10^{-2}$ & $2\times 10^{-2}$ & $10^{-2}$ & $5\times 10^{-3}$ & $2\times 10^{-3}$ & $10^{-3}$ & $5\times 10^{-4}$ & $2\times 10^{-4}$ & $10^{-4}$ & 0 \\ \hline
        Gap of $H(s)$ & $5.1 \times 10^{-2}$ & $2.3\times 10^{-2}$ & $7.9\times 10^{-3}$ & $3.0\times 10^{-2}$ & $5.6\times 10^{-4}$ & $8.9\times 10^{-4}$ & $1.4\times 10^{-3}$ & $1.6\times 10^{-3}$ & $1.8\times 10^{-3}$ & $1.8\times 10^{-3}$ & $1.9\times 10^{-3}$ \\ \hline
        Gap of $W(s)$ & $5.2\times 10^{-2}$ & $2.5\times 10^{-2}$ & $9.7\times 10^{-3}$ & $4.8\times 10^{-2}$ & $2.6\times 10^{-3}$ & $9.5\times 10^{-4}$ & $4.8\times 10^{-4}$ & $2.4\times 10^{-4}$ & $1.0\times 10^{-4}$ & $5.2\times 10^{-5}$ & $1.1\times 10^{-16}$ \\ \hline\hline
    \end{tabular}
    }
    \caption{Spectral gaps of $H(s)$ and $W(s)$ in the first example. Here the gap reported is the minimal gap over equi-distant points with step size $10^{-4}$. }
    \label{tab:num_gap_1}
\end{table}

Similarly, we can construct a second example where the gap of $W(s)$ remains constant while the gap of $H(s)$ closes. 
Let $Q$ and $D = \text{diag}(-0.5,-0.5+\epsilon,0.2,0.6)$ be the same as in the first example. 
Define $\widetilde{H} = QDQ^{\dagger}$, $H_1 = \text{diag}(-1,-0.6,0,1)$, $H_0 = 2\widetilde{H} - H_1$, and the scheduling function $f(s) = s$. 
Again by definition, $H(0.5)$ is exactly $\widetilde{H}$ and thus can be gapless when $\epsilon = 0$. 
\cref{tab:num_gap_2} shows the spectral gaps with different $\epsilon$, and we can observe that, as $\epsilon$ decreases, the gap of $H(s)$ vanishes while the gap of $W(s)$ remains at a constant level. 

\begin{table}[tbh]
    \centering
    \scalebox{0.88}{
    \begin{tabular}{c|c|c|c|c|c|c|c|c|c|c|c}\hline\hline 
        $\epsilon$ & $10^{-1}$ & $5\times 10^{-2}$ & $2\times 10^{-2}$ & $10^{-2}$ & $5\times 10^{-3}$ & $2\times 10^{-3}$ & $10^{-3}$ & $5\times 10^{-4}$ & $2\times 10^{-4}$ & $10^{-4}$ & 0 \\ \hline
        Gap of $H(s)$ & $5.1\times 10^{-2}$ & $2.5\times 10^{-2}$ & $9.6\times 10^{-3}$ & $4.7\times 10^{-3}$ & $2.4\times 10^{-3}$ & $9.4\times 10^{-4}$ & $4.7\times 10^{-4}$ & $2.3\times 10^{-4}$ & $1.0\times 10^{-4}$ & $5.1\times 10^{-5}$ & $3.3\times 10^{-16}$ \\ \hline
        Gap of $W(s)$ & $5.3\times 10^{-2}$ & $2.6\times 10^{-2}$ & $1.1\times 10^{-2}$ & $6.6\times 10^{-3}$ & $4.2\times 10^{-3}$ & $2.8\times 10^{-3}$ & $2.4\times 10^{-3}$ & $2.1\times 10^{-3}$ & $2.0\times 10^{-3}$ & $1.9\times 10^{-3}$ & $1.9\times 10^{-3}$ \\ \hline\hline
    \end{tabular}
    }
    \caption{Spectral gaps of $H(s)$ and $W(s)$ in the second example. Here the gap reported is the minimal gap over equi-distant points with step size $10^{-4}$. }
    \label{tab:num_gap_2}
\end{table}

The second example where the gap of $H(s)$ closes while the gap of $W(s)$ of the first-order Trotter operator remains constant suggests the possibility of efficient adiabatic optimization for systems with exponentially small gap or even gapless systems. 
Specifically, for gapless systems, we can try first-order Trotter with time step size $1$. 
Thanks to the discrete adiabatic theorems, such an approach will work and approximate the desired eigenstate if the product walk operators satisfy the gap condition, which is possible even for gapless systems (and we have observed this in our second example). 
Meanwhile, decreasing the time step sizes might worsen the computation because it makes the numerical evolution closer to the continuous dynamics which is known to be away from the desired eigenstate.

\begin{figure}
    \centering
    \includegraphics[width = 0.45\textwidth]{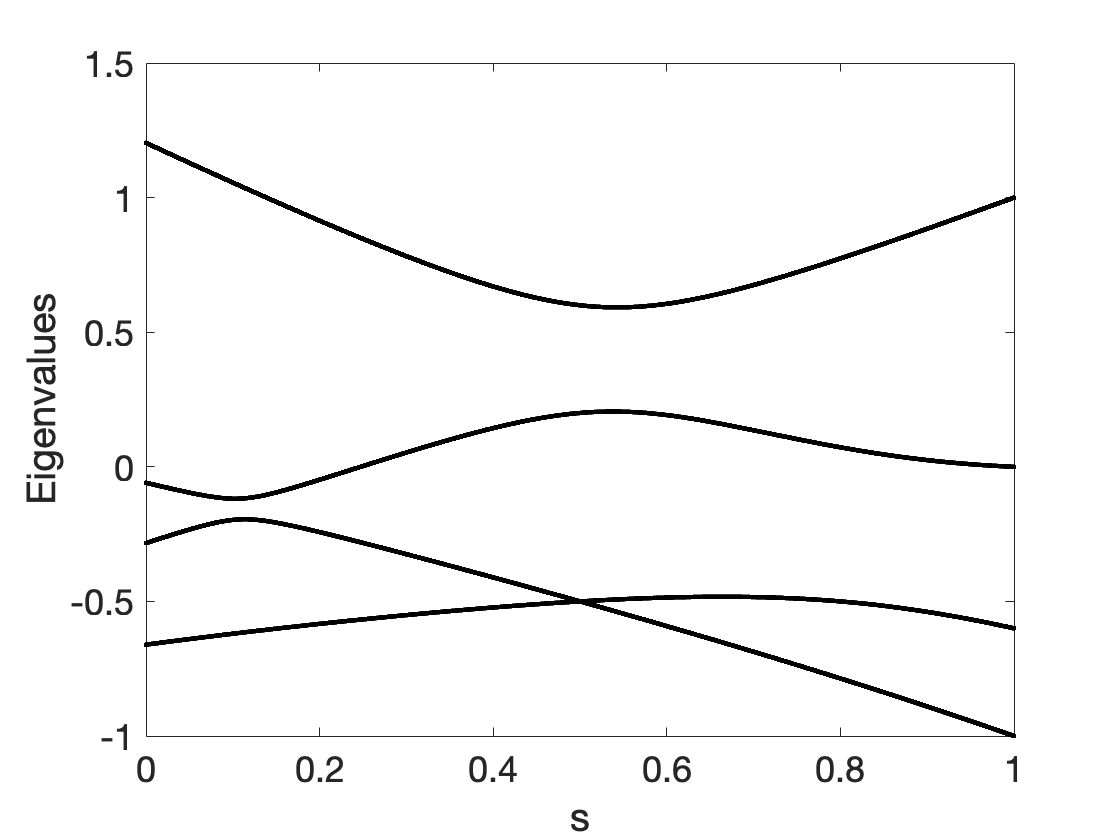}
    \includegraphics[width = 0.45\textwidth]{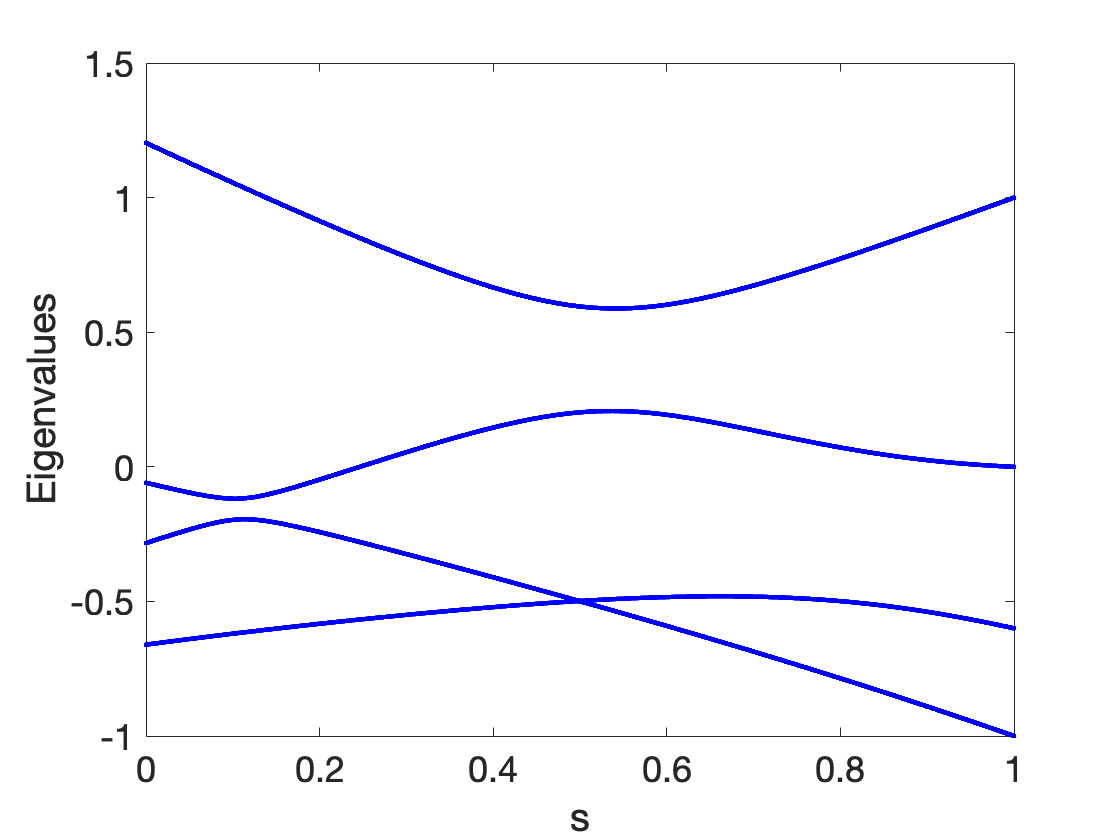}\\
    \includegraphics[width = 0.45\textwidth]{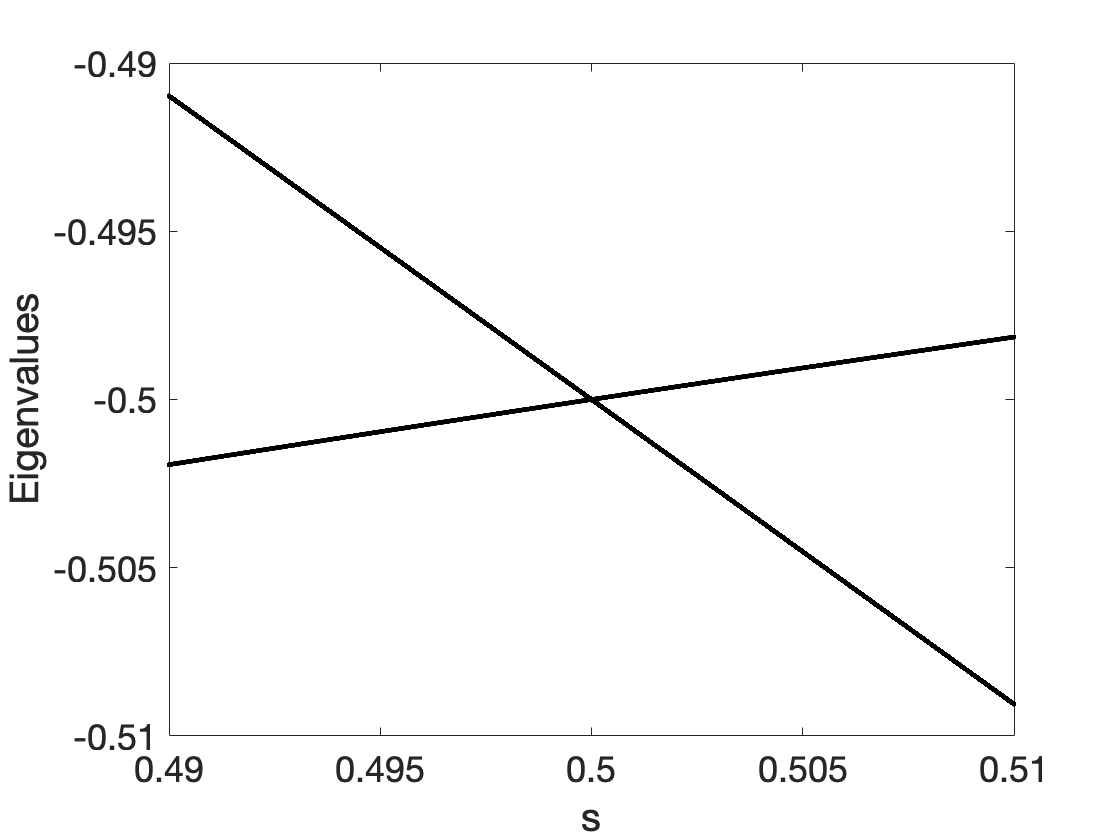}
    \includegraphics[width = 0.45\textwidth]{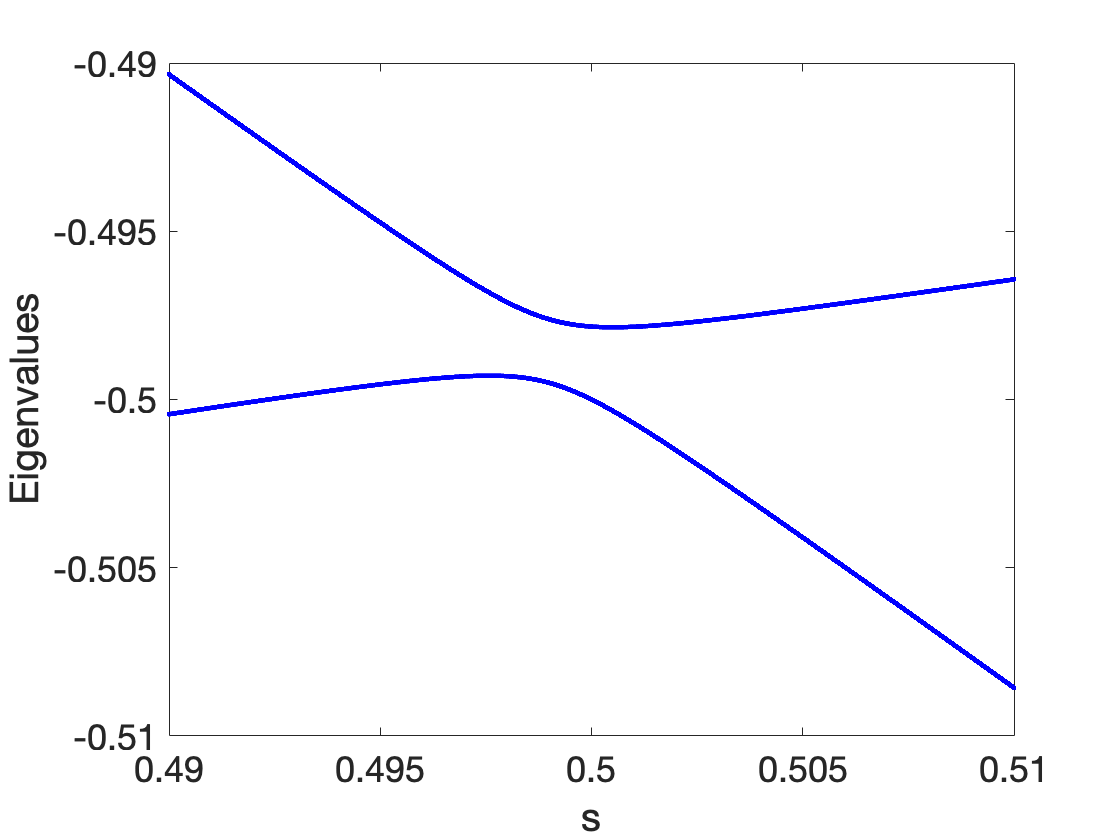}
    \caption{Spectrums of $H(s)$ and $i \log(W(s))$ of first-order Trotter with time step size $1$ in the second toy example with $\epsilon = 0$. 
    Here the top left is of $H(s)$, the top right is of $i \log(W(s))$, and the bottom two are zoom-in views. }
    \label{fig:spectrum_second_toy}
\end{figure}

\begin{figure}
    \centering
    \includegraphics[width=0.45\textwidth]{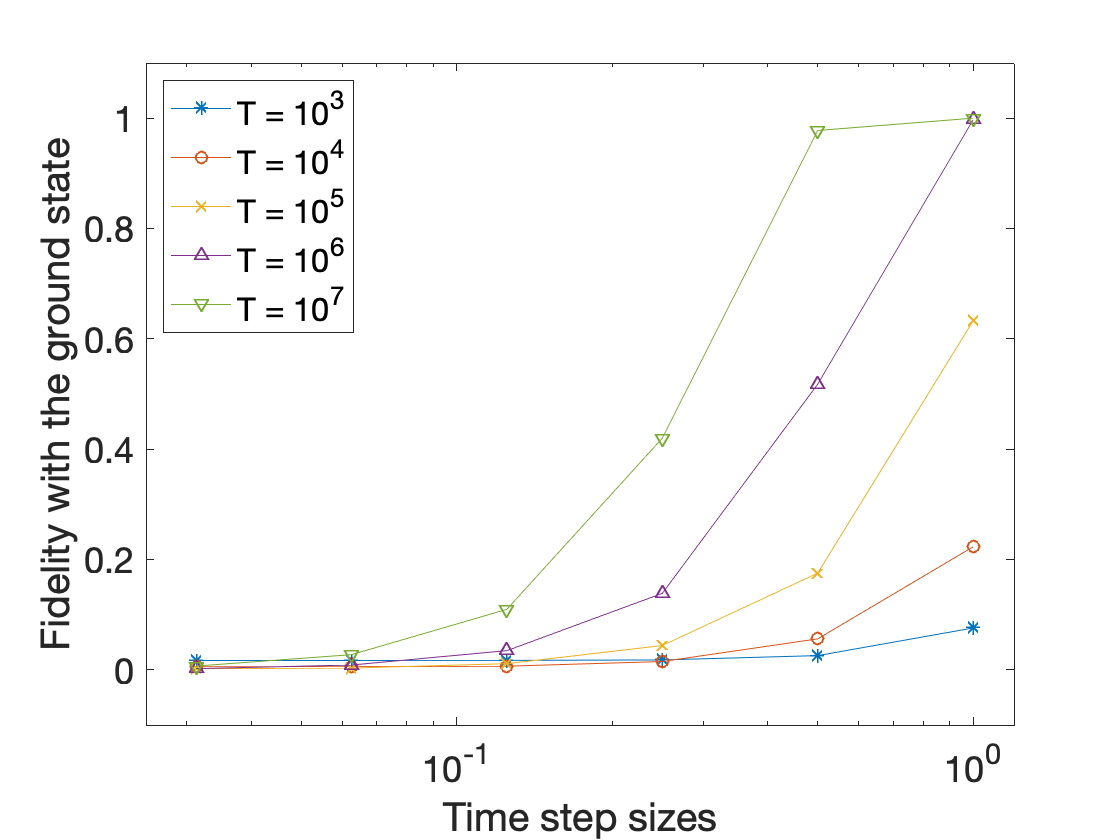}
    \includegraphics[width=0.45\textwidth]{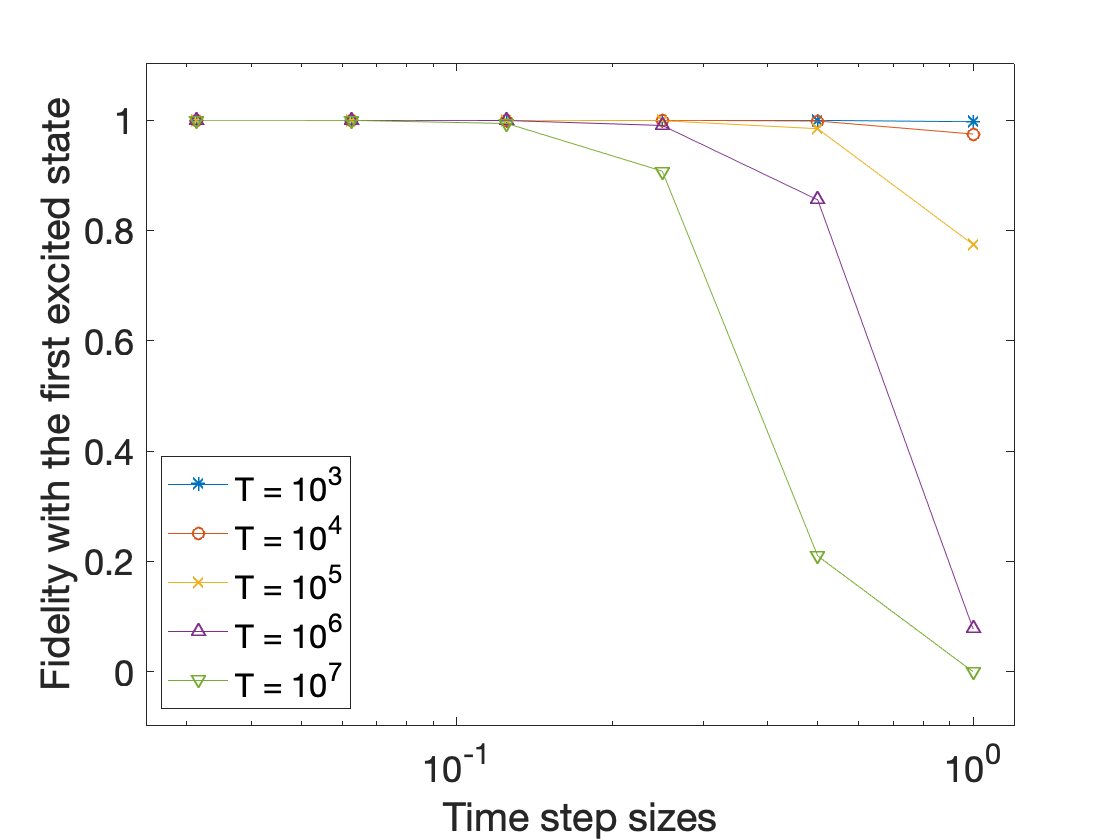}
    \caption{Fidelity with the ground state (left) and the first excited state (right) of the numerical solutions from first-order Trotter method with different time $T$ and time step sizes $h$. }
    \label{fig:fidelity_second_toy}
\end{figure}

We illustrate this phenomenon using our second toy model again. 
Here we directly choose $\epsilon = 0$, so the Hamiltonian $H(s)$ is gapless and the walk operator $W(s)$ of first-order Trotter with time step size $1$ still satisfies the gap condition. 
~\cref{fig:spectrum_second_toy} shows the spectrums of $H(s)$ and $
W(s)$, and confirms that $H(s)$ is gapless and $W(s)$ is gapped (though the gap is indeed very small). 
We choose time $T = 10^k$ for $k = 3,4,5,6,7$, and implement first-order Trotter method with time step sizes $h = 2^{-k}$ for $k = 0,1,2,3,4,5$. 
In~\cref{fig:fidelity_second_toy}, we plot the fidelity $|\braket{\widetilde{\phi}}{e_j}|$ for $j = 0,1$ where $\ket{\widetilde{\phi}}$ is our final actual state and $\ket{e_j}$ is the $j$-th eigenstate of the final Hamiltonian $H_1$. 
When the time step size is fixed to be $1$, the actual state will better approximate the ground state of $H_1$ as $T$ becomes larger. 
This can be explained by the discrete adiabatic theorems and the analysis we establish in this work. 
However, if we decrease the time step size, then the numerical solution will go away from the ground state and tend to the first excited state instead. 
This is because, as the time step size decreases, the numerical solution will be closer to the solution of the continuous dynamics governed by $H(s)$, which is gapless and does not even approximately follow the ground eigenpath. 
The observation that it goes to the first excited state can be explained according to the adiabatic theorem without gap condition~\cite{AvronElgart1999} since the ground state of the initial Hamiltonian $H_0$ and the first excited state of the final Hamiltonian $H_1$ is smoothly connected. 
In summary, this toy example shows that for gapless Hamiltonian, continuous adiabatic evolution does not approximate the ground state but the Trotterization with $\mathcal{O}(1)$ time step size may still work as long as the product walk operator satisfies the gap condition.

\section{Optimized gap dependence with application to Grover search and comparison with QAOA}\label{sec:Grover}

We have shown that the time step size in discretizing near-adiabatic dynamics can be larger as previously expected by viewing the numerical integrator as the set of discrete adiabatic walk operators. 
In our analysis, we mainly focus on the improved scalings in $\epsilon$, and we simply bound the time-dependent spectral gap by the minimal gap, resulting in a generally super-linear dependence on the gap. 
However, the gap dependence can be improved to be linear if we are given sufficient information of the simultaneous gap and carefully choose the scheduling function $f(s)$ according to the size of the gap. 
This has been observed in several applications of the continuous adiabatic evolution as well as using the first-order exponential integrator to construct optimal linear system solver. 
Here we demonstrate by the adiabatic Grover search example that the linear gap dependence can also be achieved in first-order Trotter method as well.

We consider the unstructured search problem in $N$ dimensional Hilbert space. 
Let $\left\{\ket{j}\right\}_{j=0}^{N-1}$ denote the computational basis, and $\mathcal{M} \subset \left\{\ket{j}\right\}_{j=0}^{N-1}$ is the set of the marked states with $|\mathcal{M}| = M$. 
We assume $N \geq 2M$. 
Starting with the uniform superposition of the computational basis, our goal is to find the uniform superposition of all marked states. 
Here we do not assume $\emph{a priori}$ knowledge on $M$ since estimating the bound of $M$ can be costly -- instead we are interested in designing robust algorithm that does not use the information of $M$ in implementing the algorithm and automatically becomes more accurate when $M$ is larger.

Adiabatic formulation~\cite{RevModPhys.90.015002} takes the initial Hamiltonian to be 
\begin{equation}
    H_0 = I - \ket{u}\bra{u}, 
\end{equation}
where $\ket{u} = \frac{1}{\sqrt{N}}\sum_{j=0}^{N-1} \ket{j}$ is the uniform superposition, and the final Hamiltonian to be 
\begin{equation}
    H_1 = I - \sum_{j\in\mathcal{M}} \ket{j}\bra{j}. 
\end{equation}
Notice that both $H_0$ and $H_1$ only effectively act on a two-dimensional subspace. 
The orthonormal basis of this subspace is $\left\{\ket{e_0},\ket{e_1}\right\}$ where 
\begin{equation}
    \ket{e_0} = \frac{1}{\sqrt{M}} \sum_{j\in\mathcal{M}} \ket{j}, \quad \ket{e_1} = \frac{1}{\sqrt{N-M}} \sum_{j\notin \mathcal{M}}\ket{j}, 
\end{equation}
and the corresponding matrix representations of $H_0$ and $H_1$ are 
\begin{equation}\label{eqn:Grover_H0_H1_2dim}
    H_0 = \left(\begin{array}{cc}
        1-M/N & -\sqrt{M(N-M)}/N  \\
        -\sqrt{M(N-M)}/N & M/N
    \end{array}\right), 
    \quad H_1 = \left(
    \begin{array}{cc}
        0 & 0 \\
        0 & 1
    \end{array}
    \right). 
\end{equation}
Furthermore, $\ket{u}$ is the ground state of $H_0$ and our target state $\ket{e_0}$ is the ground state of $H_1$. 
Let $H(s) = (1-f(s))H_0 + f(s)H_1$ where $f(s)$ is the scheduling function such that $f(0) = 0$ and $f(1) = 1$. 
We may solve the time evolution governed by $H(s)$ with the initial state $\ket{u}$ to approximate the target state. 
The corresponding digital simulation algorithm, using first-order Trotter method with time step size $1$ (so $T = T_d$ in this case), goes as 
\begin{equation}
    \prod_{j=0}^{T-1} \exp\left(-i f(j/T) H_1\right)\exp\left(-i (1-f(j/T)) H_0\right) \ket{u}. 
\end{equation}

For time-dependent Hamiltonian $H(s) = (1-f(s))H_0 + f(s)H_1$, straightforward computations yield that two relevant eigenvalues which determine the diabatic errors are~\cite{RevModPhys.90.015002} 
\begin{equation}
    \lambda_{\pm} = \frac{1}{2} \pm \frac{1}{2}\sqrt{(1-2f(s))^2 + \frac{4M}{N}f(s)(1-f(s))}. 
\end{equation}
The gap of $H(s)$ is 
\begin{equation}\label{eqn:Grover_gap_H}
    \Delta_{H}(s;N,M,f) = \sqrt{(1-2f(s))^2 + \frac{4M}{N}f(s)(1-f(s))}. 
\end{equation}
However, in the discrete evolution, the gap of the walk operator $W(s) = \exp\left(-i f(s) H_1\right)\exp\left(-i (1-f(s)) H_0\right)$ is what determines the scaling of $T$. 
Since $W(s)$ is represented by a $2$-dimensional matrix, it is possible to explicitly compute its eigenvalues and spectral gap (see~\cref{app:Grover_gap}). 
However, the explicit expression of the gap turns out to be complicated for us to estimate the adiabatic error using~\cref{lem:DAE_linear}. 
To facilitate further computations, we alternatively relate the gap of $W(s)$ to that of $H(s)$ in the following lemma, for which the proof can be found in~\cref{app:Grover_gap}. 
\begin{lemma}\label{lem:Grover_gap_walk}
    Consider the adiabatic Grover search problem. 
    Let $H(s)$ be the Hamiltonian with the gap $\Delta_{H}(s;N,M,f)$ specified in~\cref{eqn:Grover_gap_H}, and $\Delta_{W}(s;N,M,f)$ denote the gap of the walk operator $W(s)$. 
    Then 
    \begin{equation}\label{eqn:Grover_gap_Trotter}
        \Delta_{W}(s;N,M,f) \geq \frac{2}{3} \Delta_{H}(s;N,M,f). 
    \end{equation}
\end{lemma}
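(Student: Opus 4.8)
The plan is to use the fact, already noted around \eqref{eqn:Grover_H0_H1_2dim}, that the whole problem lives in the two–dimensional subspace $\operatorname{span}\{\ket{e_0},\ket{e_1}\}$, on which $H_0$ and $H_1$ are ordinary $2\times 2$ Hermitian matrices; hence $W(s)$ is a global phase times an element of $SU(2)$, and its $\mathbb{S}^1$–gap can be computed exactly and compared with $\Delta_H$ term by term. First I would write $H_0 = \tfrac12 I + \tfrac12\,\vec m_0\cdot\vec\sigma$ and $H_1 = \tfrac12 I + \tfrac12\,\vec m_1\cdot\vec\sigma$ with unit Bloch vectors $\vec m_0,\vec m_1$ (possible since each $H_j$ has eigenvalues $0$ and $1$), and read off from \eqref{eqn:Grover_H0_H1_2dim} that $\vec m_0\cdot\vec m_1 = 2M/N-1\le 0$. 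Then $H(s) = \tfrac12 I + \tfrac12\,\vec r(s)\cdot\vec\sigma$ with $\vec r(s) = (1-f(s))\,\vec m_0 + f(s)\,\vec m_1$, so $\Delta_H(s;N,M,f) = |\vec r(s)|$, which recovers \eqref{eqn:Grover_gap_H}.

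Next I would expand $W(s)=e^{-if(s)H_1}e^{-i(1-f(s))H_0}$ using $e^{-i\theta H_j}=e^{-i\theta/2}\big(\cos\tfrac\theta2\,I-i\sin\tfrac\theta2\,\vec m_j\cdot\vec\sigma\big)$ together with $(\vec a\cdot\vec\sigma)(\vec b\cdot\vec\sigma)=(\vec a\cdot\vec b)\,I+i(\vec a\times\vec b)\cdot\vec\sigma$. Reading off the identity component gives $\tfrac12\operatorname{Tr}W(s)=e^{-i/2}C(s)$ with
\[
 C(s)=\cos\tfrac{f(s)}2\cos\tfrac{1-f(s)}2-(\vec m_0\cdot\vec m_1)\sin\tfrac{f(s)}2\sin\tfrac{1-f(s)}2,
\]
and the product-to-sum identities plus $\vec m_0\cdot\vec m_1=2M/N-1$ collapse this to $C(s)=\big(1-\tfrac MN\big)\cos\!\big(f(s)-\tfrac12\big)+\tfrac MN\cos\tfrac12$, which lies in $[\cos\tfrac12,1]\subset(0,1]$. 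Since $W(s)$ is a phase times an $SU(2)$ rotation with half-angle $\arccos C(s)\in[0,\tfrac\pi2)$, its two eigenvalues are at angular distance $2\arccos C(s)$, i.e.\ $\Delta_{W}(s;N,M,f)=2\arccos C(s)$ (this makes the explicit but unwieldy expression of \cref{app:Grover_gap} concrete).

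With these formulas the claim $\Delta_W\ge\tfrac23\Delta_H$ becomes, since $\arccos$ is decreasing and $\tfrac13\Delta_H\le\tfrac13$, the scalar inequality $C(s)\le\cos\!\big(\tfrac13\Delta_H(s)\big)$. Substituting $x=f(s)-\tfrac12\in[-\tfrac12,\tfrac12]$ and $\mu=M/N\in(0,\tfrac12]$, one has $\Delta_H^2=4(1-\mu)x^2+\mu$ and $C=(1-\mu)\cos x+\mu\cos\tfrac12$, and the inequality follows from the elementary estimates $\cos\theta\ge 1-\theta^2/2$, $1-\cos x\ge \tfrac{47}{96}x^2$, and $1-\cos\tfrac12\ge\tfrac19$ on $[-\tfrac12,\tfrac12]$: after these substitutions every $\mu$-term and every $x$-term lands on the favorable side of the inequality, with the crude coefficient comparison $\tfrac{47}{96}>\tfrac29$ doing the work, so the constant $\tfrac23$ is comfortably non-tight.

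The trigonometric bookkeeping is routine; the one point requiring care is the gap convention — one must measure the gap of $W(s)$ on $\mathbb{S}^1$, consistently with \cref{lem:DAE_linear}, and verify $C(s)\ge 0$ so that $\Delta_W=2\arccos C(s)$ and not its $2\pi$-complement. I expect the only (minor) obstacle to be confirming the scalar inequality uniformly across the full ranges of $x$ and $\mu$, in particular at the endpoints $x=\pm\tfrac12$ and in the degenerate limit $\mu\to 0$ (where $C\to\cos x$, $\Delta_H\to|1-2f|$, and the ratio $\Delta_W/\Delta_H$ tends to $1$).
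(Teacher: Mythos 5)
Your proposal is correct, and it takes a genuinely different route from the paper. The paper proves \cref{lem:Grover_gap_walk} by invoking its general eigenvalue-perturbation result for Trotterized walk operators (\cref{lem:gap_Trotter_1st_2nd}) with $h=1$, combining the bound $\Delta_W \geq \Delta_H - \frac{\pi}{24\sqrt{3}}\|[H_0,H_1]\|$ with the explicit computation $\|[H_0,H_1]\| = \sqrt{M(N-M)}/N$ and the lower bound $\Delta_H \geq \sqrt{M/N}$; the factor $2/3$ then falls out because $\frac{\pi}{24\sqrt 3}\approx 0.076$ is small compared to $1/3$. You instead compute the spectrum of $W(s)$ exactly in the two-dimensional invariant subspace, obtaining $\Delta_W = 2\arccos\bigl((1-\tfrac{M}{N})\cos(f(s)-\tfrac12)+\tfrac{M}{N}\cos\tfrac12\bigr)$ — which matches the paper's own ``direct computations'' subsection, though the paper never uses that formula in its proof of the lemma — and then verify $C(s)\le\cos(\Delta_H/3)$ by elementary estimates ($\cos\theta\ge 1-\theta^2/2$, $1-\cos x\ge \tfrac{47}{96}x^2$ on $|x|\le\tfrac12$, $1-\cos\tfrac12\ge\tfrac19$), which I checked do close the inequality term by term after the substitution $\Delta_H^2 = 4(1-\mu)x^2+\mu$. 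What each approach buys: the paper's argument is shorter given the infrastructure already built and generalizes beyond the two-dimensional setting, but note it applies \cref{lem:gap_Trotter_1st_2nd} with $h=1$ even though that lemma is stated under $h\le 1/(\|H_0\|+\|H_1\|)=1/2$ (harmless here since all phases lie in $[0,1]$, so there is no wrap-around on $\mathbb{S}^1$, but it is a hypothesis mismatch); your fully explicit computation avoids that issue entirely, makes the $\mathbb{S}^1$ convention and the sign of $C(s)$ transparent, and shows the constant $2/3$ is far from tight. The one point you flagged — verifying $C(s)\ge 0$ — is in fact immediate since $C(s)\ge\cos\tfrac12$, so $\arccos C(s)\le\tfrac12<\tfrac{\pi}{2}$ and the angular gap is indeed $2\arccos C(s)$.
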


The minimum gap appears when $s = f^{-1}(1/2)$ and $ \sim \sqrt{M/N}$.  
As a result, a linear interpolation ($f(s) = s$) leads to the non-optimal scaling $T \geq N/M$ which is at least quadratically worse than Grover's algorithm. 
To recover the Grover speedup, we need to choose a carefully designed scheduling function such that it slows down when the gap closes. 

We consider two options. 
The first one is to use the scheduling function designed for $M = 1$, so the scheduling function is independent of general values of $M$. 
The intuitive reason why this may work is that the case $M=1$ is the hardest one with smallest gap near $s = f^{-1}(1/2)$, so the resulting scheduling function is also sufficiently slow near the smallest gap for general $M > 1$. 
One example is defined through the differential equation 
\begin{equation}\label{eqn:Grover_scheduling_def}
    \partial_s f(s) = d_{N,p}\Delta_{H}(s;N,1,f(s))^p, \quad f(0) = 0, 
\end{equation}
where the normalization constant $d_{N,p}$ is chosen such that $f(1) = 1$ and has the expression 
\begin{equation}\label{eqn:Grover_def_dN}
    d_{N,p} = \int_0^1 \Delta_{H}(s;N,1,s)^{-p} ds. 
\end{equation}
Such a scheduling function was originally proposed in~\cite{RolandCerf2002} with $p = 2$ and generalized in~\cite{jansen2007bounds} with $1<p<2$ for continuous AQC. 
The work~\cite{DalzellYoderChuang2017} considers the time discretization error with $p = 2$ and shows that the time step size can be as large as $1$ via efficient Hamiltonian approach. 
For $1 \leq p < 2$, we give the error bound in the following theorem, and the proof can be found in~\cref{app:Grover_proof_linear}.

\begin{theorem}\label{thm:Grover_adiabatic}
     Consider solving the Grover search problem in $N$ dimension with $M$ marked states using AQC discretized by first-order product formula with time step size $1$. 
     Suppose that $M$ is unknown. 
     \begin{enumerate}
         \item Suppose that $T \geq \mathcal{O}(\log(N))$ and the scheduling function is defined in~\cref{eqn:Grover_scheduling_def} with $p=1$. 
         Then the overall error is bounded by 
         \begin{equation}
             \mathcal{O}\left( \frac{\log(N)}{T} \sqrt{\frac{N}{M}}  \right). 
         \end{equation}
         \item Suppose that $T \geq \mathcal{O}(N^{\frac{p-1}{2}})$ and the scheduling function is defined in~\cref{eqn:Grover_scheduling_def} with $1<p<2$. 
         Then the overall error is bounded by 
         \begin{equation}
             \mathcal{O}\left(\frac{1}{T} \frac{\sqrt{N}}{M^{1-\frac{p}{2}}}  + \frac{N^{p-1}}{T^2}\max\left\{ 1,\frac{N^{3/2-p}}{M^{3/2-p}} \right\}  \right). 
         \end{equation}
     \end{enumerate}
\end{theorem}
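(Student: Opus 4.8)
The plan is to treat the first-order Trotter operator $W(s)=e^{-if(s)H_1}e^{-i(1-f(s))H_0}$ as the set of discrete adiabatic walk operators and apply \cref{lem:DAE_linear} directly, with $T_d=T$ since the step size is $1$. Because $H_0$ and $H_1$ both preserve the two-dimensional subspace $\mathrm{span}\{\ket{e_0},\ket{e_1}\}$ and the input $\ket{u}$ lies in it, the whole problem reduces to the $2\times 2$ model of \cref{eqn:Grover_H0_H1_2dim}; at the endpoints $W(0)=e^{-iH_0}$ and $W(1)=e^{-iH_1}$ have $\ket{u}$ and the target $\ket{e_0}$ as eigenstates with eigenvalue $1$, so conditions \ref{cond:1}--\ref{cond:3} are the correct hypotheses to verify.

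First I would assemble the ingredients. Using $\|H_0\|=\|H_1\|=1$ and the chain rule, as in \cref{lem:c1_c2_product_formula_simplified}, $c_1(s)=\mathcal O(|f'(s)|)$ and $c_2(s)=\mathcal O(|f'(s)|^2+|f''(s)|)$. By \cref{lem:Grover_gap_walk} the gap of $W(s)$ is at least $\tfrac23\Delta_H(s;N,M,f)$, and since $\Delta_H(s;N,M,f)=\widetilde\Delta(f(s))$ with $\widetilde\Delta(u):=\sqrt{(1-2u)^2+\tfrac{4M}{N}u(1-u)}$ is increasing in $M$, also $\Delta_H(s;N,M,f)\ge\Delta_H(s;N,1,f)$. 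The defining ODE \cref{eqn:Grover_scheduling_def} gives $f'(s)=d_{N,p}\,\widetilde\Delta_1(f(s))^p$ where $\widetilde\Delta_1$ is $\widetilde\Delta$ with $M=1$, and differentiating once more, $|f''(s)|\lesssim d_{N,p}^2\,\widetilde\Delta_1(f(s))^{2p-1}$ since $|1-2u|\le\widetilde\Delta_1(u)$; hence $c_1(s)=\mathcal O(d_{N,p}\widetilde\Delta_1(f(s))^p)$ and $c_2(s)=\mathcal O(d_{N,p}^2\widetilde\Delta_1(f(s))^{2p-1})$. The key asymmetry is that the numerators carry the small $M=1$ gap while the denominators carry the larger $M$-gap; this is the source of the $M^{-(1-p/2)}$ saving. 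I would also record $d_{N,1}=\int_0^1\widetilde\Delta_1(u)^{-1}du=\Theta(\log N)$ and $d_{N,p}=\Theta(N^{(p-1)/2})$ for $1<p<2$, obtained by splitting at $|1-2u|\sim N^{-1/2}$.

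Next I would check the hypotheses and evaluate the bound. The slow-variation requirement $T_d\ge\sup_s 4\hat c_1(s)/\check\Delta_2(s)$ amounts to $T\gtrsim d_{N,p}\sup_u\widetilde\Delta_1(u)^{p}/\widetilde\Delta(u)\le d_{N,p}$, i.e. $T\gtrsim\log N$ for $p=1$ and $T\gtrsim N^{(p-1)/2}$ for $1<p<2$ — exactly the assumed lower bounds; the multistep gap condition then follows from the fixed-time gap by \cref{lem:multistep_gap}. Substituting into \cref{lem:DAE_linear}: the boundary terms are $\mathcal O(d_{N,p}/T)$ and subdominant; each of the two sums I would bound by the corresponding integral plus the unresolved-peak correction $\tfrac1T\max_s(\cdot)$, and the change of variables $u=f(s)$, $ds=du/(d_{N,p}\widetilde\Delta_1(u)^p)$, turns the integrals into
\begin{equation*}
\int_0^1\frac{c_1^2}{\Delta^3}\,ds\lesssim d_{N,p}\int_0^1\frac{\widetilde\Delta_1(u)^{p}}{\widetilde\Delta(u)^{3}}\,du,\qquad \int_0^1\frac{c_2}{\Delta^2}\,ds\lesssim d_{N,p}\int_0^1\frac{\widetilde\Delta_1(u)^{p-1}}{\widetilde\Delta(u)^{2}}\,du,
\end{equation*}
which I would estimate by splitting $[0,1]$ into the bulk $|1-2u|\gtrsim\sqrt{M/N}$ and the neck $|1-2u|\lesssim\sqrt{M/N}$. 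For $p=1$ both integrals are $\Theta(\sqrt{N/M})$, collapsing everything to $\mathcal O(\log N\cdot\sqrt{N/M}/T)$ (the peak correction being absorbed once $T\gtrsim\log N$); for $1<p<2$ both are $\Theta((N/M)^{1-p/2})$, giving the main term $\mathcal O(\sqrt N/(TM^{1-p/2}))$, while the peak corrections contribute the secondary $\mathcal O(N^{p-1}T^{-2}\max\{1,(N/M)^{3/2-p}\})$.

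The main obstacle is the region-by-region evaluation of those one-dimensional integrals and, above all, of the peak corrections to the Riemann sums: one must simultaneously exploit that $f'$ is tied to the $M=1$ gap (through the change of variables) while the true $M$-gap sits in the denominators, and check that the neck width $\sqrt{M/N}$ interacts correctly with the step $1/T$. Pinning down the exact exponents in the $T^{-2}$ term and confirming its $\max$-structure is the fiddliest book-keeping; everything else is routine substitution into \cref{lem:DAE_linear} together with the change of variables above.
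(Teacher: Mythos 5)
Your proposal follows essentially the same route as the paper's proof: apply \cref{lem:DAE_linear} to the Trotter walk operator with $c_1\sim|f'|$, $c_2\sim|f''|+|f'|^2$ expressed through the defining ODE (so the numerators carry the $M=1$ gap and the denominators the $M$-gap via \cref{lem:Grover_gap_walk}), estimate $d_{N,1}=\Theta(\log N)$ and $d_{N,p}=\Theta(N^{(p-1)/2})$, convert the Riemann sums to integrals with a peak correction (which produces the $T^{-2}$ term), and change variables to $f$. The only difference is cosmetic: the paper evaluates the resulting one-dimensional integrals by explicit trigonometric substitutions rather than your neck/bulk splitting at $|1-2u|\sim\sqrt{M/N}$, and both yield the same $\Theta((N/M)^{1-p/2})$ (resp.\ $\Theta(\sqrt{N/M})$) estimates and hence the stated bounds.
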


~\cref{thm:Grover_adiabatic} implies that, to achieve an $\epsilon$ error,  the smallest number of Trotter steps required is $\mathcal{O}\left(\sqrt{\frac{N}{M}} \frac{\log(N)}{\epsilon}\right)$ for $p = 1$, and $\mathcal{O}\left( \frac{\sqrt{N}}{M^{1-p/2}} \frac{1}{\epsilon} \right)$ for $1<p<2$. 
The discrete adiabatic evolution with $p=1$ is nearly optimal (up to a logarithmic factor) in both $N$ and $M$, and that with $1<p<2$ is optimal in $N$ but not optimal in $M$. 
Notice that these critical points are only for comparison with other algorithms and not for designing algorithms since we do not know $M$ \emph{a priori} -- instead we will use sufficiently large $T$ and the error becomes automatically smaller if $T$ is too large.

Now we consider another scheduling function satisfying boundary cancellation condition. 
We also want it to slow down at $s = 1/2$ to preserve the Grover speedup. 
We start with the function
\begin{equation}\label{eqn:Grover_glue_BC}
    g(s) = c_e^{-1} \int_0^s \exp \left( - \frac{1}{s'(1-s')}\right) ds', 
\end{equation}
where 
\begin{equation}\label{eqn:def_ce}
    c_e = \int_0^1 \exp \left( - \frac{1}{s'(1-s')}\right) ds'. 
\end{equation}
One can show that $g(0) = 0$, $g(1) = 1$, and $g^{(k)}(0) = g^{(k)}(1) = 0$ for all $k \geq 1$. 
Consider
\begin{equation}\label{eqn:Grover_scheduling_def_BC}
    f(s) = \begin{cases}
     \frac{1}{2} g(2s), & \quad s \in [0, \frac{1}{2}], \\
     \frac{1}{2} + \frac{1}{2}g(2s-1), & \quad s \in (\frac{1}{2}, 1], 
    \end{cases}
\end{equation}
which is illustrated in~\cref{fig:Grover_scheduling_exp}. 
\begin{figure}
    \centering
    \includegraphics[width = 0.6\textwidth]{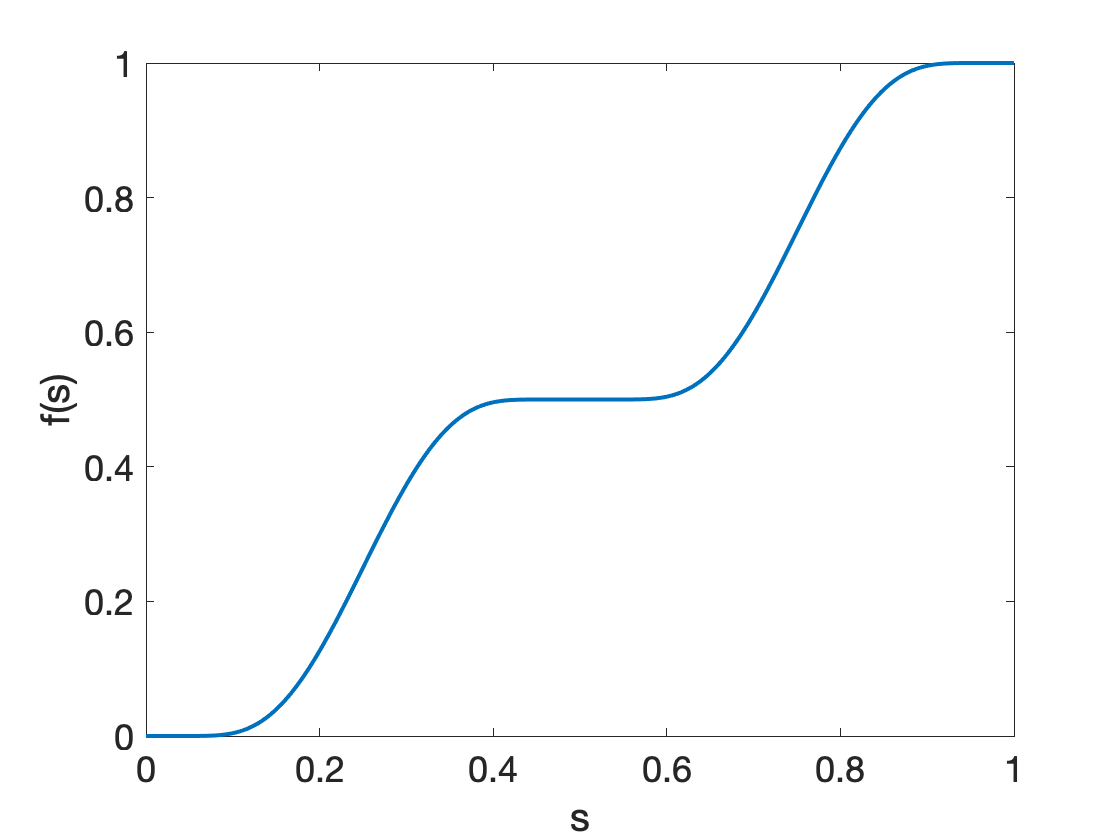}
    \caption{The scheduling function in robust Grover search with boundary cancellation. }
    \label{fig:Grover_scheduling_exp}
\end{figure}
Intuitively, we symmetrically connect two $g$'s and the resulting function $f(s)$ is smooth, satisfies the boundary cancellation and becomes slow in the middle. 
With this scheduling function~\cref{eqn:Grover_scheduling_def_BC}, we can bound the error as follows, and the proof can be found in~\cref{app:Grover_proof_exp}. 

\begin{theorem}\label{thm:Grover_adiabatic_BC}
    Consider solving the Grover search problem in $N$ dimension with $M$ marked states using AQC with the scheduling function~\cref{eqn:Grover_scheduling_def_BC}, discretized by first-order product formula with time step size $1$. 
    Suppose that~\cref{lem:DAE_exp} is true, and $T \geq \mathcal{O}(\sqrt{N/M})$. 
    Then, for any integer $k \geq 1$, the overall error can be bounded by 
    \begin{equation}
        \min \left\{ \frac{C}{T} \sqrt{\frac{N}{M}} \left(\log\frac{N}{M}\right)^4, \frac{C_k(N,M)}{T^k}  \right\}. 
    \end{equation}
    Here $C$ is an absolute constant, and $C_k(N,M)$ is a constant that depend on $k,N,M$. 
\end{theorem}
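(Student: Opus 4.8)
The plan is to prove the stated minimum bound by establishing its two terms separately: the $\mathcal{O}(1/T)$ term from the linearly convergent discrete adiabatic theorem (\cref{lem:DAE_linear}) and the term $C_k(N,M)/T^k$ from the high-order discrete adiabatic theorem (\cref{lem:DAE_exp}), both applied to the walk operator $W(s) = e^{-if(s)H_1}e^{-i(1-f(s))H_0}$ with $T_d = T$. First I would record the structural facts needed for both. Since $H_0,H_1$ act nontrivially only on $\mathrm{span}\{\ket{e_0},\ket{e_1}\}$, each $W(s)$ is effectively a $2\times 2$ unitary there, so its spectrum splits naturally into two eigenvalues $\sigma_P(s),\sigma_Q(s)$; by \cref{lem:Grover_gap_walk} their angular separation satisfies $\Delta_W(s)\geq \tfrac23\Delta_H(s;N,M,f)$, with $\Delta_H$ as in \eqref{eqn:Grover_gap_H}. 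Because the symmetric glue \eqref{eqn:Grover_scheduling_def_BC} gives $f(1/2)=1/2$, the minimum of $\Delta_H$ is $\Theta(\sqrt{M/N})$, attained near $s=1/2$, and $\Delta_H(s)>0$ everywhere; since $\|H_0\|,\|H_1\|\leq 1$ the chain rule gives $\|W'(s)\|=\mathcal{O}(|f'(s)|)=\mathcal{O}(1)$ uniformly, so the per-step change of $W$ is $\mathcal{O}(1/T_d)$, and the hypothesis $T\geq\mathcal{O}(\sqrt{N/M})$ makes this small compared to the minimum gap, so \cref{lem:multistep_gap} upgrades the fixed-time gap to a multistep gap $\widetilde\Delta(s)=\Omega(\Delta_H(s))$. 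The two eigenpaths never cross, so each stays in one group, and the initial state $\ket u$, being the ground state of $H_0$, lies in the $\sigma_P(0)$ eigenspace of $W(0)=e^{-iH_0}$.

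For the $\mathcal{O}(1/T)$ term I would take $c_1(s)=\mathcal{O}(|f'(s)|)$ and $c_2(s)=\mathcal{O}(|f'(s)|^2+|f''(s)|)$ (as in \cref{lem:c1_c2_product_formula_simplified}) and substitute into \cref{lem:DAE_linear}. The boundary contributions $\hat c_1(0)/\check\Delta_2(0)^2$ and $\hat c_1(1)/\check\Delta_2(1)^2$ are negligible because the glue \eqref{eqn:Grover_glue_BC} forces $f'$ and all higher derivatives to be super-polynomially small near $s=0,1$, so only the two sums remain, which I bound up to constants by $\tfrac1T\int_0^1 \frac{c_1(s)^2+c_2(s)}{\Delta_H(s)^3}\,ds$. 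The crux is the estimate $\int_0^1 \frac{f'(s)^2+|f''(s)|}{\Delta_H(s)^3}\,ds=\mathcal{O}\!\big(\sqrt{N/M}\,(\log(N/M))^4\big)$. Writing $s=\tfrac12+\tfrac12 u$ near the midpoint so that $1-2f(s)=-g(u)$, $f'(s)=g'(u)$, and using $g(u)\sim c_e^{-1}e^{-1/u}$ with $g'(u)\sim g(u)/u^2$ for small $u$, the integrand near $s=1/2$ is $\mathcal{O}\!\big(\sqrt{N/M}(\log(N/M))^4\big)$ on a window of width $\mathcal{O}(1/\log(N/M))$ on which $\Delta_H$ still equals $\Theta(\sqrt{M/N})$; outside that window $\Delta_H$ grows like $|1-2f(s)|$ while $f'$ stays comparable, and the change of variables $\delta=|1-2f(s)|$ shows the tail integral is dominated by $\delta\sim\sqrt{M/N}$ and contributes the same order. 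This yields $\|U(1)-U_A(1)\|=\mathcal{O}\!\big(\tfrac1T\sqrt{N/M}(\log(N/M))^4\big)$; since $U_A(1)$ maps $\ket u$ exactly into the target eigenspace of $W(1)=e^{-iH_1}$ (which contains $\ket{e_0}$), converting to the density-matrix distance of \cref{prob:state_prep} costs at most a constant factor, giving the first term of the minimum.

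For the term $C_k(N,M)/T^k$ I would simply invoke \cref{lem:DAE_exp}. Its hypotheses are precisely the gap condition and the initial-state condition verified above, together with boundary cancellation $W^{(k)}(0)=W^{(k)}(1)=0$ for all $k\geq 1$; the latter holds because, by the chain rule, every term in $W^{(k)}(s)$ carries at least one factor $f^{(j)}(s)$ with $1\leq j\leq k$, and $f^{(j)}(0)=f^{(j)}(1)=0$ by the boundary cancellation of the glue \eqref{eqn:Grover_glue_BC} and its reflection. Hence $\|U(1)\ket\psi-P(1)U(1)\ket\psi\|\leq C_k/T^k$ for every integer $k\geq 1$, where $C_k$ is $T$-independent but depends on the gap lower bound $\Theta(\sqrt{M/N})$ and on the derivative bounds of $W$, both functions of $M/N$; relabelling $C_k=C_k(N,M)$ and passing to the density-matrix distance gives the second term, and taking the smaller of the two finishes the proof.

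The main obstacle is the integral estimate in the $\mathcal{O}(1/T)$ bound: one must carefully match the doubly-exponentially flat behaviour of the glue function $g$ near its endpoints against the $\Theta(\sqrt{M/N})$ plateau of $\Delta_H$ around $s=1/2$ and track the correct power of $\log(N/M)$. A secondary technical point is making the passage from the fixed-time gap to the multistep gap rigorous — this is where the threshold $T\geq\mathcal{O}(\sqrt{N/M})$ enters, via \cref{lem:multistep_gap}. The high-order part, by contrast, is essentially bookkeeping once the hypotheses of \cref{lem:DAE_exp} are checked.
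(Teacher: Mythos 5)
Your overall architecture is the same as the paper's: apply \cref{lem:DAE_linear} with $c_1(s)\sim|f'(s)|$, $c_2(s)\sim|f''(s)|+|f'(s)|^2$, use $T\gtrsim\sqrt{N/M}$ to compare gaps at neighbouring steps so the multistep quantities can be replaced by fixed-time ones, note the boundary terms vanish by boundary cancellation, and obtain the second branch of the minimum directly from \cref{lem:DAE_exp}; that part of your plan is fine. The genuine gap is in your ``crux'' estimate. \cref{lem:DAE_linear} weights the $c_2$ sum by $\Delta^{-2}$, not $\Delta^{-3}$, and lumping both sums into $\tfrac1T\int_0^1 (c_1^2+c_2)\Delta_H^{-3}\,ds$ is fatally lossy: your claimed bound $\int_0^1 (f'^2+|f''|)\Delta_H^{-3}\,ds=\mathcal{O}(\sqrt{N/M}\,\log^4(N/M))$ is false for the $|f''|$ piece. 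Concretely, set $w=1-2s$ on $[0,\tfrac12]$ and let $w_0$ satisfy $g(w_0)=\sqrt{M/N}$ (so $w_0\approx 2/\log(N/M)$ for $N/M$ large). On $\{w\le w_0\}$ one has $\Delta_H^2\le 2M/N$, while $\int_{\{w\le w_0\}}|f''(s)|\,ds=g'(w_0)\ge g(w_0)/w_0\gtrsim \sqrt{M/N}\,\log(N/M)$, so $\int_0^1 |f''|\,\Delta_H^{-3}\,ds\gtrsim (N/M)\log(N/M)$, which dwarfs $\sqrt{N/M}\log^4(N/M)$ asymptotically. Equivalently, at the window edge the integrand $|f''|/\Delta_H^3$ is of order $(N/M)\log^4(N/M)$, not $\sqrt{N/M}\log^4(N/M)$ as your window heuristic asserts (your figure is correct only for the $f'^2/\Delta_H^3$ piece, since $f'(w_0)^2\sim (M/N)\log^4$ while $|f''(w_0)|\sim\sqrt{M/N}\log^4$). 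Following your route as written therefore yields only an $\mathcal{O}\!\left(\tfrac{N/M}{T}\,\mathrm{polylog}\right)$ bound, a loss of $\sqrt{N/M}$ against the theorem.

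The repair is to keep the two sums with their native weights from \cref{lem:DAE_linear}, namely $\sum \hat c_1^2/(T\check\Delta^3)$ and $\sum \hat c_2/(T\check\Delta^2)$: with the $\Delta^{-2}$ weight the $|f''|$ contribution is genuinely $\mathcal{O}(\sqrt{N/M}\,\mathrm{polylog}(N/M))$. This is exactly how the paper proceeds, via the pointwise bound of \cref{lem:Grover_BC_tech}, $\exp\!\left(-\tfrac{1}{2s(1-2s)}\right)\big/\big[\Delta_H(s)\,(2s(1-2s))^{l}\big]\lesssim (\log(N/M))^{l+2}$, applied with $l=0$ to the $f'^2/\Delta^3$ summands and with $l=2$ (resp.\ $l=0$) to the $|f''|/\Delta^2$ (resp.\ $f'^2/\Delta^2$) summands, each summand then being at most $\tfrac1T\sqrt{N/M}\log^4(N/M)$ up to constants. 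Your handling of the multistep gap (where $T\gtrsim\sqrt{N/M}$ enters; note you need the local, $s$-dependent gap in the sums, not just the uniform minimum) and of the $C_k(N,M)/T^k$ branch via \cref{lem:DAE_exp} matches the paper; a minor slip is $g(u)\sim c_e^{-1}u^2e^{-1/u}$ rather than $c_e^{-1}e^{-1/u}$, though the relation $g'(u)\sim g(u)/u^2$ you use is correct.
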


~\cref{thm:Grover_adiabatic_BC} can be understood from two angles. 
On the one hand, for constant level of error $\epsilon$, we use the first error bound and the number of Trotter steps required scales $\widetilde{\mathcal{O}}(\sqrt{N/M})$, achieving the Grover scaling up to logarithmic factors. 
On the other hand, for fixed $N$ and $M$, we use the second error bound to conclude that the convergence in $T$ is super-polynomial and $T$ scales $\widetilde{\mathcal{O}}(1/\epsilon^{o(1)})$. 
Notice that the scheduling function~\cref{eqn:Grover_scheduling_def_BC} is independent of both $M$ and $N$.

Finally we compare our results with QAOA. 
Our results imply an approach to construct the angles in QAOA via adiabatic evolution for Grover search problem. 
Specifically, QAOA for Grover search takes the ansatz 
\begin{equation}\label{eqn:Grover_QAOA_ansatz}
    \ket{\psi} = \prod_{j=0}^{p-1} e^{-i\gamma_j H_1} e^{-i\beta_j H_0} \ket{\psi_0}
\end{equation}
where the parameters $(\beta_j,\gamma_j)$ are optimized to minimize $\braket{\psi}{H_1|\psi}$. 
According to our discrete adiabatic result, the first-order Trotter method with time step 1 outputs 
\begin{equation}
    \ket{\widetilde{\phi}} = \prod_{j=0}^{T-1} e^{-if(j/T)H_1} e^{-i(1-f(j/T))H_0} \ket{u}, 
\end{equation}
and $\ket{\widetilde{\phi}}$ is a good approximation of the ground state of $H_1$. 
This implies a set of near-optimal QAOA parameters 
\begin{equation}
    \beta_j = 1-f(j/T), \quad \gamma_j = f(j/T), 
\end{equation}
where $f(s)$ is the scheduling function in~\cref{eqn:Grover_scheduling_def} with $1<p<2$. 
When $M = 1$, the depth and time complexity of QAOA are both $p = T \sim \sqrt{N}$ which already matches the Grover lower bound. 
Therefore, our result shows that discrete adiabatic evolution can asymptotically match the performance of optimized QAOA for Grover search problem, and the speedup of QAOA is at most a constant factor.

\section{Conclusions}\label{sec:conclusion}

In this work, we establish improved error bounds and complexity analysis for the first-order exponential integrator and product formula when applied to simulating near-adiabatic quantum dynamics. 
The key step in our analysis is to view the numerical integrator as the set of slowly varying unitary walk operators, and to directly bound the distance between the actual output state and the ideal eigenstate. 
Compared to the standard error analysis which combines the continuous adiabatic theorem and the bounds for time discretization error in the general setting, our method provides a much more direct point of view and thus yields improvements. 
In the first-order exponential integrator or product formula of any order, for accurate simulation, it suffices to choose a uniform time step size independent of the error $\epsilon$ and the evolution time $T$. 
This greatly improves the previous estimate which scales as $\epsilon^{1/p}/T^{1/p}$ for a $p$-th order product formula. 
Remarkably, when the Hamiltonian is bounded by $1$, a first-order method unit time step size would suffice. 
Furthermore, under the boundary cancellation condition, even first-order methods with uniform time step size can achieve exponential convergence. 

The key technical tool in our analysis is the discrete adiabatic theorems. 
In particular, under the boundary cancellation condition, we use a high-order discrete adiabatic theorem to show exponential convergence for first-order methods. 
However, there are two open questions in the high-order discrete adiabatic theorem. 
First, as discussed in~\cref{app:DKS_missing_step}, although the conclusion of the high-order discrete adiabatic theorem is numerically correct, the proof presented in~\cite{DKS98} seems not valid due to a missing step. 
It is our future work to fix this gap and establish a rigorous proof for the high-order discrete adiabatic theorem. 
Another drawback of the current high-order discrete adiabatic theorem is that it only analyzes the scaling of errors in terms of the evolution time $T$. 
It would be interesting and practically useful to enhance the result with explicit dependence on the spectral gap as well, similar to what has been done for first-order discrete adiabatic theorem in~\cite{CostaAnYuvalEtAl2022}. 

To apply the discrete adiabatic theorems, we require the numerical integrator to satisfy the gap condition as well. 
When analyzing high-order product formula, we match the gap condition for the numerical integrator with that of the time-dependent Hamiltonian $H(s)$, as long as the time step size is smaller than a threshold which only depends on the Hamiltonian $H(s)$ itself. 
However, when the time step size is large, there is no clear relation between the gap of the numerical integrator and the gap of the Hamiltonian, as shown in~\cref{sec:gap} via toy models. 
Remarkably, it is possible that, even if the Hamiltonian $H(s)$ has small gap or becomes gapless, its trotterization with large time step may regain the gap condition, indicating that adiabatic evolution with suitable discretization may achieve exponential speedup over the standard AQC framework. 
It is very interesting to find if there is any application of practical interest that has such feature in its gap condition. 

We apply our analysis to the example of adiabatic Grover search problem. 
For this problem, we find that the QAOA approach does not yield better asymptotic scaling compared to the Trotterized adiabatic approach. 
Relation and comparison between AQC and QAOA for eigenstate preparation and optimization is a long-standing topic, and a thorough and comprehensive understanding may significantly advance the theoretical study on QAOA or even more general variational quantum algorithms. 
Our work here is the first step to bridge the connection, in the scenario where the Hamiltonian is bounded and well gapped. 
The next step would be to further explore AQC (and especially discrete AQC) and QAOA for gapped Hamiltonian but with large spectral norm, and then even for gapless Hamiltonian.

\section*{Acknowledgements}
The authors thank Lucas Brady, Jacob Bringewatt, Andrew Childs, Alexey Gorshkov, Michael Jarret, and Lin Lin for helpful discussions. 
DA acknowledges funding from Innovation Program for Quantum Science and Technology via Project 2024ZD0301900, the support by The Fundamental Research Funds for the Central Universities, Peking University, and the support by the Department of Defense through the Hartree Postdoctoral Fellowship at QuICS. 
DWB is supported by Australian Research Council Discovery Projects DP210101367 and DP220101602.

\bibliographystyle{apsrev4-1}	
\bibliography{Adiab}

\clearpage

\appendix

\section{Continuous adiabatic theorems}\label{app:continuous_adiabatic}

Here we briefly review the previous results about continuous adiabatic theorems, which aim to bound the distance between the dynamics in~\cref{eqn:AQC_dynamics} and the ideal eigenspace. 

For a general interpolation function $f(s)$, the continuous adiabatic error can be bounded linearly in the inverse evolution time. 
Here we present a result from Ref.~\cite{jansen2007bounds}. 
Let $U(t)$ denote the exact evolution operator of~\cref{eqn:AQC_dynamics}, and $P(s)$ denote the spectral projection onto the eigenspace of $H(s)$ of interest. 
Suppose that $\Delta(s)$ is the spectral gap of the Hamiltonian $H(s)$. 

\begin{lemma}[Continuous adiabatic theorem]\label{lem:AT_JRS}
    Suppose that $H(s)$ is second-order continuously differentiable. Let $\Delta(s)$ denote the spectral gap of $H(s)$. Then for any $t\in[0,T]$ and $s = t/T$, we have 
    \begin{equation}
        \left\|\ket{\psi(t)} - P(s)\ket{\psi(t)}\right\| \leq 
        \frac{C}{T} \left(\frac{ \|H'(0)\|}{\Delta(0)^2} + \frac{ \|H'(s)\|}{\Delta(s)^2} + \int_0^s \frac{\|H''(\tau)\|}{\Delta(\tau)^2} d\tau + \int_0^s \frac{\|H'(\tau)\|^2}{\Delta(\tau)^3} d\tau \right) 
    \end{equation}
    for a constant $C > 0$ which only depends on the number of the eigenstates contained in $P(s)$. 
\end{lemma}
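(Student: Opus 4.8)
\emph{Proof plan.} The plan is to follow the resolvent and integration-by-parts argument of Jansen--Ruskai--Seiler~\cite{jansen2007bounds}. First I would pass to the rescaled time $s=t/T$, so that the propagator $U_T(s)$ defined by $i\,\partial_s U_T(s)=T\,H(s)U_T(s)$ with $U_T(0)=I$ satisfies $\ket{\psi(t)}=U_T(s)\ket{\psi(0)}$; since $\ket{\psi(0)}$ lies in the range of $P(0)$, one has $\|\ket{\psi(t)}-P(s)\ket{\psi(t)}\|=\|(1-P(s))U_T(s)\ket{\psi(0)}\|\le\|(1-P(s))U_T(s)P(0)\|$, so it suffices to bound the operator norm on the right. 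Next I would introduce, for each $s$, the Riesz projection $P(s)=\frac{1}{2\pi i}\oint_{\Gamma(s)}(z-H(s))^{-1}\,dz$ with a contour $\Gamma(s)$ separating $\sigma_P(s)$ from the rest of the spectrum at distance of order $\Delta(s)$, and the associated off-diagonal inverse of $\operatorname{ad}_{H(s)}$: the operator $X(s)$, built from the same resolvents, characterised by $[H(s),X(s)]=P'(s)$ together with $P(s)X(s)P(s)=(1-P(s))X(s)(1-P(s))=0$. Differentiating these resolvent formulas and using $P(s)P'(s)P(s)=0$ would give the norm bounds $\|P'(s)\|\lesssim \|H'(s)\|/\Delta(s)$, $\|X(s)\|\lesssim \|H'(s)\|/\Delta(s)^2$, and $\|X'(s)\|\lesssim \|H''(s)\|/\Delta(s)^2+\|H'(s)\|^2/\Delta(s)^3$.

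The next step is the core identity. Because $[H(s),P(s)]=0$, differentiating $U_T(s)^\dagger(1-P(s))U_T(s)$ produces $-U_T(s)^\dagger P'(s)U_T(s)$, so integrating from $0$ to $s$ and multiplying by $P(0)$ on the right and $U_T(s)$ on the left yields
\[
(1-P(s))U_T(s)P(0)=-\int_0^s U_T(s)\,U_T(\sigma)^\dagger\,P'(\sigma)\,U_T(\sigma)\,P(0)\,d\sigma.
\]
I would then substitute $P'(\sigma)=[H(\sigma),X(\sigma)]$ and use $\frac{d}{d\sigma}\bigl(U_T(\sigma)^\dagger X(\sigma)U_T(\sigma)\bigr)=iT\,U_T(\sigma)^\dagger[H(\sigma),X(\sigma)]U_T(\sigma)+U_T(\sigma)^\dagger X'(\sigma)U_T(\sigma)$ to trade the factor $P'$ for a total $\sigma$-derivative at the cost of a $1/T$. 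Integrating the total-derivative term by parts produces boundary contributions of norm at most $\frac{1}{T}(\|X(s)\|+\|X(0)\|)$, and the remaining term is $\frac{1}{T}\int_0^s U_T(s)U_T(\sigma)^\dagger X'(\sigma)U_T(\sigma)P(0)\,d\sigma$. Taking norms, using unitarity of $U_T$ and $\|P(0)\|=1$, and inserting the bounds on $\|X\|$ and $\|X'\|$ would give
\[
\|(1-P(s))U_T(s)P(0)\|\le \frac{C}{T}\!\left(\frac{\|H'(0)\|}{\Delta(0)^2}+\frac{\|H'(s)\|}{\Delta(s)^2}+\int_0^s\frac{\|H''(\tau)\|}{\Delta(\tau)^2}\,d\tau+\int_0^s\frac{\|H'(\tau)\|^2}{\Delta(\tau)^3}\,d\tau\right),
\]
which is exactly the claimed estimate.

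The hard part will be the bookkeeping of the constant $C$. Naively bounding the Riesz-projection and $X$ integrals makes $C$ depend on the length of $\Gamma(s)$, hence on $\operatorname{diam}(\sigma_P(s))$ and ultimately on $\|H(s)\|$, which the statement forbids. To obtain a $C$ depending only on the number of eigenvalue branches inside $P(s)$, I would decompose the action of the $\operatorname{ad}_{H(s)}$-inverse eigenpath by eigenpath, bound each branch using only the gap $\Delta(s)$, and sum — which is precisely where the dependence on the number of eigenstates in $P(s)$ enters. A secondary technical point is regularity: $H\in C^2$ is needed so that $P\in C^2$ and $X\in C^1$ and so that all the derivatives above and the integration by parts are legitimate; this is supplied by the hypothesis. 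Everything else is routine estimation of resolvents and unitaries.
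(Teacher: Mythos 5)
The paper does not prove this lemma at all --- it imports it verbatim from Jansen--Ruskai--Seiler~\cite{jansen2007bounds} --- and your proposal is a faithful reconstruction of exactly that resolvent-plus-integration-by-parts argument (wave-operator identity, off-diagonal $X$ solving $[H(s),X(s)]=P'(s)$, bounds $\|X\|\lesssim\|H'\|/\Delta^2$ and $\|X'\|\lesssim\|H''\|/\Delta^2+\|H'\|^2/\Delta^3$, boundary terms giving the two endpoint contributions), so it is essentially the same approach as the paper's source. The one point to watch is that the naive contour formula $\frac{1}{2\pi i}\oint R\,P'\,R\,dz$ yields $[H,X]=[P,P']$ rather than $P'$ itself, so you must either adjust the integrand (use $[P',P]$ as data) or solve the Sylvester equation blockwise as you indicate; with that fixed, your outline matches the cited proof.
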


When the Hamiltonian $H(s)$ satisfies the boundary cancellation condition, namely that the derivatives of $H(s)$ of any order vanish at the boundary, convergence of the final adiabatic error can be improved to be exponential in time. 
Such a high-order adiabatic theorem was rigorously proved in~\cite{Nenciu1993} without explicit gap dependence and in~\cite{GeMolnarCirac2016} with cubic inverse gap dependence. 
Based on their approaches, a recent work~\cite{An2021thesis} improved the gap dependence to be quadratic. 
Here we present the main result in~\cite{An2021thesis}. 

\begin{definition}[Gevrey class]\label{def:gevrey}
    A function $g(s)$ defined on $[0,1]$ is in the Gevrey class $G^{\alpha}$ for $\alpha > 0$ if there exist constants $C,D>0$ such that for all $k \geq 0$, 
    \begin{equation}
        \max_{s\in[0,1]} \|g^{(k)}(s)\| \leq C D^k \frac{(k!)^{1+\alpha}}{(k+1)^2}. 
    \end{equation}
    Here $\|\cdot \|$ represents the absolute value if $g$ is scalar-valued and the spectral norm if $g$ is operator-valued.
\end{definition}

\begin{lemma}[Continuous adiabatic theorem with boundary cancellation{~\cite[Theorem 7]{An2021thesis}}]\label{lem:AT_exp}
     Suppose that $H(s)$ is in the Gevrey class $G^{\alpha}$ for $\alpha > 0$ and assume that $H^{(k)}(0) = H^{(k)}(1) = 0$ for all $k \geq 1$. 
    Let $\Delta(s)$ denote a lower bound of the spectral gap of $H(s)$ and let $\Delta_* = \min_{s\in[0,1]}\Delta(s)$. Then we have 
    \begin{equation}
        \left\| \ket{\psi(T)} - P(1)\ket{\psi(T)} \right\| \leq 
        \frac{c}{\Delta_*} \exp\left(-\left(c'T\Delta_*^2\right)^{\frac{1}{1+\alpha}}\right)
    \end{equation}
    where $c$ and $c'$ are positive constants only depending on $C$, $D$ (specified according to~\cref{def:gevrey}) and $\alpha$. 
\end{lemma}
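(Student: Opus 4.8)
The plan is to prove the bound by the \emph{optimal-truncation} (superadiabatic) method, in the spirit of Nenciu~\cite{Nenciu1993} and Ge--Moln\'ar--Cirac~\cite{GeMolnarCirac2016}, but keeping explicit track of the spectral gap as in~\cite{An2021thesis}. First I would rescale time so that~\cref{eqn:AQC_dynamics} reads $i\,\partial_s\ket{\psi(s)} = T\,H(s)\ket{\psi(s)}$ on $s\in[0,1]$, making $1/T$ the small parameter. The core object is the Kato--Nenciu iterative construction of a formal sequence of smooth, mutually ``close'' projectors $P_0(s)=P(s),\ P_1(s),\ P_2(s),\dots$, where each $P_j$ is produced from $P_0,\dots,P_{j-1}$ by an explicit recursion built out of the reduced resolvent $R(s)=\big(H(s)-\lambda(s)\big)^{-1}\big(1-P(s)\big)$ (with $\lambda(s)$ the eigenvalue path tracked by $P(s)$) and derivatives of $H(s)$. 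By construction the truncated projector $\widetilde P^{(n)}(s)=\sum_{j=0}^{n}T^{-j}P_j(s)$ is a genuine projector up to $O(T^{-(n+1)})$ and almost intertwines with the propagator: its commutator defect with the adiabatic generator (the Kato parallel-transport generator) is $O(T^{-(n+1)})$ times the size of $P_{n+1}$.

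The second ingredient is to convert the Gevrey hypothesis on $H$ into Gevrey-type bounds on $P_j$ and its derivatives. Using the Leibniz rule, Cauchy estimates for the resolvent on a contour of radius $\sim\Delta(s)$ around the eigenvalue of interest, and the bound $\|H^{(k)}(s)\|\le CD^k(k!)^{1+\alpha}/(k+1)^2$ from~\cref{def:gevrey}, an induction on $j$ should yield estimates of the schematic form
\[
  \big\|P_j^{(m)}(s)\big\| \;\le\; \widehat C\,\Big(\tfrac{\widehat D}{\Delta_*^2}\Big)^{\!j}\,\frac{\big((j+m)!\big)^{1+\alpha}}{(j+m+1)^2}.
\]
Obtaining the \emph{quadratic} gap dependence $\Delta_*^{-2j}$ (rather than the cubic one in~\cite{GeMolnarCirac2016}) is the delicate point, and is where I expect the main difficulty: one must organise the recursion so that every newly introduced resolvent factor $R(s)$ is paired with exactly one derivative of a lower-order projector, and exploit that the dominant terms are of $RP'R$-type rather than $R^2$-type. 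This combinatorial bookkeeping, rather than any single hard estimate, is the bulk of the work.

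With these bounds in place the argument closes in three short steps. \textbf{(i) Boundary matching.} Every correction $P_j$ with $j\ge 1$ is a finite sum of terms each carrying at least one factor $H^{(k)}(s)$ with $k\ge 1$; since $H^{(k)}(0)=H^{(k)}(1)=0$ for all $k\ge 1$, we get $P_j(0)=P_j(1)=0$, hence $\widetilde P^{(n)}(0)=P(0)$ and $\widetilde P^{(n)}(1)=P(1)$. Thus the initial state (in $\mathrm{Ran}\,P(0)$) also lies in $\mathrm{Ran}\,\widetilde P^{(n)}(0)$, and $\big\|\ket{\psi(T)}-P(1)\ket{\psi(T)}\big\|$ equals the leakage out of $\mathrm{Ran}\,\widetilde P^{(n)}(1)$. \textbf{(ii) Duhamel estimate.} A variation-of-constants (Duhamel) argument bounds that leakage by $T\int_0^1$ of the commutator defect, which by step one and the Gevrey bounds is $\le c\,\Delta_*^{-1}\big(\widehat D/(T\Delta_*^2)\big)^{n}(n!)^{\alpha}$ after an integration by parts that again uses the vanishing of the boundary terms. \textbf{(iii) Optimisation.} By Stirling the right-hand side is $\le c\,\Delta_*^{-1}\big(c_1 n^{1+\alpha}/(T\Delta_*^2)\big)^{n}$; choosing $n=\big\lfloor (c'T\Delta_*^2)^{1/(1+\alpha)}\big\rfloor$ makes the bracket at most $e^{-1}$, giving $\big\|\ket{\psi(T)}-P(1)\ket{\psi(T)}\big\|\le (c/\Delta_*)\exp\!\big(-(c'T\Delta_*^2)^{1/(1+\alpha)}\big)$. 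A careful accounting of which powers of $\Delta_*$ survive the integration by parts and the optimisation then recovers precisely the stated $c/\Delta_*$ prefactor.
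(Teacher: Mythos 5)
The paper itself does not prove this lemma: it is quoted from \cite{An2021thesis} (Theorem 7 there), with the surrounding text only noting that it sharpens the gap dependence of \cite{GeMolnarCirac2016} from cubic to quadratic via the Nenciu-style superadiabatic construction. Your plan follows exactly that route (Kato--Nenciu iterated projectors, boundary matching from $H^{(k)}(0)=H^{(k)}(1)=0$, a Duhamel estimate, and optimal truncation $n\sim (T\Delta_*^2)^{1/(1+\alpha)}$), so methodologically you are aligned with the cited proof rather than offering a different argument.

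However, as a proof your proposal has a genuine gap precisely at the point you flag yourself: the inductive Gevrey-type bounds $\|P_j^{(m)}(s)\|\le \widehat{C}\,(\widehat{D}/\Delta_*^2)^j\,((j+m)!)^{1+\alpha}/(j+m+1)^2$ are only asserted schematically, and you explicitly defer the combinatorial bookkeeping needed to obtain the quadratic factor $\Delta_*^{-2j}$ rather than $\Delta_*^{-3j}$. That estimate is the entire content of the lemma's improvement over \cite{GeMolnarCirac2016}; without it, steps (i)--(iii) of your argument (which are standard and correct in outline) only reproduce the weaker bound $\exp\left(-\left(c'T\Delta_*^{3}\right)^{1/(1+\alpha)}\right)$. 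A complete proof must actually organize the recursion so that each new resolvent factor is paired with a derivative of a lower-order projector, carry out the induction on $j$ and $m$ with explicit constants, and check that the factorial growth combines correctly with the Cauchy/contour estimates at radius $\sim\Delta(s)$ and with the final integration by parts, so that exactly one overall factor $\Delta_*^{-1}$ survives in the prefactor. A secondary point: the lemma allows $P(s)$ to be a general gapped spectral projection, whereas your reduced resolvent $R(s)=(H(s)-\lambda(s))^{-1}(1-P(s))$ presumes a single eigenvalue path; the construction should be phrased via a Riesz contour integral around the tracked part of the spectrum, a routine but necessary adjustment.
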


Both~\cref{lem:AT_JRS} and~\cref{lem:AT_exp} bound the leakage of the dynamics out of the desired eigenspace. 
We remark that, up to a constant factor $2$, it can also serve as the upper bound of the distance between the dynamics and an ideal eigenstate $P(s)\ket{\psi(t)}/\|P(s)\ket{\psi(t)}\|$ using a linear algebra result that 
\begin{equation}
    \left\|\ket{\psi(t)} - \frac{P(s)\ket{\psi(t)}}{\|P(s)\ket{\psi(t)}\|}\right\| \leq 2\|\ket{\psi(t)}-P(s)\ket{\psi(t)}\|. 
\end{equation}

\section{Multistep gap condition}\label{app:multistep_gap}

The discrete adiabatic theorem in~\cref{lem:DAE_linear} (\cref{sec:DAT_linear}) needs the multistep gap condition, i.e., the gap is assumed between successive steps, which is a further subtlety. 
Here we show that, if the number of steps $T_d$ is large enough and each $W(s)$ has a uniformly bounded spectral gap, then the multistep gap condition can be guaranteed. 

\begin{lemma}\label{lem:multistep_gap}
    Let $W(s)$ be the walk operator, $T_d$ be the number of discrete walk steps, and $c_1(s)$ be the function such that $\|DW(s)\| \leq c_1(s)/T_d$. 
    Suppose that the gap between two groups $\sigma_P(s)$ and $\sigma_Q(s)$ of the eigenvalues of $W(s)$ is bounded from below by $\Delta_* > 0$. 
    Then, as long as $T_d \geq (2\pi/\Delta_*) \sup_{\widetilde{s} \in \{s,s+1/T_d\}\cap [0,1] }c_1(\widetilde{s})$, the multistep gap between the arcs $\sigma_P^{(2)}(s)$ and $\sigma_Q^{(2)}(s)$ is bounded from below by $\Delta_*/2$.
\end{lemma}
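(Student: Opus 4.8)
The plan is to reduce the multistep gap to the single-step gap hypothesis via standard eigenvalue perturbation theory for unitaries together with a triangle inequality on $\mathbb{S}^1$. First I would record that the finite-difference hypothesis controls the operator-norm distance between walk operators at nearby parameters: for any $l_1,l_2\in\{0,1,2\}$ with $s+\max(l_1,l_2)/T_d\le 1$, telescoping the bounds $\|DW(\cdot)\|\le c_1(\cdot)/T_d$ gives $\|W(s+l_1/T_d)-W(s+l_2/T_d)\|\le 2c/T_d$, where $c:=\sup_{\tilde s\in\{s,s+1/T_d\}\cap[0,1]}c_1(\tilde s)$ is exactly the quantity appearing in the hypothesis.

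Next I would invoke the eigenvalue stability of normal/unitary matrices already used elsewhere in this paper~\cite{BhatiaDavis1984,ElsnerHe1993}: for two unitaries $A,B$ there is a bijective matching of their eigenvalues with every matched pair at Euclidean distance at most $\|A-B\|$; in particular every eigenvalue of $A$ lies within $\|A-B\|$ of one of $B$. Converting a chordal bound to an angular one through $|z_1-z_2|_{\mathbb{S}^1}=2\arcsin(\tfrac12|z_1-z_2|)$ and the convexity estimate $\arcsin(x)\le(\pi/3)x$ valid for $x\in[0,\tfrac12]$, the hypothesis $T_d\ge(2\pi/\Delta_*)c$ (so that $c/T_d\le\Delta_*/(2\pi)\le\tfrac12$, using $\Delta_*\le\pi$) shows that the matched eigenvalues between any two of $W(s),W(s+1/T_d),W(s+2/T_d)$ differ by angular distance at most $2\arcsin(c/T_d)\le\Delta_*/3$.

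The substantive step is to show that this matching respects the partition, i.e.\ it carries $\sigma_P(s+l_1/T_d)$ onto $\sigma_P(s+l_2/T_d)$ and likewise for $\sigma_Q$. Here I would use that $W(\cdot)$ is continuous and that the gap $\Delta(s)\ge\Delta_*>0$ never closes: fixing a reference step $s':=s+l_2/T_d$, enclose the arc $\sigma_P(s')$ in a contour $\Gamma$ with angular clearance of order $\Delta_*/2$ from $\sigma(W(s'))$; since $\|W(s+l_1/T_d)-W(s')\|\le 2c/T_d$ is well below this clearance (again by the $T_d$ hypothesis), $\Gamma$ still separates the spectrum of $W(s+l_1/T_d)$, and the Riesz projector along $\Gamma$ is the spectral projector of $W(s+l_1/T_d)$ onto the eigenvalues inside $\Gamma$ — which by spectral variation lie within $2c/T_d$ of $\sigma_P(s')$, not of $\sigma_Q(s')$. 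Because the a priori grouping $\sigma_P/\sigma_Q$ in the adiabatic setup is the continuous one (each eigenpath staying in a fixed group), these eigenvalues are precisely $\sigma_P(s+l_1/T_d)$, so every point of $\sigma_P(s+l_1/T_d)$ lies within angular distance $\Delta_*/3$ of $\sigma_P(s')$.

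Finally I would assemble the estimate. Take any $z_P\in\sigma_P(s+l_1/T_d)$ and $z_Q\in\sigma_Q(s+l_2/T_d)$, set $s'=s+l_2/T_d$, and let $z_P''\in\sigma_P(s')$ be the matched eigenvalue, so $|z_P-z_P''|_{\mathbb{S}^1}\le\Delta_*/3$; since $z_P''$ and $z_Q$ are eigenvalues of the same unitary $W(s')$ lying in the two groups, $|z_P''-z_Q|_{\mathbb{S}^1}\ge\Delta_*$, and the triangle inequality on $\mathbb{S}^1$ yields $|z_P-z_Q|_{\mathbb{S}^1}\ge\Delta_*-\Delta_*/3=2\Delta_*/3\ge\Delta_*/2$. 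As $l_1,l_2$ and the eigenvalues were arbitrary, this is the claimed lower bound on the multistep gap between $\sigma_P^{(2)}(s)$ and $\sigma_Q^{(2)}(s)$. I expect the partition-respecting step to be the main obstacle: it is the only place that uses continuity of $W(\cdot)$ rather than just the finite-difference bound, and one must be careful that the a priori labelling $\sigma_P,\sigma_Q$ agrees with the one selected by the contour, since the statement is false for an arbitrary grouping of the eigenvalues.
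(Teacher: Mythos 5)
Your proof is correct and follows essentially the same route as the paper's: telescoping the finite-difference bound into an operator-norm distance, invoking the Bhatia--Davis/Elsner--He eigenvalue perturbation result, converting chordal to angular distance, and finishing with the triangle inequality on $\mathbb{S}^1$ (your constants are slightly sharper, giving $2\Delta_*/3$ rather than the required $\Delta_*/2$). The only difference is that you make explicit, via the Riesz-projector/contour argument, that the eigenvalue matching respects the $\sigma_P/\sigma_Q$ grouping, a point the paper handles implicitly by tracking the boundary eigenvalues $\lambda(s),\mu(s)$ of the arcs along continuous eigenpaths.
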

\begin{proof}
    Let $\lambda(s)$ and $\mu(s)$ denote the two eigenvalues of $W(s)$ that determine the gap, i.e., at the corresponding boundary of the arcs. 
    For any $s',s'' \in \{s,s+1/T_d,s+2/T_d\}\cap [0,1]$, according to~\cite{BhatiaDavis1984} and the definition of $c_1(s)$, we have 
    \begin{equation}
        |\lambda(s')-\lambda(s'')| \leq \|W(s')-W(s'')\| \leq \frac{2}{T_d} \sup_{\widetilde{s} \in \{s,s+1/T_d\}\cap [0,1] }c_1(\widetilde{s}). 
    \end{equation}
    Let $\mathbb{S}^1$ denote the unit circle, and define $|z_1-z_2|_{\mathbb{S}^1}$ to be the angular distance between $z_1,z_2\in \mathbb{S}^1$. 
    Using $|z_1-z_2|_{\mathbb{S}^1} \leq \frac{\pi}{2} |z_1-z_2|$, we have 
    \begin{equation}
        |\lambda(s')-\lambda(s'')|_{\mathbb{S}^1} \leq \frac{\pi}{T_d} \sup_{\widetilde{s} \in \{s,s+1/T_d\}\cap [0,1] }c_1(\widetilde{s}). 
    \end{equation}
    We have required that $T_d \geq \frac{2\pi}{\Delta_*} \sup_{\widetilde{s} \in \{s,s+1/T_d\}\cap [0,1] }c_1(\widetilde{s})$, which gives 
    \begin{equation}
        |\lambda(s')-\lambda(s'')|_{\mathbb{S}^1} \leq \frac{\Delta_*}{2}. 
    \end{equation}
    Therefore, by the triangle inequality, we have 
    \begin{equation}
        |\lambda(s') - \mu(s'')|_{\mathbb{S}^1} \geq |\lambda(s'') - \mu(s'')|_{\mathbb{S}^1} - |\lambda(s') - \lambda(s'')|_{\mathbb{S}^1} \geq \frac{\Delta_*}{2}, 
    \end{equation}
    which means that the multistep gap is at least $\Delta_*/2$. 
\end{proof}

\section{Bounds for finite differences}\label{app:bounds_finite_diff}

To apply the discrete adiabatic theorem stated in~\cref{lem:DAE_linear}, we need to estimate the scalings of $c_1(s)$ and $c_2(s)$. 
Here we show how to bound $c_1(s)$ and $c_2(s)$ by the derivatives of the walk operator $W(s)$, and derive explicit estimates for the first-order exponential integrator and the product formula. 

We first show that for each $k$, $c_k(s)$ is closely related to the $k$-th order derivative of $W(s)$. 

\begin{lemma}\label{lem:finite_difference_bound_general}
    Let $W(s)$ be a set of parameterized unitary walk operators for $s \in [0,1]$, and $D^{(k)}$ represents the $k$-th order finite difference of $W(s)$ with step size $1/T_d$ for an integer $T_d$. 
    For any positive integer $k$, if $W(s)$ is $k$-th order continuously differentiable, then we have 
    \begin{equation}
        \|D^{(k)}W(s)\| \leq \frac{c_k(s)}{T_d^k} 
    \end{equation}
    where 
    \begin{equation}\label{eq:ckeq}
        c_k(s) = \max_{s' \in [s, s+k/T_d]} \|W^{(k)}(s)\|. 
    \end{equation}
\end{lemma}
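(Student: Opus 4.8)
The plan is to prove the bound by induction on $k$, using the integral (mean-value) representation of the first-order finite difference together with the elementary fact that finite differencing commutes with differentiation in $s$. First I would fix the step size $h = 1/T_d$ and recall the convention $D^{(k)}W(s) = \sum_{j=0}^{k}(-1)^{k-j}\binom{k}{j}W(s+jh)$, so that $D^{(k)} = D^{(1)}\circ D^{(k-1)}$ when viewed as operators acting on the family $W$. Since each $D^{(j)}$ is a finite linear combination of shifts $s\mapsto s+jh$, it commutes with $\partial_s$: whenever $W$ is smooth enough, $(D^{(j)}W)'(s) = D^{(j)}(W')(s)$.

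For the base case $k=1$, the fundamental theorem of calculus gives
\[
 D^{(1)}W(s) = W(s+h) - W(s) = \int_s^{s+h} W'(\tau)\,d\tau = h\int_0^1 W'(s+uh)\,du ,
\]
and taking norms under the integral sign yields $\|D^{(1)}W(s)\| \le h\max_{\tau\in[s,s+h]}\|W'(\tau)\|$, which is exactly the claim with $c_1(s) = \max_{s'\in[s,s+h]}\|W'(s')\|$.

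For the inductive step, assume the bound holds at order $k-1$ for every sufficiently differentiable family, and apply it to $W'$ (which is $(k-1)$-times continuously differentiable by the hypothesis on $W$): for every $\sigma$,
\[
 \|D^{(k-1)}(W')(\sigma)\| \le h^{k-1}\max_{\tau\in[\sigma,\sigma+(k-1)h]}\|W^{(k)}(\tau)\| .
\]
Writing $D^{(k)}W = D^{(1)}\big(D^{(k-1)}W\big)$, applying the base-case identity to the family $D^{(k-1)}W$, and using the commutation property, we get $D^{(k)}W(s) = h\int_0^1 \big(D^{(k-1)}W\big)'(s+uh)\,du = h\int_0^1 D^{(k-1)}(W')(s+uh)\,du$. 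Taking norms and inserting the order-$(k-1)$ bound at $\sigma = s+uh\in[s,s+h]$ gives $\|D^{(k)}W(s)\| \le h\cdot h^{k-1}\max_{\tau\in[s,s+kh]}\|W^{(k)}(\tau)\| = h^k c_k(s)$, completing the induction.

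There is no serious obstacle here: the only point requiring care is bookkeeping of the interval over which $\max\|W^{(k)}\|$ is taken, which grows from $[s,s+h]$ to $[s,s+kh]$ as the shifts accumulate, and invoking the commutation $D^{(j)}\partial_s = \partial_s D^{(j)}$ correctly. An equivalent route that bypasses the induction is to unfold all the differences at once, obtaining $D^{(k)}W(s) = h^k\int_{[0,1]^k} W^{(k)}\!\big(s + (u_1+\cdots+u_k)h\big)\,du_1\cdots du_k$, from which the bound is immediate because $u_1+\cdots+u_k\in[0,k]$; I would present whichever of the two is more compact.
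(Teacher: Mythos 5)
Your argument is correct and essentially matches the paper's own proof: the paper establishes exactly your ``unfolded'' representation $D^{(k)}W(s)=\int_0^{1/T_d}\!\cdots\!\int_0^{1/T_d}W^{(k)}(s+s_1+\cdots+s_k)\,ds_1\cdots ds_k$ by the same induction on $k$ and then bounds it by the triangle inequality. Your shift-commutation bookkeeping is just a mild repackaging of that induction, so there is nothing substantive to add.
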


\begin{proof}
    For $k = 1$, using the fundamental theorem of calculus, we have 
    \begin{equation}
        DW(s) = W(s+1/T_d) - W(s) = \int_s^{s+1/T_d} W'(s') d s'.  
    \end{equation}
More generally,
    \begin{equation}
        D^{(k)}W(s) = \int_0^{1/T_d} ds_1\int_0^{1/T_d} ds_2 \cdots \int_0^{1/T_d}ds_k \, W^{(k)}(s+s_1+s_2+\cdots+s_k) \, .  
    \end{equation}
This is the matrix form of a standard result for scalar finite differences \cite[p.\ 10]{milne1933calculus}, and is easily found by induction with
\begin{align}
    D^{(k)}W(s) &= D^{(k-1)}W(s+1/T_d) - D^{(k-1)}W(s) \nn
    &=  \int_0^{1/T_d} ds_1\int_0^{1/T_d} ds_2 \cdots \int_0^{1/T_d}ds_{k-1} \, W^{(k-1)}(s+1/T_d+s_1+s_2+\cdots+s_{k-1})\nn
 & \quad   - \int_0^{1/T_d} ds_1\int_0^{1/T_d} ds_2 \cdots \int_0^{1/T_d}ds_{k-1} \, W^{(k-1)}(s+s_1+s_2+\cdots+s_{k-1}) \nn
    &= \int_0^{1/T_d} ds_1\int_0^{1/T_d} ds_2 \cdots \int_0^{1/T_d}ds_{k-1} \, DW^{(k-1)}(s+s_1+s_2+\cdots+s_{k-1}) \nn
    &= \int_0^{1/T_d} ds_1\int_0^{1/T_d} ds_2 \cdots \int_0^{1/T_d}ds_k \, W^{(k)}(s+s_1+s_2+\cdots+s_k) \, .
\end{align}
This expression gives, by the triangle inequality,
    \begin{align}
        \|D^{(k)}W(s)\| &\leq \int_0^{1/T_d} ds_1\int_0^{1/T_d} ds_2 \cdots \int_0^{1/T_d}ds_k \, \|W^{(k)}(s+s_1+s_2+\cdots+s_k)\| \nn
        &\leq \frac 1{T_d^k} \max_{s'\in[s,s+k/T_d]}\|W^{(k)}(s')\| \, .
    \end{align}
Therefore $\|D^{(k)}W(s)\|$ is upper bounded with $c_k$ as in Eq.~\eqref{eq:ckeq}. 
\end{proof}

Now we give explicit choices of $c_1(s)$ and $c_2(s)$ for the first-order exponential integrator and the product formula. 

\begin{lemma}\label{lem:c1_c2_exp}
    Let $W(s) = U_{\exp}(sT+h,sT)$ be the first-order exponential integrator defined in~\cref{eqn:exponential_integrator_1st}, where $h = T/T_d$ is the time step size. 
    Suppose that both $|f'(s)|$ and $|f''(s)|$ are uniformly bounded over $[0,1]$. 
    Then we can choose 
    \begin{equation}
        c_1(s) = \mathcal{O}(h(\|H_0\|+\|H_1\|)), \quad c_2(s) = \mathcal{O}(h(\|H_0\|+\|H_1\|) + h^2(\|H_0\|+\|H_1\|)^2). 
    \end{equation}
\end{lemma}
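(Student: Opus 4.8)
The plan is to reduce the statement to a derivative bound via \cref{lem:finite_difference_bound_general}, which already tells us that we may take $c_k(s)=\max_{s'\in[s,s+k/T_d]}\|W^{(k)}(s')\|$. So it suffices to bound $\|W'(s)\|$ and $\|W''(s)\|$ uniformly in $s$ for the walk operator $W(s)=e^{-ihH(s)}$ with $H(s)=(1-f(s))H_0+f(s)H_1$. The first preparatory step is to record the derivatives of the Hamiltonian itself: $H'(s)=f'(s)(H_1-H_0)$ and $H''(s)=f''(s)(H_1-H_0)$, so that $\|H'(s)\|\le |f'(s)|(\|H_0\|+\|H_1\|)$ and $\|H''(s)\|\le |f''(s)|(\|H_0\|+\|H_1\|)$, both $\mathcal{O}(\|H_0\|+\|H_1\|)$ uniformly in $s$ by the assumed uniform bounds on $|f'|$ and $|f''|$.

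The second step is to differentiate the matrix exponential using Duhamel's formula. Writing $A(s)=-ihH(s)$, so that $\|A'(s)\|=h\|H'(s)\|$ and $\|A''(s)\|=h\|H''(s)\|$ and each $e^{\tau A(s)}$ is unitary, one has
\[
W'(s)=\int_0^1 e^{\tau A(s)}A'(s)e^{(1-\tau)A(s)}\,d\tau ,
\]
hence $\|W'(s)\|\le \|A'(s)\| = h\|H'(s)\| = \mathcal{O}(h(\|H_0\|+\|H_1\|))$, giving the claimed $c_1(s)$. Differentiating the integrand a second time (using the product rule together with $\tfrac{d}{ds}e^{\tau A(s)}=\tau\int_0^1 e^{u\tau A(s)}A'(s)e^{(1-u)\tau A(s)}\,du$, whose norm is at most $\|A'(s)\|$) produces one term containing $A''(s)$ and two terms containing two factors of $A'(s)$ sandwiched between unitaries, so that
\[
\|W''(s)\|\le \|A''(s)\| + \|A'(s)\|^2 = h\|H''(s)\| + h^2\|H'(s)\|^2 = \mathcal{O}\!\left(h(\|H_0\|+\|H_1\|)+h^2(\|H_0\|+\|H_1\|)^2\right),
\]
which is the claimed $c_2(s)$. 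Taking the maxima over the short intervals $[s,s+k/T_d]$ in \cref{lem:finite_difference_bound_general} does not change these estimates since every bound above is uniform in $s$.

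The only point that requires genuine care — the ``main obstacle'', modest as it is here — is the bookkeeping in the second Duhamel differentiation: one must verify that differentiating each of the two exponential factors in $W'(s)$ brings down exactly one extra factor of $A'(s)$, hence one extra power of $h$, so that the term quadratic in $\|H'\|$ carries $h^2$ rather than $h$; getting the right power of $h$ on the quadratic term is precisely what makes the large-step-size analysis work. Everything else — the Hamiltonian derivative identities, unitarity of $e^{\tau A(s)}$, and the triangle inequality under the integral — is routine.
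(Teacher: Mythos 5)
Your proposal is correct and follows essentially the same route as the paper's proof: both reduce to bounding $\|W'(s)\|$ and $\|W''(s)\|$ via \cref{lem:finite_difference_bound_general}, apply Duhamel's formula to differentiate $e^{-ihH(s)}$, and observe that the second differentiation produces one term with $h\|H''\|$ and terms quadratic in $h\|H'\|$, exactly as in your bookkeeping. The only cosmetic difference is that the paper writes the resulting double-integral expressions explicitly rather than packaging them as $\|A''\|+\|A'\|^2$, but the estimates and their $h$-powers coincide.
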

\begin{proof}
    First, using the formula that for any operator $X(s)$, 
    \begin{equation}
        \frac{d}{ds} e^{X(s)} = \int_0^1 e^{\alpha X(s)} \frac{d X(s)}{ds} e^{(1-\alpha)X(s)} d\alpha, 
    \end{equation}
    we can compute that 
    \begin{align}
        W'(s) &= -i h f'(s)\int_0^1 e^{-i \alpha h H(s)} (H_1-H_0) e^{-i  (1-\alpha) h H(s)} d\alpha, 
    \end{align}
    and, according to the chain rule, 
    \begin{align}
        W''(s) &= -ihf''(s)\int_0^1 e^{-i \alpha h H(s)} (H_1-H_0) e^{-i(1-\alpha)h H(s)} d\alpha \nonumber \\
        & \quad - h^2 f'(s)^2 \int_0^1\int_0^1 \alpha e^{-i \alpha\beta h H(s)}(H_1-H_0)e^{-i \alpha(1-\beta) h H(s)} (H_1-H_0) e^{-i(1-\alpha) h H(s)} d\beta \, d\alpha \nonumber \\
        & \quad - h^2 f'(s)^2\int_0^1\int_0^1 (1-\alpha) e^{-i \alpha h H(s)} (H_1-H_0) e^{-i(1-\alpha)\beta h H(s)} (H_1-H_0) e^{-i(1-\alpha)(1-\beta) h H(s)} d\beta \, d\alpha. 
    \end{align} 
    Therefore, using the bound $\|H_1-H_0\| \leq (\|H_0\|+\|H_1\|)$, we have 
    \begin{align}\label{eqn:exp_ite_W1}
        \|W'(s)\| \leq  h |f'(s)|\int_0^1 \|H_1-H_0\| d\alpha \leq \mathcal{O}(h(\|H_0\|+\|H_1\|))
    \end{align}
    and 
    \begin{align}
        & \quad \|W''(s)\|\nonumber \\
        &\leq h|f''(s)|\int_0^1 \|H_1-H_0\|  d\alpha + h^2 |f'(s)|^2 \int_0^1\int_0^1 \alpha\|H_1-H_0\|^2 d\beta \, d\alpha + h^2 |f'(s)|^2\int_0^1\int_0^1 (1-\alpha) \|H_1-H_0\|^2  d\beta \, d\alpha \nonumber \\
        & \leq \mathcal{O}(h(\|H_0\|+\|H_1\|) + h^2(\|H_0\|+\|H_1\|)^2). \label{eqn:exp_ite_W2}
    \end{align} 
    Then we can choose $c_1(s)$ and $c_2(s)$ as stated according to~\cref{lem:finite_difference_bound_general} and~\cref{eqn:exp_ite_W1,eqn:exp_ite_W2}. 
\end{proof}

\begin{lemma}\label{lem:c1_c2_product_formula}
    Let 
    \begin{equation}
        W(s) = U_{\text{pf},p}(sT+h,sT) = \prod_{k=0}^{K_p} \exp\left( -i\beta_{p,k} h f((sT+\delta_{p,k}h)/T) H_1\right)  \exp\left( -i\alpha_{p,k}h (1-f((sT+\gamma_{p,k}h)/T)) H_0 \right)
    \end{equation}
   be the product formula defined in~\cref{eqn:high_order_trotter_general}, where $h = T/T_d$ is the time step size. 
    Suppose that both $|f'(s)|$ and $|f''(s)|$ are uniformly bounded over $[0,1]$. 
    Then we can choose 
    \begin{equation}
        c_1(s) = \mathcal{O}(h(\|H_0\|+\|H_1\|)), \quad c_2(s) = \mathcal{O}(h(\|H_0\|+\|H_1\|) + h^2(\|H_0\|+\|H_1\|)^2). 
    \end{equation}
\end{lemma}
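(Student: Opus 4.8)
The plan is to follow the same strategy as in the proof of \cref{lem:c1_c2_exp}, but now accounting for the fact that $W(s)$ is a product of finitely many exponentials rather than a single one. Write $W(s) = \prod_{k=0}^{K_p} A_k(s) B_k(s)$ with $A_k(s) = \exp(-i\beta_{p,k} h f((sT+\delta_{p,k}h)/T) H_1)$ and $B_k(s) = \exp(-i\alpha_{p,k} h (1-f((sT+\gamma_{p,k}h)/T)) H_0)$. Each of these $2(K_p+1)$ factors is unitary, so $\|W(s)\| = 1$, and differentiating $W(s)$ via the Leibniz rule produces a sum of $\mathcal{O}(K_p)$ terms for $W'(s)$, and $\mathcal{O}(K_p^2)$ terms for $W''(s)$, each of which is a product of unitary factors with one or two of them replaced by a derivative. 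Since $K_p$ is a fixed constant depending only on the order $p$, it suffices to bound each such term and multiply by the (constant) number of terms.

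First I would bound the derivative of a single factor. Using the identity $\frac{d}{ds}e^{X(s)} = \int_0^1 e^{\alpha X(s)} X'(s) e^{(1-\alpha)X(s)}\,d\alpha$ from the proof of \cref{lem:c1_c2_exp}, and noting that by the chain rule $\frac{d}{ds} f((sT+ch)/T) = f'((sT+ch)/T)$ since the outer factor $T$ cancels the inner one, we obtain $\|A_k'(s)\| \leq |\beta_{p,k}| h |f'((sT+\delta_{p,k}h)/T)| \|H_1\|$ and similarly $\|B_k'(s)\| \leq |\alpha_{p,k}| h |f'((sT+\gamma_{p,k}h)/T)| \|H_0\|$, both $\mathcal{O}(h(\|H_0\|+\|H_1\|))$ by the uniform bound on $|f'|$. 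Summing over the $\mathcal{O}(K_p)$ Leibniz terms, each a product of unitaries times exactly one differentiated factor, yields $\|W'(s)\| = \mathcal{O}(h(\|H_0\|+\|H_1\|))$, so we may take $c_1(s) = \mathcal{O}(h(\|H_0\|+\|H_1\|))$ via \cref{lem:finite_difference_bound_general}.

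For the second derivative, $W''(s)$ splits into two types of terms: (i) those in which a single factor is differentiated twice, and (ii) those in which two distinct factors are each differentiated once. For type (ii), each term is a product of unitaries times two differentiated factors, hence bounded by $\mathcal{O}(h^2 |f'|^2 (\|H_0\|+\|H_1\|)^2) = \mathcal{O}(h^2(\|H_0\|+\|H_1\|)^2)$. For type (i), the second derivative of a single factor contributes, exactly as in the exponential-integrator computation, a term proportional to $h|f''|(\|H_0\|+\|H_1\|)$ together with a double-integral term proportional to $h^2|f'|^2(\|H_0\|+\|H_1\|)^2$; both are $\mathcal{O}(h(\|H_0\|+\|H_1\|) + h^2(\|H_0\|+\|H_1\|)^2)$. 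Summing the $\mathcal{O}(K_p^2)$ terms and invoking \cref{lem:finite_difference_bound_general} gives $c_2(s) = \mathcal{O}(h(\|H_0\|+\|H_1\|) + h^2(\|H_0\|+\|H_1\|)^2)$, as claimed.

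The only genuine bookkeeping hurdle is keeping the combinatorial Leibniz expansion organized and checking that the number of terms is controlled by the fixed constant $K_p$, not by $T_d$, $T$, or $\epsilon$; once that is noted, every individual term is bounded exactly as in \cref{lem:c1_c2_exp}. A secondary point to handle carefully is the chain rule: because $f$ is evaluated at $(sT+\delta_{p,k}h)/T$ (and likewise with $\gamma_{p,k}$), the outer $T$ cancels the inner $T$ under $d/ds$, so no spurious factors of $T$ appear and the bounds depend only on $h$, $\|H_0\|+\|H_1\|$, and the uniform bounds on $|f'|$ and $|f''|$.
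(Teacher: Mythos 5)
Your proposal is correct and follows essentially the same route as the paper: expand $W'(s)$ and $W''(s)$ by the product rule over the $2(K_p+1)$ unitary factors, bound each differentiated factor by $\mathcal{O}(h\|H_j\|)$ (first derivative) or $\mathcal{O}(h\|H_j\|+h^2\|H_j\|^2)$ (second derivative) using the chain rule with the $T$'s cancelling, note that the number of Leibniz terms is controlled by the constant $K_p$, and conclude via \cref{lem:finite_difference_bound_general}. The only cosmetic difference is that you invoke the integral representation of $\frac{d}{ds}e^{X(s)}$, whereas the paper uses the simpler identity available because each exponent commutes with its derivative; the resulting bounds are identical.
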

\begin{proof}
    Let 
    \begin{equation}
        V_{k,0}(s) = \exp\left( -i\alpha_{p,k}h (1-f((sT+\gamma_{p,k}h)/T)) H_0 \right), \quad V_{k,1}(s) = \exp\left( -i\beta_{p,k} h f((sT+\delta_{p,k}h)/T) H_1\right), 
    \end{equation}
    and 
    \begin{equation}
        V_k(s) = V_{k,1}(s) V_{k,0}(s) \, ,
    \end{equation}
    so that $W(s) = \prod_{k=0}^{K_p} V_k(s)$. 
    By the product rule, we have 
    \begin{equation}
        W'(s) = \sum_{k=0}^{K_p} \left( \prod_{k'=k+1}^{K_p} V_{k'}(s) \right) V_k'(s)  \left( \prod_{k'=0}^{k-1} V_{k'}(s)  \right) = \sum_{k=0}^{K_p} \left( \prod_{k'=k+1}^{K_p} V_{k'}(s) \right) \left( V_{k,1}'(s) V_{k,0}(s) + V_{k,1}(s) V_{k,0}'(s) \right) \left( \prod_{k'=0}^{k-1} V_{k'}(s)  \right), 
    \end{equation}
    and 
    \begin{equation}
        \|W'(s)\| = \mathcal{O}\left( \sum_{k=0}^{K_p} \left(\|V_{k,1}'(s)\| +  \| V_{k,0}'(s) \| \right) \right). 
    \end{equation}
    Because the exponent of each operator $V_{k,j}$, $j=0,1$, involves a single operator, they commute with their derivatives, so we can use
\begin{equation}
\frac{d}{ds}e^{X(s)}= e^{X(s)}X'(s),    
\end{equation}
so
\begin{align}
\frac{d}{ds}V_{k,0}(s)&= V_{k,0}(s)\left(i\alpha_{p,k}h f'(\varphi_k(s))H_0\right),  \nonumber\\
\frac{d}{ds}V_{k,1}(s)&= V_{k,1}(s)\left(-i\beta_{p,k}h f'(\phi_k(s))H_1\right), 
\end{align}
where $\varphi_k(s)=(sT+\gamma_{p,k}h)/T$ and $\phi_k(s)=(sT+\delta_{p,k}h)/T$. 
So we have $\|V_{k,0}'(s)\| = \mathcal{O}(h\|H_0\|)$ and $\|V_{k,1}'(s)\| = \mathcal{O}(h\|H_1\|)$, and 
    \begin{equation}\label{eqn:product_formula_ite_W1}
        \|W'(s)\| = \mathcal{O}\left( h(\|H_0\| + \|H_1\| ) \right). 
    \end{equation}
    Then, according to~\cref{lem:finite_difference_bound_general}, we can choose $c_1(s) = \max\|W'(s)\| = \mathcal{O}\left( h(\|H_0\| + \|H_1\| ) \right)$. 

    Now we estimate the second-order derivative of $W(s)$. 
    Using the product rule again, we can compute 
    \begin{align}
        W''(s) & = \sum_{k=0}^{K_p} \left( \prod_{k'=k+1}^{K_p} V_{k'}(s) \right) V_k''(s)  \left( \prod_{k'=0}^{k-1} V_{k'}(s)  \right) \nonumber\\
        & \quad + 2 \sum_{0 \leq k_0 < k_1 \leq K_p} \left( \prod_{k'=k_1+1}^{K_p} V_{k'}(s) \right)V_{k_1}'(s) \left( \prod_{k'=k_0+1}^{k_1-1} V_{k'}(s) \right) V_{k_0}'(s)  \left( \prod_{k'=0}^{k_0-1} V_{k'}(s)  \right) \nonumber\\
        & = \sum_{k=0}^{K_p} \left( \prod_{k'=k+1}^{K_p} V_{k'}(s) \right) \left( V_{k,1}''(s) V_{k,0}(s) + 2 V_{k,1}'(s) V_{k,0}'(s) + V_{k,1}(s) V_{k,0}''(s) \right)  \left( \prod_{k'=0}^{k-1} V_{k'}(s)  \right) \nonumber\\
        & \quad + 2 \sum_{0 \leq k_0 < k_1 \leq K_p} \left( \prod_{k'=k_1+1}^{K_p} V_{k'}(s) \right)\left( V_{k_1,1}'(s) V_{k_1,0}(s) + V_{k_1,1}(s) V_{k_1,0}'(s) \right) \nonumber \\
        &\quad\quad\quad\quad\quad\quad \times \left( \prod_{k'=k_0+1}^{k_1-1} V_{k'}(s) \right) \left( V_{k_0,1}'(s) V_{k_0,0}(s) + V_{k_0,1}(s) V_{k_0,0}'(s) \right)  \left( \prod_{k'=0}^{k_0-1} V_{k'}(s)  \right). 
    \end{align}
    Then 
    \begin{align}
        \|W''(s)\| & = \mathcal{O} \left( \sum_{k=0}^{K_p} \left( \|V_{k,1}''(s)\| + \| V_{k,1}'(s) \| \| V_{k,0}'(s)\| + \| V_{k,0}''(s) \| \right) \right) \nonumber\\
        & \quad \quad + \mathcal{O} \left( \sum_{0 \leq k_0 < k_1 \leq K_p} \left( \|V_{k_1,1}'(s)\| + \|V_{k_1,0}'(s) \| \right) \left( \| V_{k_0,1}'(s) \| + \| V_{k_0,0}'(s)\| \right)  \right). 
    \end{align}
    Notice that 
    \begin{align}
        V_{k,0}''(s)  & =  \left( i\alpha_{p,k}h f''((sT+\gamma_{p,k}h)/T)) H_0 \right) \exp\left( -i\alpha_{p,k}h (1-f((sT+\gamma_{p,k}h)/T)) H_0 \right)  \nonumber\\
        & \quad + \left( i\alpha_{p,k}h f'((sT+\gamma_{p,k}h)/T)) H_0 \right)^2 \exp\left( -i\alpha_{p,k}h (1-f((sT+\gamma_{p,k}h)/T)) H_0 \right) ,\\
         V_{k,1}''(s) &= \left( -i\beta_{p,k} h f''((sT+\delta_{p,k}h)/T) H_1 \right)  \exp\left( -i\beta_{p,k} h f((sT+\delta_{p,k}h)/T) H_1\right) \nonumber\\
         & \quad + \left( -i\beta_{p,k} h f'((sT+\delta_{p,k}h)/T) H_1 \right)^2  \exp\left( -i\beta_{p,k} h f((sT+\delta_{p,k}h)/T) H_1\right) , 
    \end{align}
    so we have $\|V_{k,0}''(s)\| = \mathcal{O} ( h\|H_0\| + h^2 \|H_0\|^2 )$ and $\|V_{k,1}''(s)\| = \mathcal{O} ( h\|H_1\| + h^2 \|H_1\|^2 )$. 
    Together with $\|V_{k,0}'(s)\| = \mathcal{O}(h\|H_0\|)$ and $\|V_{k,1}'(s)\| = \mathcal{O}(h\|H_1\|)$, we have 
    \begin{equation}
        \|W''(s)\| = \mathcal{O} \left( h(\|H_0\|+\|H_1\|) + h^2 (\|H_0\|+\|H_1\|)^2 \right). 
    \end{equation}
    Then we can choose $c_2(s) = \mathcal{O} \left( h(\|H_0\|+\|H_1\|) + h^2 (\|H_0\|+\|H_1\|)^2 \right)$ according to~\cref{lem:finite_difference_bound_general}. 
\end{proof}

For the simplified product formula, the same bound also holds because the simplified product formula is just a special case of the general product formula. 

\begin{lemma}\label{lem:c1_c2_product_formula_simplified}
    Let 
    \begin{equation}
        W(s) = U_{\text{spf},p}(sT+h,sT) = \prod_{k=0}^{K_p} \exp\left(  -i\beta_{p,k} h f(s) H_1\right)  \exp\left( -i\alpha_{p,k}h (1-f(s)) H_0 \right) 
    \end{equation}
   be the simplified product formula defined in~\cref{eqn:high_order_trotter_simplified}, where $h = T/T_d$ is the time step size. 
    Suppose that both $|f'(s)|$ and $|f''(s)|$ are uniformly bounded over $[0,1]$. 
    Then we can choose 
    \begin{equation}
        c_1(s) = \mathcal{O}(h(\|H_0\|+\|H_1\|)), \quad c_2(s) = \mathcal{O}(h(\|H_0\|+\|H_1\|) + h^2(\|H_0\|+\|H_1\|)^2). 
    \end{equation}
\end{lemma}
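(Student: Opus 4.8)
The plan is to recognise that the simplified product formula $U_{\text{spf},p}$ is literally a special case of the general product formula $U_{\text{pf},p}$ of \cref{eqn:high_order_trotter_general}: choosing all the time offsets $\gamma_{p,k}=\delta_{p,k}=0$ turns \cref{eqn:high_order_trotter_general} into \cref{eqn:high_order_trotter_simplified}. Since \cref{lem:c1_c2_product_formula} was proved for arbitrary real parameters $\alpha_{p,k},\beta_{p,k},\gamma_{p,k},\delta_{p,k}$, its conclusion transfers verbatim. So the first (and essentially only) step is to make this identification explicit and then invoke \cref{lem:c1_c2_product_formula} with the offsets set to zero, reading off the same $c_1(s)$ and $c_2(s)$.

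For completeness I would also spell out the direct route, which is even cleaner here since every exponent is evaluated at the single point $s$. Write $W(s)=\prod_{k=0}^{K_p}V_k(s)$ with $V_k(s)=V_{k,1}(s)V_{k,0}(s)$, where $V_{k,0}(s)=\exp(-i\alpha_{p,k}h(1-f(s))H_0)$ and $V_{k,1}(s)=\exp(-i\beta_{p,k}hf(s)H_1)$. Because each factor's exponent is a scalar multiple of a fixed operator, it commutes with its own derivative, so $V_{k,0}'(s)=V_{k,0}(s)\,(i\alpha_{p,k}hf'(s)H_0)$ and $V_{k,1}'(s)=V_{k,1}(s)\,(-i\beta_{p,k}hf'(s)H_1)$, giving $\|V_{k,0}'(s)\|=\mathcal{O}(h\|H_0\|)$ and $\|V_{k,1}'(s)\|=\mathcal{O}(h\|H_1\|)$. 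Applying the product rule and using that the remaining factors are unitary yields $\|W'(s)\|=\mathcal{O}(h(\|H_0\|+\|H_1\|))$. For the second derivative, differentiating once more produces terms proportional to $f''(s)$ and to $f'(s)^2$; bounding $\|V_{k,0}''(s)\|=\mathcal{O}(h\|H_0\|+h^2\|H_0\|^2)$, $\|V_{k,1}''(s)\|=\mathcal{O}(h\|H_1\|+h^2\|H_1\|^2)$, together with the cross terms built from two first derivatives, gives $\|W''(s)\|=\mathcal{O}(h(\|H_0\|+\|H_1\|)+h^2(\|H_0\|+\|H_1\|)^2)$. Then \cref{lem:finite_difference_bound_general} converts these derivative bounds into the stated choices of $c_1(s)$ and $c_2(s)$.

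The only thing to keep an eye on is bookkeeping: one must treat $K_p$ and the coefficients $\alpha_{p,k},\beta_{p,k}$ as order-one constants depending solely on the numerical method and its order $p$, so that the sums over $k$ in the product rule contribute merely a constant factor. There is no genuine obstacle here — the argument is a verbatim specialization of the proof of \cref{lem:c1_c2_product_formula} with the time shifts removed, and if anything it is slightly simpler because all exponents are evaluated at the same argument $s$, so the chain rule only ever pulls out $f'(s)$ and $f''(s)$ rather than derivatives at shifted points.
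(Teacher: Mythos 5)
Your proposal matches the paper: the paper proves \cref{lem:c1_c2_product_formula_simplified} exactly by observing that the simplified formula is the special case of the general product formula with the time offsets removed, so \cref{lem:c1_c2_product_formula} applies verbatim, and your spelled-out direct computation is just the paper's proof of that general lemma with $\gamma_{p,k}=\delta_{p,k}=0$. The argument is correct as stated.
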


\section{Numerical methods and their standard complexity estimates}\label{app:standard_analysis}

In the main text, we focus on two types of numerical methods for discretizing~\cref{eqn:AQC_dynamics}, namely the first-order exponential integrator~\cref{eqn:exponential_integrator_1st} and the product formula~\cref{eqn:high_order_trotter_general}. 
Here we present their standard time discretization error bounds and complexity estimates. 

\subsection{Standard time discretization error bounds}

For the first-order exponential integrator, its standard error bound is given as follows. 

\begin{lemma}\label{lem:exponential_standard_bdd}
    Let $U_{\exp}(t+h,t)$ be the first-order exponential integrator defined in~\cref{eqn:exponential_integrator_1st}. 
    Suppose that $|f'(s)|$ is uniformly bounded over $[0,1]$. 
    Then we have $\left\|U_{\exp}(t+h,t) - \mathcal{T}\exp\left( \int_{t}^{t+h} H(\tau/T)d\tau\right)\right\| \leq C_{1,H} h^{2}$, where 
    \begin{equation}
        C_{1,H} = \mathcal{O}\left(\frac{\|H_0\|+\|H_1\|}{T}\right). 
    \end{equation}
\end{lemma}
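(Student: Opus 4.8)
The plan is to bound the difference between the first-order exponential integrator $U_{\exp}(t+h,t) = e^{-ihH(t/T)}$ and the exact time-ordered exponential $\mathcal{T}e^{-i\int_t^{t+h}H(\tau/T)\,d\tau}$ by treating both as solutions of Schr\"odinger-type ODEs and applying the variation-of-parameters (Duhamel) identity. First I would write $E(h) = U_{\exp}(t+h,t) - \mathcal{T}e^{-i\int_t^{t+h}H(\tau/T)\,d\tau}$ and observe that $E(0)=0$. Differentiating with respect to $h$, the exact propagator contributes $-iH((t+h)/T)\,\mathcal{T}e^{-i\int_t^{t+h}H}$, while the integrator contributes $-iH(t/T)\,e^{-ihH(t/T)}$. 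The mismatch is governed by $H((t+h)/T)-H(t/T)$, which is $\mathcal{O}(h\cdot\|H'(t/T)\|/T)$ by the mean value theorem, and since $H(s)=(1-f(s))H_0+f(s)H_1$ we have $\|H'(s)\| = |f'(s)|\,\|H_1-H_0\| = \mathcal{O}(\|H_0\|+\|H_1\|)$ using the uniform bound on $|f'|$.

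Next I would set up the Duhamel estimate carefully: writing $U_{\exp}(t+h,t)$ and the exact propagator as solutions of $\dot{A}(h) = -iH((t+h)/T)A(h)$ versus $\dot{B}(h) = -iH(t/T)B(h)$ with $A(0)=B(0)=I$, the difference satisfies an integral equation whose inhomogeneous term is $\int_0^h \|H((t+r)/T) - H(t/T)\|\,dr$. Each such integrand is $\mathcal{O}(r\,\|H'\|_\infty/T)$, so integrating over $r\in[0,h]$ gives $\mathcal{O}(h^2\,\|H'\|_\infty/T) = \mathcal{O}(h^2(\|H_0\|+\|H_1\|)/T)$. The unitarity of all propagators involved means the Gr\"onwall-type amplification factor is $1$, so no extra blow-up appears. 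This yields exactly the claimed bound with $C_{1,H} = \mathcal{O}((\|H_0\|+\|H_1\|)/T)$.

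The only mild subtlety — and the closest thing to an obstacle — is bookkeeping the direction of the time ordering and making sure the Duhamel remainder is expressed in terms of the \emph{difference} of Hamiltonians rather than their norms, since $\|H(s)\|$ itself need not be small; it is the variation $\|H((t+h)/T)-H(t/T)\|$ that carries the saving $1/T$ factor. Once that is isolated, the rest is a routine triangle-inequality-plus-integration argument. I would present it compactly: state the Duhamel identity $U_{\exp}(t+h,t) - \mathcal{T}e^{-i\int_t^{t+h}H} = -i\int_0^h \mathcal{T}e^{-i\int_{t+r}^{t+h}H}\bigl(H((t+r)/T) - H(t/T)\bigr)e^{-irH(t/T)}\,dr$, take norms, use $\|H((t+r)/T)-H(t/T)\| \le (r/T)\sup|f'|\,\|H_1-H_0\|$, and integrate to conclude $\|E(h)\| \le \tfrac{h^2}{2T}\sup|f'|\,(\|H_0\|+\|H_1\|)$, giving the stated $C_{1,H}$.
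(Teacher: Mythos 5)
Your proposal is correct and follows essentially the same route as the paper: both write the integrator and the exact propagator as solutions of their respective ODEs, apply Duhamel's principle so that the remainder involves only the Hamiltonian difference $H((t+\tau')/T)-H(t/T)$, bound that difference by $(\tau'/T)\sup|f'|\,\|H_1-H_0\|$ (the paper via the fundamental theorem of calculus, you via the mean value theorem), and integrate using unitarity to get the $h^2(\|H_0\|+\|H_1\|)/T$ bound. No gaps; the argument matches the paper's proof in all essential respects.
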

\begin{proof}
    For a fixed time $t$ and a step size $h$, let $\tau \in [0,h]$. 
    The exact evolution operator from $t$ to $t+\tau$ is 
    \begin{equation}
        U(t+\tau,t) = \mathcal{T} \exp\left( -i \int_0^{\tau} H(t/T+\tau'/T)d\tau'\right), 
    \end{equation}
    and the first-order exponential integrator can be written as 
    \begin{equation}
        U_{\exp}(t+\tau,t) = \exp(-i\tau H(t/T)), 
    \end{equation}
    These two operators satisfy the differential equations, respectively, 
    \begin{equation}
        \frac{d}{d\tau} U(t+\tau,t) = -i H(t/T+\tau/T) U(t+\tau,t), \quad U(t,t) = I,
    \end{equation}
    and 
    \begin{equation}
        \frac{d}{d\tau} U_{\exp}(t+\tau,t) = -iH(t/T)  U_{\exp}(t+\tau,t), \quad U_{\exp}(t,t) = I. 
    \end{equation}
    According to the variation of parameters formula (as known as Duhamel's principle), we may write $U_{\exp}$ as a perturbation of $U$, that is, 
    \begin{align}
        U_{\exp}(t+\tau,t) &= U(t+\tau,t) + i\int_0^\tau U(t+\tau,t+\tau') (H(t/T+\tau'/T) - H(t/T)) U_{\exp}(t+\tau',t) d\tau' \nonumber\\
        & = U(t+\tau,t) + i\int_0^\tau U(t+\tau,t+\tau') \int_{0}^{\tau'} \frac{1}{T} H^{(1)}(t/T+\tau''/T) U_{\exp}(t+\tau',t) d\tau'' d\tau'. 
    \end{align}
    In the second line we use the fundamental theorem of calculus, and we are getting an extra $1/T$ scaling because the Hamiltonian is changing slowly on the scale of $T$ (technically, we are taking the derivative with respect to $\tau'$ for the function $H(t/T+\tau'/T)$, in which the argument $\tau'$ is divided by $T$). 
    Since $\|H^{(1)}(s)\| = |f(s)|\|H_1-H_0\| \leq \mathcal{O}(\|H_0\|+\|H_1\|)$, we have 
    \begin{equation}
        \|U_{\exp}(t+h,t) - U(t+h,t)\| \leq \mathcal{O}\left( \frac{(\|H_0\|+\|H_1\|)h^2}{T} \right). 
    \end{equation}
\end{proof}

We show an error bound for the Trotter-Suzuki formula in the next result, which is a direct corollary of the general analysis in~\cite{WiebeBerryHoyerEtAl2010}. 
We also expect a similar error bound for general product formulae beyond Trotter-Suzuki. 

\begin{lemma}
    Let $U_{\text{pf},p}(t+h,t)$ denote the $p$-th order time-dependent product formula obtained by Sukuzi recursion. 
    Suppose that the scheduling function $f(s)$ is smooth with uniformly bounded derivatives of any order, and $\|H_0\|+\|H_1\| \geq 1$. 
    Then, for any $p \geq 1$ and $T \geq 1$, we have 
    \begin{equation}
        \left\|U_{\text{pf},p}(t+h,t) - \mathcal{T}\exp\left( \int_{t}^{t+h} H(\tau/T)d\tau\right)\right\| \leq C_{p} (\|H_0\|+\|H_1\|)^{p+1} h^{p+1}, 
    \end{equation}
    where $C_p$ is a constant that depends on $p$ and the derivatives of $f(s)$. 
\end{lemma}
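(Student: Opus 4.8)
The plan is to deduce the bound directly from the general time-dependent Trotter--Suzuki analysis of Ref.~\cite{WiebeBerryHoyerEtAl2010} after identifying the right ``terms'' and tracking the slow-variation factor $1/T$. First I would write the Hamiltonian as a sum of two time-dependent summands,
\begin{equation}
    H(t/T) = H_A(t) + H_B(t), \qquad H_A(t) = (1-f(t/T))H_0, \quad H_B(t) = f(t/T)H_1,
\end{equation}
and observe that $U_{\text{pf},p}(t+h,t)$ in~\cref{eqn:high_order_trotter_general} is exactly the $p$-th order Trotter--Suzuki formula for this two-term decomposition, with each slowly varying coefficient sampled at the shifted time points $(t+\gamma_{p,k}h)/T$ and $(t+\delta_{p,k}h)/T$ dictated by the Suzuki recursion. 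This is precisely the family of formulae whose local error is controlled in~\cite{WiebeBerryHoyerEtAl2010} (one may equivalently regard it as a time-independent product formula on an extended clock register, but working directly with the time-dependent bound is cleaner).

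Next I would check the hypotheses of that analysis and extract the relevant constants. For the norm bound, $\max_t\left(\|H_A(t)\| + \|H_B(t)\|\right) \le \|H_0\| + \|H_1\| =: \alpha$. For the derivative bounds, differentiating in $t$ produces a factor $T^{-k}$:
\begin{equation}
    \frac{d^k}{dt^k}H_A(t) = -T^{-k} f^{(k)}(t/T)\, H_0, \qquad \frac{d^k}{dt^k}H_B(t) = T^{-k} f^{(k)}(t/T)\, H_1,
\end{equation}
so $\|H_A^{(k)}(t)\| \le T^{-k}\|f^{(k)}\|_\infty\|H_0\| \le \|f^{(k)}\|_\infty\|H_0\|$ since $T \ge 1$, and likewise for $H_B$. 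Hence every time-derivative norm entering the estimate of~\cite{WiebeBerryHoyerEtAl2010} is bounded by an $\mathcal{O}(1)$ constant depending only on $f$ times $\|H_0\|$ or $\|H_1\|$, and in particular by $\mathrm{const}(f)\cdot\alpha$.

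Feeding these into the $p$-th order error estimate of~\cite{WiebeBerryHoyerEtAl2010}, the local error over a step of size $h$ is a finite sum of terms, each of the form $h^{p+1}$ times a product of at most $p+1$ factors drawn from $\alpha$ and the constants $\|f^{(k)}\|_\infty\alpha$ with $1 \le k \le p$, multiplied by the combinatorial data of the formula ($K_p$ and the Suzuki coefficients). Since $\alpha \ge 1$, each such product is at most $\mathrm{const}(p,f)\,\alpha^{p+1}$, and summing the finitely many terms yields
\begin{equation}
    \left\|U_{\text{pf},p}(t+h,t) - \mathcal{T}\exp\left(-i\int_t^{t+h} H(\tau/T)\,d\tau\right)\right\| \le C_p\,\alpha^{p+1} h^{p+1},
\end{equation}
with $C_p$ depending only on $p$ and finitely many $\|f^{(k)}\|_\infty$, $k \le p+1$.

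The only real work is notational translation: one must verify that both contributions to the local error in~\cite{WiebeBerryHoyerEtAl2010} --- the usual non-commutativity (Trotter) part, and the ``time-ordering'' part arising because $H(t/T)$ varies within the step, which a $p$-th order formula with the appropriately placed sample points cancels through order $p$ --- collapse, under the $T^{-k}$ scaling and the normalization $\alpha \ge 1$, into the single advertised form $\alpha^{p+1}h^{p+1}$. I expect this bookkeeping, rather than any new idea, to be the main obstacle; the substantive content, namely that such a formula has $\mathcal{O}(h^{p+1})$ local error for smooth slowly varying time-dependent Hamiltonians, is exactly the content of~\cite{WiebeBerryHoyerEtAl2010}.
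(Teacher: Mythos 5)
Your proposal is correct and follows essentially the same route as the paper: both invoke the time-dependent Trotter--Suzuki error bound of~\cite{WiebeBerryHoyerEtAl2010} (the paper uses its Theorem~3, stated via the quantity $\Lambda_p \geq \max_{m\le p}\|d^m H(\tau/T)/d\tau^m\|^{1/(m+1)}$), bound the $m$-th time derivatives by $F_m T^{-m}(\|H_0\|+\|H_1\|)$, and then use $T\ge 1$ and $\|H_0\|+\|H_1\|\ge 1$ to absorb everything into $C_p(\|H_0\|+\|H_1\|)^{p+1}h^{p+1}$. The only difference is presentational: the paper tracks the constants through $\Lambda_p$ and its fractional exponents rather than through per-summand derivative bounds, which is exactly the bookkeeping you anticipated.
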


\begin{proof}
    According to~\cite[Theorem 3]{WiebeBerryHoyerEtAl2010}, there exists a $p$-dependent constant $\widetilde{C}_p$ such that 
    \begin{equation}
         \left\|U_{\text{pf},p}(t+h,t) - \mathcal{T}\exp\left( \int_{t}^{t+h} H(\tau/T)d\tau\right)\right\| \leq \widetilde{C}_{p} \Lambda_p^{p+1} h^{p+1}. 
    \end{equation}
    Here $\Lambda_p$ is a constant such that 
    \begin{equation}
        \Lambda_p \geq \max_{\tau \in [t,t+h]}\left\|\frac{d^m H(\tau/T)}{d\tau^m} \right\|^{1/(m+1)}, \quad \forall \, 0 \leq m \leq p. 
    \end{equation}
    By $H(s) = (1-f(s))H_0 + f(s)H_1$ and the triangle inequality, it suffices to choose
    \begin{equation}
        \Lambda_p \geq \max_{\tau\in[t,t+h]}\left( \left\| \frac{d^m (1-f(\tau/T))}{d\tau^m}  H_0 \right\| + \left\| \frac{d^m f(\tau/T)}{d\tau^m}  H_1 \right\| \right)^{1/(m+1)}, \quad \forall \, 0 \leq m \leq p. 
    \end{equation}
    Let $F_m$ be the upper bound of $|f^{(m)}(s)|$ for all $s\in[0,1]$, then we have 
    \begin{align}
        \left( \left\| \frac{d^m (1-f(t/T))}{dt^m}  H_0 \right\| + \left\| \frac{d^m f(t/T)}{dt^m}  H_1 \right\| \right)^{1/(m+1)} \leq \left( \frac{F_m}{T^m} (\left\|  H_0 \right\| + \left\| H_1 \right\| )\right)^{1/(m+1)} \leq F_m^{1/(m+1)} \left( \left\|  H_0 \right\| + \left\| H_1 \right\| \right). 
    \end{align}
    So we can choose $\Lambda_p = \max_{0\leq m \leq p} F_m^{1/(m+1)} \left( \left\|  H_0 \right\| + \left\| H_1 \right\| \right)$, and the claimed error bound can be obtained by defining $C_p = \max_{0\leq m \leq p} F_m^{(p+1)/(m+1)} \widetilde{C}_p$. 
\end{proof}

\subsection{Standard complexity estimates}

Now we show the standard approach to analyze the complexity of using discretized AQC to prepare an eigenstate. 
Specifically, let $\ket{\phi}$ denote the exact target eigenstate and
\begin{equation}
    \ket{\widetilde{\phi}} = \prod_{j=0}^{T_{d}-1} U_{\text{num}}((j+1)h,jh) \ket{\psi(0)}
\end{equation}
where $U_{\text{num}}$ is a $p$-th order local numerical propagator in the sense that the local truncation error can be bounded by 
\begin{equation}
    \text{LTE}_p \leq C_{p,H} h^{p+1}
\end{equation}
where $C_{p,H}$ is a constant that might depend on $p$ and certain norms of $H$, $H_0$, $H_1$, but is independent of the time step size $h$. 
Notice that in the first-order exponential integrator, this $C$ parameter also depends on $T$. 
We would like to study how many local numerical propagators, namely $M$, scale if we want $\|\ket{\widetilde{\phi}}\bra{\widetilde{\phi}} - \ket{\phi}\bra{\phi}\| \leq \epsilon$ for some $0 < \epsilon < 1$. 

A standard approach for estimating the final eigenstate error $\varepsilon_{\text{Eig}}$ is by using $\varepsilon_{\text{AQC}}$ as the upper bound from the continuous adiabatic theorem  plus the error from the time discretization $\varepsilon_{\text{Disc}}$ : 
\begin{equation}\label{eqn:error_relation}
\varepsilon_{\text{Eig}} \leq \varepsilon_{\text{AQC}} + \varepsilon_{\text{Disc}}. 
\end{equation}
To bound the overall approximation error by $\mathcal{O}(\epsilon)$, it suffices to bound both the continuous adiabatic error and the time discretization error by $\mathcal{O}(\epsilon)$. 
In order to bound the time discretization error, which is $C_{p,H}Th^p$ for a $p$-th order method, we can choose 
\begin{equation}
    h = \mathcal{O}\left(\frac{\epsilon^{1/p}}{C_{p,H}^{1/p}T^{1/p}}\right).
\end{equation}
Therefore the overall number of queries to $U_{\text{num}}$ can be bounded by 
\begin{equation}
    T_{d} = \frac{T}{h} = \mathcal{O} \left(\frac{C_{p,H}^{1/p} T^{1+1/p}}{\epsilon^{1/p}}\right),
\end{equation}
and we only need to further study the scaling of the evolution time $T$, which needs to be chosen large enough such that the adiabatic error is bounded as well. 
For the general second-order continuously differentiable interpolation function $f(s)$ which is independent of the Hamiltonians, \cref{lem:AT_JRS} implies that the adiabatic error is bounded by 
\begin{equation}
    \mathcal{O}\left( \frac{1}{T} \left( \frac{\|H_0\|+\|H_1\|}{\Delta_*^2} + \frac{(\|H_0\|+\|H_1\|)^2}{\Delta_*^3} \right) \right) \leq \mathcal{O}\left(  \frac{(\|H_0\|+\|H_1\|)^2}{T \Delta_*^3}  \right), 
\end{equation}
then the evolution time should be chosen as 
\begin{equation}
    T = \mathcal{O}\left(\frac{(\|H_0\|+\|H_1\|)^2}{\Delta_*^3 \epsilon}\right). 
\end{equation}
The overall number of steps is thus 
\begin{equation}
    T_{d} = \mathcal{O}\left( \frac{C_{p,H}^{1/p} (\|H_0\|+\|H_1\|)^{2+2/p} }{\Delta_*^{3+3/p} \epsilon^{1+2/p}} \right). 
\end{equation}

Now we consider the boundary cancellation condition. 
Here we further assume that both $\|H_0\|$ and $\|H_1\|$ are bounded by $1$. 
We can use~\cref{lem:AT_exp} to bound the adiabatic error, which implies the choice of evolution time to be 
\begin{equation}
    T = \mathcal{O}\left(\frac{1}{\Delta_*^2} \left(\log\left(\frac{1}{\Delta_* \epsilon}\right)\right)^{1+\alpha}\right)
\end{equation}
and thus the overall number of steps to be 
\begin{equation}
    T_{d} = \mathcal{O}\left(\frac{1}{\Delta_*^{2+2/p} \epsilon^{1/p} } \left(\log\left(\frac{1}{\Delta_* \epsilon}\right)\right)^{(1+\alpha)(1+1/p)} \right). 
\end{equation}

\section{Complexity estimate of the high-order simplified product formulae without gap condition on the walk operator}\label{app:proof_high_order_Trotter}

Here we give a proof of~\cref{cor:Trotter_linear}. 
The first step is to establish a connection between the gap of the Trotter walk operator and the Hamiltonian. 
We show this connection in the following lemma. 

\begin{lemma}\label{lem:gap_Trotter_high_order}
    For $s\in[0,1-h/T]$ and a scheduling function $f(s)$, let $H(s) = (1-f(s))H_0+f(s)H_1$ be the interpolating Hamiltonian and $U_{\text{spf},p}(sT+h,sT)$ be the simplified product formula as in~\cref{eqn:high_order_trotter_simplified}. 
    Let $\Delta_H(s)$ and $\Delta_{U,p}(s)$ denote the spectral gap of $H(s)$ and $U_{\text{spf},p}$, respectively. 
    Then, for any $h \leq 1/(\|H_0\|+\|H_1\|)$ and $p \geq 1$, there exist $p$-dependent constants $c_p$ such that 
    \begin{align}
        &h\Delta_H(s) - c_p h^{p+1} \sum_{\gamma_0,\cdots,\gamma_{p} \in \left\{0,1\right\}}\|[H_{\gamma_p},\cdots,[H_{\gamma_1},H_{\gamma_0}]]\| \nonumber\\
        &\leq \Delta_{U,p}(s) \nonumber\\
        &\leq h\Delta_H(s) + c_p h^{p+1} \sum_{\gamma_0,\cdots,\gamma_{p} \in \left\{0,1\right\}}\|[H_{\gamma_p},\cdots,[H_{\gamma_1},H_{\gamma_0}]]\|. 
    \end{align}
\end{lemma}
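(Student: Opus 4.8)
The plan is to mirror the proof of \cref{lem:gap_Trotter_1st_2nd}, replacing the second-order time-independent Trotter error bound by the general $p$-th order bound from \cite[Theorem 11]{ChildsSuTranEtAl2019}, and then propagating the eigenvalue perturbation estimate through the same chain of inequalities.

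First I would fix $s\in[0,1-h/T]$ and observe, exactly as in \cref{lem:gap_Trotter_1st_2nd}, that since $h\le 1/(\|H_0\|+\|H_1\|)$ we have $\|hH(s)\|\le 1$, so the spectral gap of $e^{-ihH(s)}$ (measured as angular distance on $\mathbb{S}^1$) equals $h\Delta_H(s)$. Thus it suffices to compare the eigenvalues of $U_{\text{spf},p}(sT+h,sT)$ with those of $e^{-ihH(s)}$. Because all exponentials in the simplified product formula are evaluated at the same time point, $U_{\text{spf},p}(sT+h,sT)$ is literally a time-independent $p$-th order product formula applied to the two-term Hamiltonian $hH(s)=h(1-f(s))H_0+hf(s)H_1$. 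Hence \cite[Theorem 11]{ChildsSuTranEtAl2019} gives a $p$-dependent constant such that $\|U_{\text{spf},p}(sT+h,sT)-e^{-ihH(s)}\|$ is bounded by $\widetilde{c}_p$ times a sum of norms of nested commutators of depth $p$ in the rescaled Hamiltonians $h(1-f(s))H_0$ and $hf(s)H_1$; pulling out the $h$ factors (there are $p+1$ of them) and bounding $|f(s)|,|1-f(s)|\le 1$ yields a bound of the form $c_p h^{p+1}\sum_{\gamma_0,\dots,\gamma_p\in\{0,1\}}\|[H_{\gamma_p},\cdots,[H_{\gamma_1},H_{\gamma_0}]]\|$.

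Next I would invoke the eigenvalue perturbation theorem of \cite{BhatiaDavis1984,ElsnerHe1993}: there is an ordering $\lambda_j(s),\widetilde\lambda_j(s)$ of the eigenvalues of $U_{\text{spf},p}(sT+h,sT)$ and $e^{-ihH(s)}$ with $\max_j|\lambda_j(s)-\widetilde\lambda_j(s)|$ bounded by the operator-norm difference above. To convert chordal distance into angular distance on $\mathbb{S}^1$ I would use the identity $|z_1-z_2|_{\mathbb{S}^1}=2\arcsin(\tfrac12|z_1-z_2|)$ together with convexity of $\arcsin$ on $[0,1]$, as in \cref{lem:gap_Trotter_1st_2nd}; provided the commutator bound is below some absolute threshold (which can always be arranged by shrinking $h$, absorbing the loss into $c_p$), this gives $|\lambda_j(s)-\widetilde\lambda_j(s)|_{\mathbb{S}^1}\le (1+o(1))|\lambda_j(s)-\widetilde\lambda_j(s)|$, hence still $\le c_p h^{p+1}\sum\|[\cdots]\|$ after adjusting $c_p$. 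Finally, letting $\lambda_0,\lambda_1$ be the two boundary eigenvalues determining the gap of $U_{\text{spf},p}$, the triangle inequality on $\mathbb{S}^1$ gives both $|\lambda_1(s)-\lambda_0(s)|_{\mathbb{S}^1}\ge |\widetilde\lambda_1(s)-\widetilde\lambda_0(s)|_{\mathbb{S}^1}-2c_ph^{p+1}\sum\|[\cdots]\|=h\Delta_H(s)-c_ph^{p+1}\sum\|[\cdots]\|$ and the matching upper bound, which is the claim (with a relabelled constant).

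The main obstacle I anticipate is purely bookkeeping rather than conceptual: one must be careful that the nested-commutator sum appearing in \cite[Theorem 11]{ChildsSuTranEtAl2019} (which is typically phrased in terms of a particular ordered set of summands and absorbs combinatorial factors into its constant) can indeed be dominated by the symmetric sum $\sum_{\gamma_0,\dots,\gamma_p\in\{0,1\}}\|[H_{\gamma_p},\cdots,[H_{\gamma_1},H_{\gamma_0}]]\|$ used in the statement, and that the threshold needed for the $\arcsin$ convexity step is compatible with the hypothesis $h\le 1/(\|H_0\|+\|H_1\|)$ — if not, one simply notes the lemma is only used downstream together with a further reduction of $h$, or absorbs a constant. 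Everything else is a direct transcription of the $p=2$ argument already carried out in \cref{lem:gap_Trotter_1st_2nd}.
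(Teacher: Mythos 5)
Your proposal is correct and follows essentially the same route as the paper's proof: time-independent Trotter bound from~\cite[Theorem 11]{ChildsSuTranEtAl2019} applied to $hH(s)$, the eigenvalue perturbation theorem of~\cite{BhatiaDavis1984,ElsnerHe1993}, conversion to angular distance, and the triangle inequality on $\mathbb{S}^1$. The only deviation is your chordal-to-angular step: the sharper arcsin-convexity trick from \cref{lem:gap_Trotter_1st_2nd} needs a smallness threshold that does \emph{not} follow from $h\leq 1/(\|H_0\|+\|H_1\|)$ for general $p$ (and ``shrinking $h$'' is not available, since the lemma asserts the bound for all such $h$); the paper instead uses the universal inequality $\theta\leq\frac{\pi}{2}\sin\theta$, i.e.\ angular distance at most $\frac{\pi}{2}$ times chordal distance, and absorbs the factor into $c_p=\pi\widetilde{c}_p$, which is exactly the ``absorb a constant'' fallback you mention and is the way to close this step.
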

\begin{proof}
Notice that the spectral gaps of $e^{-i h H(s)}$ and $h H(s)$ are of the same value for all $h \leq 1/(\|H_0\|+\|H_1\|)$. 
So it suffices to investigate the spectral gaps of $U_{\text{spf},p}(sT+h,sT)$ and $e^{-ihH(s)}$. 
For every fixed $s$, we may bound the difference between $U_{\text{spf},p}(sT+h,sT)$ and $e^{-ihH(s)}$ by using the error bound of the time-independent product formula in~\cite[Theorem 11]{ChildsSuTranEtAl2019}, which gives 
    \begin{equation}
        \left\| U_{\text{spf},p}(sT+h,sT) -  e^{-iH(s)}\right\| \leq \widetilde{c}_p h^{p+1}\left( \sum_{\gamma_0,\cdots,\gamma_{p} \in \left\{0,1\right\}}\|[H_{\gamma_p},\cdots,[H_{\gamma_1},H_{\gamma_0}]]\| \right)
    \end{equation}
    for a $p$-dependent constant $\widetilde{c}_p$. 

    According to the eigenvalue perturbation theorem of Ref.~\cite{BhatiaDavis1984,ElsnerHe1993}, there exists an ordering $\lambda_j(s)$ and $\widetilde{\lambda}_j(s)$ of the eigenvalues of $U_{\text{spf},p}(sT+h,sT)$ and $e^{-iH(s)}$, respectively, such that
    \begin{equation}
    \max_j |\lambda_j(s) - \widetilde{\lambda}_j(s)| \leq \left\| U_{\text{spf},p}(sT+h,sT) -  e^{-iH(s)}\right\|. 
    \end{equation}
    Let $\lambda_0(s)$ and $\lambda_1(s)$ denote the two eigenvalues that determine the gap, then the gap of $U_{\text{spf},p}(sT+h,sT)$ can be bounded by
    \begin{align}
    |\lambda_1(s) - \lambda_0(s)|_{\mathbb{S}^1} & \geq |\widetilde{\lambda}_1(s) - \widetilde{\lambda}_0(s)|_{\mathbb{S}^1} - |\widetilde{\lambda}_1(s) - \lambda_1(s)|_{\mathbb{S}^1} - |\widetilde{\lambda}_0(s) - \lambda_0(s)|_{\mathbb{S}^1} \nonumber\\
    & \geq h\Delta_H(s) - \frac{\pi}{2} |\widetilde{\lambda}_1(s) - \lambda_1(s)| - \frac{\pi}{2}|\widetilde{\lambda}_0(s) - \lambda_0(s)|  \nonumber\\
    & \geq h\Delta_H(s) - \pi \widetilde{c}_p h^{p+1}\left( \sum_{\gamma_0,\cdots,\gamma_{p} \in \left\{0,1\right\}}\|[H_{\gamma_p},\cdots,[H_{\gamma_1},H_{\gamma_0}]]\| \right), 
    \end{align}
and 
\begin{align}
    |\lambda_1(s) - \lambda_0(s)|_{\mathbb{S}^1} & \leq |\widetilde{\lambda}_1(s) - \widetilde{\lambda}_0(s)|_{\mathbb{S}^1} + |\widetilde{\lambda}_1(s) - \lambda_1(s)|_{\mathbb{S}^1} + |\widetilde{\lambda}_0(s) - \lambda_0(s)|_{\mathbb{S}^1} \nonumber\\
    & \leq h\Delta_H(s) + \frac{\pi}{2} |\widetilde{\lambda}_1(s) - \lambda_1(s)| + \frac{\pi}{2}|\widetilde{\lambda}_0(s) - \lambda_0(s)|  \nonumber\\
    & \leq h\Delta_H(s) + \pi \widetilde{c}_p h^{p+1}\left( \sum_{\gamma_0,\cdots,\gamma_{p} \in \left\{0,1\right\}}\|[H_{\gamma_p},\cdots,[H_{\gamma_1},H_{\gamma_0}]]\| \right). 
\end{align} 
In the second line of both estimates, we use $\theta \leq \frac{\pi}{2}\sin \theta$ for all $\theta \in [0,\pi/2]$. 
The proof is completed by choosing $c_p = \pi \widetilde{c}_p$. 
\end{proof}

Now we are ready to prove~\cref{cor:Trotter_linear}. 

\begin{proof}[Proof of~\cref{cor:Trotter_linear}]
    The sketch of the proof is as follows. 
    We will first give a lower bound for the fixed-time spectral gap of the walk operator $W(s)$, then give a lower bound for the multistep gap by verifying the conditions in~\cref{lem:multistep_gap}, and apply the discrete adiabatic theorem to bound the overall error between the actual and ideal evolution, from which we can infer the choices of $T$ and $h$.  
    
    Let $T_{d} = T/h$, and we assume $T_{d}$ is an integer. 
    We first bound the fixed-time gap of $W(s)$ defined by~\cref{eqn:high_order_trotter_simplified}. 
    According to the choice of $h$ with proper constant factor, we can make 
    \begin{equation}
        c_p h^{p+1} \sum_{\gamma_0,\cdots,\gamma_{p} \in \left\{0,1\right\}}\|[H_{\gamma_p},\cdots,[H_{\gamma_1},H_{\gamma_0}]]\| \leq \frac{1}{2} h \Delta_*, 
    \end{equation}
    where $c_p$ is the constant factor in~\cref{lem:gap_Trotter_high_order}. 
    Then, using~\cref{lem:gap_Trotter_high_order}, the gap of $W(s)$ is bounded from below by the gap of $\frac{1}{2} h \Delta(s)$, which has a further lower bound $\frac{1}{2}h \Delta_*$. 
    
    For the multistep gap, we can use~\cref{lem:multistep_gap}, and we need to verify that $T_d$ is sufficiently large. 
    Specifically, suppose that our choice of $T$ is 
    \begin{equation}
        T \geq C \frac{(\|H_0\|+\|H_1\|)^2}{ \Delta_*^3 \epsilon}
    \end{equation}
    for a constant $C > 0$. 
    Then we have 
    \begin{equation}
        T_d = T/h \geq C \frac{(\|H_0\|+\|H_1\|)^2}{ h \Delta_*^3 \epsilon} \geq C \frac{(\|H_0\|+\|H_1\|)^2}{ h \Delta_*^3}. 
    \end{equation}
    Notice that $\Delta_* \leq 2\|H(s)\| \leq 2(\|H_0\|+\|H_1\|)$, so we have 
    \begin{equation}
        T_d \geq \frac{C}{4} \frac{1}{ h\Delta_*}. 
    \end{equation}
    According to~\cref{lem:c1_c2_product_formula_simplified}, we have $c_1(s) = \mathcal{O}(h(\|H_0\|+\|H_1\|)) = \mathcal{O}(1)$, so we can choose a sufficiently large constant $C$ such that $T_d \geq \frac{2\pi}{h\Delta_*/2} \sup c_1(s)$. 
    Then~\cref{lem:multistep_gap} ensures that the multistep gap of $W(s)$ is bounded from below by $h\Delta_*/4$. 
    
    Therefore, the error bound in~\cref{thm:trotter_linear} becomes 
    \begin{align}
        & \quad \mathcal{O}\left( \frac{h}{T}\left(\frac{ h (\|H_0\|+\|H_1\|)}{h^2\Delta_*^2} + \frac{ h^2  (\|H_0\|+\|H_1\|)^2}{h^2\Delta_*^2} + \frac{ h^2 (\|H_0\|+\|H_1\|)^2}{h^3\Delta_*^3}\right) \right)\nonumber \\
        & \leq \mathcal{O}\left( \frac{1}{T}\left(\frac{  (\|H_0\|+\|H_1\|)}{\Delta_*^2} + \frac{ h (\|H_0\|+\|H_1\|)^2}{\Delta_*^2} + \frac{ (\|H_0\|+\|H_1\|)^2}{\Delta_*^3}\right) \right) \nonumber\\
        & \leq \mathcal{O}\left( \frac{ (\|H_0\|+\|H_1\|)^2}{T\Delta_*^3}\right), 
    \end{align}
    and the desired choices of $T$ and $h$ directly follow from this error bound and the conditions for the multistep gap. 
\end{proof}

\section{High-order discrete adiabatic theorem}\label{app:DAT_exp}

Here we provide more discussion on the high-order discrete adiabatic theorem, which is claimed in~\cite{DKS98} and we state as~\cref{lem:DAE_exp} in the main text. 
First, in the statement of~\cref{lem:DAE_exp}, we adapt a slightly different statement compared to the original version in~\cite{DKS98} for the purpose of clearer presentation. 
We will explain how to get~\cref{lem:DAE_exp} from there. 
More importantly, we will then point out the missing steps in the analysis of~\cite{DKS98} and explain why these missing steps imply that the proof in~\cite{DKS98} is not valid. 
Meanwhile, we also show a numerical result verifying that, despite the flawed proof, the claim of the high-order discrete adiabatic theorem is very likely to be correct.

\subsection{A slight variant}

We show how to derive~\cref{lem:DAE_exp} from the original result in~\cite{DKS98}. 
We need to introduce some operators before proceeding. 
Let $P(s)$ and $Q(s)$ denote the spectral projections onto $\sigma_P(s)$ and $\sigma_Q(s)$, respectively. 
Define 
\begin{align}
    S(s) &= P(s+1/T_d)P(s) + Q(s+1/T_d)Q(s), \\
    v(s) &= \sqrt{S(s)S(s)^{\dagger}}, \\
    V(s) &= v(s)^{-1} S(s). \label{eqn:def_op_V}
\end{align}

The following is the original result from~\cite{DKS98}. 

\begin{lemma}[{Higher-order discrete adiabatic theorem,~\cite[Theorem 3]{DKS98}}]\label{lem:DAE_exp_DKS}
    Suppose that $V(s)-I$ is supported in $(0,s_{\star})$. 
    Then, for any integer $n$ such that $n/T_d \notin (0,s_{\star})$ and for any positive integer $k$, we have 
    \begin{equation}
        \|Q(0) U_A^{\dagger}(n/T_d) U(n/T_d) P(0)\| \leq \frac{C_k}{T_d^{k}},
    \end{equation}
    where $C_k$ is a constant independent of $n$ and $T_d$. 
\end{lemma}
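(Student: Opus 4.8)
The plan is to reconstruct, in the discrete setting, the iterative (Nenciu-type) argument of \cite{DKS98}. The starting point is the natural zeroth-order adiabatic data: the parallel-transport unitaries $V(s)$ of \eqref{eqn:def_op_V} satisfy $V(s)P(s)=P(s+1/T_d)V(s)$, so their ordered products intertwine $P(0)$ with $P(n/T_d)$, and the ideal walk operators $W_A$ are $V$-transport dressed with a block-diagonal dynamical phase, which is the structural content behind \eqref{eqn:prop_UA}. For each target order $k$ I would construct, recursively, a family of \emph{superadiabatic} spectral projections $P_k(\cdot)$ together with associated adiabatic walk operators $W_A^{(k)}(\cdot)$ such that: (a) $P_k$ differs from $P$ by a sum of terms each carrying at least one finite-difference factor $D^{(l)}W(j/T_d)$ with $l\ge 1$ (divided by powers of the gap), hence each of size $\mathcal{O}(1/T_d^l)$; and (b) the true one-step map $W(j/T_d)$ preserves the $P_k$-eigenspace up to an off-diagonal defect of size $\mathcal{O}(1/T_d^{k+1})$. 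This recursion is the discrete transcription of the continuous superadiabatic scheme, and its only analytic inputs are the finite-difference bounds of condition~\ref{cond:1_exp} and the gap: since $\sigma_P(s)$ and $\sigma_Q(s)$ are separated on $\mathbb{S}^1$ by $\Delta(s)>0$, the map $X\mapsto W_A(s)X-XW_A(s)$ is boundedly invertible, with inverse norm $\mathcal{O}(1/\Delta(s))$, on off-diagonal operators, and this is what lets one solve the commutator equations defining the successive corrections.

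With these objects in hand I would bound $\|Q(0)U_A^\dagger(n/T_d)U(n/T_d)P(0)\|$ by first rewriting it as $\|Q(0)\Omega(n/T_d)P(0)\|$ with $\Omega=U_A^\dagger U$ (using the adjoint of \eqref{eqn:prop_UA}). Under the compact-support hypothesis $P$ is stationary near $s=n/T_d$, so both $U_A$ and the superadiabatic $U_A^{(k)}$ intertwine $P(0)$ with $P(n/T_d)$ there; this makes $U_A^\dagger U_A^{(k)}$ block-diagonal with respect to $P(0)$ and reduces the estimate, via the triangle inequality, to (i) $\|P_k(0)-P(0)\|$, (ii) $\|P_k(n/T_d)-P(n/T_d)\|$, and (iii) the leakage of the $k$-th superadiabatic evolution $(U_A^{(k)})^\dagger U$ out of the $P_k$-eigenspace. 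Piece (iii) is handled by telescoping its one-step recursion and summing the $\mathcal{O}(1/T_d^{k+1})$ defects over the at most $T_d$ steps, which gives $\mathcal{O}(1/T_d^k)$. Pieces (i) and (ii) are where the hypotheses bite: every correction term in $P_k-P$ contains a factor $D^{(l)}W(j/T_d)$, and $D^{(l)}W$ vanishes wherever $W$ --- equivalently $P$ --- is locally constant; the assumption that $V(s)=I$ for $s\notin(0,s_\star)$ says precisely that $P$ is stationary off $(0,s_\star)$, so $P_k(0)=P(0)$, and since $n/T_d\ge s_\star$ whenever $n>0$ and $n/T_d\notin(0,s_\star)$, also $P_k(n/T_d)=P(n/T_d)$ (the case $n=0$ being trivial). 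Hence (i) and (ii) vanish outright and the bulk term (iii) is all that remains. The constant $C_k$ then depends only on $\sup_s c_l(s)$ for $l\le k$ and on $\inf_s\Delta(s)$, and in particular not on $n$ or $T_d$ --- which is exactly the lemma's claim.

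The main obstacle --- and the reason this paper records the statement only as a conjecture --- lies inside piece (iii). When the $k$-th superadiabatic one-step defect is telescoped, the accumulated terms involve commuting the gap-resolvent operators and the finite-difference factors past the long ordered products $\prod_j W_A^{(k)}(j/T_d)$, and one must check that after every such re-association the $\mathcal{O}(1/T_d^{k+1})$-per-step decay is retained rather than degraded by a power of $T_d$. Naively, differencing a product of $\mathcal{O}(T_d)$ near-identity factors costs only $\mathcal{O}(1/T_d)$ per difference, so the claimed gain survives only if the off-diagonal structure of the defect recombines with the resolvent into a \emph{further} genuine finite difference at each stage; verifying that this recombination propagates cleanly through all $k$ stages is the step that appears to be asserted but not fully established in \cite{DKS98}. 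My plan would therefore be to present the superadiabatic construction and the power counting above, make the inductive claim about the defects fully explicit, and then --- following the honest choice made elsewhere in this paper --- flag this propagation step as the gap, deferring to \app{DAT_exp} for the detailed discussion and for the numerical evidence that the bound nevertheless holds.
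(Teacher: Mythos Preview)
The paper does not supply its own proof of this lemma; it is quoted verbatim from \cite{DKS98}, and the surrounding appendix only (i) shows how to derive \cref{lem:DAE_exp} from it and (ii) explains why the original DKS proof is incomplete. So there is no in-paper proof to match against, but there \emph{is} a description of what DKS actually does and where it breaks, and that is worth comparing to your sketch.

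Your outline is a discrete Nenciu/superadiabatic construction: you build corrected projections $P_k$, split the error into boundary contributions $(P_k-P)(0)$, $(P_k-P)(n/T_d)$ that vanish by the support hypothesis, and a bulk defect summed over steps. That is a perfectly reasonable route, but it is \emph{not} the mechanism the paper attributes to \cite{DKS98}. As described in \cref{app:DKS_missing_step}, DKS works with the wave operator $\Omega=U_A^\dagger U$, derives the discrete Volterra equation $\Omega(n/T_d)=I-\tfrac{1}{T_d}\sum_{k<n}K(k/T_d)\Omega(k/T_d)$ with kernel $K=T_d(I-\Theta)$, and expands in iterated kernels $\Omega_j$ defined by $\Omega_0=I$, $\Omega_j=-\tfrac{1}{T_d}\sum K\,\Omega_{j-1}$. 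The claimed Lemma~1 of DKS is that $Q(0)\Omega_j(1)P(0)=\mathcal{O}(T_d^{-k})$ for every $j$ and $k$, proved by induction on $j$.

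The gap the paper pinpoints is therefore different from the one you flag. You locate the difficulty in commuting resolvents and finite-difference factors past the long ordered products in the telescoped bulk term. The paper's diagnosis is more specific and structural: in the induction step for $\Omega_j$, DKS silently upgrades the hypothesis from ``$Q(0)\Omega_{j-1}P(0)$ is small at the endpoints'' to ``$Q(0)\Omega_{j-1}(n/T_d)P(0)$ is small for \emph{all} intermediate $n$'' (their Eq.~40), and the paper gives numerical evidence that this stronger statement is in fact false --- $\|Q(0)\Omega_1(s)P(0)\|$ is only $\mathcal{O}(1/T_d)$ away from the boundary. If you intend your write-up to double as a critique of \cite{DKS98}, you should name this induction-hypothesis overreach rather than the operator-reordering issue, since that is the concrete failure mode the paper documents and tests.
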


Notice that~\cref{lem:DAE_exp_DKS} introduces a new parameter $s_{\star}$ which could be larger than $1$. 
Such an extension allows us to bound the discrete adiabatic error when the number of the walk operators is larger than $T$. 
Now we prove~\cref{lem:DAE_exp} from~\cref{lem:DAE_exp_DKS}. 

\begin{proof}[Proof of~\cref{lem:DAE_exp}]
    The idea is to smoothly extend $W(s)$ to a longer time interval $[0,3]$ to make it satisfy the condition in~\cref{lem:DAE_exp_DKS}. 
    Specifically, we consider a walk operator on the time interval $[0,3]$ and force it to be unchanged only on $[1,2]$ to make sure that it satisfies the boundary cancellation condition in the discrete sense as well. 
    On the time interval $[0,3]$, we define 
    \begin{equation}
        \widetilde{W}(s) = \begin{cases}
        W(0), & s\in[0,1),\\
        W(s-1), & s \in [1,2],\\
        W(1), & s \in (2,3]. 
        \end{cases}
    \end{equation}
    The operators associated with $\widetilde{W}(s)$ are denoted using the same letter with an upper tilde (for example $\widetilde{P}(s)$ and $\widetilde{Q}(s)$ denote the spectral projections of $\widetilde{W}(s)$). 
    Since $\widetilde{W}^{(k)}(1) = W^{(k)}(0) = 0$ and $\widetilde{W}^{(k)}(2) = W^{(k)}(1) = 0$ for all $k \geq 1$, the operator $\widetilde{W}(s)$ is a smooth extension of $W(s)$. 
    
    Notice that $\widetilde{W}(s)$ and thus $\widetilde{P}(s)$ and $\widetilde{Q}(s)$ remain unchanged on $[0,1] \cup [2,3]$. 
    Then, for any $s \in [0,1-1/T_d] \cup [2,3-1/T_d]$, we must have 
    \begin{equation}
        \widetilde{S}(s) = \widetilde{P}(s+1/T_d)\widetilde{P}(s) + \widetilde{Q}(s+1/T_d)\widetilde{Q}(s) = \widetilde{P}(s)^2 + \widetilde{Q}(s)^2 = I, 
    \end{equation}
    so $\widetilde{v}(s) = \sqrt{\widetilde{S}(s)\widetilde{S}(s)^{\dagger}} = I$ and $\widetilde{V}(s) = \widetilde{v}(s)^{-1} \widetilde{S}(s) = I$. 
    This means that $\widetilde{V}(s) - I$ is supported in $(0,2)$ and satisfies the assumption of~\cref{lem:DAE_exp_DKS}. 
    As a result, we have 
    \begin{equation}\label{eqn:DAE_exp_intermediate}
        \|\widetilde{Q}(0) {\widetilde{U}_A}^{ \dagger}(2) \widetilde{U}(2) \widetilde{P}(0)\| \leq \frac{C_k}{T_d^{k}},
    \end{equation}
    where $C_k$ is a constant independent of $T_d$. 
    
    We now show how we may obtain a bound for the desired quantity from~\cref{eqn:DAE_exp_intermediate}. 
    First, we have 
    \begin{align}
        \|\widetilde{Q}(0) \widetilde{U}_A^{\dagger}(2) \widetilde{U}(2) \widetilde{P}(0)\| & = \|\widetilde{U}_A^{ \dagger}(2) \widetilde{U}(2) \widetilde{P}(0) - \widetilde{P}(0) \widetilde{U}_A^{ \dagger}(2) \widetilde{U}(2) \widetilde{P}(0)\|  \nonumber \\
        & = \|\widetilde{U}_A^{\dagger}(2) \widetilde{U}(2) \widetilde{P}(0) - \widetilde{U}_A^{ \dagger}(2) \widetilde{P}(2) \widetilde{U}(2) \widetilde{P}(0)\| \nonumber \\
        & = \|\widetilde{U}(2) \widetilde{P}(0) - \widetilde{P}(2) \widetilde{U}(2) \widetilde{P}(0)\|, 
    \end{align}
    where we use~\cref{eqn:prop_UA} in the second equality. 
    Then, since $\ket{\psi}$ is within the eigenspace corresponding to $\sigma_{\widetilde{P}}(0)$, we have 
    \begin{align}
        \|\widetilde{U}(2)\ket{\psi} - \widetilde{P}(2) \widetilde{U}(2) \ket{\psi}\| & = \|\widetilde{U}(2)\widetilde{P}(0)\ket{\psi} - \widetilde{P}(2) \widetilde{U}(2) \widetilde{P}(0) \ket{\psi}\| \nonumber \\
        & \leq \|\widetilde{U}(2)\widetilde{P}(0)  - \widetilde{P}(2) \widetilde{U}(2) \widetilde{P}(0) \| \nonumber \\
        & = \|\widetilde{Q}(0) \widetilde{U}_A^{\dagger}(2) \widetilde{U}(2) \widetilde{P}(0)\| \nonumber \\
        & \leq \frac{C_k}{T_d^k},
    \end{align}
    where $C_k$ is the constant in~\cref{lem:DAE_exp_DKS}. 
    Finally, notice that, by definition, $\widetilde{P}(2) = P(1)$, and $\ket{\psi}$ is an eigenstate of $\widetilde{W}(s)$ for all $s\in[0,1]$. Suppose the corresponding eigenvalue is $e^{i\theta}$ for a real number $\theta$, then 
    \begin{align}
        \|\widetilde{U}(2)\ket{\psi} - \widetilde{P}(2) \widetilde{U}(2) \ket{\psi}\| & = \left\|\left(\prod_{j=T_d}^{2T_d-1} \widetilde{W}(j/T_d)\right)\left(\prod_{j=0}^{T_d-1} \widetilde{W}(j/T_d)\right)\ket{\psi} - P(1) \left(\prod_{j=T_d}^{2T_d-1} \widetilde{W}(j/T_d)\right)\left(\prod_{j=0}^{T_d-1} \widetilde{W}(j/T_d)\right) \ket{\psi}\right\| \nonumber\\
        & = \left\|\left(\prod_{j=0}^{T_d-1} W(j/T_d)\right)W(0)^{T_d}\ket{\psi} - P(1) \left(\prod_{j=0}^{T_d-1} W(j/T_d)\right)W(0)^{T_d}\ket{\psi}\right\| \nonumber\\
        & = \left\|e^{i T_d\theta} U(1)\ket{\psi} - e^{i T_d \theta} P(1) U(1) \ket{\psi}\right\| \nonumber\\
        & = \left\|U(1)\ket{\psi} - P(1) U(1) \ket{\psi}\right\|. 
    \end{align}
    Therefore we have 
    \begin{equation}
        \left\|U(1)\ket{\psi} - P(1) U(1) \ket{\psi}\right\| = \|\widetilde{U}(2)\ket{\psi} - \widetilde{P}(2) \widetilde{U}(2) \ket{\psi}\| \leq \frac{C_k}{T_d^k}. 
    \end{equation}
\end{proof}

\subsection{Missing steps in the proof of the high-order discrete adiabatic theorem and a numerical validation}\label{app:DKS_missing_step}

We first follow the notations in~\cite{DKS98} to define several operators. 
Let $P(s)$ and $Q(s)$ be the spectral projections as before, and $V(s)$ is defined through~\cref{eqn:def_op_V}. 
Let 
\begin{align}
    \Omega(s) &= U_A^{\dagger}(s)U(s), \\
    \Theta(s) &= U_A^{\dagger}(s+1/T_d) V^{\dagger}(s) U_A(s+1/T_d), \\
    K(s) &= T_d(1-\Theta(s)). 
\end{align}
It can be shown that the Volterra equation holds as 
\begin{equation}
    \Omega(n/T_d) = 1 - \frac{1}{T_d} \sum_{k=0}^{n-1} K(k/T_d) \Omega(k/T_d). 
\end{equation}
In~\cite{DKS98}, another set of operators $\left\{ \Omega_j(s) \right\}$ is defined in a fixed-point-iteration fashion -- let $\Omega_0(s) = 1$ for all $s$, and 
\begin{equation}
    \Omega_j(n/T_d) = - \frac{1}{T_d} \sum_{k=0}^{n-1} K(k/T_d) \Omega_{j-1}(k/T_d). 
\end{equation}
By taking the summation over $j$, we can see that $\sum_{j=0}^{\infty} \Omega_j(s)$ is expected to converge to $\Omega(s)$. 

Lemma 1 in~\cite{DKS98} attempts to show that $Q_0 \Omega_j(1) P_0$ is of order $\mathcal{O}(T_d^{-k})$ for arbitrary positive integer $k$ under the boundary cancellation condition, and the high-order discrete adiabatic theorem becomes a consequence of this lemma. 
This was proven using mathematical induction over $j$. 
However, in the proof of~\cite[Lemma 1]{DKS98}, the induction assumption (Eq. 40 in~\cite{DKS98}) was taken to be $Q_0 \Omega_j(n/T_d) P_0 = \mathcal{O}(T_d^{-k})$ for all $0 \leq n \leq T_d$, which is much stronger than the original claim which only holds when $n = 0$ and $T_d$.  
Furthermore, intuitively, the operator $Q_0 \Omega_j(n/T_d) P_0$ should not scale $\mathcal{O}(T_d^{-k})$ inside the time interval since the boundary cancellation condition is expected to accelerate the convergence only near the boundary. 

We verify our statements numerically using a $4$-level model. 
We consider the walk operator to be the first-order exponential integrator with time step size $1$, \emph{i.e.}, $W(s) = e^{-i H(s)}$, where $H(s) = (1-f(s))H_0 + f(s) H_1$. 
We choose the Hamiltonians $H_j = Q_j^{\dagger} D_j Q_j$ for $j = 0,1$, where $D_0 = \text{diag}(0.5,0.8,1.2,1.4)$ and $D_1 = \text{diag}(0.3,1.0,1.5,1.9)$ are two diagonal matrices, $Q_0$ is the orthonormal basis of the matrix
\begin{equation}
    \left( \begin{array}{cccc}
        2 & 1 & 0 & 1 \\
        1 & 2 & 1 & 0 \\
        0 & 1 & 2 & 1 \\
        1 & 0 & 1 & 2
    \end{array} \right), 
\end{equation}
and $Q_1$ is the orthonormal basis of 
\begin{equation}
    \left( \begin{array}{cccc}
        3 & -0.5 & 0 & -2 \\
        -0.5 & 3 & 1 & 0 \\
        0 & 1 & 3 & -1 \\
        -2 & 0 & -1 & 3
    \end{array} \right). 
\end{equation}
The scheduling function is the glue function $f(s) = c^{-1} \int_0^s \exp\left(-\frac{1}{s'(1-s')}\right) ds'$, where $c$ is the normalization constant such that $f(1) = 1$. 
Notice that such a scheduling function satisfies the boundary cancellation condition. 

\begin{figure}[t]
    \centering
    \includegraphics[width = 0.45\textwidth]{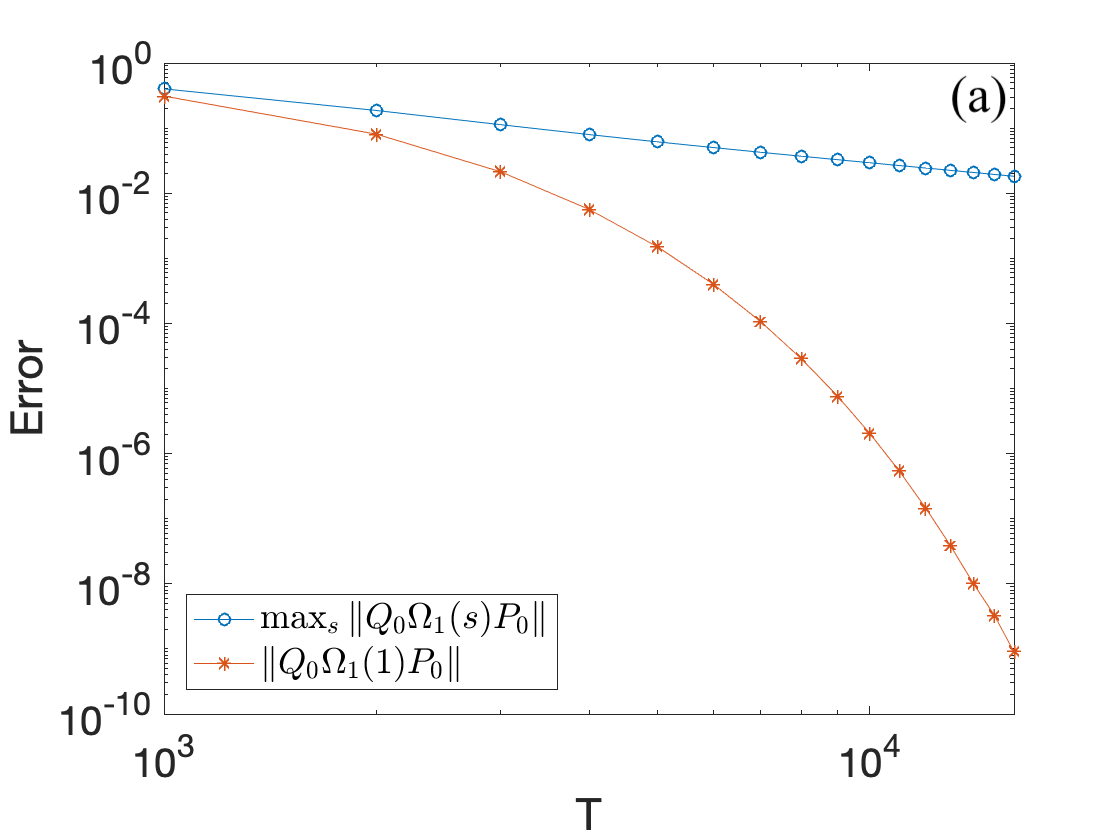}
    \includegraphics[width = 0.45\textwidth]{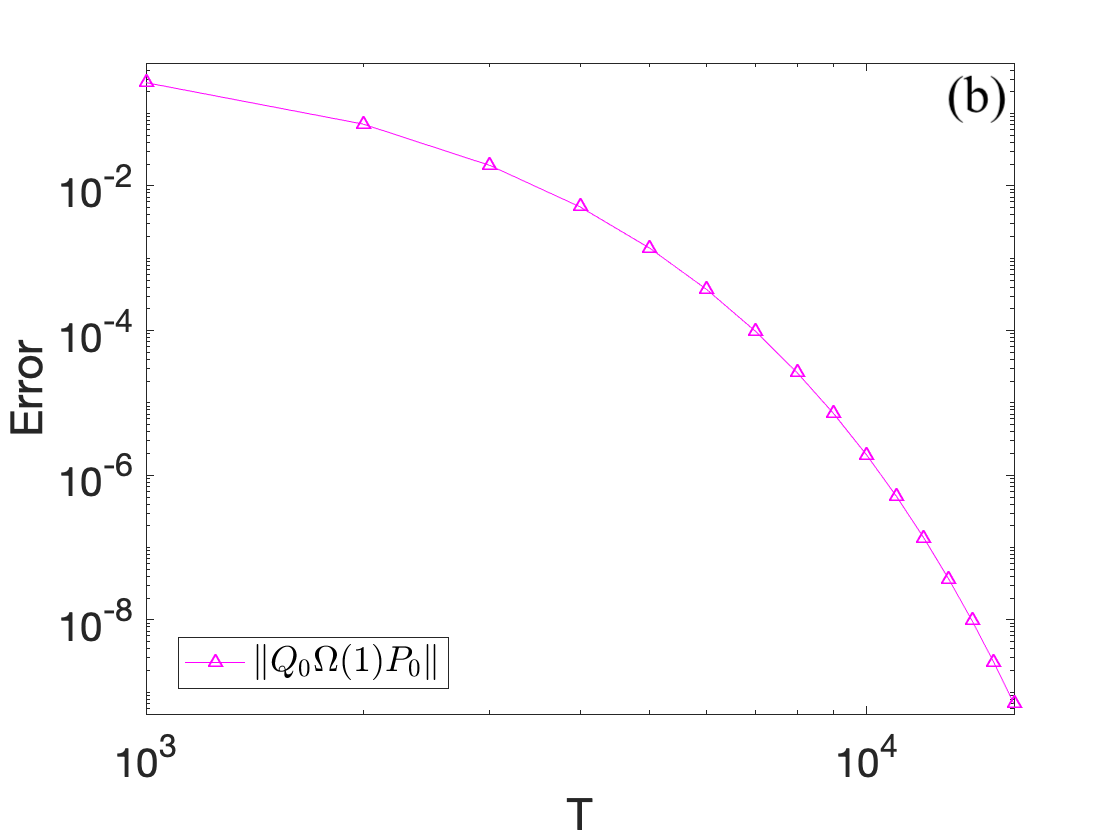}
    \caption{Numerical tests of the high-order discrete adiabatic theorem and its intermediate step. Left: numerical scaling of $\|Q_0 \Omega_1(s) P_0\|$ inside the interval and at the boundary. Right: numerical scaling of $\|Q_0 \Omega(s) P_0\|$ at the boundary $s = 1$. }
    \label{fig:DKS_bounds}
\end{figure}

Figure \ref{fig:DKS_bounds} (a) shows the numerical scaling of $\max_s \|Q_0 \Omega_1(s) P_0\|$ and $\|Q_0 \Omega_1(1) P_0\|$ under different choices of $T_d$. 
Here we test the scaling for $\Omega_1$ instead of $\Omega_0$ because $\Omega_0$ is always the identity matrix so the orthogonal projection is always $0$. 
Our numerical results suggest that $\max_s \|Q_0 \Omega_1(s) P_0\|$ is of order $\mathcal{O}(1/T_d)$ while $\|Q_0 \Omega_1(1) P_0\|$ decays super-polynomially, which verifies that the induction assumption used in~\cite{DKS98} is questionable.

Although the proof presented in~\cite{DKS98} seems flawed, we believe that the conclusion of the high-order discrete adiabatic theorem still holds true. 
We verify this using the same example, and our numerical results shown in~\cref{fig:DKS_bounds} (b) confirms that the adiabatic error $\|Q_0 \Omega(1) P_0\|$ at the boundary indeed converges super-polynomially in $1/T_d$.

\section{Spectral gap of the walk operator in adiabatic Grover search}\label{app:Grover_gap}

\subsection{Direct computations}
Here we determine the spectral gap of the walk operator $W(s)$ for a step of Trotter as applied to a search problem in \cref{sec:Grover}.
First we diagonalise $H_0$ using~\cref{eqn:Grover_H0_H1_2dim} as 
\begin{equation}
    H_0 = \left(\begin{array}{cc}
        \sqrt{M/N} & \sqrt{(N-M)/N} \\
        \sqrt{(N-M)/N} & -\sqrt{M/N}
    \end{array}\right)
    \left(\begin{array}{cc}
        0 & 0 \\
        0 & 1
    \end{array}\right)
    \left(\begin{array}{cc}
        \sqrt{M/N} & \sqrt{(N-M)/N} \\
        \sqrt{(N-M)/N} & -\sqrt{M/N}
    \end{array}\right). 
\end{equation}
That gives the walk operator $W(s)$ as
\begin{align}
    W(s) &= e^{-if(s)H_1} e^{-i(1-f(s))H_0} \nn
    &= \left(\begin{array}{cc}
        1 & 0 \\
        0 & e^{-i f(s)} 
    \end{array}\right)
    \left(\begin{array}{cc}
        \sqrt{M/N} & \sqrt{(N-M)/N} \\
        \sqrt{(N-M)/N} & -\sqrt{M/N}
    \end{array}\right)
    \left(\begin{array}{cc}
        1 & 0 \\
        0 & e^{-i (1-f(s))} 
    \end{array}\right)
    \left(\begin{array}{cc}
        \sqrt{M/N} & \sqrt{(N-M)/N} \\
        \sqrt{(N-M)/N} & -\sqrt{M/N}
    \end{array}\right) \nn
    &= \left(\begin{array}{cc}
        \frac{M}{N} + e^{-i(1-f(s))} \frac{N-M}{N} & \frac{\sqrt{M(N-M)}}{N} \left(1 - e^{-i(1-f(s))} \right)  \\
        \frac{\sqrt{M(N-M)}}{N} \left(e^{-if(s)} - e^{-i} \right) & e^{-if(s)} \frac{N-M}{N} + e^{-i} \frac{M}{N}
    \end{array}\right). 
\end{align}

The eigenvalues of $W(s)$ satisfy the following quadratic equation 
\begin{equation}
    \lambda^2 - \left[(1+e^{-i})\frac{M}{N} + (e^{-i(1-f(s))} + e^{-if(s)})\frac{N-M}{N}\right] \lambda + e^{-i} = 0. 
\end{equation}
So the eigenvalues can be represented as $\lambda = e^{-\frac{i}{2}} \mu$, where $\mu$ satisfies 
\begin{equation}
    \mu^2 - 2\left(\frac{M}{N}\cos\left(\frac{1}{2}\right) + \frac{N-M}{N} \cos\left(\frac{1}{2}-f(s)\right)\right)\mu + 1 = 0. 
\end{equation}
Solving this quadratic equation yields the eigenvalues 
\begin{align}
    \lambda_{\pm} &= e^{-\frac{i}{2}} \left(\xi \pm i\sqrt{1-\xi^2} \right), \\
    \xi &= \frac{M}{N}\cos\left(\frac{1}{2}\right) + \frac{N-M}{N} \cos\left(\frac{1}{2}-f(s)\right). 
\end{align}
The spectral gap of $W(s)$ is 
\begin{align}
    \Delta_{0,W}(s) &= 2\arccos \left( \frac{M}{N}\cos\left(\frac{1}{2}\right) + \frac{N-M}{N} \cos\left(\frac{1}{2}-f(s)\right) \right) \nn
    &= 2\arccos \left( \cos\left(\frac{1}{2}-f(s)\right) - \frac{M}{N}\left[\cos\left(\frac{1}{2}-f(s)\right) - \cos\left(\frac{1}{2}\right) \right] \right). 
\end{align}

\subsection{Proof of \texorpdfstring{\cref{lem:Grover_gap_walk}}{}}

This Lemma lower bounds the gap of the walk operator $W(s)$ for a step of Trotter in terms of the spectral gap of the Hamiltonian.
To bound this spectral gap, we may use~\cref{lem:gap_Trotter_1st_2nd} with $h=1$ to give 
    \begin{align}
        \Delta_W(s;N,M,f) &\geq \Delta_H(s;N,M,f) - \frac{\pi}{144\sqrt{3}}\left( 2\|[H_1,[H_1,H_0]]\| +  \|[H_0,[H_0,H_1]]\|\right) \nn
        & \geq \Delta_H(s;N,M,f) - \frac{\pi}{24\sqrt{3}}\|[H_0,H_1]\|. \label{eqn:Grover_gap_Trotter_Hamiltonian}
    \end{align}
    The commutator in the basis $\left\{\ket{e_0},\ket{e_1}\right\}$ can be computed as 
\begin{align}
    [H_0, H_1] = \left(\begin{array}{cc}
        0 & -\sqrt{M(N-M)}/N \\
        \sqrt{M(N-M)}/N & 0
    \end{array}\right), 
\end{align}
and 
\begin{equation}
    \|[H_0, H_1]\| = \sqrt{M(N-M)}/N. 
\end{equation}
Then~\cref{eqn:Grover_gap_Trotter_Hamiltonian} becomes 
\begin{align}
    \Delta_W(s;N,M,f) 
    & \geq \sqrt{(1-2f(s))^2 + \frac{4M}{N}f(s)(1-f(s))} - \frac{\pi}{24\sqrt{3}} \frac{\sqrt{M(N-M)}}{N} \nn
    & > \sqrt{(1-2f(s))^2 + \frac{4M}{N}f(s)(1-f(s))} - \frac{\pi}{24\sqrt{3}} \sqrt{\frac{M}{N}} \nn
    & > \frac{2}{3} \sqrt{(1-2f(s))^2 + \frac{4M}{N}f(s)(1-f(s))} \nn
    & = \frac{2}{3} \Delta_ H(s;N,M,f). 
\end{align}

\section{Proof of \texorpdfstring{\cref{thm:Grover_adiabatic}}{} }\label{app:Grover_proof_linear}

This Theorem bounds the error in using the Trotter formula to solve a search problem.
The majority of the proof uses the general form of the discrete adiabatic theorem in~\cref{lem:DAE_linear} to bound the diabatic errors. 
To use~\cref{lem:DAE_linear}, we first need to bound $c_1$ and $c_2$. 
This can be done by straightforward calculations similar to the proof of~\cref{lem:c1_c2_product_formula} and we can choose
\begin{equation}
    c_1(s) = 2|f'(s)|, \quad c_2(s) = 2 |f''(s)| + 4|f'(s)|^2. 
\end{equation}
For general $p$, by the definition of $f(s)$, we have 
\begin{equation}\label{eqn:proof_Grover_eq1}
    f'(s) = d_{N,p} \Delta_H(s;N,1,f(s))^p, 
\end{equation}
and 
\begin{align}
    f''(s) &= d_{N,p} \frac{\partial (\Delta^p)}{\partial f} f'(s) \nn
    &= d_{N,p} p \Delta_H(s;N,1,f)^{p-1} \frac{\partial \Delta}{\partial f} d_{N,p} \Delta_H(s;N,1,f(s))^p \nn
    &= p d_{N,p}^2 \Delta_H(s;N,1,f)^{2p-1} \frac{\partial \Delta}{\partial f}, 
\end{align}
where 
\begin{align}
    \left|\frac{\partial \Delta}{\partial f}\right| &= \left|\frac{\left(-2+ \frac{2}{N}\right)(1-2f(s))}{\sqrt{(1-2f(s))^2 + \frac{4}{N}f(s)(1-f(s))}}\right| \nn
    & \leq \frac{(2-2/N)|1-2(f(s))|}{\sqrt{(1-2f(s))^2}} \nn
    & \leq 2. 
\end{align}
Therefore we can choose 
\begin{equation}\label{eqn:app_Grover_def_c1_c2}
    c_1(s) = 2 d_{N,p} \Delta_H(s;N,1,f)^p, \quad c_2(s) = (4 p+4) d_{N,p}^2 \Delta_H(s;N,1,f)^{2p-1}. 
\end{equation}
Here 
\begin{equation}\label{eqn:proof_Grover_def_dNp}
    d_{N,p} = \int_0^1 \Delta_H(s;N,1,f)^{-p} df. 
\end{equation}

The error bound in~\cref{lem:DAE_linear} involves the multistep gap and the maximum/minimum over several subsequent steps. 
Here we show that, when $T$ is sufficiently large (which can be ensured by the choice of $T$ in~\cref{thm:Grover_adiabatic}), we can ignore the maximum and minimum and consider the single-step gap condition at sacrifice of extra multiplicative constant factors. 
We first focus on the minimum of the multistep gap of $W$. 
Since the two relevant eigenvalues determining the gap have different rotational directions on the circle for any time $s$, the multistep gap of $W$ is exactly the minimum single step gap over two subsequent steps, so
\begin{align}
    \check{\Delta}_{2,W}(s;M) &= \min_{s'=s+k/T, -1 \leq k \leq 3} \Delta_W(s';M)\nn
    &\geq \frac{2}{3} \min_{s'=s+\tau,|\tau|\leq 3/T} \Delta_H(s';M).
\end{align}
We bound the ratio of gaps at steps $s$ and $s'$ as 
\begin{align}
    \frac{\Delta_H(s;M)}{\Delta_H(s';M)} & = 1 + \frac{\Delta_H(s;M)-\Delta_H(s';M)}{\Delta_H(s';M)} \nonumber\\
    & \leq 1 + \frac{|s-s'|}{\Delta_H(s';M)} \max_{\xi} \left| \frac{\partial \Delta(\xi;M)}{\partial f}\right| |f'(\xi)| \nonumber\\
    & \leq 1 + \frac{6d_{N,p}}{T} \frac{\max_{s:|s-s'|\leq 3/T} \Delta_H(s;M)}{\Delta_H(s';M)}. 
\end{align}
Suppose that $T \geq 12 d_{N,p}$, then for any $|s-s'|\leq 3/T$, we have 
\begin{equation}\label{eqn:app_Grover_gap_unify_time}
    \frac{\Delta_H(s;M)}{\Delta_H(s';M)} \leq 2. 
\end{equation}
This implies that 
\begin{equation}
    \check{\Delta}_{2,W}(s;M) \geq \frac{2}{3} \min_{s'=s+\tau,|\tau|\leq 3/T} \Delta_H(s';M) \geq \frac{1}{3} \Delta_H(s;M). 
\end{equation}
Therefore we can directly replace the minimum of the multistep gap by the single-step gap at fixed time $s$ up to some constant factors. 
According to~\cref{eqn:app_Grover_def_c1_c2} and~\cref{eqn:app_Grover_gap_unify_time}, the functions $\hat{c}_1(s)$ and $\hat{c}_2(s)$ can also be replaced, up to constant factors, by $c_1(s)$ and $c_2(s)$.

We now compute each term in~\cref{lem:DAE_linear}. 
For simplicity of notation, we omit the dependence on $H$, $f$ and $N$ in the gap and use $\Delta(s;M)$ to denote $\Delta_H(s;N,M,f)$, and we ignore all constant factors in the computations. 
The boundary terms can be bounded as 
\begin{align}
    \frac{c_1(0)}{\Delta(0;M)^2} &= 2 d_{N,p}\, , \\
    \frac{c_1(1)}{\Delta(1;M)^2} &= 2 d_{N,p}\, . 
\end{align}
For the summations, we first bound them as 
\begin{align}
    \sum_{j=0}^{T-1} \frac{c_1(j/T)^2}{T\Delta(j/T;M)^3} 
    & =  \sum_{j=0}^{T-1} \frac{4d_{N,p}^2 \Delta(j/T;1)^{2p}}{T \Delta(j/T;M)^3} \leq 4 d_{N,p}^2 \sum_{j=0}^{T-1} \frac{1}{T} \frac{\Delta(j/T;1)^{2p-1}}{\Delta(j/T;M)^{2}}, 
\end{align}
and 
\begin{equation}
    \sum_{j=0}^{T-1} \frac{c_2(j/T)}{T\Delta(j/T;M)^2} = \sum_{j=0}^{T-1} \frac{(4p+4)d_{N,p}^2 \Delta(j/M;1)^{2p-1}}{T\Delta(j/T;M)^2} = 4(p+1)d_{N,p}^2 \sum_{j=0}^{T-1} \frac{1}{T} \frac{\Delta(j/T;1)^{2p-1}}{\Delta(j/T;M)^{2}}. 
\end{equation}
So the overall error is bounded by 
\begin{equation}\label{eqn:Grover_total_error}
    \mathcal{O}\left(\frac{d_{N,p}}{T} + \frac{d_{N,p}^2}{T} \sum_{j=0}^{T-1} \frac{1}{T}\frac{\Delta(j/T;1)^{2p-1}}{\Delta(j/T;M)^{2}} \right). 
\end{equation}
It remains to estimate $d_{N,p}$ and $\sum_{j=0}^{T-1} \frac{1}{T}\frac{\Delta(j/T;1)^{2p-1}}{\Delta(j/T;M)^{2}} $. 

Before we proceed, it is useful to preliminarily calculate $d_{N,p}$ for general $1 \leq p < 2$. 
By~\cref{eqn:proof_Grover_def_dNp} and the change of variables $f = \frac{1}{2} + \frac{1}{2\sqrt{N-1}} \tan \theta$, we have 
\begin{align}
    d_{N,p} &= \int_0^1 \frac{df}{\left( (1-2f)^2 + \frac{4}{N}f(1-f)\right)^{p/2}} \nn
    &= \int_0^1 \frac{df}{ \left( 4(1-1/N)(f-1/2)^2 + 1/N \right)^{p/2}}\nn
    & = \int_{-\arctan \sqrt{N-1}}^{\arctan \sqrt{N-1}} \frac{d\theta}{2\sqrt{N-1}\cos^2\theta \left(\frac{1}{N}\tan^2\theta + \frac{1}{N}\right)^{p/2}} \nn
    & = \frac{N^{p/2}}{\sqrt{N-1}} \int_0^{\arctan \sqrt{N-1}} \cos^{p-2}\theta \, d\theta\, . \label{eqn:Grover_dN_computation}
\end{align}
Here the integral in the first line comes from~\cref{eqn:proof_Grover_eq1} and the condition $f(1) = 1$.

\subsection{\texorpdfstring{$p=1$}{p=1}}

We first consider the limit case when $p = 1$. 
The error bound~\cref{eqn:Grover_total_error} becomes 
\begin{equation}\label{eqn:Grover_p1_total_error}
    \text{Error} \leq \mathcal{O}\left(\frac{d_{N,1}}{T} + \frac{d_{N,1}^2}{T} \sum_{j=0}^{T-1} \frac{1}{T}\frac{\Delta(j/T;1)}{\Delta(j/T;M)^2} \right). 
\end{equation}
For $d_{N,1}$, using~\cref{eqn:Grover_dN_computation} we have 
\begin{align}
    d_{N,1} &= \sqrt{\frac{N}{N-1}} \int_0^{\arctan \sqrt{N-1}} \frac{1}{\cos \theta} d\theta \nn
    &= \sqrt{\frac{N}{N-1}} \left. \log\left(\frac{1+\sin\theta}{\cos\theta}\right) \right|_0^{\arctan\sqrt{N-1}} \nn
    & = \sqrt{\frac{N}{N-1}} \log(\sqrt{N}+\sqrt{N-1}) \nn
    & \leq 2\log(N). \label{eqn:Grover_p1_dN_estimate}
\end{align}
For the summation, we first use the fact that $\Delta(s;M)$ is symmetric with respect to $1/2$ and monotonically decreases on $[0,1/2]$, and we can obtain 
\begin{align}
    \sum_{j=0}^{T-1} \frac{1}{T} \frac{\Delta(j/T;1)}{\Delta(j/T;M)^2} & \leq 2 \sum_{j=0}^{\lfloor T/2 \rfloor} \frac{1}{T} \frac{\Delta(j/T;1)}{\Delta(j/T;M)^2} \nn
    & = 2 \sum_{j=1}^{\lfloor T/2 \rfloor - 1} \frac{1}{T} \frac{\Delta(j/T;1)}{\Delta(j/T;M)^2} +  \frac{2}{T}\frac{\Delta(0;1)}{\Delta(0;M)^2} +  \frac{2}{T}\frac{\Delta(\lfloor T/2 \rfloor/T;1)}{\Delta(\lfloor T/2 \rfloor/T;M)^2} \nn
    & \leq 2 \sum_{j=1}^{\lfloor T/2 \rfloor - 1} \frac{1}{T} \frac{\Delta(j/T;1)}{\Delta(j/T;M)^2} +  \frac{2}{T}\frac{\Delta(0;1)}{\Delta(0;M)^2} +  \frac{2}{T}\frac{1}{\Delta(\lfloor T/2 \rfloor/T;M)} \nn
    & \leq 2 \sum_{j=1}^{\lfloor T/2 \rfloor - 1} \frac{1}{T} \frac{\Delta(j/T;1)}{\Delta(j/T;M)^2} + \frac{2}{T}\left(1 + \sqrt{\frac{N}{M}}\right). \label{eqn:Grover_eq_1}
\end{align}
Now we focus on the dominant summation. 
Using monotonicity of $\Delta(s;1)$, we have 
\begin{equation}
    2 \sum_{j=1}^{\lfloor T/2 \rfloor - 1} \frac{1}{T} \frac{\Delta(j/T;1)}{\Delta(j/T;M)^2} \leq 2 \sum_{j=1}^{\lfloor T/2 \rfloor - 1} \int_{(j-1)h}^{jh} \frac{\Delta(s;1)}{\Delta(j/T;M)^2} ds. 
\end{equation}
We change the denominator of the integrand from $\Delta(j/T;M)^2$ to $\Delta(s;M)^2$ at the cost of an extra multiplicative factor. 
Specifically, for any $s\in[(j-1)/T,j/T]$, using 
\begin{align}
    \left| \frac{d(\Delta(s;M)^2)}{ds} \right| & = \left| \left(-4+\frac{4M}{N}\right) (1-2f(s)) d_{N,1} \Delta(s;1)  \right| \nn
    & \leq \left(4-\frac{4M}{N}\right) (1-2f((j-1)/T)) d_{N,1} \Delta((j-1)/T;1) \nn
    & \leq 8\log(N) \Delta((j-1)/T;1)^2\, , 
\end{align}
we have 
\begin{align}
    \frac{\Delta((j-1)/T; M)^2}{\Delta(j/T; M)^2} & = 1 + \frac{\Delta((j-1)/T; M)^2 - \Delta(j/T; M)^2}{\Delta(j/T; M)^2} \nn
    & \leq 1 + \frac{1}{\Delta(j/T; M)^2} \frac{1}{T} \max_{s\in[(j-1)/T,j/T]} \left| \frac{d(\Delta(s;M)^2)}{ds} \right| \nn
    & \leq 1 + \frac{8\log(N)}{T} \frac{\Delta((j-1)/T;1)^2}{\Delta(j/T; M)^2}. 
\end{align}
Suppose that $T \geq 16 \log(N)$, we can conclude that 
\begin{equation}
    \frac{\Delta(s; M)^2}{\Delta(j/T; M)^2} \leq \frac{\Delta((j-1)/T; M)^2}{\Delta(j/T; M)^2} \leq 2. 
\end{equation}
Therefore~\cref{eqn:Grover_eq_1} becomes 
\begin{align}
    2 \sum_{j=1}^{\lfloor T/2 \rfloor - 1} \frac{1}{T} \frac{\Delta(j/T;1)}{\Delta(j/T;M)^2} &\leq 2 \sum_{j=1}^{\lfloor T/2 \rfloor - 1} \int_{(j-1)h}^{jh} \frac{\Delta(s;1)}{\Delta(j/T;M)^2} ds  \nn
    &\leq 4 \sum_{j=1}^{\lfloor T/2 \rfloor - 1} \int_{(j-1)h}^{jh} \frac{\Delta(s;1)}{\Delta(s;M)^2} ds \nn
    & \leq 4 \int_{0}^{1/2} \frac{\Delta(s;1)}{\Delta(s;M)^2} ds \nn
    & = 2 \int_{0}^{1} \frac{\Delta(s;1)}{\Delta(s;M)^2} ds. \label{eqn:Grover_summation2integral}
\end{align}
By changing the variable from $s$ to $f$ and using $df = d_{N,1} \Delta(s;1) ds$, we have 
\begin{align}
    2 \sum_{j=1}^{\lfloor T/2 \rfloor - 1} \frac{1}{T} \frac{\Delta(j/T;1)}{\Delta(j/T;M)^2} & \leq \frac{2}{d_{N,1}} \int_{0}^{1} \frac{df}{\Delta(s;M)^2} \nn
    & = \frac{2}{d_{N,1}} \frac{N/M}{\sqrt{N/M-1}} \arctan\sqrt{N/M-1} \nn
    & \leq \frac{\sqrt{2}\pi}{d_{N,1}} \sqrt{\frac{N}{M}}. 
\end{align}
Therefore~\cref{eqn:Grover_eq_1} becomes 
\begin{equation}\label{eqn:Grover_p1_summation_estimate}
    \sum_{j=0}^{T-1} \frac{1}{T} \frac{\Delta(j/T;1)}{\Delta(j/T;M)^2} \leq \frac{\sqrt{2}\pi}{d_{N,1}} \sqrt{\frac{N}{M}} + \frac{2}{T}\left(1 + \sqrt{\frac{N}{M}}\right). 
\end{equation}
Plugging~\cref{eqn:Grover_p1_dN_estimate} and~\cref{eqn:Grover_p1_summation_estimate} back to~\cref{eqn:Grover_p1_total_error}, we can bound the total error as
\begin{align}
    \text{Error} &\leq \mathcal{O}\left(\frac{d_{N,1}}{T} + \frac{d_{N,1}^2}{T} \sum_{j=0}^{T-1} \frac{1}{T}\frac{\Delta(j/T;1)}{\Delta(j/T;M)^2} \right) \nn
    & \leq \mathcal{O}\left( \frac{d_{N,1}}{T} + \frac{d_{N,1}}{T} \sqrt{2}\pi \sqrt{\frac{N}{M}} + \frac{2d_{N,1}^2}{T^2} \left(1 + \sqrt{\frac{N}{M}}\right) \right) \nn
    & \leq \mathcal{O}\left( \frac{\log(N)}{T} + \frac{\log(N)}{T} \sqrt{\frac{N}{M}} + \frac{\log^2(N)}{T^2}  \sqrt{\frac{N}{M}} \right) \nn
    & \leq \mathcal{O}\left( \frac{\log(N)}{T} \sqrt{\frac{N}{M}}  \right). 
\end{align}

\subsection{\texorpdfstring{$1<p<2$}{1<p<2} }

Now we consider $1<p<2$. 
Continuing with~\cref{eqn:Grover_dN_computation} and using integration by parts, we may obtain 
\begin{equation}\label{eqn:Grover_integration_by_parts}
    \int_0^{\arctan \sqrt{N-1}} \cos^{p-2} \theta\, d \theta = -\frac{1}{p-1} \cos^{p-1}\theta \sin \theta \Big|_{0}^{\arctan \sqrt{N-1}} + \frac{p}{p-1} \int_0^{\arctan \sqrt{N-1}} \cos^p \theta\, d \theta. 
\end{equation}
Both terms on the right hand side of~\cref{eqn:Grover_integration_by_parts} are bounded independent of $N$, so $\int_0^{\arctan\sqrt{N-1}} \cos^{p-2}\theta\, d \theta =\mathcal{O}(1)$ and thus, by~\cref{eqn:Grover_dN_computation}, 
\begin{equation}\label{eqn:Grover_gen_p_dN_estimate}
    d_{N,p} = \mathcal{O}\left(N^{\frac{p-1}{2}}\right). 
\end{equation}

For the summation, with the same reasoning of~\cref{eqn:Grover_eq_1} and~\cref{eqn:Grover_summation2integral}, we have 
\begin{align}
    \sum_{j=0}^{T-1} \frac{1}{T} \frac{\Delta(j/T;1)^{2p-1}}{\Delta(j/T;M)^2}  &\leq 2 \sum_{j=1}^{\lfloor T/2 \rfloor - 1} \frac{1}{T} \frac{\Delta(j/T;1)^{2p-1}}{\Delta(j/T;M)^2} + \frac{2}{T}\left(1 + \max\left\{ 1,\frac{N^{3/2-p}}{M^{3/2-p}} \right\} \right) \nn
    & \leq 2 \int_{0}^{1} \frac{\Delta(s;1)^{2p-1}}{\Delta(s;M)^2} ds + \frac{4}{T}\max\left\{ 1,\frac{N^{3/2-p}}{M^{3/2-p}} \right\} . \label{eqn:Grover_eq_3}
\end{align}
We focus on the integral. 
By changing the variable from $s$ to $f$ and using $df = d_{N,p} \Delta(s;1)^p ds$, we have 
\begin{align}
    \int_{0}^{1} \frac{\Delta(s;1)^{2p-1}}{\Delta(s;M)^2} ds &= \frac{1}{d_{N,p}} \int_0^1 \frac{\Delta(s;1)^{p-1}}{\Delta(s;M)^2} df \nn
    & = \frac{1}{d_{N,p}} \int_0^1 \frac{\left( (1-2f)^2 + \frac{4}{N}f(1-f) \right)^{(p-1)/2}}{(1-2f)^2 + \frac{4M}{N}f(1-f)} df \nn
    & = \frac{1}{d_{N,p}} \int_0^1 \frac{\left( 4\frac{N-1}{N}(f-1/2)^2 + \frac{1}{N} \right)^{(p-1)/2}}{4\frac{N-M}{N}(f-1/2)^2 + \frac{M}{N}} df. 
\end{align}
By the change of variable $f = \frac{1}{2} + \frac{\sqrt{M}}{2\sqrt{N-M}} \tan \theta$, we have 
\begin{align}
    \int_{0}^{1} \frac{\Delta(s;1)^{2p-1}}{\Delta(s;M)^2} ds &= \frac{1}{d_{N,p}} \int_{-\arctan \frac{\sqrt{N-M}}{\sqrt{M}}}^{\arctan \frac{\sqrt{N-M}}{\sqrt{M}}} \frac{\left( \frac{N-1}{N}\frac{M}{N-M}\tan^2\theta + \frac{1}{N} \right)^{(p-1)/2}}{\frac{M}{N} (\tan^2\theta + 1) } \frac{\sqrt{M}}{\sqrt{N-M}} \frac{1}{\cos^2\theta} d\theta \nn
    & = \frac{2}{d_{N,p}} \frac{N}{\sqrt{M(N-M)}}  \int_{0}^{\arctan \frac{\sqrt{N-M}}{\sqrt{M}}} \left( \frac{N-1}{N}\frac{M}{N-M}\tan^2\theta + \frac{1}{N} \right)^{(p-1)/2} d\theta. 
\end{align}
Notice that, for $\theta \in[0,\pi/2)$, the function $\tan^2\theta$ monotonically increases. 
So we can separate the integral into two parts as 
\begin{align}
    & \quad \int_{0}^{1} \frac{\Delta(s;1)^{2p-1}}{\Delta(s;M)^2} ds \nn
    &= \frac{2}{d_{N,p}} \frac{N}{\sqrt{M(N-M)}} \left( \int_{0}^{\arctan \frac{\sqrt{N-M}}{\sqrt{M(N-1)}}} + \int_{\arctan \frac{\sqrt{N-M}}{\sqrt{M(N-1)}}}^{\arctan \frac{\sqrt{N-M}}{\sqrt{M}}} \right)\left( \frac{N-1}{N}\frac{M}{N-M}\tan^2\theta + \frac{1}{N} \right)^{\frac{p-1}{2}} d\theta \nn
    & \leq \frac{2}{d_{N,p}} \frac{N}{\sqrt{M(N-M)}} \left( \int_{0}^{\arctan \frac{\sqrt{N-M}}{\sqrt{M(N-1)}}} \left(\frac{2}{N}\right)^{\frac{p-1}{2}}d\theta + \int_{\arctan \frac{\sqrt{N-M}}{\sqrt{M(N-1)}}}^{\arctan \frac{\sqrt{N-M}}{\sqrt{M}}} \left(2\frac{N-1}{N}\frac{M}{N-M}\tan^2\theta \right)^{\frac{p-1}{2}}  d\theta \right) \nn
    & = \frac{2}{d_{N,p}} \frac{N}{\sqrt{M(N-M)}} \left( \frac{2^{\frac{p-1}{2}}}{N^{\frac{p-1}{2}}} \arctan \frac{\sqrt{N-M}}{\sqrt{M(N-1)}} \right. \nn
    & \quad\quad\quad\quad\quad\quad\quad\quad\quad\quad\left. + 2^{\frac{p-1}{2}} \left(\frac{N-1}{N}\right)^{\frac{p-1}{2}} \frac{M^{\frac{p-1}{2}}}{(N-M)^{\frac{p-1}{2}}} \int_{\arctan \frac{\sqrt{N-M}}{\sqrt{M(N-1)}}}^{\arctan \frac{\sqrt{N-M}}{\sqrt{M}}} \tan^{p-1}\theta d\theta  \right)
\end{align}
Using $1<p<2$ and thus 
\begin{equation}
    \int_{\arctan \frac{\sqrt{N-M}}{\sqrt{M(N-1)}}}^{\arctan \frac{\sqrt{N-M}}{\sqrt{M}}} \tan^{p-1}\theta d\theta \leq \int_0^{\pi/2} \tan^{p-1}\theta d\theta = \frac{\pi}{2} \text{sec}\left(\frac{(p-1)\pi}{2}\right) = \mathcal{O}(1), 
\end{equation}
we have 
\begin{align}
    \int_{0}^{1} \frac{\Delta(s;1)^{2p-1}}{\Delta(s;M)^2} ds & \leq \mathcal{O}\left( \frac{1}{d_{N,p}}\frac{N}{\sqrt{M(N-M)}} \left(\frac{1}{N^{\frac{p-1}{2}}} \frac{\sqrt{N-M}}{\sqrt{M(N-1)}} + \frac{M^{\frac{p-1}{2}}}{(N-M)^{\frac{p-1}{2}}} \right)  \right) \nn
    & \leq \mathcal{O}\left( \frac{1}{d_{N,p}}\frac{\sqrt{N}}{\sqrt{M}} \left(\frac{1}{N^{\frac{p-1}{2}}\sqrt{M}} + \frac{M^{\frac{p-1}{2}}}{N^{\frac{p-1}{2}}} \right)  \right) \nn
    & \leq \mathcal{O}\left( \frac{1}{d_{N,p}} \frac{N^{1-\frac{p}{2}}}{M^{1-\frac{p}{2}}} \right). 
\end{align}
Plugging this back into~\cref{eqn:Grover_eq_3}, we have 
\begin{align}
     \sum_{j=0}^{T-1} \frac{1}{T} \frac{\Delta(j/T;1)^{2p-1}}{\Delta(j/T;M)^2} 
    & \leq \mathcal{O}\left(  \frac{1}{d_{N,p}} \frac{N^{1-\frac{p}{2}}}{M^{1-\frac{p}{2}}} + \frac{1}{T}\max\left\{ 1,\frac{N^{3/2-p}}{M^{3/2-p}} \right\} \right). \label{eqn:Grover_gen_p_summation_estimate}
\end{align}
Plugging~\cref{eqn:Grover_gen_p_dN_estimate} and~\cref{eqn:Grover_gen_p_summation_estimate} back to~\cref{eqn:Grover_total_error}, we can bound the total error by 
\begin{align}
    &\quad  \mathcal{O}\left(\frac{d_{N,p}}{T} + \frac{d_{N,p}^2}{T} \sum_{j=0}^{T-1} \frac{1}{T}\frac{\Delta(j/T;1)^{2p-1}}{\Delta(j/T;M)^{2}} \right) \nn
    & \leq \mathcal{O}\left(\frac{N^{\frac{p-1}{2}}}{T} + \frac{d_{N,p}^2}{T}\left(  \frac{1}{d_{N,p}} \frac{N^{1-\frac{p}{2}}}{M^{1-\frac{p}{2}}} + \frac{1}{T}\max\left\{ 1,\frac{N^{3/2-p}}{M^{3/2-p}} \right\} \right)  \right) \nn
    & \leq \mathcal{O}\left(\frac{N^{\frac{p-1}{2}}}{T} + \frac{1}{T} \frac{\sqrt{N}}{M^{1-\frac{p}{2}}}  + \frac{N^{p-1}}{T^2}\max\left\{ 1,\frac{N^{3/2-p}}{M^{3/2-p}} \right\}  \right) \nn
    & \leq \mathcal{O}\left(\frac{1}{T} \frac{\sqrt{N}}{M^{1-\frac{p}{2}}}  + \frac{N^{p-1}}{T^2}\max\left\{ 1,\frac{N^{3/2-p}}{M^{3/2-p}} \right\}  \right). 
\end{align}

\section{Proof of \texorpdfstring{\cref{thm:Grover_adiabatic_BC}}{} }\label{app:Grover_proof_exp}

In this section we give the complete proof of~\cref{thm:Grover_adiabatic_BC}. 
We start with summarizing the properties of the function $f(s)$ in the following lemma. 
\begin{lemma}
    Let $f(s)$ be the function defined in~\cref{eqn:Grover_scheduling_def_BC}. 
    Then 
    \begin{enumerate}
        \item $f(s) \in C^{\infty}[0,1]$, 
        \item $f(0) = 0$, $f(1) = 1$, and $f^{(k)}(0) = f^{(k)}(1) = 0$ for all $k \geq 1$, 
        \item for all $s\in[0,1/2]$ and $k \geq 1$, $f(s) + f(1-s) = 1$ and $f^{(k)}(s) = f^{(k)}(1-s)$. 
    \end{enumerate}
\end{lemma}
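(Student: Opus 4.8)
The plan is to derive all three claims about $f$ from the corresponding properties of the single building block $g$ in~\cref{eqn:Grover_glue_BC}, together with the symmetry of its integrand. First I would record the standard facts about the flat profile $\rho(u) = \exp(-1/(u(1-u)))$: setting $\rho \equiv 0$ outside $(0,1)$, it extends to a $C^\infty$ function on $\mathbb{R}$ with $\rho^{(m)}(0) = \rho^{(m)}(1) = 0$ for every $m \geq 0$; this follows from the classical estimate $\lim_{x\to 0^+} x^{-n} e^{-1/x} = 0$ for all $n$, applied near each endpoint. Since $g' = c_e^{-1}\rho$ and $g(s) = c_e^{-1}\int_0^s \rho$, this at once gives $g \in C^\infty[0,1]$, $g(0) = 0$, $g(1) = 1$ (by the definition of $c_e$ in~\cref{eqn:def_ce}), and $g^{(k)}(0) = g^{(k)}(1) = 0$ for all $k \geq 1$. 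I would also establish the symmetry $g(s) + g(1-s) = 1$: since $\rho(u) = \rho(1-u)$, the substitution $u \mapsto 1-u$ gives $\int_{1-s}^1 \rho = \int_0^s \rho$, hence $g(1-s) = c_e^{-1}\bigl(\int_0^1 \rho - \int_{1-s}^1 \rho\bigr) = 1 - g(s)$.

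Next I would translate these into the three claims. For claim~1, $f$ is $C^\infty$ on each of $[0,\tfrac12)$ and $(\tfrac12,1]$ by the chain rule, so only the matching at $s = \tfrac12$ needs attention: the one-sided values of $f$ there are $\tfrac12 g(1) = \tfrac12$ and $\tfrac12 + \tfrac12 g(0) = \tfrac12$, while for every $k \geq 1$ the one-sided derivatives are $2^{k-1}g^{(k)}(1) = 0$ and $2^{k-1}g^{(k)}(0) = 0$; since all derivatives from both sides agree, $f \in C^\infty[0,1]$. Claim~2 is then immediate from the chain rule: $f(0) = \tfrac12 g(0) = 0$, $f(1) = \tfrac12 + \tfrac12 g(1) = 1$, and $f^{(k)}(0) = 2^{k-1} g^{(k)}(0) = 0$, $f^{(k)}(1) = 2^{k-1} g^{(k)}(1) = 0$ for all $k \geq 1$.

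For claim~3, if $s \in [0,\tfrac12]$ then $1-s \in [\tfrac12,1]$, so $f(s) + f(1-s) = \tfrac12 g(2s) + \tfrac12 + \tfrac12 g(1-2s) = \tfrac12 + \tfrac12\bigl(g(2s) + g(1-2s)\bigr) = 1$ by the symmetry of $g$. Differentiating the identity $f(s) + f(1-s) = 1$ a total of $k$ times in $s$ then yields $f^{(k)}(s) = (-1)^{k+1} f^{(k)}(1-s)$; in particular $|f^{(k)}(s)| = |f^{(k)}(1-s)|$, which is the form in which this symmetry is used in the error bounds of~\cref{thm:Grover_adiabatic_BC}.

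I expect no genuine obstacle here. The only point needing a little care is the $C^\infty$ gluing at $s = \tfrac12$, and this reduces entirely to the observation that every derivative of order $\geq 1$ of $g$ vanishes at both endpoints, so that the one-sided derivatives of $f$ of all orders coincide (all being zero). The supporting fact that $\rho$ is a flat function at $0$ and $1$ is classical and can be invoked without detailed computation.
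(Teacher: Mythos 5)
Your proposal is correct and complete, and the route you take -- reducing everything to the flatness and symmetry of $\rho(u)=\exp\left(-1/(u(1-u))\right)$, hence of $g$ from~\cref{eqn:Grover_glue_BC}, and then handling the glueing of the two halves of~\cref{eqn:Grover_scheduling_def_BC} at $s=\tfrac12$ via the vanishing one-sided derivatives -- is the intended, standard argument. In fact the paper states this lemma without giving a proof at all (it is presented as a summary of properties of $f$), so there is no argument of the paper's to compare against; your write-up supplies exactly the missing details, and the only point needing care (that all one-sided derivatives of both pieces at $s=\tfrac12$ are zero, so the glued function is $C^\infty$) is handled correctly.

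One substantive remark: your differentiation of $f(s)+f(1-s)=1$ gives $f^{(k)}(s)=(-1)^{k+1}f^{(k)}(1-s)$, and that sign is genuinely there. For instance, for $s<\tfrac12$ one has $f''(s)=2c_e^{-1}\rho'(2s)$ while $f''(1-s)=2c_e^{-1}\rho'(1-2s)=-f''(s)$, since $\rho'$ is antisymmetric about $u=\tfrac12$. So item 3 of the lemma as literally stated, $f^{(k)}(s)=f^{(k)}(1-s)$ for all $k\geq 1$, holds only for odd $k$; for even $k$ the correct identity carries the factor $(-1)^{k+1}$, equivalently $|f^{(k)}(s)|=|f^{(k)}(1-s)|$. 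Your observation that the absolute-value form is all that is used downstream is accurate: the proof of~\cref{thm:Grover_adiabatic_BC} folds the sums over the midpoint only through $f'(j/T)^2$ and $|f''(j/T)|$, so the paper's statement should simply be read with absolute values (or with the sign $(-1)^{k+1}$) for even $k$, and your proof is the right correction.
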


To prove~\cref{thm:Grover_adiabatic_BC}, we will frequently use the following lemma. 
\begin{lemma}\label{lem:Grover_BC_tech}
    Let $\Delta_H(s)$ be the gap defined in~\cref{eqn:Grover_gap_H}. 
    Then for any $s\in [0,1/2]$ and any integer $l \geq 0$, we have 
    \begin{equation}
        \frac{\exp\left(-\frac{1}{2s(1-2s)}\right)}{ \Delta_H(s) (2s(1-2s))^{l} } \leq 2^{l+3} c_e^{-l/2} (l+1)^{l+2} \left(\log(N/M)\right)^{l+2} \, 
    \end{equation}
    where $c_e$ is the normalization factor defined in~\cref{eqn:def_ce}. 
\end{lemma}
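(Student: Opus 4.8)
The plan is to separately lower-bound $\Delta_H(s)$ on $[0,1/2]$ and to control the quotient of the exponential factor by the polynomial factor $(2s(1-2s))^l$. For the gap, recall from~\cref{eqn:Grover_gap_H} that $\Delta_H(s;N,M,f)=\sqrt{(1-2f(s))^2+\tfrac{4M}{N}f(s)(1-f(s))}$, so on $[0,1/2]$ (where $f(s)\le 1/2$, $1-f(s)\ge 1/2$) one has $\Delta_H(s)\ge\sqrt{\tfrac{4M}{N}f(s)\cdot\tfrac12}=\sqrt{\tfrac{2M}{N}f(s)}$, and also trivially $\Delta_H(s)\ge|1-2f(s)|$. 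First I would use the explicit form $f(s)=\tfrac12 g(2s)$ for $s\in[0,1/2]$ with $g$ as in~\cref{eqn:Grover_glue_BC} to get a lower bound $f(s)\ge c_1 \exp(-1/(2s(1-2s)))\cdot(\text{something})$ for $s$ near $0$ — more precisely, $g(r)=c_e^{-1}\int_0^r\exp(-1/(r'(1-r')))dr'$, and for small $r$ the integrand is increasing, so $g(r)\ge c_e^{-1}\cdot \tfrac{r}{2}\exp(-1/(r(1-r)))$ for, say, $r\le 1/2$; setting $r=2s$ gives $f(s)\ge c_e^{-1} s\exp(-1/(2s(1-2s)))$ on $[0,1/4]$. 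Hence $\Delta_H(s)\ge\sqrt{\tfrac{2M}{N}}\sqrt{c_e^{-1}s}\exp(-1/(4s(1-2s)))$ on $[0,1/4]$.

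Next I would split the interval $[0,1/2]$ into a "boundary" part near $0$ and a "bulk" part bounded away from $0$. On the bulk part $s\in[\delta,1/2]$ with $\delta$ a fixed small constant (independent of $N,M$), the exponential $\exp(-1/(2s(1-2s)))\le 1$, $2s(1-2s)\ge c_\delta>0$, and $\Delta_H(s)\gtrsim \sqrt{M/N}$ in the worst case, so the whole left-hand side is $\mathcal{O}(\sqrt{N/M})$, which is certainly dominated by the claimed right-hand side since $\log(N/M)\gtrsim 1$ when $N\ge 2M$ — actually one needs to be a little careful here and I would likely want the bulk bound to come out as $\mathcal{O}((\log(N/M))^{l+2})$; this forces the cut point $\delta$ to be chosen depending on $N/M$, namely $\delta\sim 1/\log(N/M)$, so that $\Delta_H(\delta)\gtrsim |1-2f(\delta)|$ is already of constant order. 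On the boundary part $s\le\delta\sim 1/\log(N/M)$, I substitute the gap lower bound above: the left-hand side is at most
\begin{equation}
\frac{\exp(-1/(2s(1-2s)))}{\sqrt{2M/N}\sqrt{c_e^{-1}s}\exp(-1/(4s(1-2s)))(2s(1-2s))^l}=\sqrt{\frac{N}{2M}}\sqrt{\frac{c_e}{s}}\,\frac{\exp(-1/(4s(1-2s)))}{(2s(1-2s))^l}.
\end{equation}
Then I would use the elementary bound $x^{-a}e^{-1/(2x)}\le (2a)^{a}e^{-a}\le (2a)^a$ for $x>0$ (maximizing $x\mapsto x^{-a}e^{-1/(2x)}$), applied with $x=2s(1-2s)$ and $a$ chosen so the residual power of $s$ and the prefactor $\sqrt{N/M}$ get absorbed. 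The key trick is that $\sqrt{N/M}=\exp(\tfrac12\log(N/M))$, and $\exp(-1/(4s(1-2s)))$ decays faster than any exponential of $1/s$; taking $a$ proportional to $\log(N/M)$ lets the $e^{-1/(8s)}$-type factor swallow the $\sqrt{N/M}$ factor while the remaining $(2a)^a$ produces exactly the $(\log(N/M))^{l+2}$-type growth. Balancing the exponents carefully yields the stated bound $2^{l+3}c_e^{-l/2}(l+1)^{l+2}(\log(N/M))^{l+2}$.

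The main obstacle I expect is the bookkeeping in the boundary region: one must choose the auxiliary exponent $a$ (equivalently, the way one writes $\exp(-1/(4s(1-2s)))=\exp(-1/(8s(1-2s)))\cdot\exp(-1/(8s(1-2s)))$ and uses one half to beat $\sqrt{N/M}$ and the other half to beat the polynomial) so that \emph{all} the constants come out cleanly as powers of $l+1$, $\log(N/M)$, and $c_e$, with no hidden dependence on $N$ or $M$ except through $\log(N/M)$. Getting the precise numerical constants $2^{l+3}$ and $(l+1)^{l+2}$ rather than merely $\mathcal{O}_l(\cdot)$ will require tracking the optimizer of $x\mapsto x^{-a}e^{-c/x}$ exactly and being slightly generous (rounding $a$ up to an integer, using $1-2s\ge 1/2$ on $[0,1/4]$, etc.). The bulk estimate, by contrast, is routine once the cut point $\delta\asymp 1/\log(N/M)$ is fixed, since there $\Delta_H$ is bounded below by a constant and $2s(1-2s)$ is bounded below by $\asymp 1/\log(N/M)$, giving at most $(\log(N/M))^l$ up to constants.
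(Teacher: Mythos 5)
Your plan misidentifies where the difficulty of this lemma lies, and the step meant to cover the remaining region rests on false claims. The factor $2s(1-2s)$ vanishes at \emph{both} endpoints of $[0,1/2]$, so on your ``bulk'' $[\delta,1/2]$ the assertion $2s(1-2s)\ge c_\delta>0$ is simply false ($2s(1-2s)\to 0$ as $s\to 1/2$), and the fallback estimate $\mathcal{O}(\sqrt{N/M})$ is \emph{not} dominated by the right-hand side, which is only polylogarithmic in $N/M$. Moving the cut to $\delta\sim 1/\log(N/M)$ does not repair this, because the genuinely hard region is near $s=1/2$, not near $s=0$: there $\Delta_H(s)$ shrinks to $\sqrt{M/N}$ while $2s(1-2s)\to 0$, and using only $\Delta_H\gtrsim\sqrt{M/N}$ together with $\sup_{x>0}x^{-a}e^{-c/x}$ yields at best $\sqrt{N/M}$ times $l$-dependent constants, which is polynomially too large. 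To obtain $(\log(N/M))^{l+2}$ one must use the other part of the gap, $\Delta_H(s)\ge\tfrac12\bigl(1-g(2s)+\sqrt{M/N}\bigr)$, and exploit that $1-g(2s)$ itself decays like $\exp(-1/(2s(1-2s)))$ times a square of $1-2s$; the crossover where $1-g(2s)\approx\sqrt{M/N}$ occurs at $1-2s\asymp 1/\log(N/M)$ and is precisely what produces the polylogarithmic factor. The paper's proof handles this by studying $F(t)=\exp(-1/(t(1-t)))/\bigl[(1-g(t)+\sqrt{M/N})(t(1-t))^l\bigr]$ with $t=2s$, locating its maximizer $t_*$ via the critical-point equation, proving $1-t_*\ge\sqrt{c_e}/((l+1)\log(N/M))$, and then using the critical-point identity to rewrite $F(t_*)$ as $c_e(2t_*-1)(1-lt_*(1-t_*))/(t_*(1-t_*))^{l+2}$. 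Your proposal contains no mechanism of this kind near $s=1/2$.

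Two further points. Near $s=0$ your effort is both unnecessary and partly incorrect: for $s\le 1/4$ one has $f(s)\le f(1/4)=1/4$, hence $\Delta_H(s)\ge 1-2f(s)\ge 1/2$, and then $\sup_{x>0}e^{-1/x}x^{-l}\le (l/e)^l$ already bounds that region by an $l$-dependent constant with no $N,M$ dependence, so the lower bound on $f(s)$ is not needed; moreover your claimed bound $g(r)\ge c_e^{-1}\tfrac r2\exp(-1/(r(1-r)))$ has the monotonicity backwards (an increasing integrand gives $\int_0^r e^{-1/(r'(1-r'))}dr'\le r\,e^{-1/(r(1-r))}$ as an \emph{upper} bound; a lower bound of this type only carries $e^{-1/((r/2)(1-r/2))}\approx e^{-2/r}$, which is far smaller). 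A region-splitting proof is possible, but the split must be placed at $1-2s\asymp 1/\log(N/M)$ near $s=1/2$, using the $1-g(2s)$ portion of the gap on one side and $\sqrt{M/N}$ on the other; as written, your argument does not establish the lemma.
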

\begin{proof}
    We start with 
    \begin{align}
        \Delta_H(s) &= \sqrt{(1-2f(s))^2 + \frac{4M}{N}f(s)(1-f(s))} \nn
        & = \sqrt{(1-M/N) (1-2f(s))^2 + \frac{M}{N}} \nn
        & = \sqrt{(1-M/N) (1-g(2s))^2 + \frac{M}{N}} \nn
        & \geq \frac{1}{\sqrt{2}} \sqrt{(1-g(2s))^2 + \frac{M}{N}} \nn
        & \geq \frac{1}{2} \left( 1-g(2s) + \sqrt{\frac{M}{N}} \right). 
    \end{align}
    Let $t = 2s \in [0,1]$, then 
    \begin{equation}
        \frac{\exp\left(-\frac{1}{2s(1-2s)}\right)}{ \Delta_H(s) (2s(1-2s))^{l} } \leq \frac{2\exp\left(-\frac{1}{t(1-t)}\right)}{ \left(1-g(t) + \sqrt{\frac{M}{N}}\right) (t(1-t))^{l} } = 2F(t)\, ,
    \end{equation}
    where we define 
    \begin{equation}
        F(t) := \frac{\exp\left(-\frac{1}{t(1-t)}\right)}{ \left(1-g(t) + \sqrt{\frac{M}{N}}\right) (t(1-t))^{l} }. 
    \end{equation}
    It suffices to bound $F(t)$. 
    This can be done with nuanced modifications of the proof of~\cite[Lemma 14]{an2019quantum}, but for completeness we provide the details again. 
    Since $F(0) = F(1) = 0$ and $F(t) > 0$ for $t \in (0,1)$, there exists a $t_* \in (0,1)$ such that $F(t_*)$ reaches its maximum. 
    At this point, $F'(t_*) = 0$, where $F'(t)$ can be computed as 
    \begin{align}
        &\quad \left(1-g(t) + \sqrt{\frac{M}{N}}\right)^2 (t(1-t))^{2l} F'(t) \nn
        & = \left(1-g(t) + \sqrt{\frac{M}{N}}\right) (t(1-t))^{l} \exp\left(-\frac{1}{t(1-t)}\right) \frac{1-2t}{(t(1-t))^2} \nn
        & \quad - \exp\left(-\frac{1}{t(1-t)}\right) \left(-c_e^{-1} \exp\left(-\frac{1}{t(1-t)}\right) (t(1-t))^l + \left(1-g(t) + \sqrt{\frac{M}{N}}\right) l (t(1-t))^{l-1}(1-2t)  \right) \nn
        & = \exp\left(-\frac{1}{t(1-t)}\right) (t(1-t))^{l-2} G(t), 
    \end{align}
    where 
    \begin{equation}
        G(t) = \left(1-g(t) + \sqrt{\frac{M}{N}}\right)(1-2t) (1-lt(1-t)) + c_e^{-1} \exp\left(-\frac{1}{t(1-t)}\right) (t(1-t))^2. 
    \end{equation}
    For $t \in [1 - \frac{\sqrt{c_e}}{(l+1)\log(N/M)}, 1]$ where $\sqrt{c_e} \approx 0.08$, we have 
    \begin{equation}
        1-g(t) + \sqrt{\frac{M}{N}} \geq \sqrt{\frac{M}{N}}, 
    \end{equation}
    \begin{equation}
        1 - 2t \leq \frac{2\sqrt{c_e}}{(l+1)\log(N/M)} - 1 \leq \frac{2\sqrt{c_e}}{\log 2} - 1 , 
    \end{equation}
    and 
    \begin{equation}
        1-lt(1-t) \geq 1 - l(1-t) \geq 1 - \frac{\sqrt{c_e}l}{(l+1)\log(N/M)} \geq 1 - \frac{\sqrt{c_e}}{\log(2)}. 
    \end{equation}
    Then 
    \begin{equation}
        \left(1-g(t) + \sqrt{\frac{M}{N}}\right)(1-2t) (1-lt(1-t)) \leq - \left(1-\frac{2\sqrt{c_e}}{\log 2}\right)\left(1-\frac{\sqrt{c_e}}{\log 2}\right) \sqrt{\frac{M}{N}} \leq -\frac{3}{5} \sqrt{\frac{M}{N}}. 
    \end{equation}
    Meanwhile, for the same range of $t$, 
    \begin{align}
        c_e^{-1} \exp\left(-\frac{1}{t(1-t)}\right) (t(1-t))^2 & \leq c_e^{-1} \exp\left(-\frac{1}{t(1-t)}\right) (1-t)^2 \nn
        & \leq \frac{1}{c_e} \exp\left(-\left(1 - \frac{\sqrt{c_e}}{(l+1)\log(N/M)}\right)^{-1}\frac{(l+1)\log(N/M)}{\sqrt{c_e}}\right) \left(\frac{\sqrt{c_e}}{(l+1)\log(N/M)}\right)^2 \nn
        & \leq \frac{1}{(\log 2)^2} \exp\left(-\left(1 - \frac{\sqrt{c_e}}{(l+1)\log(N/M)}\right)^{-1}\frac{(l+1)\log(N/M)}{\sqrt{c_e}}\right) \nn
        & = \frac{1}{(\log 2)^2} \left(\frac{N}{M}\right)^{-\left(1 - \frac{\sqrt{c_e}}{(l+1)\log(N/M)}\right)^{-1} \frac{l+1}{\sqrt{c_e}}} \nn
        & \leq \frac{1}{(\log 2)^2} \left(\frac{N}{M}\right)^{-\frac{l+1}{\sqrt{c_e}}} \nn
        & \leq \frac{1}{(\log 2)^2} \left(\frac{N}{M}\right)^{-10} \nn
        & \leq \frac{1}{2^7} \sqrt{\frac{M}{N}}. 
    \end{align}
    Therefore, for $t \in [1 - \frac{\sqrt{c_e}}{(l+1)\log(N/M)}, 1]$, we always have 
    \begin{equation}\label{eqn:Grover_BC_eq1}
        G(t) \leq -\frac{3}{5} \sqrt{\frac{M}{N}} + \frac{1}{2^7} \sqrt{\frac{M}{N}} < 0. 
    \end{equation}
    Since $F'(t_*) = 0$, we must have $G(t_*) = 0$, so~\cref{eqn:Grover_BC_eq1} implies that $t_* \leq 1 - \frac{\sqrt{c_e}}{(l+1)\log(N/M)}$. 
    Meanwhile, notice that for $t\in[0,1/2)$, we have $F(t) < F(1-t)$, so $t_* \geq 1/2$. 
    Therefore we obtain the range of $t_*$ that 
    \begin{equation}
        t_* \in \left[\frac{1}{2}, 1 - \frac{\sqrt{c_e}}{(l+1)\log(N/M)}\right] . 
    \end{equation}
    We now bound $F(t)$. 
    Using $G(t_*) = 0$ again, we have 
    \begin{equation}
        \frac{\exp\left(-\frac{1}{t_*(1-t_*)}\right)}{ 1-g(t_*) + \sqrt{\frac{M}{N}}} = \frac{c_e(2t_*-1)(1-lt_*(1-t_*))}{(t_*(1-t_*))^2}. 
    \end{equation}
    Therefore 
    \begin{align}
        F(t) & \leq F(t_*) \nn
        & = \frac{\exp\left(-\frac{1}{t_*(1-t_*)}\right)}{ \left(1-g(t_*) + \sqrt{\frac{M}{N}}\right) (t_*(1-t_*))^{l} } \nn
        & = \frac{c_e(2t_*-1)(1-lt_*(1-t_*))}{(t_*(1-t_*))^{l+2}} \nn
        & \leq \frac{c_e}{(t_*(1-t_*))^{l+2}} \nn
        & \leq \frac{2^{l+2}c_e}{(1-t_*)^{l+2}} \nn
        & \leq 2^{l+2}c_e \left( \frac{(l+1)\log(N/M)}{\sqrt{c_e}} \right)^{l+2}  \nn
        & = 2^{l+2} c_e^{-l/2} (l+1)^{l+2} \left(\log(N/M)\right)^{l+2}. 
    \end{align}
\end{proof}

We are ready to present the main proof of~\cref{thm:Grover_adiabatic_BC}. 
The idea is separately using~\cref{lem:DAE_linear} and~\cref{lem:DAE_exp} to obtain two independent error bounds, so the overall error is bounded by the minimum of the two. 

\begin{proof}[Proof of~\cref{thm:Grover_adiabatic_BC}]
    By differentiating the product walk operator, we may choose
    \begin{equation}\label{eqn:app_Grover_exp_def_c1_c2}
        c_1(s) = 2f'(s), \quad c_2(s) = 2 |f''(s)| + 4f'(s)^2. 
    \end{equation}
    We use the similar trick as in~\cref{app:Grover_proof_linear} to get rid of the hat and check notations and the multistep gap in~\cref{lem:DAE_linear}. 
    Specifically, for $|s-s'|\leq 3/T$, we have  
    \begin{align}
    \frac{\Delta_H(s;M)}{\Delta_H(s';M)} & = 1 + \frac{\Delta_H(s;M)-\Delta_H(s';M)}{\Delta_H(s';M)} \nn
    & \leq 1 + \frac{|s-s'|}{\Delta_H(s';M)} \max_{\xi} \left| \frac{\partial \Delta(\xi;M)}{\partial f}\right| |f'(\xi)| \nn
    & \leq 1 + \frac{6}{T \Delta_H(s';M)} \max  |f' | \nn
    & \leq 1 + \frac{6 \max  |f' | }{T \sqrt{M/N}}. 
    \end{align}
    Here $\max |f'|$ is an absolute constant which is independent of $N$ and $M$, so we may choose $T \geq 6 \sqrt{N/M} \max|f'|$ and thus 
    \begin{equation}
        \frac{\Delta_H(s;M)}{\Delta_H(s';M)} \leq 2. 
    \end{equation}
    So we can bound $\check{\Delta}_{2,W}(s;M)$ from below by $\frac{1}{3}\Delta_H(s;M)$ and drop the hat notations in $c_1$ and $c_2$ in the same way as in~\cref{app:Grover_proof_linear}. 
    
    Now we bound each term in the upper bound of~\cref{lem:DAE_linear}. 
    The two boundary terms are $0$ due to the boundary cancellation. 
    For the summations, we use symmetry of $f(s)$ and~\cref{lem:Grover_BC_tech} and obtain 
    \begin{align}
        \sum_{j=0}^{T-1} \frac{c_1(j/T)^2}{T \Delta_W(j/T;M)^3} & \leq \frac{27}{2} \sum_{j=0}^{T-1} \frac{f'(j/T)^2}{T \Delta_H(j/T;M)^3} \nn
        & = 27 \sum_{j=0}^{\lfloor T/2 \rfloor} \frac{g'(2j/T)^2}{T \Delta_H(j/T;M)^3} \nn
        & = \frac{27}{c_e^2} \sum_{j=0}^{\lfloor T/2 \rfloor} \frac{1}{T \Delta_H(j/T;M)} \left( \frac{\exp\left(-\frac{1}{2j/T(1-2j/T)}\right)}{ \Delta_H(j/T;M) } \right)^2 \nn
        & \leq \frac{27}{c_e^2} \sum_{j=0}^{\lfloor T/2 \rfloor} \frac{1}{T \sqrt{M/N}} 2^{6}  \left(\log(N/M)\right)^{4} \nn
        & = \mathcal{O} \left( \sqrt{\frac{N}{M}} \left(\log\frac{N}{M}\right)^4  \right), 
    \end{align}
    and 
    \begin{align}
        & \quad \sum_{j=0}^{T-1} \frac{c_2(j/T)}{T \Delta_W(j/T;M)^2} \nn
        & \leq  \frac{9}{2} \sum_{j=0}^{T-1} \frac{|f''(j/T)|}{T \Delta_H(j/T;M)^2}  + 9 \sum_{j=0}^{T-1} \frac{f'(j/T)^2}{T \Delta_H(j/T;M)^2} \nn
        & = 18 \sum_{j=0}^{\lfloor T/2 \rfloor} \frac{|g''(2j/T)|}{T \Delta_H(j/T;M)^2}  + 18 \sum_{j=0}^{\lfloor T/2 \rfloor} \frac{g'(2j/T)^2}{T \Delta_H(j/T;M)^2} \nn
        & = \frac{18}{c_e} \sum_{j=0}^{\lfloor T/2 \rfloor} \frac{\exp\left(-\frac{1}{2j/T(1-2j/T)}\right) |1-4j/T|}{T \Delta_H(j/T;M)^2 (2j/T(1-2j/T))^2 }  + \frac{18}{c_e^2} \sum_{j=0}^{\lfloor T/2 \rfloor} \frac{\left(\exp\left(-\frac{1}{2j/T(1-2j/T)}\right)\right)^2}{T \Delta_H(j/T;M)^2} \nn
        & \leq \frac{18}{c_e} \sum_{j=0}^{\lfloor T/2 \rfloor} \frac{1 }{T \Delta_H(j/T;M) } \frac{\exp\left(-\frac{1}{2j/T(1-2j/T)}\right)}{\Delta_H(j/T;M) (2j/T(1-2j/T))^2} + \frac{18}{c_e^2} \sum_{j=0}^{\lfloor T/2 \rfloor} \frac{1}{T}\left(\frac{\left(\exp\left(-\frac{1}{2j/T(1-2j/T)}\right)\right)}{\Delta_H(j/T;M)}\right)^2 \nn
        & \leq \frac{18}{c_e} \sum_{j=0}^{\lfloor T/2 \rfloor} \frac{1 }{T \sqrt{M/N} } 2^{5} c_e^{-1} 3^{4} \left(\log(N/M)\right)^{4}  + \frac{18}{c_e^2} \sum_{j=0}^{\lfloor T/2 \rfloor} \frac{1}{T} 2^{6}  \left(\log(N/M)\right)^{4} \nn
        & = \mathcal{O} \left( \sqrt{\frac{N}{M}} \left(\log\frac{N}{M}\right)^4  \right). 
    \end{align}
    Therefore~\cref{lem:DAE_linear} implies an error bound 
    \begin{equation}\label{eqn:Grover_BC_bound_linear}
        \mathcal{O} \left(\frac{1}{T} \sqrt{\frac{N}{M}} \left(\log\frac{N}{M}\right)^4  \right). 
    \end{equation}
    
    Meanwhile, since the scheduling function $f(s)$ satisfies the boundary cancellation condition,~\cref{lem:DAE_exp} implies another error bound 
    \begin{equation}\label{eqn:Grover_BC_bound_exp}
        \frac{C_k(N,M)}{T^k}
    \end{equation}
    for any integer $k \geq 1$. 
    Here the constant $C_k(N,M)$ which is independent of $T$ but may depend on $k,N,M$. 
    Therefore the overall error can be bounded by the minimum of~\cref{eqn:Grover_BC_bound_linear} and~\cref{eqn:Grover_BC_bound_exp}. 
\end{proof}

\end{document}